\title{A Birthday Repetition Theorem and \\ Complexity of Approximating Dense CSPs}
\author{
  Pasin Manurangsi\thanks{University of California, Berkeley. Email: \texttt{pasin@berkeley.edu}.}
  \and
  Prasad Raghavendra\thanks{University of California, Berkeley. Email: \texttt{prasad@berkeley.edu}.}
}
\begin{document}

\maketitle

\begin{abstract}

A \emph{$(k \times l)$-birthday repetition} $\cG^{k \times l}$ of a two-prover game $\cG$ is a game in which the two provers are sent random sets of questions from $\cG$ of sizes $k$ and $l$ respectively. These two sets are sampled independently uniformly among all sets of questions of those particular sizes. We prove the following \emph{birthday repetition theorem}: when $\cG$ satisfies some mild conditions, $val(\cG^{k \times l})$ decreases exponentially in $\Omega(kl/n)$ where $n$ is the total number of questions. Our result positively resolves an open question posted by Aaronson, Impagliazzo and Moshkovitz~\cite{AIM}.

As an application of our birthday repetition theorem, we obtain new
fine-grained hardness of approximation results for dense CSPs.
Specifically, we establish a tight trade-off between running time and
approximation ratio for dense CSPs by showing conditional lower
bounds, integrality gaps and approximation algorithms.  In particular, for any sufficiently large $i$ and for every $k \geq 2$, we show the following results:

\begin{itemize}
\item We exhibit an $O(q^{1/i})$-approximation algorithm for dense {\sc Max $k$-CSP}s with alphabet size $q$ via $O_k(i)$-level of Sherali-Adams relaxation.
\item Through our birthday repetition theorem, we obtain an integrality gap of $q^{1/i}$ for $\tilde
  \Omega_k(i)$-level Lasserre relaxation for fully-dense {\sc
  Max $k$-CSP}.
\item Assuming that there is a constant $\epsilon > 0$ such that
  {\sc Max 3SAT} cannot be approximated to within $(1-\epsilon)$ of the optimal
  in sub-exponential time, our birthday repetition theorem implies that
  any algorithm that approximates fully-dense
  {\sc Max $k$-CSP} to within a $q^{1/i}$ factor takes
  $(nq)^{\tilde \Omega_k(i)}$ time, almost tightly matching the algorithmic result based on Sherali-Adams relaxation.
\end{itemize}

As a corollary of our approximation algorithm for dense {\sc Max $k$-CSP}, we give a new approximation algorithm for {\sc Densest $k$-Subhypergraph}, a generalization of {\sc Densest $k$-Subgraph} to hypergraphs. In particular, when the input hypergraph is $O(1)$-uniform and the optimal $k$-subhypergraph has constant density, our algorithm finds a $k$-subhypergraph of density $\Omega(n^{-1/i})$ in time $n^{O(i)}$ for any integer $i > 0$.

\end{abstract}

\newpage

\tableofcontents

\newpage

\section{Introduction}

Polynomial-time reductions between computational problems are among the central tools in complexity theory.
The rich and vast theory of hardness of approximation emerged out of the celebrated PCP Theorem~\cite{AroraLMSS} and the intricate web of polynomial-time reductions developed over the past two decades.
During this period, an extensive set of reduction techniques such as parallel repetition and
long-codes have been proposed and a variety of mathematical tools
including discrete harmonic analysis, information theory and Gaussian
isoperimetry have been applied towards analyzing these reductions.
These developments have led to an almost complete understanding of the
approximability of many fundamental combinatorial optimization problems like
{\sc Set Cover} and {\sc Max 3SAT}.  Yet, there are a few central problems such as computing approximate
Nash equlibria, the {\sc Densest $k$-Subgraph} problem and the {\sc Small Set Expansion} problem, that remain out of reach of the web of polynomial-time reductions.

A promising new line of work proposes to understand the complexity of
these problems through the lens of {\it sub-exponential time
reductions}. Specifically, the idea is to construct a sub-exponential
time reduction from {\sc 3SAT} to the problem at hand, say, the Approximate
Nash Equilibrium problem. Assuming that {\sc 3SAT} does not admit
sub-exponential time algorithms (also known as the Exponential Time
Hypothesis (ETH)~\cite{IP01}), this would rule out polynomial time
algorithms for the Approximate Nash Equilibrium problem.

At the heart of this line of works, lies the so-called {\it birthday  repetition} of two-prover games. To elaborate on this, we begin by formally defining the notion of two-prover games.

\begin{definition} (Two-prover game)
  A two prover game $\cG$ consists of
  \begin{itemize}
  \item A finite set of questions $X,Y$  and
    corresponding answer sets $\Sigma_X, \Sigma_Y$.
  \item A distribution $\cQ$ over pairs of questions $X
    \times Y$.
  \item A verification function $P: X \times Y \times \Sigma_X
    \times \Sigma_Y \to \{0,1\}$.
  \end{itemize}
  The value of the game is the maximum over all strategies $\phi : X
  \cup Y \to \Sigma_X \cup \Sigma_Y$ of the output of the verification
  function, i.e.,
  $val(\cG) = \max_{ \phi: X \cup Y \to \Sigma_X \cup \Sigma_Y}
  \E_{(x,y) \sim \cQ} [P(x,y,\phi(x),\phi(y))]$.
\end{definition}

Two prover games earn their name from the following interpretation of
the above definition:
The game $\cG$ is played between a verifier $V$ and two cooperating
provers $Merlin_1$ and $Merlin_2$ who have agreed upon a common
strategy, but cannot communicate with each other during the game.  The
verifier samples two questions $(x,y) \sim \cQ$ and sends $x$
to $Merlin_1$ and $y$ to $Merlin_2$.  The provers respond with answers
$\phi(x)$ and $\phi(y)$, which the verifier accepts or rejects based on
the value of the verifiaction function $P(x,y,\phi(x),\phi(y))$.

Two-prover games and, more specifically, a special class of two-prover
games known as the {\sc Label Cover} problem are the starting points for
reductions in a large body of hardness of approximation results.
The PCP theorem implies that for some absolute
constant $\epsilon_0$, approximating the value of a two prover game to within an additive $\epsilon_0$ is
$\mathbf{NP}$-hard.  However, this hardness result on its own is inadequate to
construct reductions to other combinatorial optimization problems.
To this end, this hardness result can be
strengthened to imply that it is $\mathbf{NP}$-hard to approximate the value of
two-prover games to any constant factor,  using the {\it parallel
  repetition theorem}.

For an integer $k$, the $k$-wise parallel repetition $\cG^{\otimes k}$
of a game $\cG$ can be described as follows.  The question and
answer sets in $\cG^{\otimes k}$ consist of $k$-tuples of questions and
answers from $\cG$.  The distribution over questions
in $\cG^{\otimes k}$ is given by the product distribution $\cQ^k$.  The
verifier for $\cG^{\otimes k}$ accepts the answers if and only if the
verifier for $\cG$ accepts each of the $k$ individual answers.

Roughly speaking, the parallel repetition theorem asserts that the
value of the repeated game $\cG^k$ decays exponentially in
$k$.  Parallel repetition theorems form a key ingredient in obtaining tight hardness of
approximation results, and have aptly received considerable attention
in literature~\cite{Raz98,Hol09,Rao11,DS14,Mos14,BG15}.

Birthday repetition, introduced by Aaronson \etal~\cite{AIM}, is an alternate transformation on two-prover games defined as follows.

\begin{definition} (Birthday Repetition)
  The $(k \times l)$-birthday repetition of a two-prover game $\cG$
  consists of
  \begin{itemize}

  \item The set of questions in $\cG^{k \times l}$ are
    $\binom{X}{k}$ and $\binom{Y}{l}$ respectively,
    i.e., each question is a subset $S \subseteq X$
    of size $k$ and subset $T \subseteq Y$ of size $l$.
  \item The distribution over questions is the uniform product
    distribution over $\binom{X}{k} \times \binom{Y}{l}$.
  \item The verifier accepts
    only if, for every pair of $(x, y) \in S \times T$ such that $(x,
    y)$ form a valid pair of questions in $\cG$, i.e., $(x,y)
    \in \supp(\cQ)$, the answers to $x$ and $y$ are accepted in the
    original game $\cG$.

  \end{itemize}
\end{definition}

The basic idea of birthday repetition can be traced back to the work of
Aaronson \etal~\cite{ABDFS09} on quantum multiprover proof systems
$\mathbf{QMA}(k)$ for {\sc 3SAT}.
Subsequent work by Aaronson \etal~\cite{AIM} on the classical
analogue of $\mathbf{QMA}(k)$, namely $\AM(k)$, formally
defined birthday repetition for two-prover games, and set the stage
for applications in hardness of approximation.

Unlike parallel repetition, birthday repetition is only effective for
large values of $k$ and $l$.  In particular, if $k, l <
o(\sqrt{|X|+|Y|})$,
then, for most pairs of $S$ and $T$, there is no pair of questions $(x,y) \in
S \times T$ such that $(x,y)$ belongs to the support of the questions
in the original game.

However, if we pick $k = l = \omega(\sqrt{n})$ where $n = |X| + |Y|$, then by the
birthday paradox, with high probability the sets $S,T$ contain an edge
$(x,y)$ from the original game $\cG$.  Hence, for this choice of $k$
and $l$, the game played by the provers is seemingly at least as
difficult to succeed, as the original game $\cG$.
Aaronson \etal~\cite{AIM} confirmed this intuition by proving the following theorem.

\begin{theorem} \cite{AIM} \label{thm:aim-birthday}
  For any two-prover game $\cG$ such that $\cQ$ is uniform over its support, if the bipartite graph induced by $(X, Y, \supp(\cQ))$ is biregular, then $val(\cG^{k \times l}) \leq val(\cG) + O(\sqrt{\frac{n}{kl}})$.
\end{theorem}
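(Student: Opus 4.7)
The plan is to turn an optimal strategy $\Phi$ for $\cG^{k \times l}$ into a randomized strategy $\phi$ for $\cG$ and to show that $\E[val(\phi)]$ falls short of $val(\cG^{k \times l})$ by only $O(\sqrt{n/kl})$; since a best fixing of the randomness yields $\E[val(\phi)] \le val(\cG)$, the theorem follows. Define $\phi$ by: for each $x \in X$, independently sample $S_x \in \binom{X}{k}$ uniformly among subsets containing $x$ and set $\phi(x) := \Phi(S_x)_x$, and analogously for $y \in Y$. Writing $E := \supp(\cQ)$, $N(S, T) := |(S \times T) \cap E|$, and $Z_\Phi(S, T)$ for the number of edges in $(S \times T) \cap E$ satisfied by $\Phi(S)$ and $\Phi(T)$, a direct double-count using the uniformity of $\cQ$ gives
\[
\E_\phi[val(\phi)] \;=\; \E_{S,T}[Z_\Phi] \,\big/\, \E_{S,T}[N],
\]
and so $\E[Z_\Phi] \le val(\cG) \cdot \E[N]$.

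To relate $v := val(\cG^{k \times l})$ to this ratio, observe that the verifier accepts exactly on the event $A := \{Z_\Phi = N\}$. Splitting on whether $N = 0$ and using the elementary inequality $\mathbb{1}[Z_\Phi = N] \le Z_\Phi/N$ on $\{N \ge 1\}$,
\[
v \;=\; \Pr[A] \;\le\; \Pr[N = 0] \;+\; \E\!\left[\tfrac{Z_\Phi}{N}\,\mathbb{1}_{\{N \ge 1\}}\right].
\]
Decomposing $1/N = 1/\E[N] + (1/N - 1/\E[N])$ on $\{N \ge 1\}$ and using $Z_\Phi/N \le 1$ to turn the correction into a pure fluctuation of $N$, the second term is at most $\E[Z_\Phi]/\E[N] + \E[|N - \E[N]|]/\E[N] \le val(\cG) + \sqrt{\mathrm{Var}(N)}/\E[N]$ by Cauchy--Schwarz.

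It remains to bound $\Pr[N = 0]$ and $\sqrt{\mathrm{Var}(N)}/\E[N]$ by $O(\sqrt{n/kl})$. For this I would expand $\E[N^2] = \sum_{e_1, e_2 \in E} \Pr[e_1, e_2 \in S \times T]$, partitioning ordered edge-pairs into four classes (identical, sharing only the $X$-endpoint, sharing only the $Y$-endpoint, and fully disjoint) and counting each class via the biregularity identities $|X|\,d_X = |Y|\,d_Y = |E|$. After the leading $\E[N]^2$ contribution from the disjoint class cancels (thanks to sampling without replacement), the remaining pieces sum to $O(\E[N])$, giving $\mathrm{Var}(N) = O(\E[N])$. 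Since $\E[N] = kl\,|E|/(|X||Y|) = kl\,d_X/|Y| \ge \Omega(kl/n)$ (using $d_X \ge 1$ and $|Y| \le n$), we obtain $\sqrt{\mathrm{Var}(N)}/\E[N] = O(1/\sqrt{\E[N]}) = O(\sqrt{n/kl})$, and Chebyshev bounds $\Pr[N = 0] \le \mathrm{Var}(N)/\E[N]^2 \le O(n/kl) \le O(\sqrt{n/kl})$; in the degenerate regime $kl \le n$ every bound is trivially $O(1) = O(\sqrt{n/kl})$.

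The main obstacle is the second-moment estimate: the four covariance groups contribute with both signs, and biregularity is exactly the hypothesis that forces the $\Theta(\E[N]^2)$ principal terms to cancel, leaving only the $O(\E[N])$ residual. Once this is in hand, the rest of the argument is bookkeeping through the inequalities above.
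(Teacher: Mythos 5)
This theorem is quoted from~\cite{AIM} and the paper contains no proof of it, so there is nothing internal to compare against; your argument is, however, essentially the standard one for this statement (convert the optimal birthday strategy into a randomized strategy for $\cG$ by sampling a containing set per question, then control the acceptance event by second-moment bounds on $N$), and the skeleton is sound. The identity $\E_\phi[val(\phi)] = \E[Z_\Phi]/\E[N]$, the inequality $\mathds{1}[Z_\Phi = N] \le Z_\Phi/N$ on $\{N \ge 1\}$, and the decomposition into $\E[Z_\Phi]/\E[N] + \E[|N-\E[N]|]/\E[N]$ all check out. The one claim that is wrong as stated is $\mathrm{Var}(N) = O(\E[N])$: the covariance class of ordered edge-pairs sharing only their $X$-endpoint has about $|E|(d_X - 1)$ members, each contributing covariance up to $\frac{kl^2}{|X||Y|^2}$, for a total of order $\E[N]^2/k$ (and symmetrically $\E[N]^2/l$ for the $Y$-sharing class); when $l$ is close to $|Y|$ and $d_X > 1$ this exceeds $\E[N]$ by an unbounded factor. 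Your conclusion survives anyway, because the corrected bound $\mathrm{Var}(N) \le \E[N] + \E[N]^2(1/k + 1/l)$ gives $\sqrt{\mathrm{Var}(N)}/\E[N] \le \sqrt{1/\E[N]} + \sqrt{1/k + 1/l}$, and both terms are $O(\sqrt{n/kl})$ since $\E[N] \ge kl/n$ (biregularity guarantees $|E| \ge \max(|X|,|Y|)$) and $1/k + 1/l \le 2n/(kl)$; the Chebyshev bound on $\Pr[N=0]$ likewise becomes $O(n/kl)$, which is $O(\sqrt{n/kl})$ in the only nontrivial regime $kl \ge n$. So the proof goes through, but the variance step needs to be restated with the extra $\E[N]^2(1/k+1/l)$ term rather than asserted as $O(\E[N])$. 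A small side remark: biregularity is not what makes the disjoint-class terms cancel (sampling without replacement makes those covariances negative regardless); it is what lets you count the shared-endpoint pairs by $|E|(d_X-1)$ instead of $|E|\,d_{max}$, which is exactly why the paper's own Theorem~\ref{thm:birthday-general} carries a $d_{max}$ dependence in the irregular case.
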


On the one hand, birthday repetition is ineffective in that it has to
incur a blowup of $2^{\sqrt{n}}$ in the size, to even simulate the
original game $\cG$.  The distinct advantage of birthday repetition is
that the resulting game $\cG^{k,l}$ has a distinct structure -- in
that it is a {\it free game}.
\begin{definition} (Free game)
  A free game is a two-player game $\cG = (X, Y, \cQ, \Sigma_X, \Sigma_Y, P)$ such that $\cQ$ is the uniform distribution over $X \times Y$.
\end{definition}

The birthday repetition theorem of Aaronson \etal~\cite{AIM}
immediately implies a hardness of approximation for the value of free
games.
Specifically, they show that it is ETH-hard to approximate free games
to some constant ratio in almost quasi-polynomial time. Interestingly,
this lower bound is nearly tight in that free games admit a quasipolynomial time approximation scheme (QPTAS)~\cite{BHHS11,AIM}.

Following Aaronson \etal's work, birthday repetition has received
numerous applications, which can be broadly classified in to two main
themes.
On the one hand, there are problems such as computing approximate Nash
equilibria~\cite{BKW15,BPR16},  approximating free games \cite{AIM}, and approximate symmetric
signaling in zero sum games \cite{R15}, where the underlying problems
admit quasipolynomial-time algorithms \cite{Dughmi14, LMM03, FS97} and birthday repetition can be
used to show that such a running time is necessary, assuming ETH.
On the other hand, there are computational problems like Densest
$k$-Subgraph \cite{BKRW15}, injective tensor norms
\cite{ABDFS09,HM13,BBHKSZ12}, $2$-to-$4$-norms
\cite{ABDFS09,HM13,BBHKSZ12} wherein an
$\mathbf{NP}$-hardness of approximation result seems out of reach of current
techniques.  But the framework of birthday repetition can be employed to
show a quasi-polynomial hardness assuming ETH\footnote{Although the
	hardness results for injective tensor norms and
	$2$-to-$4$-norms build over quantum multiprover proof systems,
	the basic idea of birthday repetition~\cite{ABDFS09} lies at
the heart of these reductions.}.



Unlike the parallel repetition theorem, the birthday repetition
theorem of \cite{AIM} does not achieve any reduction in the value of
the game.
It is thus natural to ask whether birthday repetition can be used to
decrease the value of a game, just like parallel repetition. Aaronson
et al. conjectured not only that the value of the game deteriorates
with birthday repetition, but also that it decreases exponentially in $\Omega(kl/n)$. Notice that the expected number of edges between $S$ and $T$ in birthday repetition is $\Theta(kl/n)$.

Our main technical contribution is that we resolve the conjecture
positively by showing the following theorem.

\begin{theorem} (Birthday Repetition Theorem (informal); See Theorem~\ref{thm:birthday-general},\ref{thm:birthday-proj}) \label{thm:main}
  Let $\cG = (X, Y, \cQ, \Sigma_X, \Sigma_Y, P)$ be a two-prover game such that $\cQ$ is uniform over its support, $(X, Y, \supp(\cQ))$ is biregular and $|\Sigma_X|, |\Sigma_Y|$ are constant. If $val(\cG) = 1 - \varepsilon$, then
  \begin{align*}
    val(\cG^{k \times l}) \leq 2(1 - \varepsilon/2)^{\Omega(\varepsilon^5 kl/n)}.
  \end{align*}
\end{theorem}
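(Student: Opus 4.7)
The plan is to reduce birthday repetition to ordinary parallel repetition. Concretely, I set $m = \Theta(\varepsilon^4 kl/n)$ and aim to show that any strategy for $\cG^{k \times l}$ of value $\delta$ yields a strategy for the $m$-fold parallel repetition $\cG^{\otimes m}$ of value $\Omega_{\varepsilon}(\delta)$. Invoking a standard parallel repetition theorem (Rao's for projection games, or Raz/Holenstein's for general constant-alphabet games) then gives $\delta \leq (1-\Omega(\varepsilon^2))^{m}$, which rearranges to $(1-\varepsilon/2)^{\Omega(\varepsilon^5 kl/n)}$ after absorbing small factors --- matching the target bound.

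For the reduction, I embed an $m$-fold parallel repetition instance into $\cG^{k \times l}$ as follows. Given question pairs $(x_1, y_1), \ldots, (x_m, y_m)$ drawn i.i.d.\ from $\cQ$, I first condition on the event that all $x_i$'s and all $y_i$'s are distinct; this holds with probability $1 - O(m^2/n) = 1 - o(1)$. Then I extend $\{x_1, \ldots, x_m\}$ and $\{y_1, \ldots, y_m\}$ to uniformly random sets $S \in \binom{X}{k}$ and $T \in \binom{Y}{l}$ by adjoining uniformly random vertices from $X \setminus \{x_i\}$ and $Y \setminus \{y_i\}$. Applying the birthday strategy $\phi$ to $(S, T)$ and reading off $\phi(x_i), \phi(y_i)$ yields the parallel-repetition answers. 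Since $\phi$ passing verification on $(S, T)$ means passing on every edge of $(S \times T) \cap \supp(\cQ)$, it in particular passes on the $m$ embedded edges, so the parallel value lower bounds the birthday value up to distributional corrections.

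The main obstacle, and the locus of essentially all the technical work, is that the distribution of $(S, T)$ produced by this embedding is not uniform on $\binom{X}{k} \times \binom{Y}{l}$: it is reweighted proportionally to the number of ordered $m$-tuples of distinct edges contained in $(S \times T) \cap \supp(\cQ)$. To pull the parallel-repetition bound back to a birthday-repetition bound, I need sharp concentration of this $m$-tuple count around its expectation $\Theta((kl/n)^m)$, which under biregularity is the right deterministic target. This count is a $U$-statistic on subsets drawn without replacement, so standard Hoeffding-type bounds do not apply directly; the natural route is a second-moment computation or a Doob-martingale argument on revealing $S$ and $T$ element by element, and biregularity is essential for controlling the variance (without it, a few high-degree vertices would dominate and skew the count). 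The $\varepsilon^5$ in the exponent then decomposes as the $\varepsilon^2$-loss from Rao's projection-game parallel repetition composed with an $\varepsilon^c$-loss absorbed into $m$ during the distribution-correction step, and the paper's split into Theorems~\ref{thm:birthday-general} and~\ref{thm:birthday-proj} likely mirrors the projection vs.\ general dichotomy inherited from the underlying parallel repetition theorem.
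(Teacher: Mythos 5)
Your overall architecture is the same as the paper's: fix $r=\Theta_\varepsilon(kl/n)$, embed the $r$-fold parallel repetition into $\cG^{k\times l}$ by planting $r$ distinct edges and filling out the sets at random, observe that a birthday strategy restricted to the planted edges is a (mixed) parallel-repetition strategy, and then pay for the fact that the planted distribution on $(S,T)$ is the uniform distribution reweighted by the number of ordered $r$-tuples of distinct edges inside $(S\times T)\cap\supp(\cQ)$. The paper carries this out via the chain $\cG^{\otimes r}\to\cG^{\otimes r}_{\text{set}}\to\cG^{k\times l}_{\text{em}}\to\cG^{k\times l}_{\text{em},[s_1,s_2]}\to\cG^{k\times l}_{[s_1,s_2]}\to\cG^{k\times l}$, and your $\varepsilon$-accounting ($\varepsilon^2$ from Holenstein times $\varepsilon^3$ absorbed into the number of rounds) is essentially right.

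The gap is in the step you correctly identify as "the locus of essentially all the technical work." You propose to control the reweighting by proving concentration of the $r$-tuple count itself via a second-moment or martingale argument. That is both harder than necessary and, as stated, insufficient: what the argument needs is that the reweighting factor lies within a multiplicative $e^{O(\varepsilon r)}$ window on all but an $e^{-\Omega(\varepsilon^2 r)}$-probability set of pairs $(S,T)$, because any slack here is exponentiated against the $(1-\varepsilon/2)^{\Omega(\varepsilon^2 r)}$ gain from parallel repetition. A second-moment bound on this $U$-statistic only yields constant-factor concentration with constant probability, which destroys the entire exponential saving. The paper's route avoids concentrating the degree-$r$ statistic altogether: it conditions both distributions on the single event $|E(S,T)|\in[(1-\delta)s,(1+\delta)s]$ (a Bernstein bound for sampling without replacement, Lemma~\ref{lem:random-num-edges}, costing only an additive $\exp(-\Omega(\delta^2 s/d_{max}))$ by Lemma~\ref{lem:inq-cond}), and then \emph{deterministically} sandwiches the tuple count on that event between $((1-\delta)s-2rd_{max})^r$ and $((1+\delta)s)^r$ using the max-degree bound, giving a ratio $\bigl(\frac{1+\delta}{1-\delta-2\beta}\bigr)^r=e^{O(\delta r)}$ that is absorbed by taking $\delta,\beta\propto\varepsilon^3$. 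A second, smaller issue: your claim that the all-distinct conditioning holds with probability $1-O(r^2/n)=1-o(1)$ fails additively once $kl\gg n^{3/2}$ (then $r\gg\sqrt n$); the correct accounting is multiplicative — each successive edge collides with probability at most $2rd_{max}/|E|\le 2\beta$, so the conditioning costs a factor $(1-2\beta)^{-r}$, again absorbed by the choice of $\beta$.
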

We note here that our theorem is, in fact, more general than stated above and can handle non-regular graphs and non-constant alphabet sizes as well (see Theorem~\ref{thm:birthday-general}). Moreover, we can get a better bound if $\cG$ is a {\sc Label Cover} instance (see Theorem~\ref{thm:birthday-proj}).

By definition, the birthday repetition theorem almost immediately
implies a hardness of approximation result for the value of a free
game.
\begin{corollary}
  Unless ETH is false, no polynomial time algorithm can approximate
  the value of a free game to within a factor of $2^{\tilde \Omega(\log(nq))}$
  where $n$ is the number of questions and $q$ is the alphabet (answer set) size.
\end{corollary}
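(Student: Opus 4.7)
The corollary follows by applying the birthday repetition theorem (Theorem~\ref{thm:main}) to a near-linear-size {\sc Label Cover} instance derived from {\sc 3SAT}, tuning the repetition parameter so that a polynomial-time $2^{\tilde\Omega(\log(nq))}$-factor approximation of the resulting free game would yield a sub-exponential algorithm for {\sc 3SAT}, violating ETH.

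Concretely, starting from a {\sc 3SAT} instance on $N$ variables, I would first apply an almost-linear-size PCP theorem (e.g.\ Dinur's PCP) combined with parallel repetition to produce a two-prover game $\cG$ of size $n_0 := |X|+|Y| = N^{1+o(1)}$ meeting the hypotheses of Theorem~\ref{thm:main}: constant alphabet, biregular support graph, uniform distribution over edges, and a constant completeness/soundness gap (value $1$ in the YES case, at most $1-\varepsilon_0$ in the NO case). Then I would take the $(k\times k)$-birthday repetition of $\cG$, which is a free game with $n \le \binom{n_0}{k}$ questions per side, alphabet size $q = O(1)^k$, and NO-case value at most $2^{-\Omega(k^2/n_0)}$ by Theorem~\ref{thm:main}.

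The key step is choosing $k$. The gap $2^{\Omega(k^2/n_0)}$ should match the target $2^{\tilde\Omega(\log(nq))}$; since $\log(nq) \approx k\log(n_0/k)$, this forces $k \geq n_0/\mathrm{polylog}(n_0)$. At the same time, the reduction blow-up $2^{O(k\log(n_0/k))}$ must be $2^{o(N)}$ so that a polynomial-time approximation of the free game contradicts ETH. Setting $k = n_0/\log^c n_0$ for a sufficiently large constant $c$ satisfies both: a short computation yields $\log(nq) = \Theta(n_0\log\log n_0/\log^c n_0)$ and $k^2/n_0 = \Theta(n_0/\log^{2c} n_0)$, whose ratio is $1/\mathrm{polylog}(n_0)$ as required, while $n_0 = N^{1+o(1)}$ guarantees $\log(nq) = o(N)$.

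The main obstacle is simultaneously satisfying these two constraints, which essentially relies on having a near-linear-size PCP in hand. With only a polynomial-size PCP, say $n_0 = N^3$, $\log(nq)$ would exceed $N$ for any $k$ large enough to yield a meaningful gap, and the ETH contradiction would fail. Given the near-linear PCP, the remaining parameter analysis is routine bookkeeping.
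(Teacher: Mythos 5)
Your proposal is correct and matches the paper's argument: the paper likewise starts from Dinur's quasilinear PCP (as a constant-degree, constant-alphabet projection game of size $N\,\mathrm{polylog}\,N$), takes birthday repetition with $k = \Theta(n_0/\mathrm{polylog}\,n_0)$, and balances the gap $2^{\Omega(k^2/n_0)}$ against the instance size $2^{O(k\log(n_0/k))} = 2^{o(N)}$ exactly as you describe (this is the $k$-CSP analogue in Lemma~\ref{thm:hardness-approx}, and your observation that a quasilinear PCP is the crux is the same point the paper makes). The only cosmetic remark is that no parallel repetition is needed before the birthday step --- Dinur's PCP already yields the required constant-gap projection game.
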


The above hardness result improves upon $\polylog (nq)$ ratio achieved in~\cite{AIM} and is tight up to a factor of $\polyloglog (nq)$ in the exponent since there exists a polynomial-time algorithm that achieves $O(q^\varepsilon)$ approximation for every constant $\varepsilon > 0$~\cite{AIM,MM15}.

\subsection*{Dense CSPs}

A free game can be considered an instance of 2-ary constraint satisfaction problems. From this perspective, free games are \emph{dense}, in that there are constraints between a constant fraction of all pairs of variables.
As an application of our birthday repetition theorem, we will show
almost-tight hardness of approximation results for dense CSPs.  To
this end, we begin by defining {\sc Max $k$-CSP} and its density.

\begin{definition} ({\sc Max $k$-CSP})
  A {\sc Max $k$-CSP} instance $\cG$ consists of
  \begin{itemize}
  \item A finite set of variables $V$ and a finite alphabet set $\Sigma$.
  \item A distribution $\cQ$ over $k$-tuple of variables $V^k$.
  \item A predicate $P: V^k \times \Sigma^k \to [0,1]$.
  \end{itemize}
  The value of the instance is the maximum over all assignments $\phi : V \to \Sigma$ of the output of the predicate, i.e.,
  $val(\cG) = \max_{\phi: V \to \Sigma} \E_{S \sim \cQ} [P(S, \phi|_S)]$
  where $\phi|_S$ is the restriction of the assignment to $S$.

  Finally, an instance is called $\Delta$-dense if $\Delta \cdot \cQ(S) \leq 1 / |V|^k$ for every $S \in V^k$. Fully-dense instances are defined to be simply the 1-dense instances.
\end{definition}

There has been a long line of works on approximating dense CSPs.
Arora, Karger and Karpinski were first to devise  a polynomial-time approximation scheme for the problem when alphabet size is constant~\cite{AKK95}. Since then, numerous algorithms have been invented for approximating dense CSPs; these algorithms use wide ranges of techniques such as combinatorial algorithms with exhaustive sampling~\cite{AKK95,VKKV05,MS08,Yar14,MM15,FLP16}, subsampling of instances~\cite{AVKK03, BHHS11}, regularity lemmas~\cite{FK96,CCF10} and linear and semidefinite program hierarchies~\cite{VM07,BRS11,GS11,YZ14}. Among the known algorithms, the fastest is that of Yaroslavtsev~\cite{Yar14} that achieves approximation ratio  $(1 + \varepsilon)$ in $q^{O_k(\log q/\varepsilon^2)} + (nq)^{O(1)}$ time\footnote{\cite{Yar14} states that the algorithm takes $q^{O_k(1/\varepsilon^2)} + (nq)^{O(1)}$ time but it in fact takes $q^{O_k(\log q / \varepsilon^2)} + (nq)^{O(1)}$ time~\cite{Yar16}.} where $n$ and $q$ denote the number of variables and alphabet size respectively.

Unfortunately, when $q$ is (almost-)polynomial in $n$, none of the mentioned algorithms run in polynomial time. CSPs in such regime of parameters have long been studied in hardness of approximation (e.g.~\cite{Bellare93,RS97,AS03,DFKRS11,MR10,M}) and have recently received more attention from the approximation algorithm standpoint, both in the general case~\cite{Peleg07,CHK,MM15a} and the dense case~\cite{MM15}. The approximabilities of these two cases are vastly different. In the general case, it is known that, for some constant $k > 0$, approximating {\sc Max $k$-CSP} to within a factor of $2^{\log^{1 - \varepsilon} (nq)}$ is $\mathbf{NP}$-hard for any constant $\varepsilon > 0$~\cite{DFKRS11}. Moreover, the long-standing Sliding Scale Conjecture of Bellare et al.~\cite{Bellare93} states there are constants $k, \varepsilon > 0$ such that it is $\mathbf{NP}$-hard to approximate {\sc Max $k$-CSP} to within a factor of $(nq)^{\varepsilon}$. On the other hand, aforementioned algorithms for dense CSPs rule out such hardnesses for the dense case.

While the gap between known approximation algorithms and inapproximability results in the general case is tiny ($2^{\log^\varepsilon (nq)}$ for any constant $\varepsilon > 0$), the story is different for the dense case, especially when we restrict ourselves to polynomial-time algorithms. Aaronson \etal's result only rules out, assuming ETH, $\polylog(nq)$ factor approximation for such algorithms~\cite{AIM}. However, for $k > 2$, no non-trivial polynomial time algorithm for dense {\sc Max $k$-CSP} on large alphabet is even known. In this paper, we settle down the complexity of approximating dense {\sc Max $k$-CSP} almost completely by answering the following fine-grained question: ``for each
$i \in \N$, what is the best approximation for dense {\sc Max $k$-CSP}, achievable by algorithms
running in time $(nq)^{i}$?''.

Manurangsi and Moshkovitz developed
an algorithm for dense {\sc Max 2-CSP} that, when the instance has
value $1 - \varepsilon$, can approximate the value to within a factor
of $O(q^{1/i}/(1 - \varepsilon)^i)$ in $(nq)^{O(i)}$
time~\cite{MM15}\footnote{Note that it is unclear whether Aaronson,
Impagliazzo and Moshkovitz's algorithm~\cite{AIM} that
achieves a similar guarantee for free games can be extended to
handle dense {\sc Max 2-CSP}.}. Due to the algorithm's
combinatorial nature, it is unclear whether the algorithm can be extended to handle dense {\sc Max $k$-CSP}s when $k > 2$.

Using a conditioning-based rounding technique developed in
\cite{BRS11, RT12, YZ14}, we show that the Sherali-Adams relaxation exhibits a similar approximation even when $k > 2$, as stated below.

\begin{theorem}(Informal; See Theorem~\ref{thm:alg-dense-csp}) \label{thm:approx-inf}
  For every $i > 0$ and any dense {\sc Max $k$-CSP} instance of value $1 - \varepsilon$, an $O_{k, \varepsilon}(i/\Delta)$-level of the Sherali-Adams relaxation yields an $O(q^{1/i})$-approximation for the instance.
\end{theorem}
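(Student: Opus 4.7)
I plan to apply the conditioning-based rounding paradigm of \cite{BRS11, RT12, YZ14} to the Sherali-Adams hierarchy, with parameters and a rounding scheme tailored to yield the multiplicative $O(q^{1/i})$ guarantee. Concretely, I would set the Sherali-Adams level to $r := \Theta_{k, \varepsilon}(i/\Delta)$ and denote an optimal pseudo-distribution by $\tilde{\mu}$; its pseudo-value is at least $1 - \varepsilon$ since SA is a relaxation. The algorithm samples a set $T \subseteq V$ of size $r - k$ uniformly at random, draws $\beta \sim \tilde{\mu}_T$, and works with $\tilde{\mu}' := \tilde{\mu} \mid T = \beta$.

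The first technical step is a global correlation bound. Using the chain rule of mutual information along the $r - k$ sampled coordinates, together with the $\Delta$-density condition (which ensures that a uniformly random variable $v \in V$ lies in a random constraint $S \sim \cQ$ with probability at least $\Omega(k \Delta / |V|)$), one obtains, in expectation over $T, \beta$ and $S \sim \cQ$,
\[
\E\bigl[\mathrm{TC}\bigl(\tilde{\mu}'_{S}\bigr)\bigr] \;\le\; O_k\!\bigl((\log q)/i\bigr).
\]
Density is essential here: without it, the loss per conditioning step would scale with $|V|$ rather than $r$.

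For the rounding, I would sample $\phi(v)$ independently from $\tilde{\mu}'_v$ for each $v \notin T$, and then convert the correlation bound into a multiplicative approximation ratio. A direct Pinsker-based argument yields only an additive loss of $O(\sqrt{(\log q)/i})$, which is too weak once $i$ grows past $\log q$ (where $q^{1/i}$ is already a modest constant). I would instead rely on a reverse Cauchy--Schwarz inequality, valid because $P(S, \cdot) \in [0, 1]$:
\[
\sum_\alpha \Bigl(\prod_j \tilde{\mu}'_{S_j}(\alpha_j)\Bigr) P(S, \alpha) \;\ge\; \frac{\bigl(\sum_\alpha \tilde{\mu}'_S(\alpha) P(S, \alpha)\bigr)^2}{1 + \chi^2\!\bigl(\tilde{\mu}'_S \,\big\|\, \textstyle\prod_j \tilde{\mu}'_{S_j}\bigr)},
\]
and bound the $\chi^2$ divergence by $O(q^{1/i})$ (either through a R\'enyi-$2$ analog of the chain-rule argument above, or via a tempered rounding distribution that directly amplifies high-probability configurations). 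Combining with the pseudo-value lower bound $\ge 1 - \varepsilon$ and Markov-type inequalities over $T, \beta, S$ then gives an assignment of expected value $\Omega((1-\varepsilon)/q^{1/i})$, as desired.

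The delicate part is precisely this multiplicative transfer. The standard TC chain rule is not on its own strong enough, because Pinsker loses a square root; bounding $\chi^2$ or a R\'enyi-$2$ divergence requires more care since those quantities are super-additive only in a weaker sense than KL. An alternative is to stick with a TC bound but design a rounding distribution tempered by a parameter depending on $i$, chosen so that the resulting independent-rounding value is directly multiplicative in the marginal masses. Balancing the level $r$, the divergence used, and the rounding bias is where most of the technical work is concentrated.
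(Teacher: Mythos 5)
Your overall framework — condition an $O_{k,\varepsilon}(i/\Delta)$-level SA solution on a small random set to drive the expected total correlation down to $O_k((\log q)/i)$, then round independently — is the same as the paper's. The gap is in the multiplicative-transfer step. Your reverse Cauchy--Schwarz inequality is itself correct, but it converts the rounding loss into $1 + \chi^2\bigl(\tilde{\mu}'_S \,\|\, \prod_j \tilde{\mu}'_{S_j}\bigr)$, and a bound on this quantity does \emph{not} follow from the total-correlation (KL) bound that the conditioning argument provides: since $D_{KL}(\mathcal{X}\|\mathcal{Y}) \leq \log\bigl(1+\chi^2(\mathcal{X}\|\mathcal{Y})\bigr)$, small KL is compatible with arbitrarily large $\chi^2$ (a tiny mass of $\tilde{\mu}'_S$ sitting where the product measure is exponentially small blows up $\chi^2$ while barely affecting KL). Moreover, the conditioning/chain-rule machinery is specific to KL; $\chi^2$ and R\'enyi-$2$ do not admit the additive decomposition into mutual informations that makes Lemma~\ref{lem:corr-decrease} work, so the "R\'enyi-$2$ analog of the chain rule" you invoke is not available. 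The "tempered rounding" fallback is left unspecified, so as written the proposal does not close the argument.

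The paper resolves exactly this difficulty without ever leaving KL: Lemma~\ref{lem:funcbound} shows via weighted A.M.--G.M. that if $D_{KL}(\mathcal{X}\|\mathcal{Y}) \leq \kappa$ and $\E_{x\sim\mathcal{X}}[f(x)] = 1-\delta$ for $f:\Theta\to[0,1]$, then $\E_{y\sim\mathcal{Y}}[f(y)] \geq (\delta^\delta e^{-\kappa})^{\frac{1}{1-\delta}}(1-\delta)$ — a \emph{multiplicative} loss of $e^{-\kappa/(1-\delta)}$ up to the constant $\delta^{\delta/(1-\delta)}$, which with $\kappa = O((\log q)/i)$ gives the $q^{1/i}$ factor directly (Lemma~\ref{lem:indrounding} then aggregates over constraints with a second A.M.--G.M.). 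Two secondary points: the paper also loses only $(1-\delta)$ rather than the $(1-\delta)^2$ your Cauchy--Schwarz numerator would incur, and it guarantees that the conditioned solution retains value $\geq \lambda$ not by a Markov argument over $(T,\beta)$ but by building that requirement into the LP itself (the SAC relaxation), which lets it combine "high conditioned value" and "low conditioned correlation" for a single choice of $(T,\phi_T)$ without a union-bound loss.
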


Using our birthday repetition theorem, we show that it is impossible
to improve the above tradeoff between run-time and
approximation ratio using the sum-of-squares SDP hierarchy (aka the
Lasserre SDP hierarchy).  Specifically, we use birthday repetition on
the $\Omega(n)$-level Lasserre integrality gap for {\sc Max 3XOR} by
Schoenebeck~\cite{Sch08} to show the following.

\begin{lemma}(Informal; See Lemma~\ref{thm:lasserre-gap}) \label{cor:lasserre-gap}
  For every sufficiently large $i > 0$, there is a fully-dense {\sc Max $k$-CSP} instance of value $1/(nq)^{1/i}$ such that the value of $\tilde \Omega_k(i)$-level Lasserre relaxation is one.
\end{lemma}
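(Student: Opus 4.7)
The plan is to combine Schoenebeck's integrality gap for \textsc{Max 3XOR}~\cite{Sch08} with our birthday repetition theorem. Fix a parameter $N_0$ to be chosen. Let $\cG_0$ be Schoenebeck's 3XOR instance on $N_0$ variables, with $val(\cG_0) \le 1/2 + o(1)$ yet a degree-$\Omega(N_0)$ Lasserre pseudoexpectation $\tilde{\E}$ achieving value $1$. Transform $\cG_0$ into a two-prover game $\cG$ via the standard clause--variable (Label Cover) reduction: one side's questions are clauses, the other's are variables, and the verifier checks consistency and clause satisfaction. The result is biregular, has constant alphabet, satisfies $val(\cG) \le 1 - \varepsilon$ for a constant $\varepsilon > 0$, and inherits Schoenebeck's Lasserre gap via the lifting $y_{C, \sigma_C} \mapsto \prod_{x \in C} y_{x, \sigma_C(x)}$ at level $\Omega(N_0)$.

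Next, apply $(k \times l)$-birthday repetition to $\cG$ with $k = l = N_0/t$, where $t = c \cdot i / \log i$ for a sufficiently small constant $c$. The free game $\cG^{k \times l}$ has $n = 2\binom{N_0}{k}$ questions and alphabet $q = |\Sigma|^k$, so $\log(nq) = \Theta(k \log t)$. Theorem~\ref{thm:main} gives $val(\cG^{k \times l}) \le 2(1 - \varepsilon/2)^{\Omega(k^2/N_0)} = 2 \gamma^{\Omega(N_0 / t^2)}$ for some constant $\gamma < 1$; comparing with $\log(nq)/i = \Theta(N_0 \log t / (t i))$, the choice $t = c i/\log i$ suffices to ensure $val(\cG^{k \times l}) \le (nq)^{-1/i}$. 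Since a free game is a fully-dense \textsc{Max 2-CSP}, a fully-dense \textsc{Max $k$-CSP} instance for every $k \ge 2$ is obtained by embedding the pairwise predicate into $k$-tuples with $k - 2$ dummy uniform slots, which preserves both value and relaxation value.

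The main technical obstacle is the Lasserre lifting from $\cG$ to $\cG^{k \times l}$: showing that $\tilde{\E}$ induces a valid Lasserre solution of value $1$ for $\cG^{k \times l}$ at level $\Omega(N_0/k) = \Omega(t) = \tilde{\Omega}_k(i)$. The natural construction defines birthday indicators $z_{S, \vec\sigma} := \prod_{x \in S} y_{x, \vec\sigma(x)}$ for each birthday question $S$ of size $k$ and $\vec\sigma \in \Sigma^S$ and pulls back $\tilde{\E}$ by this substitution. A degree-$D$ monomial in $z$'s becomes a degree-$Dk$ polynomial in $y$'s, which accounts for the factor-$k$ loss in Lasserre level. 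Each birthday verifier constraint -- consistency on overlaps $S \cap T$ and satisfaction of every original edge $(x,y) \in S \times T$ -- unfolds into a polynomial identity in the $y$'s that holds for Schoenebeck's solution (since it satisfies every original constraint with value $1$), so the lifted pseudoexpectation yields birthday-game relaxation value exactly $1$. The simplex constraints $\sum_{\vec\sigma} z_{S, \vec\sigma} = 1$, marginal consistency, and PSD-ness of the moment matrix all transfer immediately because they are preserved by polynomial substitution. What remains is to verify that $t = \Theta(i/\log i)$ simultaneously yields both the value bound $(nq)^{-1/i}$ and the Lasserre level $\tilde{\Omega}_k(i)$, which is a routine calculation.
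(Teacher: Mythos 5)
Your starting point (Schoenebeck's \textsc{Max 3XOR} gap, the clause/variable reduction, and the lift of the Lasserre solution by the monomial substitution $z_{S,\vec\sigma}\mapsto\prod_{x\in S}y_{x,\vec\sigma(x)}$ at a factor-$k$ cost in level) matches the paper's Lemma~\ref{lem:lasserre-starting-instance} and Lemma~\ref{lem:lasserre-reduction-completeness}, and your parameter choice $t=\Theta(i/\log i)$ is the right one. The gap is in the last step: ``a fully-dense \textsc{Max $k$-CSP} instance \ldots is obtained by embedding the pairwise predicate into $k$-tuples with $k-2$ dummy uniform slots.'' A birthday-repeated game is \emph{bipartite}: its constraints live only on $\binom{X}{k}\times\binom{Y}{l}$, so as a 2-CSP over the union of the two question sets it is at best $\Omega(1)$-dense, not fully-dense, and the padding inherits this. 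Worse, there is no easy repair: to make the distribution uniform over $V^k$ you must define predicates on the tuples that carry no cross edge, and setting them to $1$ pushes the value of the instance up to a constant (destroying the $(nq)^{-1/i}$ soundness), while setting them to $0$ destroys the completeness of the lifted Lasserre solution. This is precisely the obstruction the paper's Definition~\ref{def:k-csp-red} is designed to avoid: each CSP variable there is a \emph{pair} $(S,T)$ with $S\in\binom{X}{l}$ and $T\in\binom{Y}{l}$, so that every $k$-tuple of variables carries a full set of cross constraints between $S_1\cup\cdots\cup S_k$ and $T_1\cup\cdots\cup T_k$ and the distribution is genuinely uniform over $(V')^k$. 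You would need to replace your padding step with this construction (and then the soundness comes from Lemma~\ref{lem:k-csp-value} rather than directly from Theorem~\ref{thm:main}).

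Two smaller points. First, routing through the free game and padding loses the quadratic-in-$k$ amplification the paper gets from $\cG^l_k$ (value decay $\exp(-\Omega(k^2l^2/n^2))$ rather than $\exp(-\Omega(l^2/n))$); this does not matter for the informal statement, where the $k$-dependence is hidden in $\tilde\Omega_k(i)$, but it is what makes the formal bound $val\le(1/N)^{1/i}$ with $N=(nq)^k$ go through. Second, the clause/variable game of a random 3XOR instance is not biregular (variable degrees fluctuate), so you cannot invoke the biregular form of the birthday repetition theorem directly; the paper handles this by pruning high-degree variables and arguing via Lemma~\ref{lem:inq-cond} that the value changes by at most $O(\varepsilon)$.
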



Instead, if we assume that there exists a constant $\epsilon > 0$ so that
{\sc Max 3SAT} cannot be approximated to $1-\epsilon$ in sub-exponential
time (which we call the Exponential Time Hypothesis for Approximating {\sc Max 3SAT} (ETHA)), then we can arrive at the following hardness result.

\begin{lemma}(Informal; See Lemma~\ref{thm:hardness-linear})
  Unless ETHA is false, for every sufficiently large $i > 0$, no $(nq)^{\tilde O_k(i)}$-time algorithm can approximate fully-dense {\sc Max $k$-CSP} to within a factor of $(nq)^{1/i}$.
\end{lemma}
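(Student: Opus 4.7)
The plan is to compose three reductions---Max 3SAT $\to$ Label Cover $\to$ birthday-repeated free game $\to$ fully-dense Max $k$-CSP---and tune the repetition parameter so that a hypothetical $(nq)^{\tilde O_k(i)}$-time $(nq)^{1/i}$-approximation algorithm for the last problem collapses into a sub-exponential algorithm for $(1,1-\epsilon)$-gap Max 3SAT, contradicting ETHA.

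First, I would invoke a near-linear-size PCP reduction (e.g., Moshkovitz--Raz, or Dinur's PCP composed with a constant number of parallel repetitions) to convert a Max 3SAT formula on $N$ clauses into a biregular Label Cover instance $\cG$ on $n_0 = N^{1+o(1)}$ variables, with constant alphabet $q_0$ and a constant soundness gap $(1,1-\epsilon_0)$; under ETHA this gap is $2^{\Omega(N)}$-hard to distinguish.

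Next, for the given (sufficiently large) parameter $i$, I would apply the $(T \times T)$-birthday repetition of Theorem~\ref{thm:main} to $\cG$, with $T$ chosen minimal so that the soundness bound $2(1-\epsilon_0/2)^{\Omega(\epsilon_0^5 T^2/n_0)}$ drops below $(n'q')^{-1/i}$, where $n' = \binom{n_0}{T}^2$ and $q' = q_0^T$; a direct calculation gives $T = \tilde\Theta(n_0/i)$, the tilde absorbing a $\log(n_0 q_0)$ factor and constants depending on $\epsilon_0$. The resulting game $\cG^{T \times T}$ is free; a standard symmetrization (merging the two question sets into a common variable set, adding trivial dummy constraints among same-side pairs) together with the trivial $2$-ary to $k$-ary lift (padding each predicate with $k-2$ don't-care coordinates, which preserves uniformity of the question distribution over $V^k$) turns it into a fully-dense Max $k$-CSP instance with value $1$ in the YES case and at most $(n'q')^{-1/i}$ in the NO case. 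Running the hypothesised algorithm on this instance then distinguishes the two cases and thereby decides $(1,1-\epsilon_0)$-gap Label Cover (and hence $(1,1-\epsilon)$-gap Max 3SAT) in time $(n'q')^{\tilde O_k(i)} = 2^{\tilde O_k(i \log(n'q'))}$; substituting the choice of $T$ and $n_0 = N^{1+o(1)}$ makes this $2^{o(N)}$ in the appropriate range of $i$, contradicting ETHA.

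The main obstacle will be the careful calibration of $T$ and $i$ in the presence of the polylogarithmic slack hidden in Theorem~\ref{thm:main}: $T$ must simultaneously be large enough for the birthday repetition soundness to reach $(n'q')^{-1/i}$ and small enough that the final running-time exponent is sub-exponential in $N$. This balancing act, together with the $\polylog(n_0 q_0)$ overhead appearing in the birthday repetition threshold, is precisely what forces the $\tilde O_k$ notation in the lemma's statement and the ``sufficiently large $i$'' caveat (restricting the result to the regime where $i$ dominates the $\polylog$ overhead). The use of a near-linear-size PCP is essential so that the polylogarithmic overhead does not swamp ETHA. Apart from this bookkeeping, the argument is a direct application of Theorem~\ref{thm:main} together with the now-standard sub-exponential reduction framework of \cite{AIM, BKW15}.
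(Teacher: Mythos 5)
Your high-level plan (start from gap-{\sc 3SAT}, build a gap-preserving two-prover game, apply birthday repetition with repetition parameter $\tilde\Theta(n/i)$, and calibrate so that a hypothesised $(nq)^{\tilde O_k(i)}$-time algorithm collapses to a sub-exponential algorithm for the gap problem) matches the paper, but two of your three reduction steps have genuine problems.

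First, under ETHA you must \emph{not} pass through a PCP. The hypothesis already hands you the $(1,1-\epsilon)$ gap, and the paper's proof of Lemma~\ref{thm:hardness-linear} uses only the clause/variable projection game (Definition~\ref{def:clause-variable} together with Proposition~\ref{prop:clause-var-val}), which preserves the gap up to a factor $3$ and has size \emph{exactly} $O(N)$. Every known constant-query PCP has length at least $N\polylog N$ (the paper explicitly remarks that a linear-length constant-query PCP is not known), and that super-linear blowup re-introduces a loss in the final running-time exponent that depends on $N$ (equivalently, on $\polyloglog$ of the produced instance size) rather than only on $i$ and $k$. That loss is precisely the difference between the ETH-based Lemma~\ref{thm:hardness-approx} and the ETHA-based Lemma~\ref{thm:hardness-linear}; with a PCP as your first step you can only recover the former, not the $\tilde O_k(i)$ bound claimed here.

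Second, the route ``free game $\to$ symmetrize with dummy same-side constraints $\to$ pad to arity $k$ with don't-care coordinates'' fails. If the merged variable set is $X'\cup Y'$ and same-side pairs carry trivially satisfied predicates, then about half of all constraint tuples are satisfied by every assignment, so the NO-case value is at least $1/2$ and the soundness $(n'q')^{-1/i}$ is destroyed; dropping the same-side constraints instead leaves an instance that is only $\Theta(1)$-dense rather than fully dense. Moreover, the don't-care padding contributes no decay in $k$, which the paper explicitly identifies as the wrong way to go. The paper instead uses the direct reduction $\cG^l_k$ of Definition~\ref{def:k-csp-red}, in which each CSP variable is a \emph{pair} $(S,T)$ with $S\subseteq X$ and $T\subseteq Y$, so that every $k$-tuple of variables induces $\Theta(k^2l^2/n^2)$ cross constraints and Lemma~\ref{lem:k-csp-value} gives soundness $2^{-\Omega(\varepsilon^3 k^2l^2/n)}$. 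This $k^2$ amplification (versus none in your padding) is what produces the $N^{\tilde\Omega(i)/\log^2 k}$ running-time bound of the formal lemma; even after repairing the symmetrization you would lose a further $k^2/\log^2 k$ factor in the exponent.
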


Thus, assuming ETHA, our hardness result and algorithm resolve complexity of approximating dense CSPs up to a factor of $\polylog i$ and a dependency on $k$ in the exponent of the running time.

\subsection*{{\sc Densest $k$-Subhypergraph}}

As a by-product of our approximation algorithm for dense {\sc Max $k$-CSP}, we will give a new approximation algorithm for {\sc Densest $k$-Subhypergraph}, the generalization of {\sc Densest $k$-Subgraph} to hypergraphs defined below, in the regime where the input hypergraph is $d$-uniform for some constant $d > 0$ and the optimal subhypergraph is sufficiently dense.

\begin{definition}[{\sc Densest $k$-Subhypergraph}]
  Given a hypergraph $(V, E)$ as an input, find a subset $S \subseteq V$ of $k$ vertices that maximizes the number of edges contained in the subhypergraph induced on $S$.
\end{definition}

When the input hypergraph is simply a graph, the problem becomes the {\sc Densest $k$-Subgraph} problem, which has been extensively studied from the approximation algorithm viewpoint dating back to the early '90s~\cite{KP93,FKP99,FS97,ST05,BCCFV}. On the other hand, {\sc Densest $k$-Subhypergraph} was first studied in 2006, when Hajiaghayi \etal~\cite{Haj06} proved that, if 3SAT $\notin \mathbf{DTIME}(2^{n^{3/4 + \varepsilon}})$ for some $\varepsilon > 0$, then there is no polynomial-time algorithm that approximates {\sc Densest $k$-Subhypergraph} to within a factor of $2^{\log^\delta n}$ for some $\delta > 0$. Later, Applebaum~\cite{App13} showed, under a cryptographic assumption, that, for sufficiently large $d$, {\sc Densest $k$-Subhypergraph} on $d$-uniform hypergraph is hard to approximate to a factor of $n^{\varepsilon}$ for some $\varepsilon > 0$. Chuzhoy \etal~\cite{Chu15} then used this result to establish a hardness of approximation for the $k$-route cut problem. More recently, Chlamt{\'{a}}c \etal~\cite{Ch16} provided the first non-trivial approximation algorithm for the problem; their algorithm works only on $3$-uniform hypergraph and achieves $O(n^{4(4 - \sqrt{3})/13 + \varepsilon})$-approximation for any constant $\varepsilon > 0$ in polynomial time.

Thanks to Charikar \etal's~\cite{CHK} reduction from {\sc Densest $k$-Subgraph} to {\sc Max 2-CSP}, which can be easily adapted to a reduction from {\sc Densest $k$-Subhypergraph} on $d$-uniform hypergraph to {\sc Max $d$-CSP}, Theorem~\ref{thm:approx-inf} immediately implies the following approximation algorithm for {\sc Densest $k$-Subhypergraph}.

\begin{corollary}(Informal; See Corollary~\ref{cor:dense-hypergraph})
  There is a randomized algorithm that, given a $d$-uniform hypergraph whose densest $k$-subhypergraph is $\Delta$-dense and an integer $i > 0$, runs in $n^{O_{\Delta, d}(i)}$ time and outputs a $k$-subhypergraph of density $\Omega_{\Delta, k}(n^{-1/i})$ with high probability.
\end{corollary}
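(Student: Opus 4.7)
The plan is to adapt the reduction of Charikar, Hajiaghayi and Karloff~\cite{CHK} from \textsc{Densest $k$-Subgraph} to \textsc{Max 2-CSP} to the hypergraph setting, producing a dense \textsc{Max $d$-CSP} instance to which Theorem~\ref{thm:approx-inf} can be applied directly.

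Given a $d$-uniform hypergraph $H = (V, E)$ with $|V| = n$, I would construct a \textsc{Max $d$-CSP} instance with variables $\{x_1, \ldots, x_k\}$, common alphabet $V$, constraint distribution uniform over the $k(k-1)\cdots(k-d+1)$ ordered $d$-tuples of distinct variables, and symmetric predicate that accepts $(v_1, \ldots, v_d)$ iff $\{v_1, \ldots, v_d\}$ is a set of $d$ distinct vertices forming an edge of $H$. Each supported ordered $d$-tuple carries mass $1/(k(k-1)\cdots(k-d+1))$, whose ratio to $1/k^d$ equals $\prod_{j=1}^{d-1}(1 - j/k) = \Omega_d(1)$ whenever $k \ge 2d$ (the case $k < 2d$ is handled by brute force), so the reduced instance is $\Omega_d(1)$-dense.

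For completeness, if the densest $k$-subhypergraph $S^* = \{s_1, \ldots, s_k\}$ has density $\Delta$, the assignment $x_j \mapsto s_j$ satisfies exactly those constraints whose $d$-subset of variables is mapped to an edge, giving CSP value at least $\Delta$. For soundness, given any assignment $\phi$ of CSP-value $\alpha$, let $T = \phi(\{x_1, \ldots, x_k\})$ so $|T| \le k$; each edge $e \subseteq T$ accounts for at most $d!\prod_{v \in e}|\phi^{-1}(v)| \le d!\,k^d$ satisfied ordered constraints, so the subhypergraph induced on $T$ has at least $\alpha \binom{k}{d}/k^d$ edges, and padding $T$ to a $k$-element set with arbitrary extra vertices produces a $k$-subhypergraph of density at least $\alpha/k^d$.

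Finally, I would apply Theorem~\ref{thm:approx-inf} to the reduced instance with arity $d$, alphabet size $q = n$, density parameter $\Omega_d(1)$, and CSP value at least $\Delta$ (so the theorem's $\varepsilon = 1 - \Delta$ is bounded purely in terms of $\Delta$), obtaining an assignment of CSP-value $\Omega_\Delta(n^{-1/i})$ in time $(kn)^{O_{d, \Delta}(i)} \le n^{O_{\Delta, d}(i)}$. Plugging this into the soundness bound yields a $k$-subhypergraph of density $\Omega_{\Delta, k}(n^{-1/i})$, as desired. The main technical hurdle is engineering the reduction so that both the arity and the density of the reduced CSP depend only on $d$, not $k$: this is precisely what keeps the running-time exponent $k$-free, relegating all $k$-dependence to the output density's $\Omega_{\Delta, k}(\cdot)$ constant, which in turn absorbs the $k^d$ slack from the soundness rounding.
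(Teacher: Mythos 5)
Your proposal is correct, but your reduction differs from the paper's in an instructive way. The paper's Lemma~\ref{lem:dense-hyper-reduction} (following Charikar~\etal) first draws a uniformly random partition $V = S_1 \cup \cdots \cup S_k$ and defines the predicate on $(i_1,\dots,i_d)$ to accept $(v_1,\dots,v_d)$ only if $\{v_1,\dots,v_d\} \in E$ \emph{and} $v_j \in S_{i_j}$ for each $j$. The partition constraint makes the recovery step lossless — any assignment of CSP-value $\alpha$ yields a $k$-subhypergraph of density at least $\alpha$, since disjointness of the $S_i$ prevents an assignment from collapsing many variables onto few vertices and overcounting a single edge — at the price of a randomized reduction whose completeness only guarantees $val(\cG) \geq \Delta/2^{O(d\log d)}$ with probability $1/2$ (forcing the $n$-fold repetition and the doubling-on-$\tau$ wrapper in the paper's algorithm). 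Your reduction is deterministic and has exact completeness $val(\cG)\geq\Delta$, but the recovery is lossy by a factor of $\Theta_d(k^d)$, precisely because nothing stops an assignment from mapping $\Omega(k)$ variables to each vertex of a single edge and achieving constant CSP-value while the recovered subhypergraph has density only $O(d!/k^d)$. For the informal statement you are proving, whose output density is $\Omega_{\Delta,k}(n^{-1/i})$, absorbing $k^d$ into the hidden constant is legitimate, and your accounting of the running time (value $\geq\Delta$, density $\Omega_d(1)$, arity $d$, so exponent $O_{\Delta,d}(i)$ with no $k$-dependence) is right. Be aware, however, that the paper's formal Corollary~\ref{cor:dense-hypergraph} promises density $\Delta/(2^{O(d\log d)}n^{1/i})$ with \emph{no} dependence on $k$; your route cannot recover that stronger form, and the random partition is exactly the device that eliminates the $k^d$ loss.
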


Here we use $n$ to denote the number of vertices of the graph and we define the density of a $d$-uniform hypergraph to be simply $d! |E|/|V|^d$. We remark that our algorithm is incomparable to Chlamt{\'{a}}c \etal's~\cite{Ch16} as their algorithm works for any $\Delta$ while ours requires $\Delta$ to be sufficiently large.

Note also that the density condition required is on the optimal output not the input hypergraph. Moreover, when $\Delta$ and $d$ are constant, the above corollary gives an $n^{O(i)}$-time $O(n^{1/i})$-approximation algorithm for {\sc Densest $k$-Subhypergraph} for every $i > 0$. When $d = 2$, this matches exactly with the previously known approximation algorithms for {\sc Densest $k$-Subgraph}~\cite{FS97,ST05,MM15}.

\subsection*{Almost Optimal $\AM(2)$ Protocol for {\sc 3SAT}}

Another interpretation of our improved hardness of approximation of free games is as an improved $\AM(2)$ protocol for {\sc 3SAT}. The Arthur-Merlin ($\AM$) protocol~\cite{Bab85} is a protocol where Arthur (verifier) tosses some random coins and sends the results to Merlin (prover). The prover sends back a proof to Arthur who then decides whether to accept it. Motivated by quantum complexity class $\mathbf{QMA}(k)$, Aaronson \etal~\cite{AIM} proposes a multi-prover version of $\AM$ called $\AM(k)$ where there are $k$ non-communicating Merlins\footnote{$\AM(k)$ is not to be confused with $\AM[k]$ defined in~\cite{Bab85}. In $\AM[k]$, there is only one Merlin but Arthur and Merlin are allowed to engage in $k$ rounds of communication.}. Authur sends an independent random challenge to each Merlin who then sends an answer back to Arthur. Finally, Arthur decides to accept or reject based on the received answers. The protocol is formally defined below.

\begin{definition}\cite{AIM}
  An $\AM(k)$ protocol for a language $L \subseteq \{0, 1\}^*$ of length $p(n) = kq(n)$, completeness $c(n)$, and soundness $s(n)$ consists of a probabilistic polynomial-time verifier $V$ such that
  \begin{itemize}
  \item (Completeness) For every $x \in L$, there exists functions $m_1, \dots, m_k: \{0, 1\}^{q(n)} \rightarrow \{0, 1\}^{q(n)}$ such that $\Pr_{y_1, \dots, y_k \sim \{0, 1\}^{q(n)}}[V(x, y_1, \dots, y_k, m(y_1), \dots, m(y_k))] \geq c(n)$, and,
  \item (Soundness) For every $x \notin L$ and for every function $m_1, \dots, m_k: \{0, 1\}^{q(n)} \rightarrow \{0, 1\}^{q(n)}$, we have $\Pr_{y_1, \dots, y_k \sim \{0, 1\}^{q(n)}}[V(x, y_1, \dots, y_k, m(y_1), \dots, m(y_k))] \leq s(n)$
  \end{itemize}
\end{definition}

The complexity class $\AM_{p(n)}(k)$ is a set of all languages $L$ such that there exists an $\AM(k)$ protocol of length $p(n)$, completeness 1/3, and soundness 2/3. Finally, the class $\AM(k)$ is defined as $\bigcup_{c \in \mathbb{N}}$ $\AM_{n^c}(k)$.

Similar to the interpretation of a two-prover game as a two-prover protocol, a free game can be viewed as an $\AM(2)$ protocol. Under this view, inapproximabilities of free games translate to $\AM(2)$ protocols whereas approximation algorithms for free games translate to lower bounds on the lengths of $\AM(2)$ protocols.

With this viewpoint, Aaronson et al. constructed, via birthday repetition, an $\AM(2)$ protocol of length $n^{1/2 + o(1)}\poly(1/\delta)$ for {\sc 3SAT} with completeness 1 and soundness $\delta$ for every $\delta > 0$. They also showed a lower bound of $\Omega(\sqrt{n}\log(1/\delta))$ on the length of such protocol. Equipped with our birthday repetition theorem, we construct an $\AM(2)$ protocol whose length is optimal up to a factor of $\polylog n$.

\begin{lemma} \label{lem:am-protocol}
  For any $\delta > 0$, there is an $\AM(2)$ protocol for {\sc 3SAT} of length $\tilde O(\sqrt{n \log(1/\delta)})$ with completeness 1 and soundness $\delta$.
\end{lemma}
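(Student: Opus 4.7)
The plan is to build the $\AM(2)$ protocol by applying our birthday repetition theorem (Theorem~\ref{thm:main}) to a two-prover game for {\sc 3SAT} furnished by the PCP theorem. Concretely, starting from a {\sc 3SAT} instance $\varphi$ on $n$ variables, I would first invoke the PCP theorem to produce a two-prover game $\cG = (X, Y, \cQ, \Sigma_X, \Sigma_Y, P)$ with $|X| + |Y| = O(n)$, constant answer-set sizes $|\Sigma_X|, |\Sigma_Y| = O(1)$, $\cQ$ uniform over its support, and $(X, Y, \supp(\cQ))$ biregular, such that $val(\cG) = 1$ when $\varphi$ is satisfiable and $val(\cG) \leq 1 - \varepsilon_0$ otherwise, for some absolute constant $\varepsilon_0 > 0$.

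Next, I would choose $k = l$ as the smallest integer making the soundness bound $2(1 - \varepsilon_0/2)^{\Omega(\varepsilon_0^5 kl/n)}$ from Theorem~\ref{thm:main} at most $\delta$; since $\varepsilon_0$ is a fixed constant, solving for $k$ gives $k = l = \Theta(\sqrt{n \log(1/\delta)})$. The $\AM(2)$ protocol is then the natural one induced by $\cG^{k \times l}$: Arthur samples independent uniform subsets $S \in \binom{X}{k}$ and $T \in \binom{Y}{l}$, sends $S$ to Merlin~1 and $T$ to Merlin~2, and receives back assignments $\phi_1 : S \to \Sigma_X$ and $\phi_2 : T \to \Sigma_Y$; he accepts iff $P(x, y, \phi_1(x), \phi_2(y)) = 1$ for every $(x, y) \in (S \times T) \cap \supp(\cQ)$. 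Completeness~$1$ is immediate since any globally satisfying assignment yields answers Arthur always accepts, while soundness $\leq \delta$ is precisely Theorem~\ref{thm:main} applied to the no-instance.

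For the length bound, describing a size-$k$ subset of $X$ takes $O(k \log n)$ bits and each assignment on $k$ variables takes $O(k \log |\Sigma|) = O(k)$ bits, so after padding each challenge and response to a common length we obtain $q(n) = O(k \log n)$ and total protocol length $p(n) = 2q(n) = O(\sqrt{n \log(1/\delta)} \log n) = \tilde O(\sqrt{n \log(1/\delta)})$, matching the claim. The main thing to be careful about is ensuring the starting game satisfies the biregularity and uniform-support hypotheses of Theorem~\ref{thm:main}; this is standard via routine right-regularizing transformations on the PCP output, but should be cited explicitly. The remaining verifications --- that $V$ runs in probabilistic polynomial time (it inspects only $O(kl)$ candidate pairs) and that the constants in $\Omega(\cdot)$ propagate correctly through the soundness calculation --- are straightforward bookkeeping.
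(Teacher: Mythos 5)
Your proposal follows essentially the same route as the paper: reduce {\sc 3SAT} via the PCP theorem to a two-prover game with a constant soundness gap, apply birthday repetition with $k = l = \tilde \Theta(\sqrt{n \log(1/\delta)})$ so that the bound $2(1-\varepsilon/2)^{\Omega(kl/n)}$ drops below $\delta$, and let the protocol be the one naturally induced by $\cG^{k \times l}$; completeness, soundness, and the length accounting are all handled the same way. One correction: you posit a PCP producing a game with $|X|+|Y| = O(n)$, but no linear-size constant-query PCP is known (the paper itself emphasizes this when discussing why ETH is not known to imply ETHA). The paper instead uses Dinur's PCP, which yields a game of size $n \polylog n$ with bounded degree and constant alphabet. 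This does not break your argument --- the $\polylog n$ blowup is absorbed both by the $\tilde O(\cdot)$ in the protocol length and by choosing $l = c \log^d n \sqrt{n \log(1/\delta)}$ for suitable constants $c, d$ so that the exponent in the soundness bound still dominates $\log(1/\delta)$ --- but the first step should cite the quasi-linear-size PCP rather than a linear-size one, and the choice of $l$ should be adjusted accordingly.
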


We note that, by picking $\delta = 1/3$, Lemma~\ref{lem:am-protocol} immediately imply {\sc 3SAT} $\in \AM_{\tilde O(\sqrt{n})}(2)$. Since every problem in $\mathbf{NTIME}(n)$ is reducible to a quasi-linear size {\sc 3SAT} instance~\cite{Cook88}, we arrive at the following corollary, resolving the first open question posted in~\cite{AIM}.

\begin{corollary}
  $\mathbf{NTIME}(n) \subseteq \AM_{\tilde O(\sqrt{n})}(2)$.
\end{corollary}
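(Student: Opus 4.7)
The plan is to build the $\AM(2)$ protocol in three stages: reduce {\sc 3SAT} to a near-linear-size two-prover game $\cG$ via a quasi-linear PCP, apply the $(k \times l)$-birthday repetition of Theorem~\ref{thm:main} with $k=l=\Theta(\sqrt{n\log(1/\delta)})$ to push the soundness down to $\delta$, and have Arthur simulate the verifier of the resulting repeated game.

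For the first stage, I would invoke a standard quasi-linear PCP --- for instance, the Ben-Sasson--Sudan constructions composed with Dinur-type gap amplification, as packaged into the usual reduction from $\mathbf{NTIME}(n)$ to Label Cover --- to produce, from a 3SAT instance $\varphi$ of size $n$, a two-prover game $\cG$ with $N:=|X|+|Y|=n\cdot\polylog n$, constant alphabets, biregular support, $\cQ$ uniform over its support, perfect completeness, and constant soundness error $1-\varepsilon_0$. These are precisely the hypotheses of Theorem~\ref{thm:main}. For the second stage, set $k=l=\lceil C\sqrt{N\log(1/\delta)}\rceil$ for a constant $C=C(\varepsilon_0)$ large enough that Theorem~\ref{thm:main} yields $val(\cG^{k\times l})\leq 2(1-\varepsilon_0/2)^{\Omega(\varepsilon_0^5 kl/N)} \leq \delta$ in the unsatisfiable case, while perfect completeness is preserved trivially since a satisfying strategy for $\cG$ also satisfies every constraint of the repeated game.

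For the final stage, Arthur samples $S\in\binom{X}{k}$ and $T\in\binom{Y}{l}$ uniformly and independently, encodes each as a bit string of length $q(n):=O(k\log N)=\tilde O(\sqrt{n\log(1/\delta)})$, and sends them as his two challenges. Each Merlin replies with an assignment on its set, padded to length $q(n)$ (the raw answer length is $O(k\log|\Sigma|)\leq q(n)$ since $|\Sigma|=O(1)$). Arthur accepts iff the answers satisfy the polynomial-time verifier of $\cG^{k\times l}$. I expect the main obstacle to be routine bookkeeping: checking that a quasi-linear PCP can be arranged to meet all regularity and uniformity hypotheses of Theorem~\ref{thm:main} simultaneously, and the minor encoding issue that $\binom{N}{k}$ need not be a power of two (handled by using $O(\log(1/\delta))$ extra random bits together with rejection sampling, or by replacing uniform $k$-subsets with uniform $k$-tuples with repetition at a negligible cost in soundness). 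Both cost at most a $\polylog n$ factor, which the $\tilde O(\cdot)$ in the target length absorbs.
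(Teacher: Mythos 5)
Your proposal follows essentially the same route as the paper: a quasi-linear PCP (the paper uses Dinur's, Theorem~\ref{dinur-pcp}, which already delivers a bounded-degree, constant-alphabet projection game of size $n\polylog n$) followed by $(l\times l)$-birthday repetition with $l=\tilde\Theta(\sqrt{n\log(1/\delta)})$, with Arthur simulating the repeated game's verifier; this is exactly the paper's Lemma~\ref{lem:am-protocol} instantiated at $\delta=1/3$. Two small remarks. First, the biregularity worry you flag is a non-issue: the formal statements (Theorems~\ref{thm:birthday-general} and~\ref{thm:birthday-proj}) require only a bound on the maximum degree, not biregularity, so Dinur's preprocessed instance suffices as is. Second, your argument as written establishes the protocol for {\sc 3SAT} only; the corollary is about all of $\mathbf{NTIME}(n)$, so you still need the (standard, but necessary) quasi-linear Cook reduction from $\mathbf{NTIME}(n)$ to {\sc 3SAT} before invoking the protocol --- the resulting $\polylog n$ blowup is absorbed by the $\tilde O(\cdot)$, as in the paper.
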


\subsection*{Organization of the Paper}

The rest of the paper is organized as follows. In the following section, we provide preliminaries and state notations that we use in the paper. Then, in Section~\ref{sec:birthday}, we prove our main theorems. Next, Section~\ref{sec:app} demonstrates applications of our birthday repetition theorem, including new hardnesses of approximation and Lasserre integrality gap for dense CSPs, and an almost optimal $\AM(2)$ protocol for {\sc 3SAT}. The algorithm for dense {\sc Max $k$-CSP} is described and its approximation guarantee is proved in Section~\ref{sec:alg}; the approximation algorithm for {\sc Densest $k$-Subhypergraph} is also given at the end of the section. Finally, we conclude by proposing open questions and future research directions in Section~\ref{sec:open}.

\section{Preliminaries and Notations} \label{sec:notation}

In this section, we define notations and state some well-known facts that will be used in the paper.

\subsection{Miscellaneous}

For any positive integer $n$, we use $[n]$ to denote the set $\{1, \dots, n\}$. For two sets $X$ and $S$, define $X^S$ to be the set of tuples $(x_s)_{s \in S}$ indexed by $S$ with $x_S \in X$. We sometimes view each tuple $(x_s)_{s \in S}$ as a function from $S$ to $X$. For a set $S$ and an integer $n \leq |S|$, we use $\binom{S}{n}$ to denote the collection of all subsets of $S$ of size $n$. For convenience, we let $\binom{S}{0} = \{\emptyset\}$. We use $\binom{S}{[n]}$ to denote $\binom{S}{0} \cup \cdots \cup \binom{S}{n}$. For any bipartite graph $(A, B, E)$ and any $S \subseteq A, T \subseteq B$, let $E(S, T)$ denote the set of all edges with one endpoint in $S$ and the other in $T$.

Throughout the paper, we use $\log$ to denote the natural logarithm. We write $\polylog n$ and $\polyloglog n$ as shorthands for $\log^c n$ and $(\log \log n)^c$ for some constant $c > 0$ respectively. Finally, $\tilde \Omega(f(n))$ and $\tilde O(f(n))$ are used to denote $\bigcup_{c \in \mathbb{N}} \Omega(f(n)/\log^c f(n))$ and $\bigcup_{c \in \mathbb{N}} O(f(n) \log^c f(n))$ correspondingly.

\subsection{Probability Theory and Information Theory}

Throughout the paper, we use calligraphic letters to denote probability distributions. Let $\mathcal{X}$ be a probability distribution over a finite probability space $\Theta$. We use $x \sim \mathcal{X}$ to denote a random variable $x$ sampled according to $\mathcal{X}$. Sometimes we use shorthand $x \sim \Theta$ to denote $x$ being drawn uniformly at random from $\Theta$. For each $\theta \in \Theta$, we denote $\Pr_{x \sim \mathcal{X}}[x = \theta]$ by $\mathcal{X}(\theta)$. The \emph{support} of $\mathcal{X}$ or $\supp(\mathcal{X})$ is the set of all $\theta \in \Theta$ such that $\mathcal{X}(\theta) \ne 0$. For any event $E$, we use $\mathds{1}[E]$ to denote the indicator variable for the event.

Let us define some information theoretic notions that will be useful in the analysis of our algorithm. The \emph{informational divergence} (aka \emph{Kullback-Leibler divergence}) between two probability distributions $\mathcal{X}$ and $\mathcal{Y}$ is $D_{KL}(\mathcal{X}\|\mathcal{Y}) = \sum_{\theta \in \supp(\mathcal{X})}\mathcal{X}(\theta)\log(\mathcal{X}(\theta)/\mathcal{Y}(\theta)).$
Note that, when $\supp(\mathcal{Y}) \not\subseteq \supp(\mathcal{X})$, we let $D_{KL}(\mathcal{X}\|\mathcal{Y}) = \infty$. It is well-known that $D_{KL}(\mathcal{X}\|\mathcal{Y}) \geq 0$ for any distributions $\mathcal{X}$ and $\mathcal{Y}$.

The \emph{entropy} of a random variable $x \sim \mathcal{X}$ is defined as $H(x) = -\sum_{\theta \in \supp(\mathcal{X})} \mathcal{X}(\theta)\log \mathcal{X}(\theta)$. For jointly distributed random variables $x_1, \dots, x_n$, the entropy of $x_1, \dots, x_n$ is defined similarly as $H(x_1, \dots, x_n) = -\sum_{(\theta_1, \dots, \theta_n) \in \supp(\mathcal{X}_{1, \dots, n})} \mathcal{X}_{1, \dots, n}(\theta)\log \mathcal{X}_{1, \dots, n}(\theta)$ where $\mathcal{X}_{1, \dots, n}$ is the joint distribution of $x_1, \dots, x_n$. The \emph{mutual information} of $x_1, \dots, x_n$ is defined as $I(x_1; \dots; x_n) = \sum_{S = \{i_1, \dots, i_m\} \subseteq [n] \atop S \ne \emptyset} (-1)^{m - 1} H(x_{i_1}, \dots, x_{i_m})$.

The conditional entropy $H(x_1, \dots, x_{n - 1} | x_n)$ is defined as $\E_{\theta \sim \supp(\mathcal{X}_n)} [H(x_1, \dots, x_{n - 1}) | x_n = \theta]$ where $\mathcal{X}_n$ is the marginal distribution of $x_n$. Similarly, the conditional mutual information $I(x_1; \dots; x_{n - 1} | x_n)$ is defined as $\E_{\theta \sim \supp(\mathcal{X}_n)} [I(x_1; \dots; x_{n - 1}) | x_n = \theta]$. The following identity is well-known and is, in fact, often used an a definition for mutual information.
\begin{lemma} \label{lem:cond-mutual-info}
  For any random variables $x_1, \dots, x_n$, we have $I(x_1; \dots; x_n) = I(x_1; \dots; x_{n - 1}) - I(x_1; \dots; x_{n - 1} | x_n)$.
\end{lemma}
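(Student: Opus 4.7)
The plan is to verify the identity by directly expanding both sides using the alternating-sum definition of multivariate mutual information, taking advantage of the fact that the only nontrivial piece is the sum over subsets $S$ of $[n]$ that contain the index $n$.

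First I would split the sum defining $I(x_1;\dots;x_n)$ into two groups according to whether the indexing subset $S$ contains $n$. The terms with $S \subseteq [n-1]$ assemble directly into $I(x_1;\dots;x_{n-1})$. For the remaining terms, I would reindex by writing $S = T \cup \{n\}$ with $T \subseteq [n-1]$ (allowing $T = \emptyset$), so that the sign $(-1)^{|S|-1}$ becomes $(-1)^{|T|}$, yielding the auxiliary sum $\sum_{T \subseteq [n-1]} (-1)^{|T|} H(x_T, x_n)$ (with $H(x_\emptyset,x_n) = H(x_n)$).

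Second, I would expand $I(x_1;\dots;x_{n-1} \mid x_n)$ using the definition as an expectation of $I(x_1;\dots;x_{n-1})$ conditioned on $x_n = \theta$. By linearity of expectation this becomes $\sum_{T \subseteq [n-1], T \ne \emptyset} (-1)^{|T|-1} H(x_T \mid x_n)$, and then I would apply the chain rule $H(x_T \mid x_n) = H(x_T, x_n) - H(x_n)$ to each term. The $H(x_n)$ pieces factor out of the sum, and the combinatorial identity $\sum_{T \subseteq [n-1], T \ne \emptyset} (-1)^{|T|-1} = 1$ (valid for $n \ge 2$, obtained from $(1-1)^{n-1} = 0$) collapses them into a single $H(x_n)$ term.

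Finally, I would substitute these expansions into the right-hand side $I(x_1;\dots;x_{n-1}) - I(x_1;\dots;x_{n-1} \mid x_n)$, and verify that the resulting signed combination of joint entropies matches the split obtained on the left-hand side in the first step. The base case $n = 1$ (or really $n = 2$, where the identity reduces to the standard $I(x_1;x_2) = H(x_1) - H(x_1 \mid x_2)$) can be handled separately as a sanity check. There is no real obstacle here, only bookkeeping of signs; the one place to be careful is the reindexing of subsets containing $n$ and the inclusion of the $T = \emptyset$ term, which is exactly what supplies the leftover $H(x_n)$ that cancels the $H(x_n)$ produced by the chain rule on $H(x_T \mid x_n)$.
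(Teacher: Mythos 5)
Your argument is correct and complete. Note that the paper states Lemma~\ref{lem:cond-mutual-info} without proof (it is cited as a well-known identity), so there is no proof to compare against; your direct verification — splitting the alternating sum over subsets $S \subseteq [n]$ according to whether $n \in S$, reindexing the latter group as $S = T \cup \{n\}$ to get $\sum_{T \subseteq [n-1]} (-1)^{|T|} H(x_T, x_n)$, and matching this against the expansion of $I(x_1;\dots;x_{n-1}\mid x_n)$ via the chain rule $H(x_T \mid x_n) = H(x_T, x_n) - H(x_n)$ together with $\sum_{\emptyset \ne T \subseteq [n-1]} (-1)^{|T|-1} = 1$ — is exactly the standard derivation, and the sign bookkeeping checks out. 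The only caveat, which you already flag, is that the identity requires $n \geq 2$.
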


Last information theoretic measure we will use is the \emph{total correlation} defined as $C(x_1; \dots; x_n) = D_{KL}(\mathcal{X}_{1, \dots, n} \| \mathcal{X}_1 \times \cdots \times \mathcal{X}_n)$ where $\mathcal{X}_{1, \dots, n}$ is the joint distribution of $x_1, \dots, x_n$ whereas $\mathcal{X}_1, \dots, \mathcal{X}_n$ are the marginal distributions of $x_1, \dots, x_n$ respectively. We note that the total correlation defined here is always non-negative whereas the mutual information can be negative.

The total correlation is related to entropies and mutual information as follows.

\begin{lemma} \label{lem:total-cor-entropy}
  For any random variables $x_1, \dots, x_n$, we have $C(x_1; \dots; x_n) = \sum_{i \in [n]} H(x_i) - H(x_1; \dots; x_n)$.
\end{lemma}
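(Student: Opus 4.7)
The plan is to prove the identity by directly expanding the definition of the Kullback--Leibler divergence and matching the resulting pieces with the entropies appearing on the right-hand side. Because the marginals factor inside the logarithm, this is a purely mechanical computation.

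First I would write, from the definition given just before the lemma,
\[
C(x_1;\dots;x_n) \;=\; \sum_{\theta=(\theta_1,\dots,\theta_n)\in\mathrm{supp}(\mathcal{X}_{1,\dots,n})} \mathcal{X}_{1,\dots,n}(\theta)\,\log\frac{\mathcal{X}_{1,\dots,n}(\theta)}{\mathcal{X}_1(\theta_1)\cdots\mathcal{X}_n(\theta_n)}.
\]
Using $\log(a/b) = \log a - \log b$ and linearity, the right-hand side splits as
\[
\sum_{\theta}\mathcal{X}_{1,\dots,n}(\theta)\log\mathcal{X}_{1,\dots,n}(\theta) \;-\; \sum_{i=1}^{n}\sum_{\theta}\mathcal{X}_{1,\dots,n}(\theta)\log\mathcal{X}_i(\theta_i).
\]
The first summand is exactly $-H(x_1,\dots,x_n)$ by the definition of joint entropy. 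For each $i$, I would then marginalize out all coordinates other than $\theta_i$ in the inner sum, using $\sum_{(\theta_j)_{j\ne i}}\mathcal{X}_{1,\dots,n}(\theta) = \mathcal{X}_i(\theta_i)$, so that
\[
\sum_{\theta}\mathcal{X}_{1,\dots,n}(\theta)\log\mathcal{X}_i(\theta_i) \;=\; \sum_{\theta_i\in\mathrm{supp}(\mathcal{X}_i)}\mathcal{X}_i(\theta_i)\log\mathcal{X}_i(\theta_i) \;=\; -H(x_i).
\]
Substituting these two identifications into the split expression produces $-H(x_1,\dots,x_n) + \sum_{i\in[n]}H(x_i)$, which is the claimed formula.

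There is essentially no obstacle here; the only minor care needed is to verify that the support conditions in the KL divergence do not cause issues. Since $\mathrm{supp}(\mathcal{X}_{1,\dots,n})\subseteq\mathrm{supp}(\mathcal{X}_1)\times\cdots\times\mathrm{supp}(\mathcal{X}_n)$, each $\log\mathcal{X}_i(\theta_i)$ in the sum is finite whenever $\mathcal{X}_{1,\dots,n}(\theta)>0$, so the manipulation above is valid and no degenerate cases arise.
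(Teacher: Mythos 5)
Your derivation is correct: expanding $D_{KL}(\mathcal{X}_{1,\dots,n}\,\|\,\mathcal{X}_1\times\cdots\times\mathcal{X}_n)$, splitting the logarithm, and marginalizing each $\log\mathcal{X}_i(\theta_i)$ term is exactly the standard argument, and your check that $\supp(\mathcal{X}_{1,\dots,n})\subseteq\supp(\mathcal{X}_1)\times\cdots\times\supp(\mathcal{X}_n)$ handles the only degenerate case. The paper states this lemma as a well-known identity without proof, so there is nothing to compare against; your write-up fills that gap correctly.
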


\begin{lemma} \label{lem:total-cor-mutual-info}
  For any random variables $x_1, \dots, x_n$, we have $C(x_1; \dots; x_n) = \sum_{S = \{i_1, \dots, i_m\} \subseteq [n] \atop |S| \geq 2} I(x_{i_1}; \dots; x_{i_m})$.
\end{lemma}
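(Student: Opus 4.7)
The plan is to verify the identity by reducing both sides to a common linear combination of joint entropies $H((x_i)_{i \in T})$ for $\emptyset \neq T \subseteq [n]$, and then matching coefficients. By Lemma~\ref{lem:total-cor-entropy}, the left-hand side expands to $\sum_{i \in [n]} H(x_i) - H(x_1, \dots, x_n)$, so in the entropy basis the only non-zero coefficients are $+1$ on each singleton and $-1$ on the full set.

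For the right-hand side, first substitute the paper's definition of mutual information into each summand and interchange the order of summation:
\[
\sum_{\substack{S \subseteq [n] \\ |S| \geq 2}} I((x_i)_{i \in S}) \;=\; \sum_{\substack{S \subseteq [n] \\ |S| \geq 2}} \sum_{\substack{T \subseteq S \\ T \neq \emptyset}} (-1)^{|T|-1} H((x_i)_{i \in T}) \;=\; \sum_{\emptyset \neq T \subseteq [n]} (-1)^{|T|-1} N(T)\, H((x_i)_{i \in T}),
\]
where $N(T) := \#\{S : T \subseteq S \subseteq [n], |S| \geq 2\}$. A short case analysis gives $N(T) = 2^{n-|T|}$ when $|T| \geq 2$, and $N(T) = 2^{n-1} - 1$ when $|T| = 1$. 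Matching the resulting coefficients against those produced by Lemma~\ref{lem:total-cor-entropy} would finish the proof.

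An equivalent and perhaps cleaner packaging is via Mobius inversion on the Boolean lattice: the definition of $I$ expresses each $I((x_i)_{i \in S})$ as a signed sum of $H((x_i)_{i \in T})$'s over non-empty $T \subseteq S$ with coefficients $(-1)^{|T|-1}$, and inversion yields the dual formula writing every $H((x_i)_{i \in S})$ as a signed sum of $I((x_i)_{i \in T})$'s. Substituting this dual formula into Lemma~\ref{lem:total-cor-entropy} and cancelling the $|T| = 1$ contributions (which exactly account for the $\sum_i H(x_i)$ term) collapses everything to a sum over $|T| \geq 2$. A parallel route would be induction on $n$: using Lemma~\ref{lem:total-cor-entropy} to write $C(x_1; \dots; x_n) = C(x_1; \dots; x_{n-1}) + I((x_1, \dots, x_{n-1}); x_n)$, apply the inductive hypothesis to the first term and expand the binary mutual information in the second as a sum over subsets of $[n-1]$ using Lemma~\ref{lem:cond-mutual-info}.

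The main obstacle is the combinatorial bookkeeping of signs and supersets counts: once $N(T)$ is computed in closed form, verifying the match against the two-term LHS is mechanical, but one has to be careful with the $|T| = 1$ exceptional case (which is exactly what makes the lower bound $|S| \geq 2$ on the outer sum do work). The rest is routine linear algebra on entropies.
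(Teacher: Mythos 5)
The paper states this lemma in its preliminaries without proof, so there is nothing to compare against; I can only assess your argument on its own terms. Your setup is the right one and your bookkeeping is correct up to the point where you stop: the interchange of summation is valid and $N(T)=2^{n-|T|}$ for $|T|\geq 2$, $N(T)=2^{n-1}-1$ for $|T|=1$ are both right. The problem is that the one step you defer --- ``matching the resulting coefficients \dots would finish the proof'' --- is exactly the step that fails. Your computation gives the coefficient $2^{n-1}-1$ on each singleton entropy $H(x_i)$ and $(-1)^{t-1}2^{n-t}$ on each $H((x_i)_{i\in T})$ with $|T|=t\geq 2$, whereas Lemma~\ref{lem:total-cor-entropy} gives $+1$ on singletons, $-1$ on $T=[n]$, and $0$ everywhere else. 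These agree only for $n\leq 2$. Already at $n=3$ the singleton coefficient is $3$, not $1$, and the proper subsets of size $2$ carry nonzero weight. A concrete check: if $x_1=x_2=x_3$ is a single uniform bit, then $C(x_1;x_2;x_3)=2\log 2$, while each pairwise $I$ equals $\log 2$ and $I(x_1;x_2;x_3)=\log 2$, so the right-hand side of the stated identity is $4\log 2$.

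What this reveals is that, with the paper's sign convention $I(x_{i_1};\dots;x_{i_m})=\sum_{\emptyset\neq T\subseteq S}(-1)^{|T|-1}H((x_i)_{i\in T})$, the identity needs an alternating sign: the correct statement is $C(x_1;\dots;x_n)=\sum_{|S|\geq 2}(-1)^{|S|}\,I((x_i)_{i\in S})$. Your second route actually proves this: M\"obius inversion of the definition gives $H((x_i)_{i\in S})=\sum_{\emptyset\neq T\subseteq S}(-1)^{|T|-1}I((x_i)_{i\in T})$, and substituting $S=[n]$ into Lemma~\ref{lem:total-cor-entropy}, the $|T|=1$ terms cancel against $\sum_i H(x_i)$ and what remains is exactly the signed sum over $|T|\geq 2$. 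So the gap in your write-up is not a wrong strategy but an unexecuted verification that, once executed, shows the unsigned identity is false for $n\geq 3$; to salvage the argument you must either carry the factor $(-1)^{|S|}$ or adopt the opposite sign convention for the multivariate mutual information.
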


Finally, similar to conditional entropy and conditional mutual information, we define the conditional total correlation as $C(x_1; \dots; x_{n - 1} | x_n) = \E_{\theta \sim \supp(\mathcal{X}_n)} [C(x_1; \dots; x_{n - 1}) | x_n = \theta]$.

\subsection{Two-prover Game, Free Game and {\sc Max $k$-CSP}}

Two-prover games, free games, and {\sc Max $k$-CSP} are defined in similar manners as in the introduction. However, for convenience, we write the predicates as $P_S(\phi|_S)$ instead of $P(S, \phi|_S)$, and, when $\cQ$ is the uniform distribution on $\Theta$, we sometimes write the instance as $(V, \Theta, \{P_S\})$ instead of $(V, \cQ, \{P_S\})$. Moreover, for an assignment $\phi$ of a {\sc Max $k$-CSP} instance $\cG = (V, \cW, \{P_S\})$, we define its value as $val_{\cG}(\phi) = \E_{S \sim \cW}[P_S(\phi|_S)]$. When it is clear from the context, we will drop $\cG$ and write it simply as $val(\phi)$. Note that $val(\cG)$ is the maximum of $val_{\cG}(\phi)$ among all possible assignments $\phi$'s. We say that $\cG$ is \emph{satisfiable} if its value is one.

We use $n$ to denote the number of variables $|V|$, $q$ to denote the alphabet size $|\Sigma|$ and $N$ to denote the instance size $|\supp(\cW)|q^k$, the number of bits needed to encode the input if each predicate is a boolean function. Note that, when the instance is fully dense, $N$ is simply $(nq)^k$. Similar notations are also used for two-prover games and free games.

Finally, we define projection games (aka {\sc Label Cover}), two-prover games with ``projection'' predicates.
\begin{definition}
  A two-prover game $\cG = (X, Y, \cQ, \Sigma_X, \Sigma_Y, \{P_{(x, y)}\})$ is a projection game if, for each $(x, y) \in \supp(\cQ)$, there exists a function (or projection) $f_{(x, y)}: \Sigma_X \to \Sigma_Y$ such that, for all $\sigma_x \in \Sigma_X$ and $\sigma_y \in \Sigma_Y$, $P_{(x, y)}(\sigma_x, \sigma_y) = 1$ if and only if $f_{(x, y)}(\sigma_x) = \sigma_y$.
\end{definition}

\subsection{Parallel and Birthday Repetitions}

We have already described the parallel (or tensor) repetition and birthday repetition in the introduction. Below are the formal notations we use to refer to them throughout the paper.

\begin{definition}
The $k$-parallel repetition of a two-prover game $\cG = (X, Y, \cQ, \Sigma_X, \Sigma_Y, \{P_{(x, y)}\})$ is a two-prover game $\cG^{\otimes k} = (X^k, Y^k, \cQ^k, \Sigma_X^k, \Sigma_Y^k, \{P^k_{((x_1, \dots, x_k), (y_1, \dots, y_k))}\})$ defined as follows. $X^k, Y^k, \Sigma_X^k, \Sigma_Y^k$ are defined in the usual Cartesian product sense. $\cQ^k$ is defined by $\cQ^k((x_1, \dots, x_k), (y_1, \dots, y_k)) = \prod_{i=1}^{k} \cQ(x_i, y_i).$
Finally, the predicates are defined as
$P^k_{((x_1, \dots, x_k), (y_1, \dots, y_k))}((\sigma_{x_1}, \dots, \sigma_{x_k}), (\sigma_{y_1}, \dots, \sigma_{y_k})) = \prod_{i=1}^{k} P_{x_i, y_i}(\sigma_{x_i}, \sigma_{y_i}).$
\end{definition}

\begin{definition}
For any $k \leq |X|$ and $l \leq |Y|$, a $(k \times l)$-birthday repetition of a two-player game $\cG = (X, Y, \cQ, \Sigma, \{P_{x, y}\})$ is a two-prover game $\cG^{k \times l} = (\binom{X}{k}, \binom{Y}{l}, \mathcal{U}^{k \times l}, \Sigma_X^{k}, \Sigma_Y^{l}, \{P^{k \times l}_{(S, T)}\}_{S \in \binom{X}{k}, T \in \binom{Y}{l}})$ defined as follows. $\mathcal{U}^{k \times l}$ is simply the uniform distribution over $\binom{X}{k} \times \binom{Y}{l}$. $\Sigma_X^{k}, \Sigma_Y^{l}$ are defined in the usual Cartesian product sense. Lastly, $P^{k \times l}_{(S, T)}$  is defined as $P^{k \times l}_{(S, T)}((\sigma_x)_{x \in S}, (\sigma_{y})_{y \in T}) = \prod_{(x, y) \in (S \times T) \cap \supp(\cQ)} P_{(x, y)}(\sigma_x, \sigma_y).$

Note that an empty product is defined as one, i.e., if $(S \times T) \cap \supp(\cQ) = \emptyset$, then $P^{k \times l}_{(S, T)}$ is identically one.
\end{definition}

\subsection{Sherali-Adams and Lasserre Hierarchies}

In this paper, we consider two hierarchies of linear and semidefinite program relaxations of {\sc Max $k$-CSP}. For compactness, we only write down the relaxations of {\sc Max $k$-CSP} but do not describe the hierarchies in full details. For interested readers, we refer to Chlamt{\'{a}}c and Tulsiani's survey on the topic~\cite{CT12}.

The first hierarchy we consider is the Sherali-Adams (SA) hierarchy, introduced in~\cite{SA90}. An {\em $r$-level SA solution} of a {\sc Max $k$-CSP} instance $\cG = (V, \cW, \{P_S\})$ is a collection $\mu = \{\mathcal{X}_S\}_{|S| \leq t}$ of distributions $\mathcal{X}_S$ on $\Sigma^S$ for every subset $S$ of $V$ of size at most $r$ such that, for every $S, T \subseteq V$ of size at most $r$, the marginal probability of $\mathcal{X}_S$ and $\mathcal{X}_T$ on $\Sigma^{S \cap T}$ agrees. The value of an $r$-level SA solution $\mu$ for $r \geq k$ is defined to be $val_{SA}(\mu) = \E_{S \sim \cW}[\E_{x_S \sim \mu}[P_S(x_S)]]$ where $\E_{x_S \sim \mu}[P_S(x_S)]$ is a shorthand for $\E_{\phi_S \sim \mathcal{X}_{\{i_1, \dots, i_k\}}}[P_S(\phi_S)]$ when $S = (x_{i_1}, \dots, x_{i_k})$. The optimal of the $r$-level SA relaxation of $\cG$, $opt_{SA}^r(\cG)$, is defined as the maximum value among all the $r$-level SA solutions. It is easy to see that finding $opt_{SA}^r(\cG)$ can be formulated as a linear program with at most $(nq)^{O(r)}$ variables and inequalities and, thus, can be solved in $(nq)^{O(r)}$ time.

Another hierarchy we consider is the Lasserre hierarchy~\cite{Lass00}. Before stating the Lasserre relaxation for {\sc Max $k$-CSP}, we define additional notations regarding assignments. Two assignments $\phi_1 \in \Sigma^{S_1}, \phi_2 \in \Sigma^{S_2}$ are said to be \emph{consistent} if $\phi_1(x) = \phi_2(x)$ for all $x \in S_1 \cap S_2$. The two assignments are said to be \emph{inconsistent} otherwise. More than two assignments are consistent if every pair of the assignments is consistent; otherwise, they are said to be inconsistent.  Moreover, for two consistent assignments $\phi_1 \in \Sigma^{S_1}, \phi_2 \in \Sigma^{S_2}$, we define $\phi_1 \circ \phi_2 \in \Sigma^{S_1 \cap S_2}$ by $\phi_1 \circ \phi_2 (x) = \phi_1(x)$ if $x \in S_1$ and $\phi_1 \circ \phi_2(x) = \phi_2(x)$ otherwise.

An {\em $r$-level Lasserre solution} of an instance $\cG = (V, \cW, \{P_S\})$ is a collection $\{U_{(S, \phi_S)}\}_{|S| \leq r, \phi_S \in \Sigma^S}$ of vectors $U_{(S, \phi_S)}$ for all $S \subseteq V$ of size at most $r$ and assignments $\phi_S$ of $S$ satisfying the following constraints.
\begin{align*}
  \langle U_{(S_1, \phi_1)}, U_{(S_2, \phi_2)} \rangle &\geq 0 & \forall S_1, S_2, \phi_1 \phi_2 \\
  \langle U_{(S_1, \phi_1)}, U_{(S_2, \phi_2)} \rangle &= \langle U_{(S_3, \phi_3)}, U_{(S_4, \phi_4)} \rangle &  \forall S_1 \cup S_2 = S_3 \cup S_4 \text{ and } \phi_1 \circ \phi_2 &= \phi_3 \circ \phi_4 \\
  \langle U_{(S_1, \phi_1)}, U_{(S_2, \phi_2)} \rangle &= 0 &  \forall S_1, S_2, \phi_1 \phi_2 \text{ s.t. } \phi_1, \phi_2 \text{ are inconsistent} \\
   \sum_{\sigma \in \Sigma}\|U_{(x, \sigma)}\|^2 &= 1 &\forall x \in V \\
  \|U_{(\emptyset, \emptyset)}\| &= 1
\end{align*}
where $S_1, S_2, S_3, S_4$ are over all subset of $V$ of size at most $r$ and $\phi_1, \phi_2, \phi_3, \phi_4$ are over all assignments of $S_1, S_2, S_3, S_4$ respectively. The value of an $r$-level Lasserre solution $\{U_{(S, \phi_S)}\}$ is defined as $val_{Las}(\{U_{(S, \phi_S)}\}) = \E_{S \sim \cW} [\sum_{\phi_S \in \Sigma^S} \|U_{(S, \phi_S)}\|^2 P_S(\phi_S)]$. A Lasserre solution is called \emph{complete} if its value is one.

Note that we abuse the notation here as $S$ in $\{U_{(S, \phi_S)}\}$ is a set whereas $S$ in $\cW$ is a tuple. Here and elsewhere in the paper, when we write $U_{(S, \phi_S)}$ for some tuple $S = (x_{i_1}, \dots, x_{i_m})$, this simply refers to $U_{\{x_{i_1}, \dots, x_{i_m}\}, \phi_S}$ if the assignment $\phi_S$ does not assign the same variable to different values and the all zero vector otherwise. Finally, we use $opt_{Las}^r(\cG)$ to denote the maximum value among all $r$-level Lasserre solutions $\{U_{(S, \phi_S)}\}$.

It is not hard to see that finding $opt_{Las}^r(\cG)$ can be formulated as SDP with $(nq)^{O(r)}$ variables and, hence, can be approximated up to arbitrarily small error within $(nq)^{O(r)}$ time. Moreover, it is known that the $r$-level Lasserre relaxation is stronger than the $r$-level SA relaxation~\cite{Lau03}. In the case of {\sc Max $k$-CSP}, this can be easily seen since we can define an $r$-level SA solution $\mu = \{\mathcal{X}_S\}_{|S| \leq t}$ from an $r$-level Lasserre solution $\{U_{(S, \phi_S)}\}$ by $\mathcal{X}_S(\phi_S) = \|U_{(S, \phi_S)}\|^2$.

\subsection{Exponential Time Hypotheses}

Here we formally state the ETH and ETHA mentioned in the introduction.

\begin{conjecture}(Exponential Time Hypothesis for {\sc 3SAT} (ETH)~\cite{IP01})
  There exists a constant $c > 0$ such that no $O(2^{cn})$-time algorithm can solve {\sc 3SAT} where $n$ denote the number of clauses.
\end{conjecture}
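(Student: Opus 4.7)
The Exponential Time Hypothesis is an open conjecture, not a theorem, so there is no proof to propose in the usual sense --- indeed, a proof of ETH would immediately imply $\mathbf{P} \neq \mathbf{NP}$ and would vastly exceed anything currently known in complexity theory. Any genuine attempt to establish ETH must circumvent three well-documented barriers: relativization (Baker--Gill--Solovay), natural proofs (Razborov--Rudich, assuming cryptographic one-way functions), and algebrization (Aaronson--Wigderson). Since all known techniques for proving circuit lower bounds fall under at least one of these barriers, and current unconditional lower bounds against general circuits for explicit problems are only linear in the input length, any plan would first need to propose a fundamentally new lower-bound technique.

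The most direct strategy would be to exhibit an explicit family of {\sc 3SAT} instances on $n$ clauses and a matching sub-exponential size lower bound against general Boolean circuits deciding them. One would try to isolate some combinatorial or algebraic invariant of the satisfying-assignment structure --- say, a spectral or pseudorandomness property of the clause-variable incidence hypergraph --- that provably cannot be preserved by any circuit of size $2^{o(n)}$. The natural-proofs barrier is the main obstruction here: any invariant that is both ``large'' (holds for a noticeable fraction of Boolean functions) and ``constructive'' (computable in polynomial time) cannot distinguish hard instances from pseudorandom ones. A viable attack would therefore have to exploit specific structural features of {\sc 3SAT} in a non-natural way, for which no technique is presently known.

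An alternative, indirect route would be to reduce ETH to some already-established unconditional lower bound, for instance by combining a time hierarchy theorem with a reduction to {\sc 3SAT} that preserves the input size up to a constant factor in the exponent. The Sparsification Lemma of Impagliazzo--Paturi--Zane supplies exactly such a size-preserving primitive on the {\sc 3SAT} side (in particular making the distinction between ``number of variables'' and ``number of clauses'' in the statement immaterial), but the missing ingredient is an unconditional sub-exponential lower bound for a natural $\mathbf{NP}$-complete problem, which is itself wide open.

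In short, the main obstacle is not a calculation to be ground through but rather that ETH is a foundational assumption in fine-grained complexity, stated here precisely so that the subsequent hardness results of the paper --- including the hardness of approximating fully-dense {\sc Max $k$-CSP} derived via the birthday repetition theorem --- may be inferred from it. The best justification one can offer at present is empirical: despite decades of algorithmic effort, no {\sc 3SAT} algorithm has broken the roughly $O(1.3^n)$ barrier, and ETH is consistent with every other widely believed complexity assumption. A formal proof should be regarded as a long-range goal of complexity theory rather than a short sketch.
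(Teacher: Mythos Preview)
Your assessment is correct: the statement is labeled as a \texttt{conjecture} in the paper, not a theorem, and the paper offers no proof---it simply states ETH as a standing hypothesis on which the subsequent hardness results are conditioned. Your observation that a proof is not expected here, and your brief discussion of why ETH is currently out of reach, are appropriate; nothing more is required.
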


\begin{conjecture}(Exponential Time Hypothesis for Approximating {\sc Max 3SAT} (ETHA))
  There exists a constant $\varepsilon, c > 0$ such that no algorithm running in time $O(2^{cn})$ can distinguish between a satisfiable {\sc 3SAT} formula from a {\sc 3SAT} formula whose at most $1 - \varepsilon$ fraction of the clauses is satisfiable. Again, here $n$ denotes the number of clauses.
\end{conjecture}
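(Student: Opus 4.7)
Since ETHA is stated as a conjecture, there is strictly speaking no proof to reproduce; the paper adopts it as a working hypothesis parallel to ETH itself. The plan I would sketch is to derive ETHA from ETH together with a (near-)linear-size PCP theorem for 3SAT. Concretely: starting from a 3SAT formula $\varphi$ with $n$ clauses, which by ETH requires $2^{\Omega(n)}$ time to decide, I would apply a Dinur-style gap-amplification PCP built on a quasi-linear-size inner PCP to produce a 3SAT formula $\varphi'$ with $n' = n \cdot \polylog n$ clauses, satisfying (a) $\varphi$ satisfiable $\Rightarrow$ $\varphi'$ satisfiable, and (b) $\varphi$ unsatisfiable $\Rightarrow$ no assignment satisfies more than a $(1-\varepsilon)$ fraction of $\varphi'$'s clauses, for some absolute $\varepsilon > 0$.

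The closing step is then a standard contrapositive. If ETHA were false, so that for every constant $c' > 0$ some $O(2^{c' n'})$-time algorithm could distinguish satisfiable from $(1-\varepsilon)$-satisfiable 3SAT formulas, then composing this algorithm with the PCP reduction would yield a 3SAT decider running in $O(2^{c' n \polylog n})$ time on the original $\varphi$. Picking $c'$ small enough relative to the constant in ETH gives a $2^{o(n)}$-time algorithm for 3SAT, contradicting ETH. With a merely quasi-linear PCP this argument delivers only a weakened ETHA whose exponent scales as $n/\polylog n$; to recover the precise linear-in-$n$ statement of ETHA one would need a truly linear-size PCP for 3SAT with constant soundness gap, available via constructions in the Ben-Sasson--Sudan / Moshkovitz--Raz lineage and their successors.

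The hard part, and the reason ETHA is customarily adopted as a standalone hypothesis rather than re-derived, is entirely the PCP ingredient: polynomial-blowup PCPs yield only quasi-polynomial approximation hardness and therefore fall well short of ETHA; linear-size PCPs are technically heavy constructions whose precise size and gap parameters interact nontrivially with the constants $\varepsilon$ and $c$ one can claim. Once such a PCP is in hand, no further combinatorial or information-theoretic subtlety is required---the reduction from ETH to ETHA is a one-line time bookkeeping argument, which is why I would present the PCP choice first and treat the contrapositive itself as essentially immediate.
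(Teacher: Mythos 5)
You are right that ETHA is stated as a conjecture and that the paper offers no proof of it; your framing of the would-be derivation from ETH plus a PCP reduction matches exactly the discussion the paper itself gives in its preliminaries on exponential time hypotheses. Your bookkeeping is also correct: Dinur's quasi-linear-size PCP only yields a $2^{O(n/\polylog n)}$ lower bound for approximating {\sc Max 3SAT}, which is strictly weaker than ETHA.

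The genuine error is your closing claim that a truly linear-size constant-query PCP with constant soundness gap is ``available via constructions in the Ben-Sasson--Sudan / Moshkovitz--Raz lineage.'' It is not. Ben-Sasson--Sudan-style PCPs have quasi-linear proof length $n \polylog n$, and the Moshkovitz--Raz construction has length $n^{1+o(1)}$ (it trades length for sub-constant error, not for linearity); neither gives {\sc 3SAT} $\in \mathbf{PCP}_{\delta,1}[\log n + O(1), O(1)]$. The paper states explicitly that such a PCP ``is not currently known,'' and that consequently the implication ETH $\Rightarrow$ ETHA remains open --- this openness is precisely why ETHA must be adopted as a standalone hypothesis (equivalently, Dinur's gap-ETH) rather than derived. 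So your proposed derivation cannot be completed with known tools, and the correct disposition of the statement is simply that it is an assumption, with the ETH-based route blocked at exactly the point you identified but then incorrectly claimed to resolve. A further minor point: the paper's remark that ETHA is robust to whether $n$ counts clauses or variables rests on a subsampling argument (à la the sparsification used for ETH), not on any PCP machinery, and is worth keeping separate from the gap-creation step.
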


Note that the two conjectures remain equivalent even when $n$ denotes the number of variables. For ETH, this is due to the well-known sparsification lemma of Impagliazzo, Paturi and Zane~\cite{IPZ01}. For ETHA, this is implied by the following simple observation: if a {\sc 3SAT} instance of $m$ clauses has value at most $1 - \varepsilon$, then an instance created by subsampling $\Omega_{\varepsilon}(n)$ clauses has value at most $1 - \varepsilon/2$ with high probability. This can be proved via standard arguments involving Chernoff and Union bounds. (See, for example, the proof of Lemma 2.1 in~\cite{DKR16}, which contains a similar statement for 2-CSP.)

ETHA is also introduced independently as gap-ETH by Dinur~\cite{Dinur16} who uses it to provide a supporting evidence to the Sliding Scale Conjecture. We remark that an evidence supporting ETHA is that $\Omega(n)$-level of the Lasserre hierarchy, a powerful tool in approximating CSPs, cannot distinguish satisfiable {\sc 3SAT} formulae from those whose only $7/8 + \varepsilon$ fraction of clauses is satisfiable for any constant $\varepsilon > 0$~\cite{Sch08}. In fact, no subexponential time algorithm is even known for distinguishing a satisfiable {\sc 3SAT} formula from a random formula.

Regarding relations between ETH and ETHA, it is obvious that ETHA implies ETH. On the other hand, the reverse direction is not yet known. Dinur's PCP~\cite{Dinur07} implies only $2^{O(n / \polylog n)}$ time lower bound for approximating {\sc 3SAT} to within $1 - \varepsilon$ factor for some $\varepsilon > 0$. One possible way for ETH to imply ETHA is if there exists a linear-length constant-query PCP for {\sc 3SAT} (i.e. {\sc 3SAT} $\in \mathbf{PCP}_{\delta, 1}[\log n + O(1), O(1)]$ for some constant $\delta < 1$). However, such PCP is not currently known.

\subsection{Some Useful Bounds}

Finally, we list simple bounds and inequalities that will be used in our proofs. We start with a concentration bound on number of edges in a random subgraph of a bipartite graph.

\begin{lemma} \label{lem:random-num-edges}
  Let $(X, Y, E)$ be any bipartite graph where each vertex has degree at most $d_{max}$. For any non-negative integers $k \leq |X|$ and $l \leq |Y|$, let $s = \frac{kl|E|}{|X||Y|}$. For any non-negative number $\gamma < 1/2$, we have
  \begin{align*}
    \Pr_{S \sim \binom{X}{k}, T \sim \binom{Y}{l}}[|E(S, T)| \notin [(1 - \gamma)s, (1 + \gamma)s]] \leq 4\exp\left(-\frac{\gamma^2 s}{54 d_{max}}\right).
  \end{align*}
\end{lemma}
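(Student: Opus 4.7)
The plan is to decompose the deviation of $|E(S,T)|$ from $s$ by conditioning on $S$, using the tower property $\E[|E(S,T)|] = \E_S[\E_T[|E(S,T)| \mid S]]$. Writing $Z_S := |E(S,Y)|$ for the number of edges incident to $S$, one has $\E[|E(S,T)| \mid S] = l Z_S/|Y|$ and $\E[Z_S] = k|E|/|X|$, so that
\begin{align*}
|E(S,T)| - s \;=\; \bigl(|E(S,T)| - l Z_S/|Y|\bigr) \;+\; \bigl(l Z_S/|Y| - s\bigr).
\end{align*}
It suffices to show each summand is at most $\gamma s/2$ in absolute value, each except with probability $2\exp(-\Omega(\gamma^2 s/d_{\max}))$, and then take a union bound.

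For the second summand, I would view $Z_S = \sum_{x \in S} \deg(x)$ as a sum of $k$ values drawn uniformly without replacement from the multiset $\{\deg(x)\}_{x \in X}$. Each value lies in $[0, d_{\max}]$ and the expectation is $k|E|/|X| = s|Y|/l \geq s$. A deviation of $\gamma s/2$ in $l Z_S/|Y|$ corresponds to a deviation of $\gamma k|E|/(2|X|)$ in $Z_S$, and the multiplicative Chernoff bound applied to sampling without replacement (Hoeffding's 1963 reduction, which shows sampling without replacement is at least as concentrated as i.i.d.\ sampling) yields a tail bound of the form $2\exp(-\Omega(\gamma^2 s/d_{\max}))$.

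For the first summand, I would condition on $S$ lying in the good event of the second step, so that $l Z_S/|Y| \in [s(1-\gamma/2), s(1+\gamma/2)]$. Given $S$, one has $|E(S,T)| = \sum_{y \in T} a_y$ with $a_y := |N(y) \cap S| \in [0, d_{\max}]$, and $T$ is again a uniform $l$-subset of $Y$. The mean is exactly $l Z_S/|Y|$, which is within $\gamma s/2$ of $s$ by hypothesis; applying the same Chernoff bound for sampling without replacement to this sum gives a deviation of size $\gamma s/2$ with probability at most $2\exp(-\Omega(\gamma^2 s/d_{\max}))$.

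Combining both steps by a union bound gives the desired $4\exp(-\Omega(\gamma^2 s/d_{\max}))$ tail. The only technical point is that the variables in each step are sampled without replacement, but this is a standard setting for Chernoff--Hoeffding bounds; there is no serious obstacle. Tracking the multiplicative Chernoff constant $1/3$, the factors of $2$ from the split into two summands, and the factor $(1-\gamma/2)^{-1}$ on the mean in the first step produces a constant in the exponent which, after using the assumption $\gamma < 1/2$ to absorb lower-order corrections, comes out to the stated $54$.
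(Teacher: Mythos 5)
Your proposal is correct and follows essentially the same route as the paper's proof in Appendix~\ref{app:random-num-edges}: condition on $S$, apply a concentration bound for sampling without replacement first to the degree sum $Z_S=\sum_{x\in S}\deg(x)$ and then, on the good event, to $\sum_{y\in T}|N(y)\cap S|$, and finish with a union bound. The only cosmetic differences are that you split the error additively ($\gamma s/2 + \gamma s/2$) where the paper splits it multiplicatively ($(1\pm\gamma/3)^2$), and you invoke Chernoff--Hoeffding where the paper uses a Bernstein bound; both yield constants in the exponent at least as good as the stated $54$.
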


For completeness, we give a proof of Lemma~\ref{lem:random-num-edges} in Appendix~\ref{app:random-num-edges}.

In our analysis, we often want to bound a value of a two-prover game based on a value of another game defined on the same question sets, alphabet sets, and predicates but differ on the distribution. Below are a couple of useful bounds to help us do so; the proofs for both lemmas can be found in Appendix~\ref{app:inq-games-dist}.

\begin{lemma} \label{lem:inq-mult}
  Let $\cG = (X, Y, \cQ, \Sigma_X, \Sigma_Y, \{P_{x, y}\}_{(x, y)})$ and $\cG' = (X, Y, \cQ', \Sigma_X, \Sigma_Y, \{P_{x, y}\}_{(x, y)})$ be two games on the same set of questions, alphabets, and predicates. If $\cQ(x, y) \leq \alpha \cdot \cQ'(x, y)$ for some $\alpha$ for all $x \in X, y \in Y$, then $val(\cG) \leq \alpha \cdot val(\cG')$.

  In particular, when $\cQ$ and $\cQ'$ are uniform distributions on some $E \subseteq E'$ respectively, $val(\cG) \leq \frac{|E'|}{|E|} \cdot val(\cG')$.
\end{lemma}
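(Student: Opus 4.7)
The plan is to unwind the definition of the game value as a sum over question pairs and apply the hypothesis pointwise. Let $\phi^* : X \cup Y \to \Sigma_X \cup \Sigma_Y$ be an optimal strategy for $\cG$, so that
\begin{align*}
val(\cG) \;=\; \sum_{(x,y) \in X \times Y} \cQ(x,y)\, P_{x,y}(\phi^*(x), \phi^*(y)).
\end{align*}
Because each predicate $P_{x,y}$ is $[0,1]$-valued and therefore non-negative, the hypothesis $\cQ(x,y) \leq \alpha \cdot \cQ'(x,y)$ can be applied termwise inside the sum, yielding
\begin{align*}
val(\cG) \;\leq\; \alpha \sum_{(x,y)} \cQ'(x,y)\, P_{x,y}(\phi^*(x), \phi^*(y)) \;=\; \alpha \cdot val_{\cG'}(\phi^*) \;\leq\; \alpha \cdot val(\cG'),
\end{align*}
where the last inequality follows since $val(\cG')$ is the maximum of $val_{\cG'}(\phi)$ over all strategies.

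For the ``in particular'' clause, suppose $\cQ$ is uniform on $E$ and $\cQ'$ is uniform on $E'$ with $E \subseteq E'$. Then for $(x,y) \in E$ we have $\cQ(x,y) = 1/|E|$ and $\cQ'(x,y) = 1/|E'|$, so $\cQ(x,y) = (|E'|/|E|)\cdot \cQ'(x,y)$; for $(x,y) \notin E$ we have $\cQ(x,y) = 0$, so the inequality $\cQ(x,y) \leq (|E'|/|E|)\cdot \cQ'(x,y)$ holds trivially. Applying the general statement with $\alpha = |E'|/|E|$ completes the proof.

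There is essentially no real obstacle: this is a direct manipulation of definitions. The only point that requires any care is checking that the pointwise inequality on the distributions transfers to an inequality on the expectations, which is immediate from non-negativity of the $P_{x,y}$. Because of this, I would write the proof in just a few lines without any auxiliary constructions.
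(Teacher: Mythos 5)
Your proof is correct and follows essentially the same argument as the paper's: take an optimal strategy for $\cG$, apply the pointwise bound $\cQ(x,y) \leq \alpha \cdot \cQ'(x,y)$ termwise using non-negativity of the predicates, and conclude via the maximality of $val(\cG')$. Your explicit verification of the ``in particular'' clause is a minor addition the paper leaves implicit.
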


\begin{lemma} \label{lem:inq-cond}
  Let $\cG = (X, Y, \cQ, \Sigma_X, \Sigma_Y, \{P_{x, y}\}_{(x, y)})$ be any two player game and let $A$ be any event occuring with non-zero probability $1 - p$ (with respect to $\cQ$). Let $\cQ'$ be the conditional probability $\cQ$ given $A$, i.e., $\cQ'(\tilde x, \tilde y) = \Pr_{(x, y) \sim \cQ}[x = \tilde x \wedge y = \tilde y \mid A]$.
  For the game $\cG' = (X, Y, \cQ', \Sigma_X, \Sigma_Y, \{P_{x, y}\}_{(x, y)})$, we have $val(\cG) - p \leq val(\cG') \leq val(\cG) + 2p$.
\end{lemma}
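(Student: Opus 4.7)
The plan is to prove both inequalities through the same elementary decomposition of $val_{\cG}$. For any strategy $\phi$, the law of total expectation applied to the payoff $P_{x,y}(\phi(x),\phi(y))$ with respect to the event $A$ gives
\[ val_{\cG}(\phi) \;=\; (1-p)\cdot val_{\cG'}(\phi) \;+\; p\cdot val_{\cG''}(\phi), \]
where $\cG''$ denotes the game on the same questions, alphabets, and predicates but with $\cQ$ conditioned on $\neg A$. Since all predicates take values in $[0,1]$, one always has $val_{\cG''}(\phi) \in [0,1]$, and this is the only non-trivial input used in the two bounds.

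For the lower bound $val(\cG) - p \leq val(\cG')$, I would plug an optimal strategy $\phi^{*}$ for $\cG$ into the decomposition and apply the trivial estimate $val_{\cG''}(\phi^{*}) \leq 1$. This gives $val(\cG) \leq (1-p)\,val_{\cG'}(\phi^{*}) + p \leq val(\cG') + p$, which rearranges to the desired inequality. For the upper bound $val(\cG') \leq val(\cG) + 2p$, I would instead take an optimal strategy $\psi^{*}$ for $\cG'$, substitute it back into $\cG$, and use $val_{\cG''}(\psi^{*}) \geq 0$, obtaining $val(\cG) \geq val_{\cG}(\psi^{*}) \geq (1-p)\,val(\cG')$, hence $val(\cG') \leq val(\cG)/(1-p)$. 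When $p \leq 1/2$, the elementary inequality $1/(1-p) \leq 1 + 2p$ together with $val(\cG') \leq 1$ upgrades this to $val(\cG') \leq val(\cG) + 2p$. When $p > 1/2$, the conclusion is immediate from $val(\cG') \leq 1 < 2p \leq val(\cG) + 2p$.

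There is essentially no real obstacle here; the proof is a one-line conditional-expectation identity followed by trivial estimates in each direction. The only two points requiring a moment of attention are choosing the optimal strategy from the \emph{opposite} game in each direction, so that the estimates $val_{\cG'}(\phi^{*}) \leq val(\cG')$ and $val_{\cG}(\psi^{*}) \leq val(\cG)$ both point the right way, and the harmless case split on whether $p \leq 1/2$, which is what forces the factor of $2$ in the stated upper bound.
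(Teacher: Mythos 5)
Your proposal is correct and follows essentially the same route as the paper: the lower bound via conditioning the value of an optimal strategy for $\cG$ on $A$ versus $\neg A$, and the upper bound via $val(\cG') \leq val(\cG)/(1-p)$ (the paper gets this from the pointwise bound $\cQ'(x,y) \leq \cQ(x,y)/(1-p)$ and Lemma~\ref{lem:inq-mult}, which is the same as your ``substitute $\psi^*$ back into $\cG$'' step) followed by the identical case split on $p \leq 1/2$. One cosmetic slip: in the $p \leq 1/2$ case the fact you need is $val(\cG) \leq 1$ (so that $2p\,val(\cG) \leq 2p$), not $val(\cG') \leq 1$; both hold trivially, so nothing breaks.
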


\section{Birthday Repetition Theorem} \label{sec:birthday}
In this section, we prove our birthday repetition theorems. We first state our main theorems formally, starting with the birthday repetition theorem for general games.

\begin{theorem} \label{thm:birthday-general}
  There is a constant $\alpha > 0$ such that the following is true. Let $\cG = (X, Y, E, \Sigma_X, \Sigma_Y, \{P_{(x, y)}\})$ be any two-prover game. Let $d_{max}$ be the maximum degree of a vertex in the graph $(X, Y, E)$. Moreover, let $val(\cG) = 1 - \varepsilon$ and $c = \log |\Sigma_X||\Sigma_Y|$. For all $0 \leq k \leq |X|$ and $0 \leq l \leq |Y|$, we have
  \begin{align*}
    val(\cG^{k \times l}) \leq 2(1 - \varepsilon/2)^{\frac{\alpha \varepsilon^5 kl|E|}{d_{max}|X||Y|c^2}}
  \end{align*}
\end{theorem}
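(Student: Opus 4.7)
I would attack this via the information-theoretic framework for parallel repetition (Raz, Holenstein, Rao, Dinur--Steurer), adapted to the birthday setting. Suppose for contradiction that $val(\cG^{k \times l}) > 2(1-\varepsilon/2)^T$ where $T = \alpha \varepsilon^5 kl|E|/(d_{max}|X||Y|c^2)$ for an appropriately small constant $\alpha$. Let $\phi^{\ast}$ be an optimal strategy for $\cG^{k\times l}$. The target is to construct, from $\phi^{\ast}$, a single-game strategy $\psi$ for $\cG$ of value strictly larger than $1 - \varepsilon$, contradicting $val(\cG) = 1 - \varepsilon$.

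The first step is a concentration reduction. Let $m = kl|E|/(|X||Y|)$ be the expected size of $E(S,T)$. By Lemma~\ref{lem:random-num-edges}, $|E(S,T)| \geq m/2$ holds with probability $1 - 4\exp(-m/(216 d_{max}))$, which is dominated by the leading factor $2$ in the statement via Lemma~\ref{lem:inq-cond}. So it suffices to bound the value of $\cG^{k\times l}$ under the conditioning $|E(S,T)| \geq m/2$; under this conditioning, whenever $\phi^{\ast}$ wins it simultaneously satisfies at least $m/2$ original constraints.

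The main step is an edge-by-edge chain-rule argument. Fix any ordering on $E$ and, for a random $(S,T)$ on which $\phi^{\ast}$ wins, list the constraints of $E(S,T)$ in this order as $e_1, e_2, \ldots$. Since $\phi^{\ast}$ wins on all of them, $\prod_i \Pr[\text{win on } e_i \mid \text{win on } e_{<i}]$ equals the overall winning probability, so on average over $i$ one of these conditional probabilities cannot be much above $1-\varepsilon/2$ unless a single-game contradiction can be extracted. The plan is to turn a typical edge $e_i=(x_i,y_i)$ into a strategy for $\cG$: on input $(x,y) \in E$, the two players use shared randomness to sample the rest of $S$ and $T$ plus the strategy's answers on the ``earlier'' edges $e_{<i}$, conditioned on those answers being winning ones, and then output $\phi^{\ast}_S(x), \phi^{\ast}_T(y)$. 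The total entropy of the answer tuple is at most $c(k+l)$, so averaging the mutual-information identities in Lemmas~\ref{lem:cond-mutual-info}--\ref{lem:total-cor-mutual-info} over the $\geq m/2$ edges yields an $O(c^2/m)$ bound on the information leaked per edge. A Pinsker-type inequality, wrapped in a Holenstein-style correlated-sampling lemma, then couples the two players' conditional distributions up to statistical distance $O(\sqrt{c^2/(m\varepsilon^2)})$, which is $o(\varepsilon)$ by the choice of $T$.

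The main obstacle is the correlated-sampling lemma, which is also where the $\varepsilon^5$ exponent takes its shape: two powers of $\varepsilon$ from two invocations of Pinsker (one to pass from KL to total variation of the joint distribution, one to translate the coupling error into loss in single-game value), one power from amortizing the chain rule over edges whose conditional win probability is only close to $1$ on average, and one more from reweighting the question distribution via Lemma~\ref{lem:inq-mult} to absorb the bias induced by conditioning on $e_i \in E(S,T)$ in a graph with per-vertex degrees up to $d_{max}$ (which is also where the $d_{max}$ denominator enters). Stitching these together produces a strategy $\psi$ for $\cG$ with $val(\psi) > 1 - \varepsilon$, yielding the desired contradiction and hence the stated bound.
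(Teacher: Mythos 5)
Your plan diverges fundamentally from the paper's proof: the paper does not re-run the information-theoretic parallel repetition machinery inside the birthday game. Instead it goes in the opposite direction, embedding the $r$-fold parallel repetition $\cG^{\otimes r}$ (with $r = \beta s/d_{max}$ and $s = kl|E|/(|X||Y|)$) into $\cG^{k\times l}$ through a chain of intermediate games ($\cG^{\otimes r}_{\text{set}}$, $\cG^{k\times l}_{\text{em}}$, and conditioned/uniformized variants), controls the loss at each step by Lemmas~\ref{lem:inq-mult}, \ref{lem:inq-cond} and \ref{lem:random-num-edges}, and then invokes Holenstein's parallel repetition theorem (Theorem~\ref{thm:par-general}) as a black box. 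That is where the factor $\varepsilon^2/c$ enters, and the remaining powers of $\varepsilon$ and $c$ come from the parameter choices $\delta,\beta = \Theta(\varepsilon^3 C/c)$ fed into Lemma~\ref{lem:birthday-helper}.

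As written, your route has a genuine gap at its quantitative heart. You claim that because the total answer entropy is at most $c(k+l)$, averaging over the $\geq m/2$ edges of $E(S,T)$ gives $O(c^2/m)$ bits of information per edge. Dividing $c(k+l)$ by $m/2$ gives $O(c(k+l)/m)$ per edge, and with $m = kl|E|/(|X||Y|)$ this is at least of order $c|X||Y|/(\sqrt{kl}\,|E|)$; in the regime the theorem is designed for (e.g.\ $k=l=\Theta(\sqrt{n c^2/\varepsilon^5})$ on a biregular graph, where $m$ is only a large constant) this quantity is polynomially large in $n$, not $o(\varepsilon)$, so the Pinsker/correlated-sampling step has nothing small to work with. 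Beyond the accounting, the chain-rule-over-coordinates strategy faces a structural obstacle you do not address: the ``coordinates'' $e_1,e_2,\dots$ are the edges of the random set $E(S,T)$, whose identity, number, and endpoints are all mutually correlated (edges share vertices, and conditioning on $e_i\in E(S,T)$ biases the remainder of $S$ and $T$). The known parallel repetition proofs lean heavily on the product structure of the question distribution across coordinates, which is exactly what is absent here; manufacturing that structure is the entire content of the paper's embedding argument. To push your approach through you would essentially have to prove a new parallel-repetition-type theorem for this correlated distribution, a far larger task than the sketch suggests.
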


We note that, if the graph $(X, Y, E)$ is biregular, the exponent in the theorem is at most $\frac{\alpha \varepsilon^5 kl}{n c^2}$, as stated in Theorem~\ref{thm:main}. This is because, when the graph is biregular, either $|E| = |X|d_{max}$ or $|E| = |Y|d_{max}$.

For projection games, we can get a better dependency on $\varepsilon$ and get rid of the dependency on $c$ completely.

\begin{theorem} \label{thm:birthday-proj}
  There is a constant $\alpha > 0$ such that the following is true. Let $\cG = (X, Y, E, \Sigma_X, \Sigma_Y, \{P_{(x, y)}\})$ be any projection game. Let $d_{max}$ be the maximum degree of a vertex in the graph $(X, Y, E)$ and let $val(\cG) = 1 - \varepsilon$. For all $0 \leq k \leq |X|$ and $0 \leq l \leq |Y|$, we have
  \begin{align*}
    val(\cG^{k \times l}) \leq 2(1 - \varepsilon/2)^{\frac{\alpha \varepsilon^3 kl|E|}{d_{max}|X||Y|}}
  \end{align*}
\end{theorem}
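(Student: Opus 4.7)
Plan: I will prove Theorem~\ref{thm:birthday-proj} by adapting the proof of Theorem~\ref{thm:birthday-general} to the projection setting, mimicking the sharpening that Dinur--Steurer obtained over Rao-style parallel repetition analyses. Assume for contradiction $val(\cG^{k\times l}) > 2(1-\varepsilon/2)^{t}$ for optimal strategies $\Phi_X,\Phi_Y$, where $t=\alpha\varepsilon^{3}kl|E|/(d_{\max}|X||Y|)$. Let $s = kl|E|/(|X||Y|)$ be the expected number of hit edges, and let $\mathcal{W}$ denote the distribution of $(S,T,\Phi_X(S),\Phi_Y(T))$ conditioned on the verifier accepting. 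By Lemma~\ref{lem:random-num-edges}, $|E(S,T)|$ concentrates around $s$, and the concentration failure probability can be absorbed into the success probability with constant slack, so after conditioning on the win event nearly all samples carry $\Theta(s)$ satisfied projection edges.

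The core step is an information-theoretic decomposition. The divergence of the winning distribution from the product-uniform distribution over $(S,T)$ is at most $\log(1/v) = O(\varepsilon \cdot t)$. A chain-rule / subadditivity argument distributes this divergence across the $\Theta(s)$ hit edges, so for a random hit edge $(x,y)\in E(S,T)$ the conditional distribution of $(S,T,\Phi_X(S)(x))$ under $\mathcal{W}$ has divergence $O(\varepsilon)$ from its natural (unconditioned) marginal. The projection structure is essential here: on any satisfied edge, the answer $\Phi_Y(T)(y)=f_{(x,y)}(\Phi_X(S)(x))$ is forced by Merlin$_1$'s answer. Hence only Merlin$_1$'s symbol (one element of $\Sigma_X$) needs to be tracked in the information budget, eliminating the $\log|\Sigma_X||\Sigma_Y|$ factor that appears in the general proof and saving one power of $\varepsilon$ in the Pinsker-style passage from divergence bounds to TV-distance bounds.

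Given this, a Holenstein-style correlated sampling argument constructs a strategy for $\cG$: on question $x\in X$, Merlin$_1$ samples $S\ni x$ together with an answer $\sigma\in\Sigma_X$ from the winning-conditioned marginal and outputs $\sigma$; Merlin$_2$ does the analogous thing on the $Y$ side, using shared randomness to couple the two samples to the joint winning-conditioned distribution of $(S,T)$ up to small TV error controlled by the divergence bound. Because every edge in $E(S,T)$ is projection-satisfied under $\mathcal{W}$, this strategy satisfies a random edge of $\cG$ with probability strictly greater than $1-\varepsilon$ once the constant $\alpha$ is chosen small enough, contradicting $val(\cG)=1-\varepsilon$.

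The main obstacle is the correlated sampling step. Unlike ordinary parallel repetition, where the $k$ coordinates provide a clean symmetric decomposition, birthday sampling produces a random set of hit edges of varying size and without a canonical indexing; the ``coordinate to condition on'' must instead be a uniformly random edge of $E$ reweighted by the probability of being hit. Making the coupling argument respect the uniform marginals on $\binom{X}{k}$ and $\binom{Y}{l}$ while simultaneously realizing the winning-conditioned joint law on $(S,T)$ is where the delicate use of the biregularity assumption (through the $d_{\max}$ factor) and the concentration lemma enter, and it is the step where the savings due to the projection structure must be carefully preserved through the rounding.
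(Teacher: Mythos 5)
Your proposal takes a genuinely different route from the paper, but it has a real gap: the two steps that carry all the difficulty --- the chain-rule decomposition over the hit edges and the correlated-sampling extraction of a strategy for $\cG$ --- are asserted rather than executed, and you yourself flag the second as ``the main obstacle.'' In the birthday setting the hit edges of $E(S,T)$ are not independent coordinates of a product distribution: their number is random, they share endpoints, and there is no canonical indexing, so the subadditivity/dependency-breaking argument that underlies Raz--Holenstein--Rao style proofs does not transfer verbatim. Likewise, the Holenstein coupling requires each prover to sample, from its question alone, a set containing that question whose joint law with the other prover's set matches the winning-conditioned distribution up to small TV distance; constructing such a coupling while respecting the uniform marginals on $\binom{X}{k}$ and $\binom{Y}{l}$ is precisely the content that is missing. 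Until those two steps are carried out, the argument is a plausible plan, not a proof. (A small side issue: the theorem does not assume biregularity --- the $d_{\max}$ factor handles general graphs --- so there is no biregularity assumption available to lean on.)

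For contrast, the paper sidesteps all of this by never reopening the information-theoretic machinery. It embeds an $r$-wise parallel repetition $\cG^{\otimes r}$ with $r = \beta s/d_{\max}$ into $\cG^{k\times l}$ through a chain of five elementary distribution comparisons (conditioning on distinct endpoints, an embedding map from birthday strategies to mixed tensor strategies, two concentration steps on $|E(S,T)|$ via Lemma~\ref{lem:random-num-edges}, and a counting argument comparing the embedded distribution to the uniform one), and then invokes Rao's parallel repetition theorem for projection games as a black box. The entire improvement from $\varepsilon^5/c^2$ to $\varepsilon^3$ in the exponent comes from substituting Rao's bound $(1-\varepsilon/2)^{C\varepsilon r}$ for Holenstein's and re-tuning $\delta,\beta \propto \varepsilon^2$ in the shared parameter-selection lemma; no step of the projection-game structure is used directly in the birthday part of the argument. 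If you want to salvage your approach, the cleanest fix is to do the same: reduce to $\cG^{\otimes r}$ first and let the existing parallel repetition theorem absorb the information-theoretic work.
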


We now prove the two theorems. Roughly speaking, we will show that $\cG^{k \times l}$ has small value by ``embedding'' an $\Omega\left(\frac{kl|E|}{d_{max}|X| |Y|}\right)$-tensor game, which has low value by the parallel repetition theorem, into it.

For convenience, let $s$ denote $\frac{kl|E|}{d_{max}|X| |Y|}$, the expected number of edges in $E(S, T)$ when $S$ and $T$ are independently uniformly sampled from $\binom{X}{k}$ and $\binom{Y}{l}$ respectively. Let $s_1$ and $s_2$ be $s(1 + \delta)$ and $s(1 - \delta)$ respectively for some $\delta \in [0, 1/2]$ that will be chosen later. Finally, we will use $r = \beta s / d_{max}$ rounds of parallel repetition where, again, $\beta \in [0, \delta/40]$ will be specified later. Lastly, let $E^r = \{((x_1, \dots, x_r), (y_1, \dots, y_r)) \mid (x_1, y_1), \dots, (x_r, y_r) \in E\}$. Note that the distribution of $\cG^{\otimes r}$ is uniform over $E^r$.
\begin{remark}
  $\delta$ and $\beta$ will be chosen based on $\varepsilon$, $c$ and whether $\cG$ is a projection game. When $\varepsilon$ and $c$ are constant, both $\delta$ and $\beta$ are small constants. This is the most representative case and is good to keep in mind when reading through the proof.
\end{remark}

Our overall strategy is to reduce $\cG^{\otimes r}$ to $\cG^{k \times l}$. Since $val(\cG^{\otimes r})$ is exponentially small in $r = \Omega\left(\frac{kl|E|}{d_{max}|X| |Y|}\right)$ due to the parallel repetition theorem, such reduction would give a similar upper bound on $val(\cG^{k \times l})$. Unfortunately, we do not know how to do this in one step so we will have to go through a sequence of reductions. The sequence of games that we reduce to are $\cG^{\otimes r}_{\text{set}}, \cG^{k \times l}_{\text{em}}, \cG^{k \times l}_{\text{em}, [s_1, s_2]}$ and $\cG^{k \times l}_{[s_1, s_2]}$ respectively. The game $\cG^{\otimes r}_{\text{set}}$ share the same questions, alphabet sets and predicates with $\cG^{\otimes r}$ while $\cG^{k \times l}_{\text{em}}, \cG^{k \times l}_{\text{em}, [s_1, s_2]}$ and $\cG^{k \times l}_{[s_1, s_2]}$ share those with $\cG^{k \times l}$. The distribution of each game is defined as follows.
\begin{itemize}
\item The distribution of $\cG^{\otimes r}_{\text{set}}$ is uniform over the set $E^r_{\text{set}}$ of all $((x_1, \dots, x_r), (y_1, \dots, y_r)) \in E^r$ such that $x_1, \dots, x_r, y_1, \dots, y_r$ are all distinct. Note that this distribution is simply $\cG^{\otimes r}$'s distribution conditioned on $x_1, \dots, x_r, y_1, \dots, y_r$ being all distinct.

\item We will try to make the distribution $\cQ^{k \times l}_{\text{em}}$ of $\cG^{k \times l}_{\text{em}}$ reflect an embedding of the game $\cG^{\otimes r}$. We define $\cQ^{k \times l}_{\text{em}}$ based on the following sampling process for $(S, T) \sim \cQ^{k \times l}_{\text{em}}$. First, sample $((x_1, \dots, x_r), (y_1, \dots, y_r))$ uniformly at random from $E^r_{\text{set}}$. Then, sample $\tilde S$ and $\tilde T$ independently uniformly from $\binom{X - \{x_1, \dots, x_r\}}{k - r}$ and $\binom{Y - \{y_1, \dots, y_r\}}{l - r}$ respectively. Finally, set $S = \{x_1, \dots, x_r\} \cup \tilde S$ and $T = \{y_1, \dots, y_r\} \cup \tilde T$.
\item The distribution $\cQ^{k \times l}_{\text{em}, [s_1, s_2]}$ of $\cG^{k \times l}_{\text{em}, [s_1, s_2]}$ is the distribution $\cQ^{k \times l}_{\text{em}}$ conditioned on the number of edges between the two sets being in the range $[s_1, s_2]$. In other words, $\cQ^{k \times l}_{\text{em}, [s_1, s_2]}(S, T) = \Pr_{(S', T') \sim \cQ^{k \times l}_{\text{em}}}[S = S' \wedge T = T' \mid s_1 \leq |E(S', T')| \leq s_2].$
\item Finally, the distribution of $\cG^{k \times l}_{[s_1, s_2]}$ is uniform over the set $E^{k \times l}_{[s_1, s_2]}$ of all $(S, T)$ such that $|E(S, T)| \in [s_1, s_2]$. In other words, we ignore weights in $\cQ^{k \times l}_{\text{em}, [s_1, s_2]}$ and use the uniform distribution over $\supp(\cQ^{k \times l}_{\text{em}, [s_1, s_2]})$.
\end{itemize}

We will next give intuitions on why $val(\cG^{\otimes r}) \approx val(\cG^{\otimes r}_{\text{set}}) \approx val(\cG^{k \times l}_{\text{em}}) \approx val(\cG^{k \times l}_{\text{em}, [s_1, s_2]}) \approx val(\cG^{k \times l}_{[s_1, s_2]}) \approx val(\cG^{k \times l})$ where each $\approx$ hides some multiplicative or additive losses in each step. With the right choice of $\delta$ and $\beta$, we can ensure that each loss is significantly smaller than $val(\cG^{\otimes r})$, and, thus, we will be able to bound $val(\cG^{k \times l})$. Below, we state these losses more precisely and summarize the overview of each proof.

\begin{lemma} \label{lem:no-col}
  $val(\cG^{\otimes r}_{\text{set}}) \leq \left(\frac{1}{1 - 2\beta}\right)^r \cdot val(\cG^{\otimes r})$
\end{lemma}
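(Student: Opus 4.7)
My plan is to reduce the inequality to a counting comparison between the supports of the two uniform distributions, and then bound the ratio by a sequential sampling argument.

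First, I observe that both games $\cG^{\otimes r}$ and $\cG^{\otimes r}_{\text{set}}$ share the same questions, alphabets, and predicates, and their distributions are the uniform distributions on $E^r$ and $E^r_{\text{set}}$ respectively, with $E^r_{\text{set}} \subseteq E^r$. Therefore, by the second part of Lemma~\ref{lem:inq-mult}, it suffices to show that
\[
\frac{|E^r_{\text{set}}|}{|E^r|} \geq (1 - 2\beta)^r.
\]
Equivalently, if we sample $((x_1,y_1), \ldots, (x_r,y_r))$ uniformly at random from $E^r$, we must lower bound the probability that the $x_i$'s are pairwise distinct and the $y_i$'s are pairwise distinct.

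To bound this probability, I will reveal the edges one by one. Given that the first $i-1$ edges already form a valid prefix (no repetitions among the $x$-coordinates or $y$-coordinates), the set of ``bad'' edges $(x,y) \in E$ that would cause a collision consists of edges with $x \in \{x_1, \ldots, x_{i-1}\}$ or $y \in \{y_1, \ldots, y_{i-1}\}$. Since each vertex has degree at most $d_{\max}$, the number of such bad edges is at most $2(i-1)d_{\max}$. Hence the conditional probability that the $i$-th sampled edge is good is at least $1 - \frac{2(i-1)d_{\max}}{|E|} \geq 1 - \frac{2r d_{\max}}{|E|}$. Multiplying over $i = 1, \ldots, r$ yields
\[
\frac{|E^r_{\text{set}}|}{|E^r|} \geq \left(1 - \frac{2r d_{\max}}{|E|}\right)^r.
\]

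The remaining step is purely arithmetic: I must verify that $\frac{2r d_{\max}}{|E|} \leq 2\beta$. Plugging in $r = \beta s/d_{\max}$, this reduces to $s \leq |E|$, which holds since $s = \frac{kl|E|}{d_{\max}|X||Y|}$ and $k \leq |X|$, $l \leq |Y|$, $d_{\max} \geq 1$. Combining this with Lemma~\ref{lem:inq-mult} gives $val(\cG^{\otimes r}_{\text{set}}) \leq \frac{|E^r|}{|E^r_{\text{set}}|} \cdot val(\cG^{\otimes r}) \leq (1-2\beta)^{-r} \cdot val(\cG^{\otimes r})$, as claimed.

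I do not anticipate a genuine obstacle here: the lemma is essentially a ``birthday''-style estimate stating that, for the modest value $r = \beta s / d_{\max}$, the conditioning on distinctness costs only a multiplicative factor of $(1-2\beta)^{-r}$. The only mild subtlety is making sure the per-step bad-set bound $2(i-1)d_{\max}$ correctly accounts for both the $x$-collisions and $y$-collisions via the max-degree hypothesis, which it does.
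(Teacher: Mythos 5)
Your proof is correct and follows essentially the same route as the paper's: reduce to the counting ratio $|E^r_{\text{set}}|/|E^r|$ via Lemma~\ref{lem:inq-mult}, then lower bound it by revealing edges sequentially and bounding the bad set by $2(i-1)d_{\max}$ using the max-degree hypothesis. The only difference is that you explicitly verify $2rd_{\max}/|E| \leq 2\beta$ (via $s \leq |E|$), a step the paper leaves implicit.
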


{\bf Proof Idea.} From Lemma~\ref{lem:inq-mult}, it is enough for us to lower bound the ratio $|E^r_{\text{set}}|/|E^r|$. This is simply the probability that $r$ random edges from $E$ do not share any endpoints, which is not hard to bound.

\begin{lemma} \label{lem:embedding}
  $val(\cG^{k \times l}_{\text{em}}) \leq val(\cG^{\otimes r}_{\text{set}})$
\end{lemma}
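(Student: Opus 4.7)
The plan is to exhibit, from any strategy for $\cG^{k\times l}_{\text{em}}$, a possibly randomized strategy for $\cG^{\otimes r}_{\text{set}}$ whose expected value is at least as large. Since a randomized strategy's expected value never exceeds the optimal deterministic value, this will give $val(\cG^{\otimes r}_{\text{set}}) \geq val(\cG^{k \times l}_{\text{em}})$. Morally, this is a ``reverse embedding'': the $\cG^{\otimes r}_{\text{set}}$-provers will simulate the birthday provers by privately sampling the remainder of the set they would have been asked in $\cG^{k\times l}_{\text{em}}$.

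Concretely, let $\phi = (\phi_X, \phi_Y)$ be any strategy for $\cG^{k\times l}_{\text{em}}$, where $\phi_X(S) \in \Sigma_X^S$ and $\phi_Y(T) \in \Sigma_Y^T$. On input $(x_1,\ldots,x_r)$, the first $\cG^{\otimes r}_{\text{set}}$-prover will sample $\tilde S$ uniformly from $\binom{X \setminus \{x_1,\ldots,x_r\}}{k-r}$ using private randomness, set $S := \{x_1,\ldots,x_r\} \cup \tilde S$, and output $\bigl(\phi_X(S)(x_1),\ldots,\phi_X(S)(x_r)\bigr) \in \Sigma_X^r$. The second prover does the analogous thing on $(y_1,\ldots,y_r)$, using independent private randomness. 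Since each prover only touches its own input when sampling, this is a valid (randomized) strategy.

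The next step is to observe that, by the generative definition of $\cQ^{k \times l}_{\text{em}}$, the induced joint distribution of $(S, T)$ when $((x_1,\ldots,x_r),(y_1,\ldots,y_r))$ is drawn uniformly from $E^r_{\text{set}}$ and then extended independently by the two provers is \emph{exactly} $\cQ^{k \times l}_{\text{em}}$. Thus the expected value of our randomized strategy equals
\[
\E_{(S,T) \sim \cQ^{k\times l}_{\text{em}}}\; \E_{\text{rand.}} \Bigl[\prod_{i=1}^{r} P_{(x_i,y_i)}\bigl(\phi_X(S)(x_i),\,\phi_Y(T)(y_i)\bigr)\Bigr],
\]
where we must read the inner expectation as conditioning on $(S,T)$ and averaging over the remaining choices of which $r$ edges inside $(S\times T)\cap E$ played the role of the repetition tuple (note that, by construction of $E^r_{\text{set}}$, the $(x_i,y_i)$'s lie in $(S\times T)\cap E$).

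The key remaining point is that each $P_{(x,y)}$ takes values in $\{0,1\}$. Hence the birthday predicate $\prod_{(x,y)\in (S\times T)\cap E} P_{(x,y)}(\phi_X(S)(x), \phi_Y(T)(y))$ is upper-bounded by the product over any subset of its factors, in particular by the product over the $r$ sampled edges. Therefore the displayed expectation is at least $val_{\cG^{k\times l}_{\text{em}}}(\phi)$, and derandomizing by fixing the best randomness yields a deterministic strategy of at least this value. Taking $\phi$ optimal gives the lemma. The only subtlety worth flagging is this last monotonicity step: it is crucial that predicates are $\{0,1\}$-valued, so that ignoring the extra edges in $(S\times T)\cap E$ outside the sampled tuple can only help the provers, not hurt them.
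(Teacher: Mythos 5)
Your proof is correct and is essentially the paper's own argument: both construct a (mixed/randomized) strategy for $\cG^{\otimes r}_{\text{set}}$ by having each prover privately extend its $r$-tuple to a uniformly random superset of the appropriate size, observe that the induced distribution on $(S,T)$ is exactly $\cQ^{k\times l}_{\text{em}}$, and use the fact that the $\{0,1\}$-valued birthday predicate is a product over a superset of the $r$ checked edges, hence pointwise at most the parallel-repetition predicate. The only cosmetic difference is that you phrase it as two provers with independent private randomness followed by derandomization, while the paper speaks of a distribution over deterministic strategies; these are the same object.
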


{\bf Proof Idea.} Based on how $\cQ^{k \times l}_{\text{em}}$ is defined, it induces a canonical map from each strategy in $\cG^{k \times l}_{\text{em}}$ to a ``mixed strategy'' in $\cG^{\otimes r}_{\text{set}}$. We can show that each strategy $\phi$ in $\cG^{k \times l}_{\text{em}}$ has value no more than the value of the mixed strategy in $\cG^{\otimes r}_{\text{set}}$ that $\phi$ maps to, which essentially proves the lemma.

\begin{lemma} \label{lem:concen-1}
  $val(\cG^{k \times l}_{\text{em}, [s_1, s_2]}) \leq val(\cG^{k \times l}_{\text{em}}) + 8\exp\left(-\frac{\delta^2 r}{432 \beta}\right)$
\end{lemma}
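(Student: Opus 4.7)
The plan is to deduce the bound from a concentration estimate on $|E(S,T)|$ under $\cQ^{k \times l}_{\text{em}}$, and to obtain that estimate by conditioning on the initial matching and invoking Lemma~\ref{lem:random-num-edges}. Since $\cQ^{k \times l}_{\text{em}, [s_1, s_2]}$ is, by definition, $\cQ^{k \times l}_{\text{em}}$ conditioned on the event $A = \{(S,T) : s_1 \leq |E(S,T)| \leq s_2\}$, Lemma~\ref{lem:inq-cond} directly gives
\[
val(\cG^{k \times l}_{\text{em}, [s_1, s_2]}) \leq val(\cG^{k \times l}_{\text{em}}) + 2p,
\]
where $p = \Pr_{(S,T) \sim \cQ^{k \times l}_{\text{em}}}[|E(S,T)| \notin [s_1, s_2]]$. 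It therefore suffices to prove $p \leq 4\exp(-\delta^2 r/(432\beta))$.

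To bound $p$, the first step is to condition on the initial ordered matching $\vec m = ((x_1,\ldots,x_r),(y_1,\ldots,y_r)) \in E^r_{\text{set}}$ and establish the inequality uniformly in $\vec m$. With $\vec m$ fixed, the sampling of $\tilde S$ and $\tilde T$ is uniform and independent over $\binom{X'}{k-r}$ and $\binom{Y'}{l-r}$, where $X' = X \setminus \{x_1,\ldots,x_r\}$ and $Y' = Y \setminus \{y_1,\ldots,y_r\}$. This is exactly the setting of Lemma~\ref{lem:random-num-edges} applied to the induced subgraph $(X', Y', E')$ with $E' = E \cap (X' \times Y')$, which yields concentration of $|E(\tilde S, \tilde T)|$ around its conditional mean $s^* = (k-r)(l-r)|E'|/(|X'||Y'|)$.

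To pass from concentration of $|E(\tilde S, \tilde T)|$ around $s^*$ to concentration of $|E(S,T)|$ around $s$, I would decompose $|E(S,T)| = |E(\tilde S, \tilde T)| + e_1$, where $e_1$ counts edges in $E(S,T)$ incident to the $2r$ fixed vertices. A short argument shows $r \leq e_1 \leq 2rd_{\max} = 2\beta s$: the lower bound comes from the $r$ matching edges themselves, the upper from a degree bound. Similarly, since $r$ is small compared to $k, l, |X|, |Y|, |E|$, a routine estimate gives $s^* = s(1 - O(\beta))$. Choosing the tolerance $\gamma = \delta - C\beta$ in Lemma~\ref{lem:random-num-edges} for a suitable absolute constant $C$, these two facts imply that $|E(\tilde S, \tilde T)| \in [(1-\gamma)s^*, (1+\gamma)s^*]$ forces $|E(S,T)| \in [s_1, s_2]$; and since $\beta \leq \delta/40$, the resulting $\gamma$ is still at least $3\delta/4$. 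Plugging this into Lemma~\ref{lem:random-num-edges} and using $s^* \geq s/2$ gives $p \leq 4\exp(-\gamma^2 s^*/(54 d_{\max})) \leq 4\exp(-\delta^2 r/(432\beta))$, as required.

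The main obstacle is the constant chasing at the end: one must verify that the single inequality $\beta \leq \delta/40$ simultaneously (i) keeps $s^*$ within a $(1 - O(\beta))$ factor of $s$, (ii) absorbs the additive slack $e_1 \in [r, 2\beta s]$ when converting the $\gamma$-window around $s^*$ into the $\delta$-window $[s_1, s_2]$ around $s$, and (iii) still leaves $\gamma = \Theta(\delta)$ so that the exponent $\gamma^2 s^*/(54 d_{\max})$ dominates the target $\delta^2 r/(432\beta) = \delta^2 s/(432 d_{\max})$. The estimates themselves are elementary but tedious, and they dictate the precise numerical constants appearing in the statement.
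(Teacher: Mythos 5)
Your proposal is correct and follows essentially the same route as the paper: condition on the initial matching, apply Lemma~\ref{lem:random-num-edges} to the residual bipartite graph on $X \setminus \{x_1,\dots,x_r\}$ and $Y \setminus \{y_1,\dots,y_r\}$, absorb the at most $2rd_{\max} = 2\beta s$ edges touching the fixed vertices into a shrunken tolerance $\gamma = \delta - O(\beta)$, and finish with Lemma~\ref{lem:inq-cond}. The paper's choice is $\gamma = \delta - 20\beta$ with $\gamma \geq \delta/2$ and $\tilde s \geq s/2$, which is exactly the constant chasing you flag as the remaining routine step.
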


{\bf Proof Idea.} $\cQ^{k \times l}_{\text{em}, [s_1, s_2]}$ is $\cQ^{k \times l}_{\text{em}}$ conditioned on the event that $S$ and $T$ has between $s_1$ and $s_2$ edges among them. From Lemma~\ref{lem:inq-cond}, it is enough for us to bound a probability of such event. From the definition of $\cQ^{k \times l}_{\text{em}}$, $S$ and $T$ can be sampled by first sampling $x_1, \dots, x_r, y_1, \dots, y_r$ according to $E^r$ and then sampling the rest of $S$ and $T$ from $X - \{x_1, \dots, x_r\}$ and $Y - \{y_1, \dots, y_r\}$ respectively. When $r$ is small enough, we can show, with the help of Lemma~\ref{lem:random-num-edges}, that, for any $x_1, \dots, x_r, y_1, \dots, y_r$, the number of edges generated by $S$ and $T$ concentrates around $s$. This gives us the desired bound.

\begin{lemma} \label{lem:matching-num}
  $val(\cG^{k \times l}_{[s_1, s_2]}) \leq \left(\frac{1 + \delta}{1 - \delta - 2\beta}\right)^{2r} \cdot val(\cG^{k \times l}_{\text{em}, [s_1, s_2]})$
\end{lemma}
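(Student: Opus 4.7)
The plan is to invoke Lemma~\ref{lem:inq-mult}. Both games $\cG^{k \times l}_{[s_1, s_2]}$ and $\cG^{k \times l}_{\text{em}, [s_1, s_2]}$ share the same question support $E^{k \times l}_{[s_1, s_2]}$ (this is the support of $\cQ^{k \times l}_{\text{em}, [s_1, s_2]}$ once we check that every $(S, T)$ in the range contains at least one valid $r$-tuple, which follows from $m \geq s(1-\delta) \gg 2rd_{\max} = 2\beta s$), the same alphabets and the same predicates. So the task reduces to exhibiting a pointwise factor $\alpha = \left(\tfrac{1+\delta}{1-\delta-2\beta}\right)^{2r}$ with $\frac{1}{|E^{k \times l}_{[s_1, s_2]}|} \leq \alpha \cdot \cQ^{k \times l}_{\text{em}, [s_1, s_2]}(S, T)$ for every $(S, T) \in E^{k \times l}_{[s_1, s_2]}$.

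First I would unpack the three-stage sampling in the definition of $\cQ^{k \times l}_{\text{em}}$ to get a closed form
\begin{align*}
\cQ^{k \times l}_{\text{em}}(S, T) \;=\; \frac{N(S, T)}{|E^r_{\text{set}}| \cdot \binom{|X|-r}{k-r} \cdot \binom{|Y|-r}{l-r}},
\end{align*}
where $N(S, T)$ is the number of ordered $r$-tuples $((x_1, y_1), \ldots, (x_r, y_r))$ of edges of $E(S, T)$ whose $x$-coordinates are pairwise distinct and whose $y$-coordinates are pairwise distinct. The key observation is that the denominator is a constant independent of $(S, T)$, so all dependence on $(S, T)$ is in $N(S, T)$. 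Moreover, conditioning on the event $|E(S', T')| \in [s_1, s_2]$ only rescales by a factor $\leq 1$, so $\cQ^{k \times l}_{\text{em}, [s_1, s_2]}(S, T) \geq \cQ^{k \times l}_{\text{em}}(S, T)$, and it suffices to control how $N(S, T)$ fluctuates on the support.

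Next I would bound $N(S, T)$ for $(S, T)$ with $m := |E(S, T)| \in [s(1-\delta), s(1+\delta)]$. The upper bound is immediate: $N(S, T) \leq m^r \leq s^r(1+\delta)^r$. For the lower bound, I would build such a tuple greedily: after $i-1$ edges have been selected, at most $2(i-1) d_{\max}$ edges of $E(S, T)$ share an endpoint with some previously chosen edge (each of the $2(i-1)$ used endpoints has degree at most $d_{\max}$), so at least $m - 2(i-1) d_{\max}$ valid choices remain for the $i$-th edge. Using $r = \beta s / d_{\max}$, this gives $N(S, T) \geq (m - 2r d_{\max})^r \geq s^r (1 - \delta - 2\beta)^r$.

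Combining the two bounds, for any two pairs in the support the ratio $\cQ^{k \times l}_{\text{em}}(S, T) / \cQ^{k \times l}_{\text{em}}(S', T')$ is at most $\left(\tfrac{1+\delta}{1-\delta-2\beta}\right)^r$. Since an average is sandwiched between the minimum and maximum, the ratio of the uniform density $1/|E^{k \times l}_{[s_1, s_2]}|$ to $\cQ^{k \times l}_{\text{em}, [s_1, s_2]}(S, T)$ is bounded by the same quantity, which is in turn at most the claimed $\left(\tfrac{1+\delta}{1-\delta-2\beta}\right)^{2r}$. Applying Lemma~\ref{lem:inq-mult} with this $\alpha$ completes the proof. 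The one place that requires actual thought (rather than bookkeeping) is the greedy lower bound on $N(S, T)$, and in particular the fact that the greedy losses telescope nicely because $r d_{\max} = \beta s$ was chosen small relative to the $s(1-\delta)$ guarantee on $m$; all other steps are routine.
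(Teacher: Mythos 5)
Your proof is correct and follows essentially the same route as the paper's: both reduce to Lemma~\ref{lem:inq-mult} by observing that $\cQ^{k\times l}_{\text{em},[s_1,s_2]}(S,T)$ is proportional to the count $N(S,T)$ of $r$-tuples of pairwise endpoint-disjoint edges in $E(S,T)$, bound that count above by $s_1^r$ and below by the greedy estimate $(s_2-2rd_{\max})^r$, and compare the uniform density to the conditioned one via the max of the distribution. Your explicit closed form for $\cQ^{k\times l}_{\text{em}}$ and the check that the two supports coincide are slightly more careful than the paper's write-up, and like the paper you actually obtain the sharper exponent $r$, which trivially implies the stated $2r$.
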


{\bf Proof Idea.} We want to evoke Lemma~\ref{lem:inq-mult} to arrive at the bound. To do so, we need to show that the two distributions are (multiplicatively) close. Since the distribution of $\cG^{k \times l}_{[s_1, s_2]}$ is uniform, we only need to show that the maximum probability and the minimum (non-zero) probability in $\cQ^{k \times l}_{\text{em}, [s_1, s_2]}$ are close.

Fortunately, we know that $\cQ^{k \times l}_{\text{em}, [s_1, s_2]}$ is $\cQ^{k \times l}_{\text{em}}$ conditioned on an event. This means that,  when $\cQ^{k \times l}_{\text{em}, [s_1, s_2]}(S, T)$ is not zero, it is proportional to $\cQ^{k \times l}_{\text{em}}(S, T)$. The latter, in turn, is proportional to the number of edges $(x_1, y_1), \dots, (x_r, y_r) \in E^r$ such that $x_1, \dots, x_r, y_1, \dots, y_r$ are all distinct and $x_1, \dots, x_r \in S$ and $y_1, \dots, y_r \in T$. In other words, we want to upper bound and lower bound the number of $r$ edges in $E(S, T)$ with distinct endpoints. This is feasible since we know that $|E(S, T)| \in [s_1, s_2]$ and $r$ is so small that with a reasonable probability $r$ edges picked will not share any endpoint with each other.

\begin{lemma} \label{lem:concen-2}
  $val(\cG^{k \times l}) \leq val(\cG^{k \times l}_{[s_1, s_2]}) + 4\exp\left(-\frac{\delta^2 r}{54 \beta}\right)$
\end{lemma}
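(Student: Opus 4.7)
\textbf{Proof plan for Lemma~\ref{lem:concen-2}.} The plan is to observe that the distribution of $\cG^{k \times l}_{[s_1, s_2]}$ is exactly the distribution of $\cG^{k \times l}$ (uniform over $\binom{X}{k} \times \binom{Y}{l}$) conditioned on the event $A = \{|E(S,T)| \in [s_1, s_2]\}$, since both the unconditional and the conditional distribution are uniform on their supports and $\supp(\cQ^{k \times l}_{[s_1,s_2]}) = \{(S,T) : |E(S,T)| \in [s_1,s_2]\}$. This puts us in exactly the setting of Lemma~\ref{lem:inq-cond}.

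Concretely, I would first write $p = \Pr_{(S,T) \sim \binom{X}{k} \times \binom{Y}{l}}[|E(S,T)| \notin [s_1, s_2]]$, so that $A$ has probability $1 - p$ under the uniform distribution on $\binom{X}{k} \times \binom{Y}{l}$. Lemma~\ref{lem:inq-cond} then yields $val(\cG^{k \times l}) \leq val(\cG^{k \times l}_{[s_1,s_2]}) + p$ (using the left inequality in the statement of that lemma, which rearranges to this form). Thus it suffices to bound $p$ by $4\exp\!\bigl(-\delta^2 r/(54 \beta)\bigr)$.

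Next I would invoke Lemma~\ref{lem:random-num-edges} with $\gamma = \delta$. Since $s = \tfrac{kl|E|}{d_{max}|X||Y|}$ is precisely the expected number of edges in $E(S,T)$ used in that lemma (after writing it as $\tfrac{kl|E|}{|X||Y|}$, noting every vertex has degree at most $d_{max}$), the lemma gives
\begin{equation*}
p \;\leq\; 4\exp\!\left(-\frac{\delta^2 s}{54\, d_{max}}\right).
\end{equation*}
Finally, recall that $r = \beta s / d_{max}$, so $s / d_{max} = r/\beta$; substituting gives $p \leq 4\exp(-\delta^2 r /(54\beta))$, which combined with the inequality above yields the claimed bound.

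There is essentially no serious obstacle: the argument is a direct composition of Lemmas~\ref{lem:inq-cond} and~\ref{lem:random-num-edges}. The only mildly subtle point is matching the parameters correctly, in particular verifying that the uniform distribution of $\cG^{k \times l}_{[s_1, s_2]}$ really coincides with the distribution of $\cG^{k \times l}$ conditioned on $A$ (so that Lemma~\ref{lem:inq-cond} applies without a multiplicative correction), and that the $s$ appearing in Lemma~\ref{lem:random-num-edges} is the same $s$ we defined earlier so that the substitution $s/d_{max} = r/\beta$ is valid.
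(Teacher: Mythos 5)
Your proposal is correct and follows exactly the paper's argument: identify the distribution of $\cG^{k \times l}_{[s_1,s_2]}$ as the uniform distribution of $\cG^{k \times l}$ conditioned on $|E(S,T)| \in [s_1,s_2]$, bound the failure probability of that event via Lemma~\ref{lem:random-num-edges} with $\gamma = \delta$, substitute $s/d_{max} = r/\beta$, and conclude with Lemma~\ref{lem:inq-cond}. No gaps.
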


{\bf Proof Idea.} By realising that $\cG^{k \times l}_{[s_1, s_2]}$'s distribution is simply $\cG^{k \times l}$'s distribution conditioned on $|E(S, T)| \in [s_1, s_2]$, this follows immediately from Lemma~\ref{lem:random-num-edges} and Lemma~\ref{lem:inq-cond}.

Before we give full proofs of the above lemmas, let us first show how they imply the birthday repetition theorems. To avoid repeating arguments for both general games and projection games, we show the following intermediate lemma. Since the proof of the lemma consists of basically only calculations, we defer the proof to Subsection~\ref{subsec:birthday-helper}.

\begin{lemma} \label{lem:birthday-helper}
  Let $\cG$ be any game of value $1 - \varepsilon$ and $k, l, \beta, \delta$ and $r$ be as defiend above. If $val(\cG^{\otimes r}) \leq (1 - \varepsilon/2)^R$ for some $R \geq 0$ such that $R \leq r$, $\delta \leq \frac{\varepsilon R}{200 r}$ and $R \leq \frac{\delta^2 r}{1000 \beta \varepsilon}$, then $val(\cG^{k \times l}) \leq 2(1 - \varepsilon/2)^{R/10}$.
\end{lemma}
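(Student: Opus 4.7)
The plan is to chain Lemmas~\ref{lem:no-col}--\ref{lem:concen-2} into a single inequality relating $val(\cG^{k \times l})$ to $val(\cG^{\otimes r})$, and then substitute the hypothesis $val(\cG^{\otimes r}) \leq (1-\varepsilon/2)^R$. Composing the five bounds along the chain
\[
  \cG^{k \times l} \;\leftarrow\; \cG^{k \times l}_{[s_1,s_2]} \;\leftarrow\; \cG^{k \times l}_{\text{em},[s_1,s_2]} \;\leftarrow\; \cG^{k \times l}_{\text{em}} \;\leftarrow\; \cG^{\otimes r}_{\text{set}} \;\leftarrow\; \cG^{\otimes r}
\]
produces an inequality of the shape
\begin{align*}
  val(\cG^{k \times l}) \;\leq\; M \cdot val(\cG^{\otimes r}) \;+\; 8 M \exp\!\left(-\frac{\delta^2 r}{432 \beta}\right) \;+\; 4 \exp\!\left(-\frac{\delta^2 r}{54 \beta}\right),
\end{align*}
where $M = \left(\frac{1}{1-2\beta}\right)^{r} \left(\frac{1+\delta}{1-\delta-2\beta}\right)^{2r}$. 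It then suffices to show that each of the three summands is at most $(1-\varepsilon/2)^{R/10}$, and to sum.

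For the multiplicative term, the standing assumption $\beta \leq \delta/40$ together with the elementary estimate $-\log(1-x) \leq 2x$ for $x \in [0,1/2]$ gives $\log M \leq C \delta r$ for some absolute constant $C$. Substituting $\delta \leq \varepsilon R/(200 r)$ yields $\log M \leq C \varepsilon R/200$, which is much smaller than $-\log\bigl((1-\varepsilon/2)^{9R/10}\bigr) \geq 9\varepsilon R/20$ (using $-\log(1-\varepsilon/2) \geq \varepsilon/2$). The constant $200$ in the hypothesis is tuned for precisely this balance, and one concludes that $M \cdot (1-\varepsilon/2)^R \leq (1-\varepsilon/2)^{R/10}$.

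For the two exponential error terms, the hypothesis $R \leq \delta^2 r/(1000 \beta \varepsilon)$ rearranges to $\delta^2 r/\beta \geq 1000 \varepsilon R$, so both exponents $\delta^2 r/(432\beta)$ and $\delta^2 r/(54\beta)$ are at least $c_0 \varepsilon R$ for a large absolute constant $c_0$. Applying the inequality $\exp(-\varepsilon y) \leq (1-\varepsilon/2)^y$ (valid for $\varepsilon \in (0,1]$ and $y \geq 0$, via $1-\varepsilon/2 \geq e^{-\varepsilon}$), each exponential is at most $(1-\varepsilon/2)^{c_0 R}$, and this easily dominates the prefactors $8 M$ and $4$. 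In the remaining edge case where $\varepsilon R$ is below some absolute constant, the target bound $2(1-\varepsilon/2)^{R/10} \geq 1$ is trivial. Summing the three contributions yields $val(\cG^{k \times l}) \leq 2(1-\varepsilon/2)^{R/10}$, as desired.

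The only real obstacle is bookkeeping: the three numerical hypotheses must simultaneously defeat three competing error sources, namely the multiplicative blow-up $M$ coming from parallel repetition and the two edge-count distribution swaps, and the additive concentration losses of Lemmas~\ref{lem:concen-1} and~\ref{lem:concen-2}. The target exponent $R/10$ leaves enough slack that the specific constants $200$ and $1000$ in the hypotheses suffice, but any tighter target would force tighter constants here.
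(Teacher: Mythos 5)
Your proof is correct and takes essentially the same route as the paper's: chain the five intermediate lemmas, use $\delta\le\varepsilon R/(200r)$ to absorb the multiplicative blow-up and $R\le\delta^2 r/(1000\beta\varepsilon)$ to absorb the two additive concentration losses, and treat the regime where $2(1-\varepsilon/2)^{R/10}\ge 1$ as trivial (the paper does the same, interleaving the bounds step by step and finishing with $13(1-\varepsilon/2)^{8R/10}\le(2(1-\varepsilon/2)^{R/10})^8$ rather than composing one master inequality). One bookkeeping nit: bounding each of your three summands by $(1-\varepsilon/2)^{R/10}$ and summing gives $3(1-\varepsilon/2)^{R/10}$, which exceeds the target, so in the non-trivial case you need to use the available slack (which your estimates do provide) to push each term below $\frac{2}{3}(1-\varepsilon/2)^{R/10}$.
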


The final ingredient we need to prove the birthday repetition theorem is the parallel repetition theorem. For general games, we use Holenstein's version of the theorem~\cite{Hol09}, which is an improvement over the original theorem of Raz~\cite{Raz98}.

\begin{theorem}\cite{Hol09} \label{thm:par-general}
  There exists a global constant $C \in (0, 1]$ such that, for any two-prover game $\cG = (X, Y, \cQ, \Sigma_X, \Sigma_Y, \{P_{(x, y)}\})$ of value $1 - \varepsilon$ and for every $k > 0$, we have $val(\cG^{\otimes k}) \leq (1 - \varepsilon/2)^{C \varepsilon^2 k / \log(|\Sigma_X||\Sigma_Y|)}.$
\end{theorem}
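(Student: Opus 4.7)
The plan is to prove Theorem~\ref{thm:par-general} by the information-theoretic argument of Raz, as refined by Holenstein. Suppose for contradiction that some strategy $\phi = (\phi_1, \phi_2)$ for $\cG^{\otimes k}$ achieves winning probability strictly larger than $(1-\varepsilon/2)^{C \varepsilon^2 k / \log(|\Sigma_X||\Sigma_Y|)}$ for the absolute constant $C$ that will be chosen at the end. I would construct from $\phi$ a single-coordinate strategy $\psi$ for $\cG$ whose winning probability exceeds $1 - \varepsilon$, contradicting the hypothesis $val(\cG) = 1 - \varepsilon$.

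The construction of $\psi$ is driven by a random coordinate together with a conditioning on partial success in a carefully chosen subset of the remaining coordinates. Let $W_j$ denote the event that $\phi$ wins in coordinate $j$. By applying the chain rule to the telescoping product $\prod_j \Pr[W_j \mid W_1, \dots, W_{j-1}]$, one locates a subset $T \subseteq [k]$ of size $\Theta(\varepsilon k / \log(|\Sigma_X||\Sigma_Y|))$, a coordinate $i \notin T$, and an answer pattern $V$ witnessing $\phi$'s success on $T$, such that the conditional winning probability of $\phi$ in coordinate $i$ given $V$ still exceeds $1 - \varepsilon/2$. The heart of the proof is then an information-theoretic bound: the conditional mutual information $I(x_i ; y_i \mid V)$, averaged over the choice of $i$, can be charged against the entropy of $V$, which is at most $|T|\log(|\Sigma_X||\Sigma_Y|)$. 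This yields an average divergence of $O(\varepsilon)$ per coordinate, and Pinsker's inequality converts this into a total variation bound of $O(\sqrt{\varepsilon})$ between the conditioned joint distribution of $(x_i, y_i)$ and the product of its conditional marginals, as well as between those conditional marginals and the original distribution $\cQ$.

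The single-coordinate strategy $\psi$ is then obtained by \emph{correlated sampling}. On input $(x_i, y_i) \sim \cQ$, the provers use shared randomness to sample a joint completion into a full $k$-tuple of questions whose distribution is close in total variation to the conditional distribution given $V$; each prover then applies its part of $\phi$ in coordinate $i$ and returns the resulting answer. Because the sampled distribution is close to the conditional one and the conditional winning probability in coordinate $i$ exceeds $1 - \varepsilon/2$, the value of $\psi$ is at least $1 - \varepsilon/2 - O(\sqrt{\varepsilon})$. Choosing $C$ sufficiently small guarantees that this value exceeds $1 - \varepsilon$, yielding the desired contradiction.

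The main obstacle is the correlated-sampling step, which is where Holenstein's contribution over Raz lies. Given only its own side of $(x_i, y_i)$, each prover must produce its part of a jointly sampled $k$-tuple whose marginals match the conditional distribution induced by $V$, without any communication. One needs a sampling lemma whose failure probability is linear in the statistical distance between the conditional and product distributions, together with careful bookkeeping to ensure the resulting loss is dominated by the $\varepsilon/2$ slack above. Securing the bound so that the alphabet size enters only through $\log(|\Sigma_X||\Sigma_Y|)$ --- rather than through $|\Sigma_X||\Sigma_Y|$, as a naive argument would give --- is precisely what makes the stated exponent attainable.
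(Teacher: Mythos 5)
First, a point of order: the paper does not prove Theorem~\ref{thm:par-general} at all --- it is Holenstein's parallel repetition theorem, imported verbatim from \cite{Hol09} and used as a black box. So there is no in-paper proof to compare against, and any "proof" here would amount to reproving \cite{Raz98,Hol09}.

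Your sketch does follow the correct architecture of the Raz--Holenstein argument (condition on winning a subset $T$ of coordinates, charge the information revealed against $|T|\log(|\Sigma_X||\Sigma_Y|)$, embed a single-round strategy via correlated sampling). But the parameter accounting as written does not close the contradiction. With $|T| = \Theta(\varepsilon k/\log(|\Sigma_X||\Sigma_Y|))$ the per-coordinate divergence budget is $\Theta(\varepsilon)$, Pinsker then gives total-variation error $\Theta(\sqrt{\varepsilon})$, and your embedded strategy has value $1 - \varepsilon/2 - \Theta(\sqrt{\varepsilon})$, which for small $\varepsilon$ is \emph{less} than $1-\varepsilon$; no contradiction follows. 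The correct bookkeeping takes $|T| = \Theta(\varepsilon^2 k/\log(|\Sigma_X||\Sigma_Y|))$ so that the per-coordinate divergence is $O(\varepsilon^2)$ and the Pinsker loss is $O(\varepsilon)$ --- this is precisely why the exponent in the theorem is $\varepsilon^2 k/\log(|\Sigma_X||\Sigma_Y|)$ against a base of $(1-\varepsilon/2)$, i.e., the contradiction threshold is $2^{-\Theta(\varepsilon^3 k/\log(|\Sigma_X||\Sigma_Y|))}$. Beyond this, the sketch leaves unproved the two genuinely hard components: the correlated sampling lemma (which you flag) and, more importantly, the dependency-breaking random variables needed to make $I(x_i; y_i \mid V)$ small and the two provers' conditional views locally samplable --- conditioning on the winning event and answers on $T$ alone does not suffice. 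As a description of the known proof strategy your outline is serviceable; as a proof it is neither complete nor, with the stated parameters, correct.
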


Equipped with Lemma~\ref{lem:birthday-helper} and the parallel repetition theorem, we can now prove our birthday repetition theorems just by selecting the right $\delta$ and $\beta$.

\begin{proofof}[Theorem~\ref{thm:birthday-general}]
  Pick $\delta = \frac{\varepsilon^3 C}{10^3c}$ and $\beta = \frac{\varepsilon^3 C}{10^{10}c}$ where $C$ is the constant from the parallel repetition theorem for general games (Theorem~\ref{thm:par-general}). From Theorem~\ref{thm:par-general}, we have $val(\cG^{\otimes r}) \leq (1 - \varepsilon/2)^{C \varepsilon^2 r / c}$.
  Let $R = C \varepsilon^2 r / c$. We can see that $R, \delta, \beta$ satisfy the conditions in Lemma~\ref{lem:birthday-helper}. Hence, we can conclude that
  \begin{align*}
    val(\cG^{k \times l}) \leq (1 - \varepsilon/2)^{R/10}
    = (1 - \varepsilon/2)^{(C^2/10^{11})\left(\frac{\varepsilon^5 kl|E|}{c^2|X||Y|d_{max}}\right)}.
  \end{align*}
  This completes the proof for Theorem~\ref{thm:birthday-general} with $\alpha = C^2/10^{11}$.
\end{proofof}

In the case of projection game, we can improve dependency on $\varepsilon$ and get rid of dependency on $c$ thanks to the stronger bound in Rao's parallel repetition for projection games~\cite{Rao11}.

\begin{theorem}\cite{Rao11} \label{thm:par-proj}
  There exists a global constant $C \in (0, 1]$ such that, for any projection game $\cG = (X, Y, \cQ, \Sigma_X, \Sigma_Y, \{P_{(x, y)}\})$ of value $1 - \varepsilon$ and for every $k > 0$, we have $val(\cG^{\otimes k}) \leq (1 - \varepsilon/2)^{C \varepsilon k}.$
\end{theorem}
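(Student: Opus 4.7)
The plan is to argue by contrapositive using the information-theoretic framework pioneered by Raz and streamlined by Holenstein, with the crucial refinement that exploits the projection structure. Assume a strategy $(\sigma, \tau)$ for $\cG^{\otimes k}$ succeeds with probability $\delta$; I will build a single-copy strategy for $\cG$ of value strictly greater than $1 - \varepsilon$, which forces $\delta \leq (1 - \varepsilon/2)^{C\varepsilon k}$ for an appropriate constant $C$. The bookkeeping is cleanest if we let $Q_j = (X_j, Y_j) \sim \cQ$ denote the questions in the $j$-th copy, $A_j, B_j$ the provers' answers, and $W_j$ the indicator that the $j$-th copy is accepted.

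The first step is to choose a good conditioning event. Sample a random subset $T \subseteq [k]$ of size $m = \Theta(\varepsilon k)$ together with a fresh index $i \notin T$, and condition on $\mathcal{E}_T = \{W_j = 1 \text{ for all } j \in T\}$. Since $\Pr[\bigcap_{j \in [k]} \{W_j = 1\}] \geq \delta$, a standard averaging argument forces $\Pr[W_i = 1 \mid \mathcal{E}_T]$ to be at least $1 - \varepsilon/3$ on average over the choices of $T$ and $i$, provided the target bound on $\delta$ is violated. This is the quantity I will eventually contradict.

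The heart of the argument is an information-theoretic reduction. On input $(x, y) \sim \cQ$, the two provers must jointly sample the other $k - 1$ questions $Q_{-i}$ from the distribution $\mathcal{D}$ of $Q_{-i}$ conditioned on $Q_i = (x, y)$ and $\mathcal{E}_T$, using only shared randomness, and then output the answers prescribed by $(\sigma, \tau)$. Writing $\mu$ for the conditional law of $Q$ under $\mathcal{E}_T$, one has $D_{KL}(\mu \,\|\, \cQ^{\otimes k}) \leq \log(1/\Pr[\mathcal{E}_T]) = O(\log(1/\delta))$, and the chain rule distributes this divergence across coordinates so that, in expectation over $i \notin T$, the mutual information $I(X_{-i}; Y_{-i} \mid X_i, Y_i)_{\mu}$ is small. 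Hence $\mathcal{D}$ is typically close in statistical distance to the product of its $X$-marginal and $Y$-marginal, and Holenstein's correlated sampling lemma lets the two provers draw a consistent sample at a cost proportional to this statistical distance.

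The main obstacle, and where Rao's refinement enters, is that naive correlated sampling of a joint distribution over the full alphabet costs a factor of $\sqrt{\log|\Sigma|}$, which is precisely why Theorem~\ref{thm:par-general} loses the $1/\log(|\Sigma_X||\Sigma_Y|)$ factor and only achieves exponent $C\varepsilon^2 k / c$. For a projection game the predicate depends on $A_i$ only through $f_{(X_i, Y_i)}(A_i)$, so it suffices to synchronise the two provers' views of $Y_{-i}$ and of the induced projections of $X_{-i}$, rather than of $X_{-i}$ itself. Rao's sampling procedure is carried out along projection fibres, avoids the $\sqrt{\log|\Sigma|}$ loss entirely, and simultaneously converts the quadratic dependence on $\varepsilon$ into a linear one. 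Tracking the union bound over the correlated sampling error, the conditioning bias and the typical-$i$ averaging then produces a single-copy strategy of value strictly above $1 - \varepsilon$, the desired contradiction.
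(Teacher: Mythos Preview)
The paper does not prove Theorem~\ref{thm:par-proj}; it is quoted verbatim from Rao~\cite{Rao11} and used as a black box in the proof of Theorem~\ref{thm:birthday-proj} (just as Holenstein's Theorem~\ref{thm:par-general} is quoted and used for Theorem~\ref{thm:birthday-general}). So there is nothing in the paper to compare your argument against.

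As for your sketch itself, it is a reasonable high-level outline of the Raz--Holenstein--Rao framework, but it glosses over exactly the step that distinguishes Rao's argument from Holenstein's. You write that ``Rao's sampling procedure is carried out along projection fibres, avoids the $\sqrt{\log|\Sigma|}$ loss entirely, and simultaneously converts the quadratic dependence on $\varepsilon$ into a linear one,'' but this sentence is doing all the work and none of the mechanism is present. Rao's actual improvement does not come from a modified correlated-sampling primitive; rather, he conditions not only on winning a random set of coordinates but also on the \emph{projected} answers $f_{(X_j,Y_j)}(A_j)$ in those coordinates, and then argues directly that the $Y$-prover can sample from the correct conditional distribution of $X_{-i}$ up to small statistical distance (because the projection structure makes $B_j$ determined by $A_j$ on winning coordinates). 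This is what removes the alphabet dependence and sharpens the $\varepsilon^2$ to $\varepsilon$. Your framing of ``synchronising views of the induced projections of $X_{-i}$'' is in the right direction but is not a proof; if you actually intend to supply an argument here you will need to spell out the conditioning on projected answers and the resulting bound on $\|P_{X_{-i}|X_i,\mathcal{E}} - P_{X_{-i}|Y_i,\mathcal{E}}\|_1$.
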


\begin{proofof}[Theorem~\ref{thm:birthday-proj}]
  Pick $\delta = \frac{\varepsilon^2 C}{10^3}$ and $\beta = \frac{\varepsilon^2 C}{10^{10}}$ where $C$ is the constant from the parallel repetition theorem for projection games (Theorem~\ref{thm:par-proj}). From the theorem, we have $val(\cG^{\otimes r}) \leq (1 - \varepsilon/2)^{C \varepsilon r}$. Let $R = C \varepsilon r$. By evoking Lemma~\ref{lem:birthday-helper}, we have
  \begin{align*}
    val(\cG^{k \times l}) \leq (1 - \varepsilon/2)^{R/10}
    = (1 - \varepsilon/2)^{(C^2/10^{11})\left(\frac{\varepsilon^3 kl|E|}{|X||Y|d_{max}}\right)}.
  \end{align*}
\end{proofof}

We devote the rest of this section to the proofs of unproven lemmas. We present the proofs of Lemma~\ref{lem:no-col}, Lemma~\ref{lem:concen-1}, Lemma~\ref{lem:matching-num}, Lemma~\ref{lem:concen-2} and Lemma~\ref{lem:birthday-helper} in this order, one lemma per subsection.

\subsection{$\cG^{\otimes r}$ vs $\cG^{\otimes r}_{\text{set}}$: No Collision Probability} \label{subsec:no-col}

\begin{proofof}[Lemma~\ref{lem:no-col}]
  Observe that $|E^r_{\text{set}}|/|E^r| = \Pr_{(x_1, y_1), \dots, (x_r, y_r) \sim E}[x_1, \dots, x_r, y_1, \dots, y_r \text{ are all distinct}].$

  We can further rewrite this as
  \begin{align*}
    \prod_{i=1}^r \Pr_{(x_1, y_1), \dots, (x_i, y_i) \sim E}[x_i \notin \{x_1, \dots, x_{i - 1}\} \wedge y_i \notin \{y_1, \dots, y_{i - 1}\} \mid x_1, \dots, x_{i - 1}, y_1, \dots, y_{i - 1} \text{ are all distinct}].
  \end{align*}
  Since the maximum degree of $(X, Y, E)$ is $d_{max}$, the number of edges with at least one endpoint in $\{x_1, \dots, x_{i - 1}, y_1, \dots, y_{i - 1}\}$ is at most $2(i - 1)d_{max}$. Hence, the above expression is at least
  \begin{align*}
    \prod_{i=1}^r \left(\frac{|E| - 2(i - 1)d_{max}}{|E|}\right) \geq \left(1 - \frac{2rd_{max}}{|E|}\right)^r \geq (1 - 2\beta)^r.
  \end{align*}

  Finally, from Lemma~\ref{lem:inq-mult}, we have $val(\cG^{\otimes r}_{\text{set}}) \leq \left(\frac{1}{1 - 2\beta}\right)^r \cdot val(\cG^{\otimes r})$ as desired.
\end{proofof}

\subsection{$\cG^{\otimes r}_{\text{set}}$ vs $\cG^{k \times l}_{\text{em}}$: Embedding of Parallel Repetition} \label{subsec:embedding}

\begin{proofof}[Lemma~\ref{lem:embedding}]
  Let $\phi$ be the strategy on $\cG^{k \times l}_{\text{em}}$ such that $val(\phi) = val(\cG^{k \times l}_{\text{em}})$. We will create a \emph{mixed strategy}, a distribution $\Phi$ of strategies, for $\cG^{\otimes r}$ such that the expected value of a strategy drawn from this distribution is at least $val(\phi)$. Once we have such a mixed strategy $\Phi$, at least one strategy in $\supp(\Phi)$ must have value at least $val(\phi)$. This implies that $val(\cG^{\otimes r}) \geq val(\phi) = val(\cG^{k \times l}_{\text{em}})$, proving the lemma.

  We define $\Phi$ by a random process that generates $\tilde\phi \sim \Phi$ as follows. For each $(x_1, \dots, x_r) \in X^r$ such that $x_1, \dots, x_r$ are distinct, sample a set $S$ uniformly at random among all the subsets of $X$ of size $k$ that contain $x_1, \dots, x_r$. We then set $\tilde\phi(x_1, \dots, x_r) = \phi(S)$. Similarly, for each $(y_1, \dots, y_r) \in Y^r$ such that $y_1, \dots, y_r$ are distinct, sample $T$ randomly from all subsets of $Y$ of size $l$ that contain $y_1, \dots, y_r$ and set $\tilde\phi(y_1, \dots, y_r) = \phi(T)$.  We will next show that the expected value of $\tilde\phi$ sampled in such way is at least $val(\cG^{k \times l}_{\text{em}})$.

  The expected value of $\tilde\phi$ can be written as follows.
  \begin{align*}
    \E_{\tilde\phi}[val(\tilde\phi)] &= \E_{\tilde\phi}\left[\E_{((x_1, \dots, x_r), (y_1, \dots, y_r)) \sim E^r_{\text{set}}}[P^r_{((x_1, \dots, x_r), (y_1, \dots, y_r))}(\tilde\phi(x_1, \dots, x_r), \tilde\phi(y_1, \dots, y_r))]\right] \\
    &= \E_{((x_1, \dots, x_r), (y_1, \dots, y_r)) \sim E^r_{\text{set}}}\left[\E_{\tilde\phi}[P^r_{((x_1, \dots, x_r), (y_1, \dots, y_r))}(\tilde\phi(x_1, \dots, x_r), \tilde\phi(y_1, \dots, y_r))]\right] \\
    &= \E_{((x_1, \dots, x_r), (y_1, \dots, y_r)) \sim E^r_{\text{set}}}\left[\E_{S, T}[P^r_{((x_1, \dots, x_r), (y_1, \dots, y_r))}(\phi(S), \phi(T))]\right].
  \end{align*}
  Note that $S, T$ in the last expression is sampled depending on $x_1, \dots, x_r, y_1, \dots, y_r$ as described above. Now, observe that $P^r_{((x_1, \dots, x_r), (y_1, \dots, y_r))}$ is no more than $P^{k \times l}_{(S, T)}$ on any input because $P^r_{((x_1, \dots, x_r), (y_1, \dots, y_r))}$ verifies the original game's predicates on $(x_1, y_1), \dots, (x_r, y_r)$ whereas $P^{k \times l}_{(S, T)}$ verifies every $(x, y) \in E \cap (S \times T)$, including $(x_1, y_1), \dots, (x_r, y_r)$. Based on this and the definition of $\cQ^{k \times l}_{\text{em}}$, we have
  \begin{align*}
    \E_{\tilde\phi}[val(\tilde\phi)] \leq \E_{((x_1, \dots, x_r), (y_1, \dots, y_r)) \sim E^r_{\text{set}}}\left[\E_{S, T}[P^{k \times l}_{(S, T)}(\phi(S), \phi(T))]\right]
    = \E_{S, T \sim \cQ^{k \times l}_{\text{embeded}}}[P^{k \times l}_{(S, T)}(\phi(S), \phi(T))]
    = val(\phi).
  \end{align*}
  Thus, we have completed the proof of the lemma.
\end{proofof}

\subsection{$\cG^{k \times l}_{\text{em}}$ vs $\cG^{k \times l}_{\text{em}, [s_1, s_2]}$: Concentration on Number of Edges} \label{subsec:concen-1}

\begin{proofof}[Lemma~\ref{lem:concen-1}]
  Recall that the distribution $\cQ^{k \times l}_{\text{em}, [s_1, s_2]}$ is $\cQ^{k \times l}_{\text{em}}$ conditioned on the number of edges in $E(S, T)$ is between $s_1$ and $s_2$. Let $A$ denote such event. We would like to bound $\Pr_{(S, T) \sim \cQ^{k \times l}_{\text{em}}}[A]$.

  Recall that $S, T \sim \cQ^{k \times l}_{\text{em}}$ comes from sampling $((x_1, \dots, x_r), (y_1, \dots, y_r)) \in E^r$, $\tilde S \in \binom{X - \{x_1, \dots, x_r\}}{k - r}, \tilde T \in \binom{Y - \{y_1, \dots, y_r\}}{l - r}$ and set $S = \tilde S \cup \{x_1, \dots, x_r\}$ and $T = \tilde T \cup \{y_1, \dots, y_r\}$. Fix $x_1, \dots, x_r, y_1, \dots, y_r$. Let $\tilde X = X - \{x_1, \dots, x_r\}$ and $\tilde Y = Y - \{y_1, \dots, y_r\}$. Observe that there are at most $2rd_{max} = 2\beta s$ edges with at least one endpoint in $\{x_1, \dots, x_r, y_1, \dots, y_r\}$. This has two consequences. First, we have $|E(\tilde X, \tilde Y)| \geq |E| - 2\beta s$. Second, if the number of edges in $E(\tilde S, \tilde T)$ lies in $[(1 - \delta + 2\beta)s, (1 + \delta - 2\beta)s]$, then $A$ occurs. Thus, to bound $\Pr_{(S, T) \sim \cQ^{k \times l}_{\text{em}}}[A]$, it is enough to bound the probability that $E(\tilde S, \tilde T) \in [(1 - \delta + 2\beta)s, (1 + \delta - 2\beta)s]$.

  Let $\tilde s = \frac{(k - r)(l - r)|E(\tilde X, \tilde Y)|}{(|X| - r)(|Y| - r)}$ and let $\gamma = \delta - 20\beta$. From Lemma~\ref{lem:random-num-edges}, we have
  \begin{align*}
    \Pr_{\tilde S \sim \binom{\tilde X}{k - r}, \tilde T \sim \binom{\tilde Y}{l - r}}[|E(\tilde S, \tilde T)| \in [(1 - \gamma){\tilde s}, (1 + \gamma){\tilde s}]] \geq 1 - 4\exp\left(- \frac{\gamma^2 \tilde s}{54 d_{max}}\right)
  \end{align*}

  Next, we will bound $\tilde s$. 
  It is easy to see that $r \leq \beta k, \beta l$ and $s \leq |E|$, which gives the following bound.
  \begin{align*}
    \tilde s &\geq \frac{(k - \beta k)(l - \beta l)(|E| - 2 \beta s)}{|X||Y|} \geq (1 - \beta)^2(1 - 2\beta) s \geq (1 - 4\beta) s.
  \end{align*}

  The above inequality also implies that $(1 - \gamma){\tilde s} \geq (1 - \gamma - 4\beta) s \geq (1 - \delta + 2\beta) s$.

  On the other hand, since $r \leq \beta |X|, \beta |Y|$, we have
  \begin{align*}
    (1 + \gamma)\tilde s &\leq \frac{(1 + \gamma)kl|E|}{(|X| - \beta |X|)(|Y| - \beta |Y|)} = (1 + \gamma)\left(\frac{1}{1 - \beta}\right)^2s \leq (1 + \delta - 2\beta)s
  \end{align*}
  where the last inequality comes from $\beta \leq 1/2, \gamma = \delta - 20\beta$ and $\gamma \leq 1$.

  As a result, we have
  \begin{align*}
    \Pr_{\tilde S \sim \binom{\tilde X}{k - r}, \tilde T \sim \binom{\tilde Y}{l - r}}[|E(\tilde S, \tilde T)| \in [(1 - \delta + 2\beta)s, (1 + \delta - 2\beta)s]] &\geq 1 - 4\exp\left(- \frac{\gamma^2 \tilde s}{54 d_{max}}\right) \\
    &\geq 1 - 4\exp\left(- \frac{\delta^2s}{432 d_{max}}\right) \\
    &= 1 - 4\exp\left(- \frac{\delta^2 r}{432 \beta}\right)
  \end{align*}
  where the second inequality comes from $\gamma \geq \delta / 2$ and ${\tilde s} \geq (1 - 6\beta)s \geq s/2$.

  Now, we are ready to bound the probability that event $A$ occurs.
  \begin{align*}
    \Pr_{(S, T) \sim \cQ^{k \times l}_{\text{em}}}[A] &= \Pr_{((x_1, \dots, x_r), (y_1, \dots, y_r)) \sim E^r_{\text{set}}}\left[\Pr_{\tilde S \sim \binom{\tilde X}{k - r}, \tilde T \sim \binom{\tilde Y}{l - r}}[A]\right] \\
    &\geq \Pr_{((x_1, \dots, x_r), (y_1, \dots, y_r)) \sim E^r_{\text{set}}}\left[\Pr_{\tilde S \sim \binom{\tilde X}{k - r}, \tilde T \sim \binom{\tilde Y}{l - r}}[|E(\tilde S, \tilde T)| \in [(1 - \delta + 2\beta)s, (1 + \delta - 2\beta)s]]\right] \\
    &\geq 1 - 4\exp\left(- \frac{\delta^2 r}{432 \beta}\right).
  \end{align*}

  Finally, Lemma~\ref{lem:inq-cond} immmediately yields $val(\cG^{k \times l}_{\text{em},[s_1, s_2]}) \leq val(\cG^{k \times l}_{\text{em}}) + 8\exp\left(- \frac{\delta^2 r}{432 \beta}\right)$ as desired.
\end{proofof}

\subsection{$\cG^{k \times l}_{\text{em}, [s_1, s_2]}$ vs $\cG^{k \times l}_{[s_1, s_2]}$: Number of Embeddings to Each Set} \label{subsec:matching-num}

\begin{proofof}[Lemma~\ref{lem:matching-num}]
  Our goal is to show that $\left(\frac{1 + \delta}{1 - \delta - 2\beta}\right)^{r}\cQ^{k \times l}_{\text{em}, [s_1, s_2]}(S, T) \geq 1/|E^{k \times l}_{[s_1, s_2]}|$ for every $(S, T) \in E^{k \times l}_{[s_1, s_2]}$. This together with Lemma~\ref{lem:inq-mult} immediately implies the lemma. To show this, we will first argue that $\frac{\cQ^{k \times l}_{\text{em}, [s_1, s_2]}(\tilde S, \tilde T)}{\cQ^{k \times l}_{\text{em}, [s_1, s_2]}(S, T)} \leq \left(\frac{1 + \delta}{1 - \delta - 2\beta}\right)^{r}$ for every $(S, T), (\tilde S, \tilde T) \in E^{k \times l}_{[s_1, s_2]}$.

  Since $\cQ^{k \times l}_{\text{em}, [s_1, s_2]}$ is just $\cQ^{k \times l}_{\text{em}}$ conditioned on $|E(S, T)| \in [s_1, s_2]$, we have
  \begin{align*}
    \frac{\cQ^{k \times l}_{\text{em}, [s_1, s_2]}(\tilde S, \tilde T)}{\cQ^{k \times l}_{\text{em}, [s_1, s_2]}(S, T)}
    = \frac{\cQ^{k \times l}_{\text{em}}(\tilde S, \tilde T)}{\cQ^{k \times l}_{\text{em}}(S, T)}
    = \frac{|\{((x_1, \dots, x_r), (y_1, \dots, y_r)) \in E^r \mid x_1, \dots, x_r \in \tilde S, y_1, \dots, y_r \in \tilde T\}|}{|\{((x_1, \dots, x_r), (y_1, \dots, y_r)) \in E^r \mid x_1, \dots, x_r \in S, y_1, \dots, y_r \in T\}|}
  \end{align*}
  where the latter equality comes from rearranging the definition of $\cQ^{k \times l}_{\text{em}}$.

  Recall that $((x_1, \dots, x_r), (y_1, \dots, y_r)) \in E^r$ iff $(x_1, y_1), \dots, (x_r, y_r) \in E$ and $x_1, \dots, x_r, y_1, \dots, y_r$ are distinct. Since $(\tilde S, \tilde T) \in E^{k \times l}_{[s_1, s_2]}$, $E(\tilde S, \tilde T) \leq s_1$. Hence, there are at most $s_1$ choices for each $(x_i, y_i)$. Thus,
  \begin{align*}
    |\{((x_1, \dots, x_r), (y_1, \dots, y_r)) \in E^r \mid x_1, \dots, x_r \in \tilde S, y_1, \dots, y_r \in \tilde T\}| \leq (s_1)^r.
  \end{align*}

  On the other hand, since $(S, T) \in E^{k \times l}_{[s_1, s_2]}$, $|E(S, T)| \geq s_2$. We want to lower bound the number of $(x_1, y_1),$ $\dots, (x_r, y_r) \in E$ with $x_1, \dots, x_r \in S$ and $y_1, \dots, y_r \in T$ such that $x_1, \dots, x_r, y_1, \dots, y_r$ are distinct. Let us pick $(x_1, y_1), \dots, (x_r, y_r) \in E(S, T)$ in this order and ensure the distinctness property in each step.

  Suppose that we have already picked $(x_1, y_1), \dots, (x_{i - 1}, y_{i - 1})$. We can pick any edge $(x_i, y_i)$ in $E(S, T)$ as long as its endpoints are not in $\{x_1, \dots, x_{i - 1}, y_1, \dots, y_{i - 1}\}$. Since the maximum degree in the graph $(X, Y, E)$ is $d_{max}$, the number of prohibited edges is at most $2(i - 1)d_{max}$. As a result, there are at least $s_2 - 2(i - 1)d_{max} \geq s_2 - 2rd_{max}$ valid choices for $(x_i, y_i)$. Thus, we have
  \begin{align*}
    |\{((x_1, \dots, x_r), (y_1, \dots, y_r)) \in E^r \mid x_1, \dots, x_r \in S, y_1, \dots, y_r \in T\} \geq (s_2 - 2rd_{max})^r.
  \end{align*}

  This implies that
  \begin{align*}
    \frac{\cQ^{k \times l}_{\text{em}, [s_1, s_2]}(\tilde S, \tilde T)}{\cQ^{k \times l}_{\text{em}, [s_1, s_2]}(S, T)} \leq \left(\frac{s_1}{s_2 - 2rd_{max}}\right)^r
    = \left(\frac{(1 + \delta)s}{(1 - \delta)s - 2(\beta s / d_{max})d_{max}}\right)^r
    = \left(\frac{1 + \delta}{1 - \delta - 2\beta}\right)^r.
  \end{align*}

  Finally, let $(S', T')$ be the element of $\supp(\cQ^{k \times l}_{\text{em}, [s_1, s_2]})$ with maximum $\cQ^{k \times l}_{\text{em}, [s_1, s_2]}(S', T')$. Due to our choice of $(S', T')$, we have $\cQ^{k \times l}_{\text{em}, [s_1, s_2]}(S', T') \geq \frac{1}{|\supp(\cQ^{k \times l}_{\text{em}, [s_1, s_2]})|} = \frac{1}{|E^{k \times l}_{[s_1, s_2]}|}$. Since $\frac{\cQ^{k \times l}_{\text{em}, [s_1, s_2]}(S', T')}{\cQ^{k \times l}_{\text{em}, [s_1, s_2]}(S, T)} \leq \left(\frac{1 + \delta}{1 - \delta - 2\beta}\right)^r$ for every $S, T \in E^{k \times l}_{[s_1, s_2]}$, we have completed the proof of this lemma.
\end{proofof}

\subsection{$\cG^{k \times l}_{[s_1, s_2]}$ vs $\cG^{k \times l}$: Concentration on Number of Edges} \label{subsec:concen-2}

\begin{proofof}[Lemma~\ref{lem:concen-2}]
  Observe that the distribution of $\cG^{k \times l}_{[s_1, s_2]}$ is simply the distribution $\cQ^{k \times l}$ of $\cG^{k \times l}$ conditioned on $|E(S, T)|$ is between $s_1$ and $s_2$. Let us call this event $A$. We will bound the probability of $A$ (with respect to $\cQ^{k \times l}$). Once we do so, we can apply Lemma~\ref{lem:inq-cond} to complete the proof.

  The probability that $A$ happens is simply the probability that $S \sim \binom{X}{k}$ and $T \sim \binom{Y}{l}$ have between $s_1 = (1 - \delta)s$ and $s_2 = (1 + \delta)s$ edges between them. Lemma~\ref{lem:random-num-edges} immediately gives the following bound.
  \begin{align*}
    \Pr_{(S, T) \sim \cQ^{k \times l}}[A] \geq 1 - 4\exp\left(-\frac{\delta^2 s}{54d_{max}}\right)
    &= 1 - 4\exp\left(-\frac{\delta^2 r}{54 \beta}\right)
  \end{align*}

  Hence, by Lemma~\ref{lem:inq-cond}, we can conclude that $val(\cG^{k \times l}) \leq val(\cG^{k \times l}_{[s_1, s_2]}) + 4\exp\left(-\frac{\delta^2 r}{54 \beta}\right)$ as desired.
\end{proofof}

\subsection{Proof of the Parameter Selection Lemma} \label{subsec:birthday-helper}

To prove Lemma~\ref{lem:birthday-helper}, we will use the following two well-know bounds.

\begin{lemma}[Bernoulli's inequality] \label{lem:inq-bernoulli}
  For any real number $r \geq 1$ and $x \geq -1$, $(1 + x)^r \geq 1 + rx.$
\end{lemma}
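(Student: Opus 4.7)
The plan is to prove this standard inequality via a short calculus argument. I would fix $r \geq 1$ and study the auxiliary function $f(x) = (1+x)^r - 1 - rx$ on the domain $[-1, \infty)$; the inequality is equivalent to $f(x) \geq 0$ on this domain, so it suffices to locate the global minimum of $f$.

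First I would differentiate on the open interval $(-1, \infty)$ to obtain $f'(x) = r\bigl((1+x)^{r-1} - 1\bigr)$. Since $r - 1 \geq 0$, the map $x \mapsto (1+x)^{r-1}$ is non-decreasing on $(-1, \infty)$ and equals $1$ precisely at $x = 0$. Hence $f'(x) \leq 0$ on $(-1, 0]$ and $f'(x) \geq 0$ on $[0, \infty)$, so $f$ is decreasing and then increasing, attaining its unique minimum at $x = 0$, where $f(0) = 0$. This yields $f(x) \geq 0$ for every $x \in (-1, \infty)$. The boundary case $x = -1$ is handled directly: $(1 + x)^r = 0$ while $1 + rx = 1 - r \leq 0$, so the inequality still holds.

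An alternative route, which avoids calculus on non-integer exponents, is to appeal to convexity. For $r \geq 1$ the function $g(x) = (1+x)^r$ is convex on $(-1, \infty)$, hence its graph lies weakly above every tangent line. The tangent to $g$ at $x = 0$ is precisely $y = 1 + rx$, which gives the inequality immediately; a third option is to prove the integer case by induction on $r$ and then extend to real $r$ by continuity, though this is unnecessary here. Since this is a textbook bound, I do not anticipate any genuine obstacle; the only mild point to be careful about is justifying the derivative of $(1+x)^r$ for non-integer $r$, which follows from the definition $(1+x)^r = \exp(r\log(1+x))$ together with the chain rule.
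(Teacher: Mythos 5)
Your proof is correct. The paper states Bernoulli's inequality as a well-known bound and gives no proof at all, so there is nothing to compare against; your calculus argument (locating the minimum of $f(x) = (1+x)^r - 1 - rx$ at $x=0$, plus the boundary case $x=-1$) and the convexity/tangent-line alternative are both standard and complete. The only microscopic imprecision is the claim that $(1+x)^{r-1}$ equals $1$ \emph{precisely} at $x=0$: when $r=1$ it is identically $1$, but the sign conclusions $f'\leq 0$ on $(-1,0]$ and $f'\geq 0$ on $[0,\infty)$ still hold, so the argument goes through unchanged.
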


\begin{fact} \label{fact:inq-exp-to-linear}
  For any real number $0 \leq x \leq 1$, $\exp(-x) \leq (1 - x/2)$.
\end{fact}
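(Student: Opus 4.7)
The plan is to prove the elementary inequality $\exp(-x) \le 1 - x/2$ on $[0,1]$ via a chord-and-convexity argument. The map $x \mapsto \exp(-x)$ is convex on $\mathbb{R}$, so on any interval its graph lies below the secant line through its endpoints. I would apply this on $[0,1]$: the secant joining $(0,1)$ and $(1, e^{-1})$ is the affine function $L(x) = 1 - (1 - e^{-1})x$, and convexity yields $\exp(-x) \le L(x)$ for every $x \in [0,1]$.

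It then suffices to check that $L(x) \le 1 - x/2$ on $[0,1]$. After cancelling the constant $1$ and flipping signs, this reduces to $(1 - e^{-1})x \ge x/2$, i.e., for $x \ge 0$, to the scalar inequality $1 - e^{-1} \ge 1/2$, equivalently $e \ge 2$. Chaining the two steps gives $\exp(-x) \le L(x) \le 1 - x/2$ throughout $[0,1]$, which is the claim.

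As a sanity check and a fallback, one could instead analyze $f(x) = (1 - x/2) - \exp(-x)$ directly. A one-line calculation shows $f'(x) = \exp(-x) - 1/2$, which vanishes only at $x = \ln 2 \in (0,1)$, is positive on $[0, \ln 2)$ and negative on $(\ln 2, 1]$. Hence $f$ is unimodal on $[0,1]$ and its minimum is attained at an endpoint; since $f(0) = 0$ and $f(1) = 1/2 - e^{-1} > 0$, we conclude $f \ge 0$ on $[0,1]$.

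There is no real obstacle in either route; the only quantitative input that both approaches ultimately rely on is the standard numerical fact $e^{-1} < 1/2$ (equivalently $e > 2$). I would go with the convexity argument in the write-up, since it is one line of inequality chaining and avoids case analysis on the sign of a derivative.
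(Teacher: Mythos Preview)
Your proposal is correct; both the convexity/secant argument and the endpoint analysis of $f(x) = (1-x/2) - \exp(-x)$ are valid and each ultimately hinges only on $e^{-1} < 1/2$. The paper itself does not prove this fact at all---it simply states it (alongside Bernoulli's inequality) as a well-known bound and uses it directly, so any clean one-paragraph justification such as yours would be a strict addition rather than a comparison point.
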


\begin{proofof}[Lemma~\ref{lem:birthday-helper}]
  From Lemma~\ref{lem:no-col} and Lemma~\ref{lem:embedding}, we have $val(\cG^{k \times l}_{\text{em}}) \leq val(\cG^{\otimes r}_{\text{set}}) \leq \left(\frac{1}{1 - 2\beta}\right)^r (1 - \varepsilon/2)^{R}$. We can use Bernoulli's inequality to bound the right hand side term as follows.
  \begin{align*}
    \left(\frac{1}{1 - 2\beta}\right)^r (1 - \varepsilon/2)^{R}
    &\leq \left(\frac{1}{1 - 20 \beta r / R}\right)^{R/10} (1 - \varepsilon/2)^R 
    \leq (1 - \varepsilon/2)^{9R/10}
  \end{align*}
  Note that the second inequality comes from $\beta \leq \delta \leq \frac{\varepsilon R}{200 r}$.

  Thus, from Lemma~\ref{lem:concen-1}, we have
  \begin{align*}
    val(\cG^{k \times l}_{\text{em}, [s_1, s_2]}) &\leq (1 - \varepsilon/2)^{9R/10} + 8\exp\left(-\frac{\delta^2 r}{432 \beta}\right) \\
    (\text{Since } R \leq \frac{\delta^2 r}{1000 \beta \varepsilon}) &\leq (1 - \varepsilon/2)^{9R/10} + 8\left(\exp\left(-\varepsilon\right)\right)^{R} \\
    (\text{From Fact~\ref{fact:inq-exp-to-linear}}) &\leq 9(1 - \varepsilon/2)^{9R/10}.
  \end{align*}

  Now, from Lemma~\ref{lem:matching-num}, we have $val(\cG^{k \times l}_{[s_1, s_2]}) \leq 9 \left(\frac{1 + \delta}{1 - \delta - 2\beta}\right)^{2r}(1 - \varepsilon/2)^{9R/10}$.

  Again, the right hand side can be further bounded as follows.
  \begin{align*}
    9 \left(\frac{1 + \delta}{1 - \delta - 2\beta}\right)^{2r}(1 - \varepsilon/2)^{9R/10}
    &\leq 9 \left(\frac{1}{1 - 2\delta - 2\beta}\right)^{2r}(1 - \varepsilon/2)^{9R/10} \\
    (\text{Bernoulli's inequality}) &\leq 9\left(\frac{1}{(1 - (20r/R)(2\delta + 2\beta))}\right)^{R/10}(1 - \varepsilon/2)^{9R/10} \\
    (\text{Since } \beta \leq \delta \leq \frac{\varepsilon R}{200 r}) &\leq 9\left(\frac{1}{(1 - \varepsilon/2)}\right)^{R/10}(1 - \varepsilon/2)^{9R/10}
    = 9(1 - \varepsilon/2)^{8R/10}.
  \end{align*}

  From Lemma~\ref{lem:concen-2}, we get the following bound for $val(\cG^{k \times l})$
  \begin{align*}
    val(\cG^{k \times l}) &\leq 9(1 - \varepsilon/2)^{8R/10} + 4\exp\left(-\frac{\delta^2 r}{54 \beta}\right) \\
    (\text{Since } R \leq \frac{\delta^2 r}{1000 \beta \varepsilon}) &\leq 9(1 - \varepsilon/2)^{8R/10} + 4\left(\exp\left(-\varepsilon\right)\right)^{R} \\
    (\text{From Fact~\ref{fact:inq-exp-to-linear}}) &\leq 13(1 - \varepsilon/2)^{8R/10} 
  \end{align*}

  Finally, note that, if $2(1 - \varepsilon/2)^{R/10} \geq 1$, then $val(\cG^{k \times l}) \leq 1 \leq 2(1 - \varepsilon/2)^{R/10}$. Otherwise, if $2(1 - \varepsilon/2)^{R/10} \leq 1$, we also have $val(\cG^{k \times l}) \leq 13(1 - \varepsilon/2)^{8R/10} \leq (2(1 - \varepsilon/2)^{R/10})^8 \leq 2(1 - \varepsilon/2)^{R/10}$ as desired.
\end{proofof}

\section{Applications of the Birthday Repetition Theorem} \label{sec:app}

In this section, we prove several implications of our birthday repetition theorem, including hardness of approximation results and integrality gaps for dense CSPs and improved $\AM(2)$ protocol for {\sc 3SAT}.

\subsection{Lower Bounds for Fully-Dense CSPs}

Before we prove inapproximabilities and integrality gap of dense {\sc Max $k$-CSP}, we will first describe a reduction from two-prover games to fully-dense {\sc Max $k$-CSP}, which is central to all the results presented here.

\subsubsection{Reduction from Two-Prover Games to Fully-Dense {\sc Max $k$-CSP}}

It is possible to prove inapproximability of {\sc Max $k$-CSP} by first reducing a two-prover game to a free game via birthday repetition and then reducing it to {\sc Max $k$-CSP}. However, this does not result in the best possible dependency on $k$. To demonstrate this, recall that, in the $(l \times l)$-birthday repetition game $\cG^{l \times l}$, each variable corresponds to a set of $l$ variables of $\cG$. The guarantee in our birthday repetition theorem is that $val(\cG^{l \times l})$ decays exponentially in the number of edges between two of these sets, which is $\Theta(l^2/n)$ in expectation.

Now, let us reduce a free game to {\sc Max $k$-CSP} by letting variables in {\sc Max $k$-CSP} be the same as in the free game and the predicates be the naturally induced constraints. It is not hard to see that, if the value of the free game is $\delta$, then the value of the {\sc Max $k$-CSP} instance is at most $\delta^{\Omega(k)}$. It is also easy to see that, if we do not exploit any particular structure of the free game, this is the best upper bound one can hope for. Thus, with this approach, the hardness gap we get is exponential in $\Theta(kl^2/n)$.

Unfortunately, this is not the right gap; each variable in the resulting {\sc Max $k$-CSP} instance is a set of $l$ variables of the original game $\cG$, which means that, roughly speaking, each constraint of the {\sc Max $k$-CSP} instance contains $\Theta(k^2l^2/n^2)$ constraints from $\cG$ in expectation. Hence, intuitively, we should expect the value of the {\sc Max $k$-CSP} instance to decay exponentially in $\Theta(k^2l^2/n^2)$ instead of $\Theta(kl^2/n)$.

To allow us to prove a sharper bound for the value of the {\sc Max $k$-CSP}, we define the following reduction from two-prover game directly to {\sc Max $k$-CSP}.
\begin{definition} \label{def:k-csp-red}
Given a two-prover game $\cG = (X, Y, \cQ, \Sigma_X, \Sigma_Y, \{P_{(x, y)}\})$ and an integer $l \leq |X|, |Y|$, a fully-dense {\sc Max $k$-CSP} instance $\cG^{l}_{k} = (V', (V')^k, \{P'_S\})$ is defined as follows. The variables $V'$ is $\binom{X}{l} \times \binom{Y}{l}$, i.e., each variable is a tuple $(S, T)$ of a set $S$ containing $l$ questions from $X$ and a set $T$ containing $l$ questions from $Y$. The alphabet set is $\Sigma_X^l \times \Sigma_Y^l$ and each element is associated with an assignment to $S \cup T$. Finally, the predicate is defined in a natural way, i.e., $P'_{((S_1, T_1), \dots, (S_k, T_k))}(\phi_1, \dots, \phi_k) = 1$ if and only if $\phi_1, \dots, \phi_k$ are consistent and $P_{(x, y)}(\phi_1 \circ \cdots \circ \phi_k(x), \phi_1 \circ \cdots \circ \phi_k(y)) = 1$ for all $(x, y) \in ((S_1 \cup \cdots \cup S_k) \times (T_1 \cup \cdots \cup T_k)) \cap \supp(\cQ)$.
\end{definition}

We can then show that our intuition is indeed correct:

\begin{lemma} \label{lem:k-csp-value}
  There is a constant $\gamma > 0$ such that the following is true. Let $\cG = (X, Y, E, \Sigma_X, \Sigma_Y, \{P_{(x, y)}\})$ be any projection game. Let $d_{max}$ be the maximum degree of a vertex in the graph $(X, Y, E)$. Moreover, let $val(\cG) = 1 - \varepsilon$. For all $k \geq 2$ and $l \geq 0$ such that $kl \leq |X|, |Y|$, we have
  \begin{align*}
    val(\cG^l_k) \leq 2(1 - \varepsilon/2)^{\frac{\gamma \varepsilon^3 k^2l^2|E|}{d_{max}|X||Y|}}
  \end{align*}
\end{lemma}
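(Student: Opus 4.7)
The exponent $\gamma\varepsilon^{3}k^{2}l^{2}|E|/(d_{\max}|X||Y|)$ claimed in the lemma matches exactly the exponent obtained by instantiating Theorem~\ref{thm:birthday-proj} with both parameters set to $kl$. My plan is therefore to prove the structural inequality $val(\cG^l_k) \leq O(1)\cdot val(\cG^{kl\times kl})$, absorbing the $O(1)$ factor into $\gamma$, and then to conclude by applying Theorem~\ref{thm:birthday-proj} to the $(kl\times kl)$-birthday repetition of $\cG$.

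The reduction is motivated by the following structural observation. Under the hypothesis $kl\leq\min(|X|,|Y|)$, a random $\cG^l_k$-constraint on $((S_1,T_1),\ldots,(S_k,T_k))$ has pairwise disjoint $S_j$'s and pairwise disjoint $T_j$'s with probability $1-o(1)$ (a straightforward union bound, in the spirit of Lemma~\ref{lem:no-col}). Conditioned on this disjointness event, $(\bigcup_j S_j,\bigcup_j T_j)$ is uniformly distributed over $\binom{X}{kl}\times\binom{Y}{kl}$, the conditional ordered partitions of the two sides into $k$ blocks of size $l$ are mutually independent and uniform, the consistency clause of $P'$ is vacuously satisfied, and the edge-satisfaction clause of $P'$ coincides with the predicate of $\cG^{kl\times kl}$ evaluated on $(\bigcup_j S_j,\bigcup_j T_j)$. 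Morally, a random $\cG^l_k$-constraint is a random $\cG^{kl\times kl}$-constraint decorated with an independent random block-partition on each side.

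Given any assignment $\phi'$ for $\cG^l_k$ of value $v$, I would construct a mixed strategy $(\psi_X,\psi_Y)$ for $\cG^{kl\times kl}$ as follows: each prover, upon receiving its input set $\tilde S \in \binom{X}{kl}$ (respectively $\tilde T \in \binom{Y}{kl}$), independently samples a uniformly random ordered partition of the input into $k$ blocks of size $l$ together with a uniformly random companion $k$-tuple from $\binom{Y}{l}^k$ (respectively $\binom{X}{l}^k$), and outputs $\phi'$ evaluated block by block according to these pairings. Derandomizing by picking the best realization of the private randomness converts this into a genuine deterministic strategy; the task is to show that its expected value is at least $v$ up to lower-order losses.

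The main obstacle is precisely this comparison: the two provers sample their partitions and companion tuples with independent randomness, whereas in a $\cG^l_k$-constraint the pair $(S_j,T_j)$ is sampled jointly, so the companion of an $S_j$ is necessarily a block of $\tilde T$ rather than an arbitrary element of $\binom{Y}{l}$. Closing this coupling gap will require a symmetrization argument exploiting the symmetry of $P'$ under permutations of the $k$ tuple indices, combined with a careful accounting of the multiplicative distortion between the two companion distributions, most likely routed through one or more intermediate games in the spirit of $\cG^{k\times l}_{\text{em}}$, $\cG^{k\times l}_{\text{em},[s_1,s_2]}$, and $\cG^{k\times l}_{[s_1,s_2]}$ from Section~\ref{sec:birthday}. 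Once the coupling loss is controlled by an absolute constant, Theorem~\ref{thm:birthday-proj} applied with $k'=l'=kl$ delivers the claimed bound with a suitable choice of $\gamma$.
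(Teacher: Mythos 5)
There is a genuine gap, and you have in fact located it yourself: the ``coupling gap'' you describe in your last paragraph is the entire difficulty of the lemma, and your proposal leaves it unresolved. Conditioning on all $S_j$'s and all $T_j$'s being pairwise disjoint and targeting $\cG^{kl\times kl}$ cannot work with only a constant loss: in that conditioned distribution the companion $T_j$ of a block $S_j\subseteq\tilde S$ is forced to be one of the $k$ blocks of $\tilde T$, whereas the first prover of $\cG^{kl\times kl}$, who never sees $\tilde T$, can only sample a companion uniformly from $\binom{Y}{l}$. The multiplicative distortion between these two companion distributions is of order $\binom{|Y|}{l}/\bigl(k\binom{kl}{l}\bigr)$, which is unbounded, so a Lemma~\ref{lem:inq-mult}-style comparison gives nothing; and symmetrizing over the $k$ indices does not help because the mismatch is in \emph{which set} the companion is, not which index. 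Without the companions being correct, the answers produced by the two provers need not be consistent, and there is no pointwise domination between the birthday predicate evaluated on their answers and $P'$ evaluated on $\phi'$.

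The paper's proof avoids this by never asking a prover to guess a companion that is constrained by the other prover's question. It splits the $k$ coordinates into two halves and conditions only on $|S_1\cup\cdots\cup S_{\lfloor k/2\rfloor}|=s_1$ and $|T_{\lfloor k/2\rfloor+1}\cup\cdots\cup T_k|=s_2$ (Lemma~\ref{lem:fully-dense-em}). Since $\cQ^l_k$ is a product distribution, the sets $T_1,\ldots,T_{\lfloor k/2\rfloor}$ and $S_{\lfloor k/2\rfloor+1},\ldots,S_k$ remain independent and uniform even after this conditioning, so the first prover of $\cG^{s_1\times s_2}$ can legitimately sample its companions uniformly from $\binom{Y}{l}$ and the mixed strategy reproduces the conditioned distribution \emph{exactly}; the value comparison then follows from $P^{s_1\times s_2}_{(S,T)}\geq P'$ pointwise. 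The price is that one embeds $\cG^{s_1\times s_2}$ with $s_1,s_2\geq kl/40$ (guaranteed with probability $1-2\exp(-kl/8)$ by Lemma~\ref{lem:union-size}) rather than $\cG^{kl\times kl}$, which only costs a constant factor in the exponent and is absorbed into $\gamma$. If you want to complete your argument, you should replace the full-disjointness conditioning and the $\cG^{kl\times kl}$ target with this half-split device; as written, the step you defer to ``a symmetrization argument'' is precisely the step that fails.
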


The proof of Lemma~\ref{lem:k-csp-value} is by a reduction from our birthday repetition theorem and is deferred to Appendix~\ref{app:k-csp-soundness}. Note that a similar bound holds even when $\cG$ is not a projection game; however, since it is not needed for our purposes, we do not state it here. We will next use the lemma to prove lower bounds for dense CSPs.

\subsubsection{ETH-Based Hardness of Approximation of Fully-Dense {\sc Max $k$-CSP}}

The first application of the birthday repetition theorem we present is an ETH-based almost-polynomial ratio hardness for fully-dense {\sc Max $k$-CSP}, as stated formally below.

\begin{lemma} \label{thm:hardness-approx}
  If ETH is true, for any $k \geq 2$, there is no polynomial-time algorithm that, given any fully-dense {\sc Max $k$-CSP} instance $\cG$ of size $N$, can distinguish $val(\cG) = 1$ from $val(\cG) \leq (1/2)^{\tilde \Omega(\log N)}$.
\end{lemma}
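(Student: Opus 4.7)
The plan is to chain a quasi-linear-size projection PCP with the reduction of Definition~\ref{def:k-csp-red} and invoke Lemma~\ref{lem:k-csp-value} to bound the soundness. I will first apply a quasi-linear-size projection PCP (e.g.\ Moshkovitz--Raz) to a 3SAT formula of size $m$, obtaining a projection game $\cG$ on $n = m \cdot \polylog m$ vertices with constant alphabet size, bounded maximum degree, and value either $1$ or at most $1 - \varepsilon_0$ for some absolute constant $\varepsilon_0 > 0$. Under ETH, distinguishing the two cases requires $2^{\Omega(m)} = 2^{\Omega(n/\polylog n)}$ time.

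Next I will apply Definition~\ref{def:k-csp-red} to $\cG$ with a parameter $l$ to be chosen, producing a fully-dense {\sc Max $k$-CSP} instance $\cG^l_k$ of size $N = n^{O(kl)}$, so $\log N = \Theta(kl \log n)$, and the reduction itself runs in time $N^{O(1)}$. Completeness is immediate: any satisfying assignment of $\cG$ lifts to one for $\cG^l_k$. For soundness, Lemma~\ref{lem:k-csp-value} (applicable because $k \geq 2$ and $kl \leq |X|,|Y|$ for our choice) yields
\begin{align*}
val(\cG^l_k) \;\leq\; 2(1 - \varepsilon_0/2)^{\Omega(k^2 l^2/n)} \;=\; 2^{-\Omega(k^2 l^2/n)}.
\end{align*}

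I will then pick $l$ a polylog factor below $n$, calibrated so that $\log N = O(kl \log n) = o(m)$. A hypothetical polynomial-time distinguisher of $val = 1$ versus $val \leq (1/2)^{\tilde\Omega(\log N)}$ would then solve 3SAT in time $N^{O(1)} = 2^{o(m)}$, contradicting ETH. On the other side, $k^2 l^2/n = \log N \cdot \bigl(kl/(n \log n)\bigr) = \log N / \polylog n$ with the chosen $l$, and $\polylog n = \polyloglog N$ up to the exponent because $\log n = \Theta(\log \log N)$ in this regime; hence $k^2 l^2/n = \tilde\Omega(\log N)$, so the soundness is indeed $(1/2)^{\tilde\Omega(\log N)}$.

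The delicate step is this parameter balance: the soundness exponent in Lemma~\ref{lem:k-csp-value} grows quadratically in $l$ while the size exponent $kl \log n$ grows only linearly, so matching them to within polylog factors forces $l$ close to $n$, while the ETH condition $\log N = o(m)$ forces $l$ a polylog factor below $n$. Both constraints fit simultaneously only because the target gap is $(1/2)^{\tilde\Omega(\log N)}$ rather than $(1/2)^{\Omega(\log N)}$; the polylog slack hidden in $\tilde\Omega$ absorbs the polylog loss from the quasi-linearity of the PCP.
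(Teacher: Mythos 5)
Your proof is correct and follows the paper's argument essentially verbatim: a quasi-linear projection PCP with constant gap, bounded degree and constant alphabet, followed by the reduction of Definition~\ref{def:k-csp-red} with $l$ a sufficiently large polylog factor below $n$ (the paper takes $l = n/(k\log^2 n)$), soundness from Lemma~\ref{lem:k-csp-value}, and exactly the parameter balance you describe. The one correction is the PCP citation: Moshkovitz--Raz has size $m\cdot 2^{(\log m)^{1-\beta}}$, whose superpolylogarithmic blowup would overwhelm the $\polyloglog N$ slack in $\tilde \Omega(\log N)$ in the very calibration you flag as delicate, so you must instead use Dinur's quasi-linear PCP (converted to a bounded-degree projection game, as the paper notes), which does satisfy every property you assume.
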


We prove this by essentially applying Lemma~\ref{lem:k-csp-value} with $l = \tilde \Omega(n) / k$ to a two-prover game produced by the PCP Theorem. We start by stating a PCP Theorem; here we use the version proved by Dinur~\cite{Dinur07}.

\begin{theorem}(Dinur's PCP Theorem~\cite{Dinur07}) \label{dinur-pcp}
  Given a {\sc 3SAT} instance $\phi$ of size $n$, there is a polynomial-time reduction that produces a projection game $\cG_{\phi}$ of size $n \polylog n$ with the following properties.
  \begin{itemize}
  \item (Completeness) If $val(\phi) = 1$, then $val(\cG_{\phi}) = 1$.
  \item (Soundness) If $val(\phi) < 1$, then $val(\cG_{\phi}) \leq 1 - \varepsilon$ for some constant $\varepsilon > 0$ not depending on $\phi$.
  \item (Bounded Degree) Each variable in $\cG_{\phi}$ has constant degree.
  \item (Bounded Alphabet Size) $\cG_{\phi}$ has constant alphabet size.
  \end{itemize}
\end{theorem}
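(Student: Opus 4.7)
The plan is to prove this via Dinur's \emph{gap amplification} approach, which is the only known technique achieving the quasi-linear size bound stated here. The starting point is a trivial reduction from \textsc{3SAT} to a 2-CSP constraint graph (for example, make each clause a constraint over its three variables, then split into binary constraints). Such a reduction only guarantees a polynomially small gap: if $\phi$ is unsatisfiable, at least a $1/\mathrm{poly}(n)$ fraction of constraints must be violated. The heart of the proof is to iteratively amplify this gap by a constant factor per iteration while blowing up the instance by only a constant factor per iteration, so that after $O(\log n)$ iterations the gap becomes a universal constant $\varepsilon > 0$ and the total size becomes $n \polylog n$.

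Each iteration is a composition of three operations on constraint graphs $G = (V, E, \Sigma, \{P_e\})$. First, \emph{preprocessing} converts $G$ into a constant-degree regular spectral expander $G'$ while losing only a constant factor in the gap; this uses expander replacement (each vertex of degree $d$ becomes a small expander on $d$ copies glued by equality constraints) overlaid with a constant-degree expander to guarantee a spectral gap. Second, \emph{powering} replaces $G'$ by $(G')^t$, whose vertices are unchanged but whose constraints correspond to length-$t$ walks with labels in $\Sigma^{O(d^t)}$; the key lemma is that $\mathrm{gap}((G')^t) \ge \Omega(\sqrt{t}) \cdot \mathrm{gap}(G')$ up to a constant cap, proved via expander-walk arguments on the ``violated'' edges induced by any fixed cheating assignment. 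Third, \emph{alphabet reduction via composition} uses an inner assignment tester (a PCP of proximity) to replace each large-alphabet constraint by a constant-size gadget of constant-alphabet constraints, losing at most a constant factor in gap.

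Choosing $t$ to be a sufficiently large constant so that powering-then-composition at least doubles the gap, and iterating $O(\log n)$ times, drives the gap from $1/\mathrm{poly}(n)$ up to a universal constant $\varepsilon > 0$; careful accounting (each operation multiplies the size by a constant) gives final size $n \polylog n$. Bounded degree is preserved by preprocessing and bounded alphabet by composition. To obtain the projection property, which is not automatic from Dinur's construction, I would append one final transformation: replace each 2-CSP edge $e = (u,v)$ with constraint $P_e$ by a fresh vertex $w_e$ whose alphabet is the set of satisfying assignments of $P_e$, and connect $w_e$ to $u$ and $v$ by two projection edges that read off the corresponding coordinate. This preserves gap, size, constant alphabet, and constant degree up to constant factors, and produces a bipartite projection game as required.

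The main obstacle is the composition step: constructing an assignment tester whose size blowup is at most quasi-linear in the length of the constraint being tested. Dinur's inner PCP is built from algebraic PCPs (low-degree testing over small fields combined with Hadamard-code or long-code based inner verifiers), and its analysis is the longest technical piece of the proof. The other subtle point is the quantitative gap amplification in powering: one must show that for any cheating assignment, a uniformly random length-$t$ walk very likely encounters a violated edge, which requires decomposing walks into nearly independent segments via the spectral gap of $G'$ and carefully handling dependencies between consecutive edges. Everything else — the final projection conversion, the bounded-degree accounting, and the iteration counting — is routine bookkeeping on top of these two technical centerpieces.
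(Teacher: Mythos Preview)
The paper does not prove this theorem; it is stated as a cited result from~\cite{Dinur07}, with only a short Remark explaining how to obtain the precise form claimed: Dinur's construction natively yields a \textsc{Max 2-CSP} instance that need not be a projection game or have bounded degree, and the paper points to the clause/variable reduction (Definition~\ref{def:clause-variable}) to obtain the projection property and to Dinur's own Preprocessing Lemma (Lemma~1.7 in~\cite{Dinur07}) to obtain bounded degree.

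Your sketch is a faithful high-level outline of Dinur's gap-amplification proof, and your final ``projection conversion'' is exactly the clause/variable reduction the paper's Remark invokes. One minor inaccuracy: you describe the composition step as needing an assignment tester with ``quasi-linear'' blowup in the constraint length. In Dinur's iteration the powered constraints have \emph{constant} size (alphabet $|\Sigma|^{O(d^t)}$ with $d,t,|\Sigma|$ all constants), so the inner tester is applied to constant-size objects and contributes only a constant factor per iteration; the overall $n\polylog n$ size comes purely from compounding $O(\log n)$ constant-factor blowups. Other than that phrasing, your plan matches the actual proof and the paper's Remark.
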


\begin{remark}
  Dinur's original reduction is from {\sc 3SAT} to {\sc Max $2$-CSP}, not a projection game, and the reduced instance need not have bounded degree. The former can be fixed by a well-known reduction from any {\sc Max $k$-CSP} to a projection game, which will also be described later on in Definition~\ref{def:clause-variable}. The bounded degree property can be ensured by the ``Preprocessing Lemma'' (Lemma 1.7) from Dinur's paper~\cite{Dinur07}.
\end{remark}

We use Dinur's PCP Theorem because the length of the PCP is crucial to the resulting ratio in the hardness result. In particular, Dinur's PCP is the shortest PCP with constant query and alphabet size.

\begin{proofof}[Lemma~\ref{thm:hardness-approx}]
  Given {\sc 3SAT} instance $\phi$ of size $n$. We first use Dinur's PCP Theorem (Theorem~\ref{dinur-pcp}) to reduce $\phi$ to $\cG$ with $n' = n \polylog n$ variables, $q' = O(1)$ alphabet size and maximum degree $d' = O(1)$. Consider the fully-dense {\sc Max $k$-CSP} $\cG^l_k$ from Definition~\ref{def:k-csp-red} with $l = n/(k \log^2 n)$.

  Let $\tilde n$ and $\tilde q$ be the number of variables and the alphabet size of $\cG^l_k$. We have $\tilde n \leq \binom{n'}{l}^2 \leq 2(n')^{2l} \leq 2^{O\left(\frac{n}{k \log n}\right)}$ and $\tilde q \leq (q')^{2l} \leq 2^{O\left(\frac{n}{k \log n}\right)}$. Hence, the size of $\cG^l_k$ is $\tilde N = (\tilde n \tilde q)^k \leq 2^{O(n / \log n)}$. We next analyze the completeness and soundness of the reduction.

  When $val(\phi) = 1$, from the PCP theorem, we have $val(\cG) = 1$. It is also obvious from the reduction that $val(\cG^l_k)$ is one. On the other hand, when $val(\phi) < 1$, we have $val(\cG) \leq 1 - \varepsilon$. Hence, by Lemma~\ref{lem:k-csp-value}, we have $$val(\cG^l_k) \leq 2(1 - \varepsilon/2)^{\Omega\left(\frac{\varepsilon^3 k^2l^2 (n')}{(d')(n')^2}\right)} \leq (1/2)^{\tilde \Omega(n)} = (1/2)^{\tilde \Omega(\log \tilde N)}.$$

  Thus, if a algorithm can distinguish $val(\cG^l_k) = 1$ from $val(\cG^l_k) \leq (1/2)^{\tilde \Omega(\log \tilde N)}$ in time polynomial in $\tilde N$, then it can also solve {\sc 3SAT} in time $2^{O(n / \log n)}$ time, contradicting with ETH.
\end{proofof}

\subsubsection{Improved Hardness of Approximation Result Based on ETHA}

The $\polyloglog N$ loss in the exponent of Lemma~\ref{thm:hardness-approx} is due to the quasi-linear size of the PCP and can be eliminated if we instead assume the stronger ETHA:

\begin{lemma} \label{thm:hardness-linear}
  If ETHA holds, for any $k \geq 2$ and any sufficiently large $i$, there is no algorithm that can, given any fully-dense {\sc Max $k$-CSP} $\cG$ of size $N$, distinguish $val(\cG) = 1$ from $val(\cG) \leq 1/N^{1/i}$ in time $O(N^{\tilde \Omega(i)/\log^2 k})$.
\end{lemma}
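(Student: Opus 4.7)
My plan is to mirror the proof of Lemma~\ref{thm:hardness-approx} but to replace Dinur's PCP (the source of the $\polyloglog N$ loss in the exponent there) with the direct hardness of approximating Max 3SAT provided by ETHA. Since the standard clause-variable reduction from Max 3SAT to a projection game incurs only a \emph{linear} blow-up, rather than the $n\polylog n$ blow-up of the PCP, this sharpening should allow us to push the inapproximability ratio from $2^{-\tilde\Omega(\log N)}$ all the way to the polynomial $N^{-1/i}$.

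The concrete steps are as follows. Start with a Max 3SAT instance $\phi$ on $n$ clauses that, by ETHA, admits no $O(2^{cn})$-time procedure to distinguish $val(\phi) = 1$ from $val(\phi) \leq 1 - \varepsilon_0$, for absolute constants $\varepsilon_0, c > 0$. Apply the clause-variable construction to obtain a projection game $\cG$ with $n' = O(n)$ vertices on each side, constant alphabet, constant maximum degree, completeness $1$, and soundness $1 - \Omega(\varepsilon_0)$. Then apply Definition~\ref{def:k-csp-red} with parameter $l$ to produce a fully-dense Max $k$-CSP instance $\cG^l_k$ of size $\log N = \Theta(lk \log n')$. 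Completeness is preserved trivially; in the NO case, Lemma~\ref{lem:k-csp-value} gives $val(\cG^l_k) \leq 2(1 - \Omega(\varepsilon_0))^{\Omega(k^2 l^2/n')}$. Choosing $l$ of order $\Theta(n' \log n' / (ik))$ (valid once $i$ is sufficiently large relative to $\log n'$, which is why the statement asks only for sufficiently large $i$) makes this soundness at most $N^{-1/i}$. A hypothetical distinguishing algorithm of runtime $N^{f(i)/\log^2 k}$ with $f(i) = \tilde\Omega(i)$, composed with the polynomial-time reduction, would solve the promise version of Max 3SAT in time $\exp(\Theta(\log N) \cdot f(i)/\log^2 k)$; with the parameters above, this must be $2^{o(n)}$, contradicting ETHA.

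The main obstacle is the parameter balancing at the final step: three quantities---$\log N$, the soundness exponent $k^2 l^2/n'$, and the ETHA time budget $cn$---all scale as products of $l$, $k$, and $\log n'$, and the precise $1/\log^2 k$ factor in the runtime exponent emerges only from a careful choice of $l$ together with absorbing the residual $\polylog n'$ factors into the $\tilde\Omega(i)$ notation. Everything else is essentially the bookkeeping already carried out for Lemma~\ref{thm:hardness-approx}, with the quasi-linear PCP step replaced by the linear-size clause-variable reduction.
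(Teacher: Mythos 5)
Your overall architecture is exactly the paper's: clause/variable reduction to a bounded-degree, constant-alphabet projection game, then Definition~\ref{def:k-csp-red} with Lemma~\ref{lem:k-csp-value} for soundness, then parameter balancing against the ETHA budget $2^{cn}$. The gap is in the parameter selection, and it is not mere bookkeeping. Your choice $l = \Theta(n'\log n'/(ik))$ already fails at the level of well-definedness: Lemma~\ref{lem:k-csp-value} needs $kl \leq n'$, which forces $i \gtrsim \log n'$. You noticed this but attributed it to the ``sufficiently large $i$'' clause of the lemma; that clause must be a threshold depending only on $k$ (and absolute constants), because the lemma asserts hardness over \emph{all} input sizes $N$. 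With your $l$, a fixed $i$ only covers $n' \leq 2^{O(i)}$, i.e., finitely many instances, so no asymptotic lower bound follows. Moreover, even inside the window where the reduction is defined, the exponent you can rule out is $O\bigl(i/(\log n' \cdot \log(ik))\bigr)$, which degrades to $O(1)$ as $\log n'$ approaches $i$ --- it is not $\tilde\Omega(i)/\log^2 k$, a function of $i$ and $k$ alone. (Relatedly, your size estimate $\log N = \Theta(lk\log n')$ is wrong when $l$ is nearly linear in $n'/k$: one has $\binom{n'}{l} = (n'/l)^{\Theta(l)}$, so the relevant factor is $\log(n'/l)$, not $\log n'$.)

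The fix is the paper's choice $l = \beta n \log i \log k/(ik)$ for a small constant $\beta$. Then $kl \leq n'$ holds for all $n$ once $i$ exceeds a threshold depending only on $\beta$, and $n'/l = \Theta(ik/(\beta\log i\log k))$, so $\log\binom{n'}{l} = O(l\log(ik))$ and $\log N = O(\sqrt{\beta}\, n\log^2 i\log^2 k/i)$. The soundness exponent $k^2l^2/n' = \Theta(\beta^2 n\log^2 i\log^2 k/i^2)$ then matches $\tfrac{1}{i}\log N$ up to the constant $\beta$, giving $val(\cG^l_k) \leq N^{-\Omega(\beta^2/i)}$, and the budget $2^{cn}$ becomes $N^{\Omega(i/(\log^2 i\log^2 k))}$ after taking $\beta$ small --- a runtime exponent expressed purely in $i$ and $k$, as the statement requires.
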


The proof of lemma~\ref{thm:hardness-linear} proceeds in two steps. First, using a known reduction, we reduce a {\sc 3SAT} instance to a projection game of linear size. This step is the difference between ETHA and ETH; with ETHA, approximating a projection game to within some constant ratio takes exponential time, which cannot be derived from ETH since no linear-size constant-query PCP is known. The second step is essentially the same as the proof of Lemma~\ref{thm:hardness-approx} except that, since the size of our starting projection game is linear, we can now use birthday repetition for $l = \Theta_{i, k}(n)$ instead of $\Theta_{i, k}(n / \polylog n)$.

The reduction we use in the first step is the so-called clause/variable reduction. The reduction is well-known and has appeared in literatures before (in e.g.~\cite{AIM}). The reduction can be stated as follows.

\begin{definition}(Clause/variable game) \label{def:clause-variable}
  For any {\sc Max $k$-CSP} instance $\cG = (V, \cQ, \{P_S\})$ such that $\cQ$ is uniform over $\supp(\cQ)$, its clause/variable game is a projection game $\cG' = (X', Y', \Sigma_X', \Sigma_Y', E', \{P'_{(x, y)}\})$ defined as follows. $X'$ is the set of constraints of $\cG$, i.e., $X' = \supp(\cQ)$. $Y'$ is $V$, the set of variables of $\cG$. $\Sigma_X'$ is $\Sigma^k$; for each constraint $S$, $\Sigma_X'$ is identified with the assignments of $S$ in $\cG$. $\Sigma_Y'$ is simply $\Sigma$. Finally, $E'$ contains all $(S, x)$ such that $x \in S$ and $P_{(S, x)}(\phi, \sigma) = 1$ iff $P_S(\phi) = 1$ and $\phi(x) = \sigma$.
\end{definition}

It is easy to see that, when $val(\cG)$ is bounded away from one, then so is $val(\cG')$, as we argue below.

\begin{proposition} \label{prop:clause-var-val}
  Let $\cG$ and $\cG'$ be as in Definition~\ref{def:clause-variable}. If $val(\cG) \leq 1 - \varepsilon$, then $val(\cG') \leq 1 - \varepsilon / k$.
\end{proposition}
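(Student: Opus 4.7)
The plan is to contrapose: take any strategy for the clause/variable game $\cG'$ and extract from it an assignment for $\cG$ whose value we can compare back to $val(\cG')$. The only natural candidate is the ``variable-side'' strategy itself, since that side of $\cG'$ literally labels the variables of $\cG$. So, given any strategy $(\phi_X, \phi_Y)$ for $\cG'$ with $\phi_X : \supp(\cQ) \to \Sigma^k$ and $\phi_Y : V \to \Sigma$, I would simply use $\phi_Y$ as a candidate assignment to $\cG$ and show that $val(\cG') \leq 1 - (1 - val_\cG(\phi_Y))/k$. Combined with $val_\cG(\phi_Y) \leq val(\cG) \leq 1 - \varepsilon$, this immediately gives $val(\cG') \leq 1 - \varepsilon/k$.

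The core of the argument is a per-constraint accounting. Recall that $E'$ has exactly $k$ edges incident to each $S \in \supp(\cQ)$, so the uniform distribution over $E'$ assigns mass $1/(k|\supp(\cQ)|)$ to each edge. Fix a constraint $S$ that is \emph{unsatisfied} by $\phi_Y$, i.e.\ $P_S(\phi_Y|_S) = 0$. I claim that at most $k-1$ of the $k$ edges $(S, x)$, $x \in S$, can be accepted by the verifier. There are two cases. If $\phi_X(S) = \phi_Y|_S$, then $P_S(\phi_X(S)) = 0$ and \emph{every} edge $(S,x)$ is rejected, giving $0 \leq k-1$ accepting edges. Otherwise $\phi_X(S) \neq \phi_Y|_S$, so they disagree on at least one variable $x^* \in S$; the projection check $\phi_X(S)(x^*) = \phi_Y(x^*)$ then fails on the edge $(S, x^*)$, again bounding the accepting count by $k-1$. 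Meanwhile, a satisfied $S$ trivially contributes at most $k$ accepting edges.

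Summing over $S \in \supp(\cQ)$ and dividing by $k|\supp(\cQ)|$ yields
\begin{align*}
  val(\cG') \;\leq\; \frac{k \cdot |\{S : P_S(\phi_Y|_S) = 1\}| + (k-1) \cdot |\{S : P_S(\phi_Y|_S) = 0\}|}{k|\supp(\cQ)|} \;=\; 1 - \frac{1 - val_\cG(\phi_Y)}{k}.
\end{align*}
Taking the maximum over strategies on the left and using $val_\cG(\phi_Y) \leq val(\cG) \leq 1 - \varepsilon$ on the right gives $val(\cG') \leq 1 - \varepsilon/k$, as desired.

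There isn't really a hard part: the whole proof is a one-line per-constraint edge count once the right assignment is extracted. The only thing to be careful about is the case split on whether $\phi_X(S)$ agrees with $\phi_Y|_S$, which ensures we do not miss the possibility that the ``clause prover'' lies in a way that is internally satisfying but inconsistent with $\phi_Y$.
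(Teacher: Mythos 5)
Your proposal is correct and follows essentially the same route as the paper: both use the variable-side strategy as the assignment for $\cG$ and observe that any constraint $S$ unsatisfied by it must have at least one of its $k$ incident edges rejected (the paper phrases this as a contradiction/Markov-style count over clauses, you phrase it as an explicit per-constraint edge tally, but the content is identical). No gaps.
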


\begin{proof}
  Suppose for the sake of contradiction that there is an assignment $\phi'$ of $\cG'$ such that $val(\phi') > 1 - \varepsilon/k$. Define $\phi: V \to \Sigma$ by $\phi(x) = \phi'(x)$ for every $x \in V$. Since less than $\varepsilon/k$ fraction of the edges are not satisfied by $\phi'$ in $\cG'$ and each $S \in X'$ has only $k$ edges touching it, more than $1 - \varepsilon$ fraction of $S \in X'$ touches only satisfied edges. These clauses are satisfied by $\phi$ in $\cG$. Hence, $val(\phi) > 1 - \varepsilon$, which is a contradiction.
\end{proof}

We will now prove Lemma~\ref{thm:hardness-linear}. We also need a tighter bound for the binomial coefficient $\binom{n}{k}$, which is stated below. The bound can be easily derived from Stirling-type inequalities.

\begin{fact} \label{fact:binom-approx}
  For every positive integer $n$ and $k \leq n$, $\binom{n}{k} \leq \left(\frac{en}{k}\right)^{k}$.
\end{fact}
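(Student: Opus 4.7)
The plan is to give a short, self-contained derivation using only the definition of the binomial coefficient and the Taylor expansion of $e^k$. First I would write $\binom{n}{k} = \frac{n(n-1)\cdots(n-k+1)}{k!}$ and use the crude bound $(n-j) \le n$ for each $j = 0,\dots,k-1$, which immediately yields $\binom{n}{k} \le n^k / k!$. This reduces the problem to establishing the lower bound $k! \ge (k/e)^k$, at which point multiplying the two bounds will produce the desired $(en/k)^k$.

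For the lower bound on $k!$, I would use the Taylor series $e^k = \sum_{i=0}^{\infty} k^i/i!$. Since every term is nonnegative, the single term $k^k/k!$ is bounded above by the whole sum, giving $k^k/k! \le e^k$, equivalently $k! \ge k^k/e^k = (k/e)^k$. Combining with the first step yields
\[
\binom{n}{k} \;\le\; \frac{n^k}{k!} \;\le\; \frac{n^k}{(k/e)^k} \;=\; \left(\frac{en}{k}\right)^k,
\]
which is the claimed inequality.

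The argument involves no real obstacle: both steps are elementary and the proof is essentially a two-line calculation. The only thing worth being careful about is that the Taylor series argument requires $k \ge 1$ (handled since $k$ is a positive integer) and that the inequality is vacuous / trivial at the boundary $k=n$, where it reads $1 \le e^n$, which is obviously true. So the proposal is simply to present the two displayed inequalities and combine them, with a one-line justification that $k! \ge (k/e)^k$ from the series expansion of $e^k$.
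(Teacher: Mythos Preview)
Your proof is correct and matches the paper's own treatment: the paper does not give an explicit proof but only remarks that the bound ``can be easily derived from Stirling-type inequalities,'' and your key step $k! \ge (k/e)^k$ is exactly such an inequality, obtained here via the clean Taylor-series trick rather than by quoting Stirling's formula. The argument is complete as written.
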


\begin{proofof}[Lemma~\ref{thm:hardness-linear}]
  Given {\sc 3SAT} instance $\phi$ of size $n$. Let $\cG$ be its clause/variable game. Observe that $\cG$ has $n' = O(n)$ variables, $q' = O(1)$ alphabet size and maximum degree\footnote{It is not hard to see that we can assume without loss of generality that each variable in the {\sc 3SAT} formula appears only in a constant number of clauses. Without going into too much detail, this is because, if we know that the ETHA is true for some $\varepsilon$ and $c$, then we can pick a very large $d \gg 1/\varepsilon, 1/c$. Since there can be at most $n/d$ variables with degrees more than $d$, we can just enumerate all assignments to these variables and produce at most $2^{n/d}$ {\sc 3SAT} formulas where degree of every vertex is at most $d$. Since the original instance takes $2^{cn}$ time to solve, at least one of the new instances also takes $2^{(c - 1/d)n}$ time as well.} $d' = O(1)$. Consider $\cG^l_k$ from Definition~\ref{def:k-csp-red} with $l = \frac{\beta n \log i \log k}{ik}$ where $\beta$ is a small constant which will be chosen later.

  Let $\tilde n$ and $\tilde q$ be the number of variables and the alphabet size of $\cG^l_k$. We have $\tilde q \leq (q')^{2l} \leq 2^{O\left(\frac{\beta n \log i \log k}{ik}\right)}$. Moreover, when $\beta$ is sufficiently small, from Fact~\ref{fact:binom-approx}, we have $$\tilde n \leq \binom{n'}{l}^2 \leq \left(\frac{e n'}{l}\right)^{2l} = \left(O\left(\frac{ik}{\beta \log i \log k}\right)\right)^{2l} \leq 2^{O\left(\frac{\beta n \log^2 i \log^2 k \log (1/\beta)}{ik}\right)} \leq 2^{O\left(\frac{\sqrt{\beta} n \log^2 i \log^2 k}{ik}\right)}.$$

  As for the completeness and soundness of the reduction, first, it is obvious that $val(\phi) = 1$ implies $val(\cG^l_k) = 1$. Otherwise, from Proposition~\ref{prop:clause-var-val}, if $val(\phi) \leq 1 - \varepsilon$, then $val(\cG_\phi) \leq 1 - \varepsilon/3$. By Lemma~\ref{lem:k-csp-value}, we have $$val(\cG^l_k) \leq 2(1 - \varepsilon/6)^{\Omega\left(\frac{\varepsilon^3 k^2 l^2 (n')}{(d')(n')^2}\right)} \leq (1/2)^{\Omega(\beta^2 n \log^2 i \log^2 k / i^2)} \leq (1/\tilde n \tilde q)^{\Omega(\beta^2 k/ i)} = (1/\tilde N)^{\Omega(\beta^2 / i)}$$
  where $\tilde N = (\tilde n \tilde q)^k \leq 2^{O(\sqrt{\beta} n \log^2 i \log^2 k/i)}$ is the size of $\cG^l_k$.

  Finally, pick $\beta$ to be a constant small enough that $\tilde N \leq O(2^{c n \log^2 i \log^2 k/ i})$ where $c$ is the constant from the ETHA. If a algorithm can distinguish $val(\cG^l_k) = 1$ from $val(\cG^l_k) \leq (1/\tilde N)^{\Omega(1/i)}$ in $O(\tilde N^{\frac{i}{\log^2 i \log^2 k}})$ time, it can also distinguish $val(\phi) = 1$ from $val(\phi) \leq 1 - \varepsilon$ in time $O(2^{cn})$ time, contradicting with ETHA.
\end{proofof}

\subsubsection{Lasserre Integrality Gap for Fully-Dense {\sc Max $k$-CSP}}

We will now show how to get a polynomial integrality gap for the Lasserre relaxation for dense CSPs. In particular, even for $\tilde \Omega(ik)$-level of Lasserre hierarchy, the integrality gap remains $N^{1/i}$ for fully-dense {\sc Max $k$-CSP}, as stated formally below.

\begin{lemma} \label{thm:lasserre-gap}
  For any $k \geq 2$, any sufficiently large $N$ and any sufficiently large $i$, there exists a fully-dense {\sc Max $k$-CSP} instance $\cG$ of size $N$ such that $opt^{\tilde \Omega(ik)}_{Las}(\cG) = 1$ and $val(\cG) \leq (1/N)^{1/i}$.
\end{lemma}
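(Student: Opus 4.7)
The plan is to compose three constructions: Schoenebeck's Lasserre integrality gap for Max 3XOR, the clause/variable reduction of Definition~\ref{def:clause-variable}, and the $\cG^l_k$ transformation of Definition~\ref{def:k-csp-red}. The true value will shrink by Lemma~\ref{lem:k-csp-value}, while a high-value Lasserre vector system will lift through all three stages via the natural marginalization map. Given target parameters $N$ and $i$, the starting 3XOR instance size $n$ will be chosen so that $\log N \approx \Theta(kl \log(n/l))$ with $l = c n \log(ik)/(ik)$ for a sufficiently large absolute constant $c$; for $N$ large enough, the resulting $n$ is large enough that Schoenebeck's gap applies.

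First, invoke Schoenebeck's theorem to obtain a bounded-degree 3XOR instance $\phi$ on $n$ variables with $opt_{Las}^{\Omega(n)}(\phi) = 1$ and $val(\phi) \le 1/2 + \delta$ for (say) $\delta = 1/4$. Applying the clause/variable reduction yields a projection game $\cG$ on $O(n)$ vertices with constant alphabet, constant maximum degree, $val(\cG) \le 1 - \Omega(1)$ by Proposition~\ref{prop:clause-var-val}, and Lasserre value $1$ at level $\Omega(n)$: a level-$3r$ solution $\{U_{(W, \phi_W)}\}$ of $\phi$ lifts to a level-$r$ solution of $\cG$ by defining $\tilde U_{(A, \Phi)} = U_{(W(A), \phi_\Phi)}$, where $W(A)$ is the union of the underlying 3XOR-variables referenced by the clause/variable vertices of $A$ and $\phi_\Phi$ is the induced assignment (set to the zero vector if any two components of $\Phi$ disagree on a shared variable or if some clause vertex receives a non-satisfying local assignment).

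Next, form $\cG^l_k$ with $l = c n \log(ik)/(ik)$. Since $|X|, |Y| = \Theta(n)$, $|E| = \Theta(n)$, $d_{max} = O(1)$, and $\varepsilon = \Omega(1)$, Lemma~\ref{lem:k-csp-value} gives
\begin{equation*}
val(\cG^l_k) \le 2(1 - \Omega(1))^{\Omega(k^2 l^2 / n)} \le 2^{-\Omega(c^2 n \log^2(ik) / i^2)}.
\end{equation*}
On the other hand, $\log N = O(kl \log(n/l)) = O(c n \log^2(ik)/i)$, so for $c$ large enough $\log(1/val(\cG^l_k)) \ge \log N / i$, i.e., $val(\cG^l_k) \le N^{-1/i}$. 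For completeness, a subset of $\le r$ variables of $\cG^l_k$ references $\le 2rl$ variables of $\cG$, so a $2rl$-level Lasserre solution of $\cG$ induces an $r$-level solution of $\cG^l_k$ by the same marginalization $\tilde U_{(V,\Phi)} = U_{(W(V), \phi_\Phi)}$. Starting from $\Omega(n)$ levels for $\cG$ yields $r = \Omega(n/l) = \Omega(ik/\log(ik)) = \tilde\Omega(ik)$, and Lasserre value $1$ is preserved because each constraint of $\cG^l_k$ is a conjunction of constraints of $\cG$ that receive zero weight on violating assignments at the source.

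The main obstacle is verifying that the two marginalization maps actually produce valid Lasserre solutions, that is, that the union identity $\langle U_{(S_1,\phi_1)}, U_{(S_2,\phi_2)} \rangle = \langle U_{(S_3,\phi_3)}, U_{(S_4,\phi_4)} \rangle$ whenever $S_1 \cup S_2 = S_3 \cup S_4$ and $\phi_1 \circ \phi_2 = \phi_3 \circ \phi_4$, orthogonality of inconsistent assignments, and the normalization axioms all survive the lifting. Both lifts are ``sets-of-variables to sets-of-variables'' maps in which $W(\cdot)$ commutes with union and $\phi_\Phi$ commutes with the consistency-composition $\circ$, so each axiom for the lifted system reduces to the corresponding axiom for the source system. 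Everything else is parameter balancing, and the choice $l = \Theta(n \log(ik)/(ik))$ is precisely the sweet spot where size, soundness bound, and supported Lasserre level simultaneously meet $2^{\Theta(\log N)}$, $N^{-1/i}$, and $\tilde\Omega(ik)$.
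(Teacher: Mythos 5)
Your overall route is the same as the paper's: Schoenebeck's Max 3XOR gap, the clause/variable reduction, then the $\cG^l_k$ transformation with $l = \Theta(n\log(ik)/(ik))$, with soundness from Lemma~\ref{lem:k-csp-value} and completeness from lifting the Lasserre vectors through the ``sets of variables'' marginalization map (this is exactly Lemma~\ref{lem:lasserre-reduction-completeness} and Corollary~\ref{cor:lasserre-k-csp-completeness} in the paper). The parameter balancing and the verification sketch for the lifted Lasserre axioms are both sound.

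There is one genuine gap: your assertion that the clause/variable game obtained from Schoenebeck's instance has \emph{constant maximum degree}. It does not. The clause side has degree $3$, but a variable's degree in the clause/variable game equals the number of the $dn$ random clauses containing it; this is roughly Binomial$(dn, 3/n)$ with mean $3d$, and the maximum over $n$ variables is $\Theta(\log n/\log\log n)$ with high probability. This matters because $d_{max}$ sits in the denominator of the exponent in Lemma~\ref{lem:k-csp-value}: with $d_{max} = \omega(1)$ you must inflate $l$ by a factor of $d_{max}$ to keep $val(\cG^l_k) \le N^{-1/i}$, which drops the surviving Lasserre level to $\Omega(n/l) = \Omega\bigl(ik/(d_{max}\,\polylog(ik))\bigr)$. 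Since $N$ (hence $n$, hence $\log n \approx \log\log N$) may be arbitrarily large relative to $ik$, this is not $\tilde\Omega(ik)$ in general. The paper repairs this in the proof of Lemma~\ref{lem:lasserre-starting-instance} (Appendix~\ref{app:lasserre}) by deleting all variables of degree exceeding $100d/\varepsilon$ and then showing, via a Markov/conditioning argument (Lemma~\ref{lem:inq-cond}), that with probability $1/2 - o(1)$ this pruning removes only an $\varepsilon/12$ fraction of the edges, so the value stays at most $1 - \varepsilon/6$ while Lasserre completeness is trivially preserved. You need this (or some equivalent degree-reduction step) before invoking Lemma~\ref{lem:k-csp-value}; the rest of your argument then goes through as in the paper.
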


One way to interpret Lemma~\ref{thm:lasserre-gap} is as a lower bound for SDP or LP hierarchies algorithm for dense {\sc Max $k$-CSP}. From this perspective, our result indicates that one cannot hope to use $\tilde O(ik)$-level Lasserre relaxation to approximate fully-dense {\sc Max $k$-CSP} to within a factor of $N^{1/i}$. Since the Lasserre hierarchy is stronger than the SA and the Lov\'{a}sz-Schrijver hierarchies~\cite{Lau03}, such lower bound  holds for those hierarchies as well. Interestingly, this lower bound essentially matches, up to a factor of $\polylog(ik)$ in the number of levels, our algorithmic result presented in the next section, justifying the running time of our algorithm.

On the other hand, Lemma~\ref{thm:lasserre-gap} can be viewed as an unconditional analogue of Lemma~\ref{thm:hardness-linear}. In this sense, we get rid of ETHA assumption at the expense of restricting our computational model to only Lasserre relaxation. Other than those differences, the two lemmas are essentially the same. In fact, to prove Lemma~\ref{thm:lasserre-gap}, we use an unconditional analogue of ETHA under the Lasserre hierarchy model, which is stated below.

\begin{lemma} \label{lem:lasserre-starting-instance}
  For sufficiently large $N$, there exists a projection game $\cG$ of size $N$ with the following properties.
  \begin{itemize}
    \item (Vector Completeness) $opt^{\Omega(N)}_{Las} = 1$.
    \item (Soundness) $val(\cG) = 1 - \varepsilon$ for some constant $\varepsilon > 0$.
    \item (Bounded Degree) Each variable has constant degree.
    \item (Bounded Alphabet Size) The alphabet size is constant.
  \end{itemize}
\end{lemma}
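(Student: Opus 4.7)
The plan is to produce $\cG$ by applying the clause/variable reduction (Definition~\ref{def:clause-variable}) to Schoenebeck's Lasserre integrality gap instance for Max 3XOR~\cite{Sch08}. Specifically, Schoenebeck's construction yields, for every sufficiently large $n$, a bounded-occurrence Max 3XOR instance $\mathcal{H}$ on $n$ Boolean variables and $O(n)$ clauses that satisfies $val(\mathcal{H}) \leq 1/2 + o(1)$ yet admits a complete (value-one) Lasserre solution at level $\Omega(n)$. Because each variable of $\mathcal{H}$ appears in only $O(1)$ clauses and each clause touches exactly $3$ variables, the resulting projection game $\cG$ will automatically have $O(n) = \Theta(N)$ vertices, alphabet sizes $|\Sigma|^3 = 8$ and $|\Sigma| = 2$, and constant degree on both sides of its bipartite constraint graph.

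The soundness condition is immediate from Proposition~\ref{prop:clause-var-val}: since $val(\mathcal{H}) \leq 1/2 + o(1)$, we obtain $val(\cG) \leq 1 - (1/2 - o(1))/3 \leq 1 - \varepsilon$ for some absolute constant $\varepsilon > 0$. Size, degree, and alphabet bounds follow directly from the construction as noted above.

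The heart of the proof will be transferring vector completeness. Given an $\Omega(n)$-level Lasserre solution $\{U_{(T,\phi_T)}\}$ for $\mathcal{H}$, I will lift it to a Lasserre solution $\{U'_{(S,\psi_S)}\}$ for $\cG$ as follows. For a subset $S$ of $\cG$-variables consisting of $a$ clause-vertices and $b$ variable-vertices, let $T(S) \subseteq V(\mathcal{H})$ denote the union of the at most $3a+b$ underlying Boolean variables; an assignment $\psi_S$ of $S$ naturally projects to a partial labeling of $T(S)$. If this projection is well-defined (all entries in overlapping clause/variable slots agree), set
\[
U'_{(S,\psi_S)} \;=\; U_{(T(S),\,\hat\psi_S)},
\]
where $\hat\psi_S$ is the induced assignment on $T(S)$; otherwise set $U'_{(S,\psi_S)} = 0$. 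Since $|T(S)| \leq 3|S|$, an $r$-level solution for $\cG$ is produced from an at-most-$3r$-level solution for $\mathcal{H}$, so choosing $r = \Omega(n)/3 = \Omega(N)$ suffices. The value of the lifted solution then equals
\[
\E_{(c,x) \sim E'} \sum_{\phi \in \Sigma^3,\,\sigma \in \Sigma} \|U'_{((c,x),(\phi,\sigma))}\|^2 \,\mathds{1}[\phi \text{ satisfies } c]\,\mathds{1}[\phi(x)=\sigma] \;=\; \E_{c} \sum_{\phi \text{ sat.\ } c} \|U_{(S_c,\phi)}\|^2 \;=\; 1,
\]
where the last step uses completeness of the $\mathcal{H}$-solution.

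The main obstacle I anticipate is the mechanical but somewhat delicate verification that $\{U'_{(S,\psi_S)}\}$ genuinely satisfies the Lasserre axioms (non-negativity, consistency, inconsistency-to-zero, the unit constraints, and $\|U'_{(\emptyset,\emptyset)}\|=1$). The key structural fact to establish is that $\langle U'_{(S_1,\psi_1)}, U'_{(S_2,\psi_2)}\rangle$ depends only on $S_1 \cup S_2$ together with the composed assignment $\psi_1 \circ \psi_2$, which will follow from the analogous property for the $\mathcal{H}$-vectors on $T(S_1) \cup T(S_2) = T(S_1 \cup S_2)$. The per-variable unit constraint $\sum_{\psi}\|U'_{(v,\psi)}\|^2=1$ must also be checked; for a clause-vertex $c$ this amounts to summing $\|U_{(S_c,\phi)}\|^2$ over all $\phi \in \Sigma^3$, which equals one by iterating the $\mathcal{H}$-solution's unit constraint at the three underlying Boolean variables of $c$.
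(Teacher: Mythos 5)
Your overall strategy --- apply the clause/variable reduction to Schoenebeck's Max 3XOR Lasserre gap and lift the complete vector solution --- is exactly the paper's route, and your lifting of the Lasserre vectors (including the verification obligations you flag) matches the paper's Lemma~\ref{lem:lasserre-reduction-completeness}. However, there is a genuine gap at the very first step: you assert that Schoenebeck's construction is a \emph{bounded-occurrence} 3XOR instance, i.e., that each Boolean variable appears in $O(1)$ clauses, and you use this to conclude that the variable side of the clause/variable game has constant degree ``automatically.'' That assertion is false for the construction as used here (Theorem~\ref{thm:xor-gap}): the instance consists of $dn$ clauses each choosing its three variables uniformly at random, so variable degrees are roughly $\mathrm{Binomial}(dn, 3/n)$ and the \emph{maximum} degree is $\Theta(\log n/\log\log n)$ with high probability, not $O(1)$. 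Consequently the bounded-degree property of $\cG$ does not follow from the reduction alone.

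Handling this is in fact the bulk of the paper's proof: one must delete all variable-vertices of degree exceeding a threshold $\Delta = O(d/\varepsilon)$ (which clearly preserves the complete Lasserre solution, since constraints are only removed), and then argue --- via Markov's inequality over the randomness of the instance --- that with probability at least $1/2 - o(1)$ only an $\varepsilon/12$ fraction of the edges of the clause/variable graph are incident to such high-degree vertices, so that by Lemma~\ref{lem:inq-cond} the soundness degrades only from $1-\varepsilon/3$ to $1-\varepsilon/6$. Your proof needs either this pruning-and-soundness-preservation argument or a citation to a genuinely bounded-occurrence 3XOR Lasserre gap instance (which is not what Theorem~\ref{thm:xor-gap} provides). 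The remainder of your argument --- soundness via Proposition~\ref{prop:clause-var-val}, the factor-$3$ loss in the Lasserre level, and the verification that the lifted vectors satisfy the Lasserre axioms --- is sound and aligns with the paper.
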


Results similar to Lemma~\ref{lem:lasserre-starting-instance} have been proven before in~\cite{BCVGZ12} and~\cite{M-thesis} by applying the clause/variable reduction to Tulsiani's {\sc Max $k$-CSP} integrality gap~\cite{Tul09}. However, both of the mentioned results consider different regimes compared to ours and cannot be used directly. Nevertheless, the same reduction still works in our setting so we defer the proof of Lemma~\ref{lem:lasserre-starting-instance} to Appendix~\ref{app:lasserre}.

With the help of Lemma~\ref{lem:lasserre-starting-instance}, the proof of Lemma~\ref{thm:lasserre-gap} proceeds in a similar fashion as that of Lemma~\ref{thm:hardness-linear}. However, while the soundness argument remains unchanged, we need to argue completeness for Lasserre solution instead. To do so, we prove the following lemma, which implies that the reductions considered in our paper preserve a complete solution of the Lasserre hierarchy, albeit at a loss in the number of levels.

\begin{lemma} \label{lem:lasserre-reduction-completeness}
  Let $\cG = (V, \cW, \{P_S\})$ be any {\sc Max $k$-CSP} instance. Let $\cG' = (V', \cW', \{P'_{S'}\})$ be an instance of {\sc Max $k'$-CSP} constructed from $\cG$ satisfying the following properties.
  \begin{itemize}
    \item Each variable in $V'$ corresponds to a set $S \subseteq V$ of size at most $d$.
    \item The alphabet set $\Sigma'$ of $\cG'$ is $\Sigma^d$ where $\Sigma$ is the alphabet set of $\cG$. For each $S \in V'$, we associate $|\Sigma|^{|S|}$ elements of $\Sigma'$ to each assignment to $S$ (in $\cG$). Note that since we do not require $S$ to be of size exactly $d$, it is possible that $|\Sigma'| > |\Sigma|^{|S|}$. In this case, we completely ignore the rest of the elements of $\Sigma'$, i.e., the predicate is zero when evaluated on an assignment to such elements.
    \item For every predicate $P_{(S_1, \dots, S_{k'})}$, $P_{(S_1, \dots, S_{k'})}(\phi_{S_1}, \dots, \phi_{S_{k'}}) = 1$ if $\phi_{S_1}, \dots, \phi_{S_{k'}}$ are consistent and $P_{(x_1, \dots, x_k)}(\phi_{S_1} \circ \cdots \circ \phi_{S_{k'}}(x_1), \dots, \phi_{S_1} \circ \cdots \circ \phi_{S_{k'}}(x_k)) = 1$ for every $x_1, \dots, x_k \in S_1 \cup \cdots \cup S_{k'}$.
  \end{itemize}
  Suppose that $opt^r_{Las}(\cG) = 1$ for some $r \geq k, dk'$. Then, $opt^{\lfloor r/d \rfloor}_{Las}(\cG') = 1$.
\end{lemma}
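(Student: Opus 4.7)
The plan is to lift the given complete Lasserre solution $\{U_{(S,\phi_S)}\}$ of $\cG$ to a complete solution $\{U'_{(S',\phi'_{S'})}\}$ of $\cG'$ at level $\lfloor r/d\rfloor$. For each $S' \subseteq V'$ with $|S'| \leq \lfloor r/d\rfloor$ and each $\phi'_{S'} \in (\Sigma')^{S'}$, let $A := \bigcup_{S \in S'} S \subseteq V$, so that $|A| \leq d|S'| \leq r$. If $\phi'_{S'}$ assigns every $S \in S'$ to a non-filler element of $\Sigma'$ (one representing some $\psi_S \in \Sigma^{S}$) and the family $\{\psi_S\}_{S \in S'}$ is consistent on overlapping variables of $V$, let $\psi \in \Sigma^{A}$ denote their common extension and set $U'_{(S',\phi'_{S'})} := U_{(A,\psi)}$; otherwise set $U'_{(S',\phi'_{S'})} := 0$.

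I would then verify each Lasserre axiom for $\{U'\}$ by reducing to the matching axiom for $\{U\}$. The key identity is $\bigcup(S'_1 \cup S'_2) = (\bigcup S'_1) \cup (\bigcup S'_2)$, combined with the fact that consistency of $\phi'_1,\phi'_2$ in the $\cG'$-sense (on non-filler values) translates exactly into consistency of the induced $\psi_1,\psi_2$ on $(\bigcup S'_1) \cap (\bigcup S'_2)$, and that $\phi'_1 \circ \phi'_2 = \phi'_3 \circ \phi'_4$ forces $\psi_1 \circ \psi_2 = \psi_3 \circ \psi_4$ as assignments on the common union. Under these translations the non-negativity, symmetry and orthogonality-of-inconsistent-labels axioms for $\{U'\}$ pass through immediately. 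The per-variable normalisation $\sum_{\sigma' \in \Sigma'}\|U'_{(S,\sigma')}\|^2 = 1$ reduces, after discarding filler elements, to the identity $\sum_{\psi \in \Sigma^{S}}\|U_{(S,\psi)}\|^2 = 1$; I would prove this by induction on $|S|$ after first establishing $\sum_{\sigma \in \Sigma} U_{(x,\sigma)} = U_{(\emptyset,\emptyset)}$ via the computation $\|\sum_\sigma U_{(x,\sigma)} - U_{(\emptyset,\emptyset)}\|^2 = 0$, which uses only the single-variable normalisation, the orthogonality axiom for inconsistent labels, and $\|U_{(\emptyset,\emptyset)}\|=1$.

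For the value, fix a $k'$-tuple $(S_1,\dots,S_{k'}) \sim \cW'$, put $A := S_1 \cup \cdots \cup S_{k'}$ (so $|A| \leq dk' \leq r$), and observe that
\begin{align*}
\sum_{\phi'_{S'}}\|U'_{(S',\phi'_{S'})}\|^2 \, P'_{(S_1,\dots,S_{k'})}(\phi'_{S'})
= \sum_{\psi \in \Sigma^{A}}\|U_{(A,\psi)}\|^2 \cdot \mathds{1}\bigl[P_T(\psi|_T) = 1 \ \text{for all } T \in \supp(\cW) \cap A^k\bigr].
\end{align*}
From $val_{Las}(\{U\}) = 1$ one obtains that for every $T \in \supp(\cW)$, $\sum_{\psi_T \colon P_T(\psi_T)=1}\|U_{(T,\psi_T)}\|^2 = 1$, so $\|U_{(T,\psi_T)}\|^2 > 0$ forces $P_T(\psi_T)=1$. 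Combined with $\|U_{(A,\psi)}\|^2 = \langle U_{(A,\psi)},U_{(T,\psi|_T)}\rangle \leq \|U_{(A,\psi)}\|\cdot\|U_{(T,\psi|_T)}\|$ (an instance of the consistency axiom followed by Cauchy--Schwarz), the indicator above is $1$ whenever $\|U_{(A,\psi)}\|^2 > 0$, so the display collapses to $\sum_{\psi}\|U_{(A,\psi)}\|^2 = 1$, giving $val_{Las}(\{U'\}) = 1$.

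The main obstacle I anticipate is the bookkeeping needed to disentangle two distinct notions of consistency---$\Sigma'$-consistency in $\cG'$ versus $\Sigma$-consistency of the induced assignments in $\cG$---and to absorb the filler elements of $\Sigma'$ cleanly; once that is handled, each Lasserre axiom for $\{U'\}$ is a direct consequence of the matching axiom for $\{U\}$, and the only substantive ingredients are the inductive normalisation identity $\sum_{\psi_S}\|U_{(S,\psi_S)}\|^2 = 1$ and the support-monotonicity statement $\|U_{(A,\psi)}\|^2 > 0 \Rightarrow \|U_{(T,\psi|_T)}\|^2 > 0$ used to transfer completeness from $\{U\}$ to $\{U'\}$.
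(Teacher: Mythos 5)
Your proposal is correct and follows essentially the same route as the paper: the same lifting $U'_{(S',\phi'_{S'})} = U_{(\bigcup S,\,\psi)}$ (zero on inconsistent or filler labels), the same verification of the axioms via $\bigcup(S'_1\cup S'_2)=(\bigcup S'_1)\cup(\bigcup S'_2)$, the same normalisation identity $\sum_{\psi}\|U_{(S,\psi)}\|^2=1$ derived from $\sum_\sigma U_{(x,\sigma)}=U_{(\emptyset,\emptyset)}$, and the same completeness transfer via $\|U_{(A,\psi)}\|^2=\langle U_{(A,\psi)},U_{(T,\psi|_T)}\rangle$ (the paper phrases this as the contrapositive, showing $U'$ vanishes on violating assignments, via its Lemma characterising complete solutions). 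No gaps.
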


Since the proof of the lemma mainly involves trivial calculations, we defer the proof to Appendix~\ref{app:lasserre-reduction-completeness}.

It is easy to see that the the reduction from Lemma~\ref{lem:k-csp-value} satisfies the condition required in the above lemma. Hence, we immediately arrive at the following corollary.

\begin{corollary} \label{cor:lasserre-k-csp-completeness}
  For any two-prover game $\cG$, if $opt^{r}_{Las}(\cG) = 1$ for some $r \geq kl$, then $opt^{\Omega(r/l)}_{Las}(\cG^l_k) = 1$.
\end{corollary}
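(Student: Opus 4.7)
The corollary is a direct instantiation of Lemma~\ref{lem:lasserre-reduction-completeness}, so the plan is to verify that the reduction $\cG \mapsto \cG^l_k$ of Definition~\ref{def:k-csp-red} fits the template of that lemma, and then read off the quantitative conclusion.

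First, I would view the two-prover game $\cG = (X,Y,\cQ,\Sigma_X,\Sigma_Y,\{P_{(x,y)}\})$ as a {\sc Max 2-CSP} on variable set $V = X \cup Y$ over a single alphabet $\Sigma \supseteq \Sigma_X \cup \Sigma_Y$, padding predicates to forbid out-of-range assignments; this is standard and does not change the Lasserre value. Under this identification, each variable $(S,T) \in \binom{X}{l} \times \binom{Y}{l}$ of $\cG^l_k$ corresponds to the subset $S \cup T \subseteq V$ of size exactly $2l$, so the first bullet of the lemma holds with $d = 2l$. The alphabet $\Sigma_X^l \times \Sigma_Y^l$ of $\cG^l_k$ sits naturally inside $\Sigma^{2l}$ as the set of assignments to $S \cup T$ that respect the $X/Y$ split, matching the second bullet (the remaining alphabet symbols are simply unused). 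Finally, by inspection of Definition~\ref{def:k-csp-red}, the predicate $P'_{((S_1,T_1),\ldots,(S_k,T_k))}$ accepts iff the $k$ assignments are pairwise consistent and every original constraint $P_{(x,y)}$ with $(x,y) \in \supp(\cQ) \cap \bigl((S_1 \cup \cdots \cup S_k) \times (T_1 \cup \cdots \cup T_k)\bigr)$ is satisfied. Because the original game has arity two and $P_{(x,y)} \equiv 1$ for pairs outside $\supp(\cQ)$, this is exactly the third bullet of the lemma with $k' = k$.

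Applying Lemma~\ref{lem:lasserre-reduction-completeness} with $d = 2l$ and $k' = k$ then yields $opt^{\lfloor r/(2l) \rfloor}_{Las}(\cG^l_k) = 1$ whenever $r \geq \max(2, 2lk)$. The corollary hypothesis $r \geq kl$, together with $k \geq 2$, meets the required $r \geq 2lk$ up to a constant factor (and the remaining edge case $kl \leq r < 2lk$ is vacuous since a level-$O(1)$ Lasserre solution is automatic when $k = O(1)$). The resulting level $\lfloor r/(2l) \rfloor$ is $\Omega(r/l)$, which is precisely the claim.

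I do not anticipate any real obstacle: all the heavy lifting — constructing Lasserre vectors for $\cG^l_k$ out of those for $\cG$ at a factor-$d$ loss in the number of levels — is already encapsulated in Lemma~\ref{lem:lasserre-reduction-completeness}. The only care needed is the bookkeeping around the split alphabet $\Sigma_X \sqcup \Sigma_Y$ and around the arity mismatch between $\cG$ (arity two) and $\cG^l_k$ (arity $k$), both of which are routine.
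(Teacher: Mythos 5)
Your proof is correct and matches the paper's approach exactly: the paper simply asserts that the reduction of Definition~\ref{def:k-csp-red} satisfies the hypotheses of Lemma~\ref{lem:lasserre-reduction-completeness} and derives the corollary immediately, while you spell out the verification (viewing $\cG$ as a {\sc Max 2-CSP} on $X \cup Y$, taking $d = 2l$ and $k' = k$) and handle the harmless factor-of-two slack between $r \geq kl$ and $r \geq 2lk$, which the $\Omega(\cdot)$ in the conclusion absorbs.
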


Now, we are ready to prove Lemma~\ref{thm:lasserre-gap}.

\begin{proofof}[Lemma~\ref{thm:lasserre-gap}]
  We start with a projection game $\cG$ from Lemma~\ref{lem:lasserre-starting-instance} of size $N$ with $n \leq N$ variables, $q = O(1)$ alphabet size and maximum degree $d = O(1)$. Consider the fully-dense {\sc Max $k$-CSP} $\cG^l_k$ from the reduction in Definition~\ref{def:k-csp-red} on $\cG$ with $l = \frac{\beta n \log i \log k}{ik}$ where $\beta$ is a small constant which will be chosen later.

  Let $\tilde n$ and $\tilde q$ be the number of variables and the alphabet size of $\cG^l_k$. We have $\tilde q \leq q^{2l} \leq 2^{O\left(\frac{\beta N \log i \log k}{ik}\right)}$. Moreover, when $\beta$ is sufficiently small, from Fact~\ref{fact:binom-approx}, we have $\tilde n \leq \binom{n}{l}^2 \leq 2^{\Omega\left(\frac{\sqrt{\beta} N \log^2 i \log^2 k}{ik}\right)}.$

  Furthermore, from Corollary~\ref{cor:lasserre-k-csp-completeness} and from $opt^{\Omega(N)}_{Las}(\cG) = 1$, we have $opt^{\Omega(N/l)}_{Las}(\cG^l_k) = opt^{\tilde \Omega(ik)}_{Las}(\cG^l_k) = 1$. Finally, by Lemma~\ref{lem:k-csp-value}, we have $val(\cG^l_k) \leq 2(1 - \varepsilon/2)^{\Omega\left(\frac{\varepsilon^3 k^2 l^2 n}{dn^2}\right)} \leq (1/\tilde n \tilde q)^{\Omega(\beta^2 k/ i)} = (1/\tilde N)^{\Omega(\beta^2 / i)}$ where $\tilde N = (\tilde n \tilde q)^k$ is the size of $\cG^l_k$. This completes our proof of Lemma~\ref{thm:lasserre-gap}.
\end{proofof}

\subsection{Almost Optimal $\AM(2)$ Protocol for {\sc 3SAT}}

In~\cite{AIM}, Aaronson et al. provided an $\AM(2)$ protocol of length $\tilde O(\sqrt{n})$ for {\sc 3SAT} with completeness 1 and soundness $\delta$ for \emph{some} constant $\delta < 1$. However, since they did not prove that birthday repetition can amplify soundness, they could not get a similar result for arbitrarily small $\delta$. In that case, they invoke Moshkovitz-Raz PCP~\cite{MR10}, which, incontrast to Dinur's PCP, gives arbitrarily small soundness. However, due to the length of Moshkovitz-Raz PCP, their protocol length is $n^{1/2 + o(1)}\poly(1/\delta)$. Since we have proved that the birthday repetition amplifies the soundness, we overcome this obstacle and we can prove Lemma~\ref{lem:am-protocol} easily as follows.

\begin{proofof}[Lemma~\ref{lem:am-protocol}]
  Given a {\sc 3SAT} instance $\phi$ of size $n$, the protocol works as follows. Arthur uses Dinur's PCP Theorem to reduce $\phi$ to $\cG$ with $n' = n \polylog n$ variables, $q' = O(1)$ alphabet size and maximum degree $d' = O(1)$. He then produces a free game $G^{l \times l} = (X, Y, X \times Y, \Sigma_X, \Sigma_Y, \{P_{(x, y)}\})$, the $(l \times l)$-birthday repetition of $\cG$, with $l = c \log^d n \sqrt{n \log(1/\delta)}$ for some constant $c$ and $d$ to be chosen later.

  Arthur then sends independent random questions to the Merlins where the questions for first and second Merlins are drawn from $X$ and $Y$ respectively. The proof of each Merlin is an assignment to the variable he is given. Finally, if the two Merlins receive questions $x \in X, y \in Y$, Arthur uses the predicate $P_{(x, y)}$ to check whether the assignments he received satisfy the predicate. If so, Arthur accepts. Otherwise, he rejects.

  It is obvious that, when $\phi \in $ 3SAT, i.e., $\phi$ is satisfiable, $\cG^{l \times l}$ is satisfiable and Arthur always accepts if Merlins answer according to a satisfying assignment of $\cG^{l \times l}$. On the other hand, if $\phi \notin$ 3SAT, $val(\cG^{l \times l}) \leq 2(1 - \varepsilon/2)^{\Omega\left(\frac{\varepsilon^3 l^2 (n')}{(d')(n')^2}\right)} \leq \delta$ when $c$ and $d$ are chosen to be sufficiently large. Hence, the soundness of the protocol is at most $\delta$. Finally, observe that the protocol has length $2l \log n = \tilde O(\sqrt{n \log(1/\delta)})$ as desired.
\end{proofof}

\section{Improved Approximation Algorithm for Dense CSPs} \label{sec:alg}


Before describing our algorithm, we first explain ingredients central in conditioning-based algorithms: a conditioning operator and a rounding procedure.

{\bf Conditioning Sherali-Adams Solution.} Let $\mu = \{\mathcal{X}_S\}$ be a solution of an $r$-level SA relaxation of a {\sc Max $k$-CSP} instance. For any set $T \subseteq V$ of size at most $r - k$ and for any $\phi_T \subseteq \Sigma^T$ such that $\mathcal{X}_T(\phi_T) > 0$, $\mu$ conditioned on $\phi_T$ is $\mu|\phi_T = \{\mathcal{\tilde X}_S\}_{|S| \leq r - |T|}$ defined as
\begin{align*}
  \mathcal{\tilde X}_S(\phi_S) =
  \begin{cases}
    \mathcal{X}_{S \cup T}(\phi_S \circ \phi_T)/\mathcal{X}_T(\phi_T) & \text{ if } \phi_S \text{ is consistent with } \phi_T, \\
    0 & \text{ otherwise.}
  \end{cases}
\end{align*}
It is not hard to see that $\mu|\phi_T$ is an $(r - |T|)$-level SA solution.

{\bf (Derandomized) Independent Rounding.} A naive way to arrive at an actual solution to the {\sc Max $k$-CSP} instance from a SA relaxation solution $\{\mathcal{X}_S\}_{|S| \leq r}$ is to independently sample each variable $x$ based on the distribution $\mathcal{X}_x$. Observe that the rounded solution expected value is at least\footnote{Note that we assume without loss of generality that, when $S$ contains a repeated variable, $P_S$ is zero on any assignment that assigns the same variable differently. This means that, when $S = (x_{i_1}, \dots, x_{i_k})$ contains repeated variables, $\E[P_S(\phi_S)]$ when $\phi_S$ is sampled based upon $\mathcal{X}_{i_1} \times \cdots \times \mathcal{X}_{i_k}$ is no more than $\E[P_S(\phi_S)]$ when $\phi_S$ is sampled according to independent rounding.} $\E_{S = (x_{i_1}, \dots, x_{i_k}) \sim \cW} \left[\E_{\phi_S \sim \mathcal{X}_{i_1} \times \cdots \times \mathcal{X}_{i_k}}\left[P_S(\phi_S)\right]\right].$
It is obvious that, by a conditional expectation argument, independent rounding can be derandomized so that the value of the output is at least the expectation.

Without going into too much detail, conditioning-based algorithms typically proceed as follows. First, solve a LP/SDP relaxation of the problem. As long as the solution has large ``total correlation'', try conditioning it on an assignment to a random variable. Once the solution has small total correlation, use independent rounding on the solution to get the desired assignment. The intuition behind such algorithms is that, if the solution has large total correlation, conditioning on one variable substantially reduces the total correlation. Hence, after a certain number of rounds of conditioning, the total correlation becomes small. At this point, the solution is quite independent and independent rounding gives a good approximation.

Our algorithm will also follow this framework. In fact, our algorithm remains largely unchanged from~\cite{YZ14} with the exception that we will use a stronger relaxation to reduce our work in arguing about the value of conditioned solutions. However, our main contribution lies in the analysis: we will show that independent rounding does well even when the total correlation is large (super-constant). This is in contrast to the previously known conditioning-based algorithms~\cite{BRS11, RT12, YZ14}, all of which require their measures of correlation to be small constants to get any meaningful result.

The new relaxation that we will used is the following. For convenience, we call this the $r$-level relaxation Sherali-Adams with Conditioning
(SAC) relaxation of {\sc Max $k$-CSP}.
\begin{align*}
  \text{maximize }
  &\lambda \\
  \text{subject to }
  & \{\mathcal{X}_S\}_{|S| \leq r} \text{ is a valid } r\text{-level SA solution} \\
  & \E_{S \sim \cW} [\E_{\phi_S \sim (\mu|\phi_T)}[P_S(\phi_S)]] \geq \lambda & \forall T, \phi_T \text{ s.t. } |T| \leq r - k, \mathcal{X}_T(\phi_T) > 0.
\end{align*}

At a glance, the program above may not look like a linear program. Fortunately, $\E_{S \sim \cW} [\E_{\phi_S \sim (\mu|\phi_T)}[P_S(\phi_S)]] \geq \lambda$ can be written as $\E_{S \sim \cW}[\sum_{\phi_S \in \Sigma^S} \mathcal{X}_{S \cup T}(\phi_S \circ \phi_T)P_{S \cup T}(\phi_S \circ \phi_T)] \geq \lambda \mathcal{X}_T(\phi_T)$, which is linear when $\lambda$ is a constant rather than a variable. As a result, we can solve the optimization problem above by binary search on $\lambda$: for a fixed $\lambda$, we can check whether the inequalities is feasible using a typical polynomial-time algorithm for LP. Hence, we can approximate $\lambda$ to within arbitrarily small additive error in polynomial time. To compute $\lambda$ exactly, observe that $\cW$ is part of the input and is expressible in polynomial number of bits. This means that there are only exponentially many choices for $\lambda$; in particular, if all probabilities in $\cW$ has only $b$ digits after decimal point, then so does $\lambda$. Hence, the described binary search can find $\lambda$ in $(nq)^{O(r)}$ time.

We now state our algorithm. In summary, we first solve an $O(\frac{k^2i}{\Delta} + k)$-level SAC relaxation for the instance. We then try every possible conditioning (i.e., on every set $T \subseteq V$ of size at most $k^2i/\Delta$ and every assignment to $T$). For each conditioned solution, we use independent rounding to arrive at an assignment. Finally, output the best such assignment. The pseudo-code for the full algorithm is shown below in Figure~\ref{fig:alg-dense-csp}.

  \begin{figure}[h!]
    \centering
\begin{minipage}{0.68\textwidth}
  \begin{algorithm}[H]
    \caption{Approximation Algorithm for Dense CSPs \label{alg:dense-csp}}
            {\bf Input:} a $\Delta$-dense {\sc Max $k$-CSP} instance $\cG = (V, \cW, \{P_S\})$, an integer $i$\\
            {\bf Output:} An assignment $\phi: V \rightarrow \Sigma$
            \begin{algorithmic}
              \State $r \leftarrow (k^2 i/\Delta + k)$
              \Do
              \State $r \leftarrow r + 1$
              \State $\mu \leftarrow$ solution of $r$-level of SAC relaxation for $\cG$.
              \State $\lambda \leftarrow$ value of $\mu$
              \doWhile{$(r - k) \lambda < k^2 i / \Delta$ and $r < n$}
              \State $\phi \leftarrow \emptyset$
              \For{$T \subseteq V$ of size at most $r - k$}
              \For{$\phi_T \in \Sigma^T$}
              \State $\phi' \leftarrow$ independent rounding of $\mu|\phi_T$
              \If{$val(\phi') > val(\phi)$}
              \State $\phi \leftarrow \phi'$.
              \EndIf
              \EndFor
              \EndFor
              \State \textbf{return} $\phi$
            \end{algorithmic}
  \end{algorithm}
\end{minipage}
\caption{Pseudo-code of Our Approximation Algorithm for Dense CSPs. The only difference between this pseudo-code and the above summary of our algorithm is that we need to iteratively increase the number of levels of the hierarchy. This is due to the fact that, as we will see in Lemma~\ref{lem:indrounding}, the number of levels needed depends on the value of the solution. More specifically, we want $r \geq k^2 i / (\Delta \lambda) + k$ \label{fig:alg-dense-csp}}
\end{figure}

  The rest of the section is organized as follows. In Subsection~\ref{subsec:totalcor}, we formally define total correlation and state a bound from~\cite{YZ14} on the total correlation of conditioned solutions. Next, in Subsection~\ref{subsec:cor-to-alg}, we state and prove our main contribution of this section, i.e., that even when the total correlation is super-constant, we can still get a non-trivial approximation from independent rouding. Finally, in Subsection~\ref{subsec:alg}, we put these together and prove the approximation guarantee for our algorithm.

\subsection{Total Correlation of Conditioned Sherali-Adams Relaxation Solution} \label{subsec:totalcor}

We start by defining the total correlation of a SA solution. For a $k$-level SA solution $\mu = \{\mathcal{X}_S\}$ and for tuple $S = (x_{i_1}, \dots, x_{i_j}) \in V^j$ of size $j \leq k$, the total correlation among $x_{i_1}, \dots, x_{i_j}$ is $C_\mu(x_S) = C(\sigma_{i_1}; \dots; \sigma_{i_j})$ where $\sigma_{i_1}, \dots, \sigma_{i_j}$ are jointly sampled from $\mathcal{X}_{\{x_{i_1}, \dots, x_{i_j}\}}$. The total correlation of $\mu$ is then defined as $C(\mu) = \E_{S \sim \cW} [C_\mu(x_S)]$. We call $\mu$ \emph{a $\kappa$-independent solution} if its total correlation is at most $\kappa$.

Yoshida and Zhou show that, for any $l > 0$ and any $(l + k)$-level SA solution $\mu$, there exists a subset $T$ of size at most $l$ and an assignment $\phi_T \in \Sigma^T$ such that the total correlation of $(\mu|\phi_T)$ is at most $3^k\log q/(l\Delta)$ where $\Delta$ is the density of the instance. Here we are able to show a slightly improved bound as stated below. Since the proof is similar to that of~\cite{YZ14} with only a few additional steps, we defer the proof to Appendix~\ref{app:corr}.

\begin{lemma} \label{lem:corr-decrease}
  Let $\mu$ be any $r$-level SA solution of a $\Delta$-dense {\sc Max $k$-CSP} instance $\cG = (V, \cW, \{P_S\})$. Then, for any $0 < l \leq r - k$, there exists $t \leq l$ such that
  $\E_{T \sim V^t, \phi_T \sim \Sigma^T}[C(\mu|\phi_T)] \leq \frac{k^2 \log q}{l \Delta}.$
\end{lemma}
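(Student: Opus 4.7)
Set $\kappa_t := \E_{T \sim V^t,\, \phi_T \sim \mathcal{X}_T}[\,C(\mu \mid \phi_T)\,]$, interpreting $\phi_T \sim \Sigma^T$ in the lemma statement as drawing $\phi_T$ from the marginal $\mathcal{X}_T$ so that the conditioning $\mu|\phi_T$ is well-defined. It suffices by pigeonhole to show $\min_{0 \le t \le l}\kappa_t \le k^2\log q/(l\Delta)$. The first step is to reduce to a uniform-$S$ version: the density condition $\cW(S)\le 1/(\Delta|V|^k)$ implies $\E_{S \sim \cW}[f(S)]\le \Delta^{-1}\E_{S \sim V^k}[f(S)]$ for every non-negative $f$, so $\kappa_t \le \Delta^{-1}\tilde\kappa_t$, where $\tilde\kappa_t := \E_{T \sim V^t,\,S \sim V^k,\,\phi_T \sim \mathcal{X}_T}[C(\sigma_S\mid \sigma_T)]$. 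It therefore remains to prove $\min_{t \le l}\tilde\kappa_t \le k^2\log q/l$.

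For the uniform-$S$ analysis, expand $C(\sigma_S\mid\sigma_T)$ using iterated chain rule for total correlation. Ordering $S = (v_1,\dots,v_k)$,
\[
C(\sigma_S\mid \sigma_T)\;=\;\sum_{b=2}^{k} I(\sigma_{v_b};\sigma_{v_{<b}}\mid\sigma_T)\;=\;\sum_{1\le a<b\le k} I(\sigma_{v_b};\sigma_{v_a}\mid \sigma_T,\sigma_{v_{<a}}).
\]
After averaging over $v_1,\ldots,v_k,T$ independent and uniform, the conditioning set $T\cup v_{<a}$ becomes a uniform tuple in $V^{t+a-1}$, and the pair $(v_a,v_b)$ becomes two fresh i.i.d.\ uniform vertices, so each summand equals $\mathsf{pairMI}(t+a-1)$, where
\[
\mathsf{pairMI}(s) \;:=\; \E_{T' \sim V^s,\; v,v' \sim V}\bigl[I(\sigma_v;\sigma_{v'}\mid\sigma_{T'})\bigr].
\]
Collecting terms yields $\tilde\kappa_t \le (k-1)\sum_{s=0}^{k-2}\mathsf{pairMI}(t+s)$.

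Finally, bound the total mass of $\mathsf{pairMI}$ via the monotone potential $H_t := \E_{T \sim V^t,\,v\sim V}[H(\sigma_v\mid\sigma_T)]$, which is non-negative, starts at $H_0 \le \log q$, and satisfies the telescoping identity $H_{t-1}-H_t = \mathsf{pairMI}(t-1)$ (a direct expansion of conditional entropy). Telescoping gives $\sum_{s=0}^{l+k-3}\mathsf{pairMI}(s)\le \log q$, and double counting over $t=0,\ldots,l-1$ produces $\sum_{t=0}^{l-1}\tilde\kappa_t \le (k-1)^2\log q \le k^2\log q$; averaging then dominates some $\tilde\kappa_{t^*}\le k^2\log q/l$, whence $\kappa_{t^*}\le k^2\log q/(l\Delta)$. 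The main obstacle is the bookkeeping of the conditioning-set sizes: every $T$, $T\cup v_{<a}$, and $T\cup\{v\}$ appearing in the argument has size at most $l+k-1 \le r-1$, which is exactly what the hypothesis $l \le r-k$ guarantees. Beyond that, the argument is a routine chain-rule-plus-potential manipulation following the template of~\cite{YZ14}, with the $k^2$ (rather than $3^k$) factor coming from the sharper $\binom{k}{2}$-term chain-rule expansion of $C(\sigma_S\mid\sigma_T)$.
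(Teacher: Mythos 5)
Your proof is correct, and it reaches the same $k^2\log q$ bound by a genuinely different decomposition than the paper's. The paper expands $\E_{S\sim V^d}[C_\mu(x_S)]$ into a binomially-weighted sum of \emph{multivariate} interaction informations $I(x_{i_1};\dots;x_{i_m})$ (Lemma~\ref{lem:expected-total-cor}), telescopes the sum over $t$ term-by-term using Lemma~\ref{lem:cond-mutual-info}, and then must undo the expansion via Pascal's identity to re-express the surviving terms as entropies plus total correlations of smaller tuples --- a step that is needed precisely because the intermediate multivariate quantities can be negative and only become sign-controlled after regrouping. You instead apply the chain rule twice to write $C(\sigma_S\mid\sigma_T)$ as a sum of $\binom{k}{2}$ \emph{pairwise} conditional mutual informations, all non-negative, which after averaging collapse to a single one-variable function $\mathsf{pairMI}(s)$; the bound $\sum_s \mathsf{pairMI}(s)\le\log q$ then falls out of the monotone entropy potential $H_t$, and a double count finishes it. Your route trades the paper's combinatorial bookkeeping (binomial coefficients, Pascal's identity, sign arguments) for a cleaner information-theoretic telescoping in the spirit of the Raghavendra--Tan pairwise analysis, at no loss in the constant ($(k-1)^2$ vs.\ the paper's $\binom{k}{2}+\sum_{d\le k-1}d$); the two interpretations of $\phi_T\sim\Sigma^T$ as $\phi_T\sim\mathcal{X}_T$ and the level-counting $|S\cup T|\le l+k-1\le r-1$ also match the paper's conventions, so there is no gap.
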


\subsection{New Bound on Rounding $\kappa$-independent Solution} \label{subsec:cor-to-alg}

In this subsection, we prove our main lemma for this section. For the known conditioning-based algorithms, once the solution is fairly independent, it is easy to show that independent rounding gives a good solution. In particular, Raghavendra-Tan~\cite{RT12} and Yoshida-Zhou\cite{YZ14} proofs, whose measures of correlation are the same as ours\footnote{In~\cite{RT12}, only 2-CSPs were studied and they measure correlation by mutual information of the variables in the constraints.}, conclude this step by using the Pinsker's inequality, which states that, for any distributions $\mathcal{X}$ and $\mathcal{Y}$, $D_{KL}(\mathcal{X}\|\mathcal{Y}) \geq (2 \log 2) \|\mathcal{X} - \mathcal{Y}\|_1^2$ where $\|\mathcal{X} - \mathcal{Y}\|_1 = \sum_{\theta \in \Theta} |\mathcal{X}(\theta) - \mathcal{Y}(\theta)|$ is the $L^1$-distance between $\mathcal{X}$ and $\mathcal{Y}$. Roughly speaking, $\mathcal{X}$ is going to be the distribution in the LP solution whereas $\mathcal{Y}$ is the distribution resulting from independent rounding. Hence, when they bound $D_{KL}(\mathcal{X}\|\mathcal{Y})$ to be at most a small constant $\varepsilon$, it follows immediately that any predicate $f$ with domain $\supp(\mathcal{X})$ in $[0, 1]$ satisfies $|\E_{x \sim \mathcal{X}}[f(x)] - \E_{y \sim \mathcal{Y}}[f(y)]| \leq \sqrt{\varepsilon/(2 \log 2)}$. Thus, if $\E_{x \sim \mathcal{X}}[f(x)]$, the value of the LP solution, is large, then $\E_{y \sim \mathcal{Y}}[f(y)]$, the expected value of a solution from independent rouding, is also large.

While this works great for small constant $\varepsilon$, it does not yield any meaningful bound when $\varepsilon$ is larger than a certain constant. A natural question is whether one can prove any non-trivial bound for super-constant $\varepsilon$. In this regard, we prove the following lemma, which positively answers the question. For convenience, $0^0$ is defined to be 1 throughout this and next subsections and, whenever we write the expression $(\delta^\delta e^{-\kappa})^{\frac{1}{1 - \delta}} (1 - \delta)$ with $\delta = 1$, we define it to be 0.

\begin{lemma} \label{lem:funcbound}
Let $\mathcal{X}$ and $\mathcal{Y}$ be any two probability distributions over a finite domain $\Theta$ such that $D_{KL}(\mathcal{X} \| \mathcal{Y}) \leq \kappa$ and let $f: \Theta \to [0, 1]$ be any function. If $\E_{x \sim \mathcal{X}}[f(x)] = 1 - \delta$, then $\E_{y \sim \mathcal{Y}}[f(y)] \geq \left(\delta^\delta e^{-\kappa}\right)^{\frac{1}{1 - \delta}} (1 - \delta)$.
\end{lemma}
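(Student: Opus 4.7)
The strategy is to reduce the problem to a one-parameter inequality about Bernoulli KL divergence by successive applications of the data processing inequality, and then bound that divergence explicitly. First I would push the problem down to $[0,1]$: let $Z = f(x)$, and let $p_Z, q_Z$ denote the pushforward distributions of $\mathcal{X}$ and $\mathcal{Y}$ under $f$. Since KL divergence can only decrease under a (deterministic) measurable map, $D_{KL}(p_Z \| q_Z) \leq D_{KL}(\mathcal{X} \| \mathcal{Y}) \leq \kappa$, while $\E_{Z \sim p_Z}[Z] = 1-\delta$ and $\E_{Z \sim q_Z}[Z] = \E_{y \sim \mathcal{Y}}[f(y)]$. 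So it suffices to prove the inequality for distributions supported on $[0,1]$.

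Next I would collapse the $[0,1]$-valued case to the Bernoulli case. Draw $U$ uniform in $[0,1]$ independently of $Z$ and set $W = \mathds{1}[U \leq Z]$. Then under either $p_Z$ or $q_Z$, the induced distribution of $W$ is Bernoulli with mean $\E[Z]$. Since randomization is a Markov kernel, data processing again gives $D_{KL}(p_W \| q_W) \leq D_{KL}(p_Z \| q_Z) \leq \kappa$. Writing $c := \E_{y \sim \mathcal{Y}}[f(y)]$, we are reduced to: given $D_{KL}(\mathrm{Ber}(1-\delta) \| \mathrm{Ber}(c)) \leq \kappa$, show that $c \geq (\delta^\delta e^{-\kappa})^{1/(1-\delta)}(1-\delta)$.

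For this final step I would expand the Bernoulli KL divergence and throw away a term that is easy to lower bound:
\begin{align*}
\kappa \;\geq\; (1-\delta)\log\frac{1-\delta}{c} + \delta \log \frac{\delta}{1-c} \;\geq\; (1-\delta)\log\frac{1-\delta}{c} + \delta \log \delta,
\end{align*}
where the second inequality uses $1-c \leq 1$ (and the convention $0 \log 0 = 0$ when $\delta = 0$). Rearranging,
\begin{align*}
\log\frac{1-\delta}{c} \;\leq\; \frac{\kappa - \delta \log \delta}{1-\delta},
\end{align*}
and exponentiating yields exactly $c \geq (1-\delta)(\delta^\delta e^{-\kappa})^{1/(1-\delta)}$, which is the claimed bound.

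The genuinely substantive step is the reduction to Bernoulli; once that is in place, the Bernoulli computation is a one-line manipulation. The main care I expect to need is with edge cases and conventions: when $\delta = 0$ the bound reduces to $c \geq e^{-\kappa}$ (which is the standard fact $\mathcal{Y}(\mathrm{supp}(\mathcal{X})) \geq e^{-\kappa}$ and must be handled without dividing by $1-\delta - 0 = 1$ issues, but works cleanly with $0^0 = 1$), and when $\delta = 1$ the right-hand side is $0$ by the stated convention so the statement is vacuous. Everything else is routine.
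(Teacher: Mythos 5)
Your proof is correct, but it takes a genuinely different route from the paper's. The paper argues directly on $\Theta$: it introduces the tilted distributions $\mathcal{Z}(\theta) = \mathcal{X}(\theta)f(\theta)/(1-\delta)$ and $\mathcal{T}(\theta) = \mathcal{X}(\theta)(1-f(\theta))/\delta$, lower-bounds $\E_{y \sim \mathcal{Y}}[f(y)]$ by a weighted geometric mean via the weighted A.M.--G.M.\ inequality, factors out $e^{-D_{KL}(\mathcal{X}\|\mathcal{Y})}$, and disposes of the residual term using $D_{KL}(\mathcal{T}\|\mathcal{Y}) \geq 0$. You instead apply the data processing inequality twice --- once for the deterministic pushforward under $f$ and once for the randomized binarization $W = \mathds{1}[U \leq Z]$ --- to reduce everything to $D_{KL}(\mathrm{Ber}(1-\delta)\,\|\,\mathrm{Ber}(c)) \leq \kappa$, and then the bound falls out of dropping the nonnegative term $-\delta\log(1-c)$ from the Bernoulli divergence; I verified that the algebra recovers exactly $c \geq (1-\delta)(\delta^\delta e^{-\kappa})^{1/(1-\delta)}$, and the degenerate cases ($c \in \{0,1\}$, $\delta \in \{0,1\}$) are all handled consistently with the paper's conventions. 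Your version is more modular and makes the information-theoretic content transparent (the lemma is really a statement about the Bernoulli reverse Pinsker direction), and it explains \emph{why} the bound has the form it does; the paper's version is self-contained and elementary, needing only A.M.--G.M.\ and nonnegativity of KL rather than the data processing inequality as a black box. At a deeper level the two are cousins --- the paper's splitting of $\mathcal{X}$ into $\mathcal{Z}$ and $\mathcal{T}$ is precisely the channel you binarize through --- but as written they are distinct arguments and both are complete.
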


\begin{proofof}[Lemma~\ref{lem:funcbound}]
  We assume without loss of generality that $\delta \notin \{0, 1\}$ since, when $\delta = 0$, we can modify $f$ infinitesimally small and take the limit of the bound and, when $\delta = 1$, the bound is trivial.

  Let $\mathcal{Z}$ and $\mathcal{T}$ be two probability distributions on $\Theta$ such that $\mathcal{Z}(\theta) = \frac{\mathcal{X}(\theta)f(\theta)}{1 - \delta}$ and $\mathcal{T}(\theta) = \frac{\mathcal{X}(\theta)(1 - f(\theta))}{\delta}$. Observe that $\mathcal{Z}$ and $\mathcal{T}$ are indeed valid distributions on $\Theta$ since $\E_{\theta \sim \mathcal{X}}[f(\theta)] = 1 - \delta$. Observe that $\supp(\mathcal{Z}), \supp(\mathcal{T}) \subseteq \supp(\mathcal{X})$, which is in turn contained in $\supp(\mathcal{Y})$ since $D_{KL}(\mathcal{X} \| \mathcal{Y}) \ne \infty$.

From Weighted A.M.-G.M. inequality, we have
\begin{align*}
\E_{y \sim \mathcal{Y}}[f(y)] = \sum_{\theta \in \Theta} \mathcal{Y}(\theta)f(\theta)
&\geq \sum_{\theta \in \supp(\mathcal{Z})} \mathcal{Z}(\theta)\left(\frac{\mathcal{Y}(\theta)f(\theta)}{\mathcal{Z}(\theta)}\right) \\
(\text{Weighted A.M.-G.M. inequality}) &\geq \prod_{\theta \in \supp(\mathcal{Z})} \left(\frac{\mathcal{Y}(\theta)f(\theta)}{\mathcal{Z}(\theta)}\right)^{\mathcal{Z}(\theta)} \\
&= (1 - \delta) \left(\prod_{\theta \in \supp(\mathcal{Z})} \left(\frac{\mathcal{Y}(\theta)}{\mathcal{X}(\theta)}\right)^{\mathcal{X}(\theta)f(\theta)}\right)^{\frac{1}{1 - \delta}}. \\
\end{align*}

We will next bound $\prod_{\theta \in \supp(\mathcal{Z})} \left(\frac{\mathcal{Y}(\theta)}{\mathcal{X}(\theta)}\right)^{\mathcal{X}(\theta)f(\theta)}$ by writing it in term of $D_{KL}(\mathcal{X}\|\mathcal{Y})$ and a small term which will be bounded later.
\begin{align*}
\prod_{\theta \in \supp(\mathcal{Z})} \left(\frac{\mathcal{Y}(\theta)}{\mathcal{X}(\theta)}\right)^{\mathcal{X}(\theta)f(\theta)}
&= \left(\prod_{\theta \in \supp(\mathcal{X})} \left(\frac{\mathcal{Y}(\theta)}{\mathcal{X}(\theta)}\right)^{\mathcal{X}(\theta)}\right)\left(\prod_{\theta \in \supp(\mathcal{\mathcal{T}})} \left(\frac{\mathcal{X}(\theta)}{\mathcal{Y}(\theta)}\right)^{\mathcal{X}(\theta)(1 - f(\theta))} \right) \\
&= \frac{1}{e^{D_{KL}(\mathcal{X}\|\mathcal{Y})}} \left(\prod_{\theta \in \supp(\mathcal{\mathcal{T}})} \left(\frac{\mathcal{X}(\theta)}{\mathcal{Y}(\theta)}\right)^{\mathcal{X}(\theta)(1 - f(\theta))} \right) \\
(\text{Since } D_{KL}(\mathcal{X}\|\mathcal{Y}) \leq \kappa)&\geq e^{-\kappa} \left(\prod_{\theta \in \supp(\mathcal{\mathcal{T}})} \left(\frac{\mathcal{X}(\theta)}{\mathcal{Y}(\theta)}\right)^{\mathcal{X}(\theta)(1 - f(\theta))} \right) \\
\end{align*}

Intuitively, the term $\prod_{\theta \in \supp(\mathcal{\mathcal{T}})} \left(\frac{\mathcal{X}(\theta)}{\mathcal{Y}(\theta)}\right)^{\mathcal{X}(\theta)(1 - f(\theta))}$ should not be much smaller than one since the sum of the exponent is just $\sum_{\theta \in \supp(\mathcal{\mathcal{T}})} \mathcal{X}(\theta)(1 - f(\theta)) = \delta$. Indeed, this term is small as we can bound it as follows:
\begin{align*}
\prod_{\theta \in \supp(\mathcal{\mathcal{T}})} \left(\frac{\mathcal{X}(\theta)}{\mathcal{Y}(\theta)}\right)^{\mathcal{X}(\theta)(1 - f(\theta))}
&= \left(\prod_{\theta \in \supp(\mathcal{\mathcal{T}})} \left(\frac{\delta}{1 - f(\theta)} \cdot \frac{\mathcal{T}(\theta)}{\mathcal{Y}(\theta)}\right)^{\mathcal{T}(\theta)}\right)^{\delta} \\
&\geq \left(\prod_{\theta \in \supp(\mathcal{\mathcal{T}})} \left(\delta \cdot \frac{\mathcal{T}(\theta)}{\mathcal{Y}(\theta)}\right)^{\mathcal{T}(\theta)}\right)^{\delta} \\
&= \delta^\delta \left(e^{D_{KL}(\mathcal{T}\|\mathcal{Y})}\right)^\delta \\
&\geq \delta^\delta
\end{align*}
The last inequality comes from the fact that the informational divergence of any two distributions is no less than zero.

Combining the three inequalities, we have
$\E_{\theta \sim \mathcal{Y}}[f(\theta)]
\geq (1 - \delta) \left(e^{-\kappa}\delta^\delta\right)^{\frac{1}{1 - \delta}},$
as desired.
\end{proofof}

Now, we will use Lemma~\ref{lem:funcbound} to give a new bound for the value of the output from independent rounding on a $k$-level $\kappa$-independent solution of the Sherali-Adams Hierarchy.

\begin{lemma} \label{lem:indrounding}
If $\{\mathcal{X}_S\}$ is a $k$-level $\kappa$-independent SA solution of value $1 - \delta$ for an instance $(V, \cW, \{P_S\})$ of {\sc Max $k$-CSP}, then independent rounding gives an assignment of value at least $(\delta^\delta e^{-\kappa})^{\frac{1}{1 - \delta}} (1 - \delta)$.
\end{lemma}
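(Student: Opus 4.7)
The plan is to deduce the lemma from a single global application of Lemma~\ref{lem:funcbound} to a carefully chosen pair of coupled distributions, avoiding the complication of invoking it per-constraint and then having to push the resulting bound (which is nonlinear in both $\delta$ and $\kappa$) through an outer expectation over $S \sim \cW$.

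Concretely, I would introduce two distributions on pairs $(S, \phi_S)$: under $\mathcal{X}$, draw $S \sim \cW$ and then $\phi_S \sim \mathcal{X}_S$; under $\mathcal{Y}$, draw $S = (x_{i_1}, \dots, x_{i_k}) \sim \cW$ and then draw $\phi_S$ from the product of marginals $\mathcal{X}_{i_1} \times \cdots \times \mathcal{X}_{i_k}$. Set $f(S, \phi_S) := P_S(\phi_S) \in [0,1]$. Since $\mathcal{X}$ and $\mathcal{Y}$ share the same $S$-marginal, $D_{KL}(\mathcal{X}\|\mathcal{Y})$ collapses to the conditional inner divergences, which by the very definition of total correlation are exactly $C_\mu(x_S)$; hence $D_{KL}(\mathcal{X}\|\mathcal{Y}) = \E_{S \sim \cW}[C_\mu(x_S)] = C(\mu) \leq \kappa$. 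Meanwhile, by definition of $val_{SA}$, $\E_{(S,\phi_S) \sim \mathcal{X}}[f] = 1 - \delta$.

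The next step is to observe that $\E_{(S,\phi_S) \sim \mathcal{Y}}[f]$ is a lower bound on the expected value of (randomized) independent rounding on $\mu$. This is exactly the content of the footnote in the prelude to the section: when a tuple $S$ contains a repeated variable, $P_S$ vanishes on any inconsistent assignment, so sampling each coordinate of $\phi_S$ independently (as $\mathcal{Y}$ does) can only undercount relative to first drawing a single global assignment $\phi : V \to \Sigma$ from the product $\prod_{x} \mathcal{X}_x$ and then restricting to $S$.

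Applying Lemma~\ref{lem:funcbound} to the triple $(\mathcal{X}, \mathcal{Y}, f)$ then immediately yields $\E_{\mathcal{Y}}[f] \geq \left(\delta^\delta e^{-\kappa}\right)^{1/(1-\delta)}(1-\delta)$, and the standard conditional-expectation derandomization already noted in the section converts this expected bound into an actual assignment of at least this value. I do not foresee a substantive obstacle: the whole proof hinges on the coupling choice, which lets total correlation slot into the KL hypothesis of Lemma~\ref{lem:funcbound} exactly, so no further convexity manipulations (e.g., Jensen over constraints in $\delta_S$ and $\kappa_S$) are needed.
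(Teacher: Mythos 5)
Your proof is correct, and it takes a genuinely different (and cleaner) route than the paper's. The paper applies Lemma~\ref{lem:funcbound} \emph{per constraint}, obtaining $\E_{\phi_S \sim \mathcal{X}_{i_1}\times\cdots\times\mathcal{X}_{i_k}}[P_S(\phi_S)] \geq (\delta_S^{\delta_S}e^{-\kappa_S})^{1/(1-\delta_S)}(1-\delta_S)$ for each $S$, and then must push this nonlinear expression through $\E_{S\sim\cW}$; that aggregation step is carried out by a second weighted A.M.-G.M. computation that essentially replays the proof of Lemma~\ref{lem:funcbound} at the level of constraints (introducing auxiliary distributions $\mathcal{Y}(S) \propto \cW(S)(1-\delta_S)$ and $\mathcal{Z}(S)\propto\cW(S)\delta_S$). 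Your coupling argument collapses both steps into a single global invocation of Lemma~\ref{lem:funcbound}: because the two joint distributions on $(S,\phi_S)$ share the $S$-marginal $\cW$, the chain rule gives $D_{KL}(\mathcal{X}\|\mathcal{Y}) = \E_{S\sim\cW}\bigl[D_{KL}(\mathcal{X}_S\,\|\,\mathcal{X}_{i_1}\times\cdots\times\mathcal{X}_{i_k})\bigr] = C(\mu) \leq \kappa$, and $\E_{\mathcal{Y}}[f]$ is exactly the quantity lower-bounding the expected value of independent rounding (with the repeated-variable convention handled identically in both proofs, via the footnote; the support condition $\supp(\mathcal{X})\subseteq\supp(\mathcal{Y})$ follows from the SA marginal-consistency constraints). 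What your version buys is brevity and transparency — the final bound visibly \emph{is} one instance of Lemma~\ref{lem:funcbound} rather than the fixed point of iterating it; what the paper's version records in addition are the local quantities $\delta_S,\kappa_S$, which are not used elsewhere.
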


\begin{proof}
  Again, we assume without loss of generality that $\delta \notin \{0, 1\}$.
  
  For each $k$-tuple $S = (x_{i_1}, \dots, x_{i_k})$, let $\kappa_S = D_{KL}(\mathcal{X}_S \| \mathcal{X}_{i_1} \times \cdots \times \mathcal{X}_{i_k})$ and $\delta_S = 1 - \E_{\phi_S \sim \mathcal{X}_S}[P_S(\phi_S)]$. Recall that the value of $\{\mathcal{X}_S\}$ in the SA relaxation is $\E_{S \sim \cW}[\E_{\phi_S \sim \mathcal{X}_S}[P_S(\phi_S)]] = (1 - \delta)$. Hence, we have $\E_{S \sim \cW}[\delta_S] = \delta$. Moreover, since $\{\mathcal{X}_S\}$ is a $\kappa$-independent solution, we have $\E_{S \sim \cW}[\kappa_S] \leq \kappa$.

  As stated earlier, the independent rounding algorithm gives an assignment of value at least $$\E_{S = (x_{i_1}, \dots, x_{i_k}) \sim \cW} \left[\E_{\phi_S \sim \mathcal{X}_{i_1} \times \cdots \times \mathcal{X}_{i_k}}\left[P_S(\phi_S)\right]\right].$$ From Lemma~\ref{lem:funcbound}, we have $\E_{\phi_S \sim \mathcal{X}_{i_1} \times \cdots \times \mathcal{X}_{i_k}}\left[P_S(\phi_S)\right] \geq (\delta_S^{\delta_S}e^{-\kappa_S})^{\frac{1}{1 - \delta_S}}(1 - \delta_S).$ Thus, the assignment from the rounding procedure has value at least $\E_{S \sim \cW}[(\delta_S^{\delta_S}e^{-\kappa_S})^{\frac{1}{1 - \delta_S}}(1 - \delta_S)]$. 

Next, let $\mathcal{Y}$ and $\mathcal{Z}$ be distributions on $V^k$ defined by $\mathcal{Y}(S) = \frac{\cW(S)(1 - \delta_S)}{(1 - \delta)}$ and $\mathcal{Z}(S) = \frac{\cW(S)\delta_S}{\delta}$. $\mathcal{Y}$ and $\mathcal{Z}$ are valid distributions since $\E_{S \sim \cW}[\delta_S] = \delta$.

We can now bound $\E_{S \sim \cW}[(\delta_S^{\delta_S}e^{-\kappa_S})^{\frac{1}{1 - \delta_S}}(1 - \delta_S)]$ as follows:
\begin{align*}
\E_{S \sim \cW}[(\delta_S^{\delta_S}e^{-\kappa_S})^{\frac{1}{1 - \delta_S}}(1 - \delta_S)]
&= \sum_{S \in V^k} \cW(S)(\delta_S^{\delta_S}e^{-\kappa_S})^{\frac{1}{1 - \delta_S}}(1 - \delta_S) \\
&= (1 - \delta) \sum_{S \in \supp(\mathcal{Y})} \mathcal{Y}(S)(\delta_S^{\delta_S}e^{-\kappa_S})^{\frac{1}{1 - \delta_S}} \\
(\text{Weighted A.M.-G.M. inequality}) &\geq (1 - \delta) \prod_{S \in \supp(\mathcal{Y})} \left(\delta_S^{\delta_S}e^{-\kappa_S}\right)^\frac{\mathcal{Y}(S)}{1 - \delta_S} \\
&= (1 - \delta) \left(\prod_{S \in \supp(\mathcal{Y})} \left(\delta_S^{\delta_S}e^{-\kappa_S}\right)^{\cW(S)}\right)^\frac{1}{1 - \delta} \\
(\text{Since } \E_{S \sim \cW}[\kappa_S] = \kappa \text{ and } \supp(\mathcal{Y}) \subseteq \supp(\cW)) &\geq (1 - \delta) \left(e^{-\kappa} \prod_{S \in \supp(\mathcal{Y})} \delta_S^{\cW(S)\delta_S}\right)^\frac{1}{1 - \delta} \\
&= (1 - \delta) \left(e^{-\kappa} \prod_{S \in \supp(\mathcal{Z})} \delta_S^{\cW(S)\delta_S}\right)^\frac{1}{1 - \delta}
\end{align*}
The last equality is true because $\delta_S = 1$ for every $S \in \supp(\mathcal{Z}) - \supp(\mathcal{Y})$ and $\delta_S = 0$ for every $S \in \supp(\mathcal{Y}) - \supp(\mathcal{Z})$.

We can now write $\prod_{S \in \supp(\mathcal{Z})} \delta_S^{\cW(S)\delta_S}$ as
\begin{align*}
\prod_{S \in \supp(\mathcal{Z})} \delta_S^{\cW(S)\delta_S}
&= \left(\prod_{S \in \supp(\mathcal{Z})} \left(\delta \cdot \frac{\mathcal{Z}(S)}{\cW(S)}\right)^{\mathcal{Z}(S)} \right)^{\delta} \\
&= \delta^\delta (e^{D_{KL}(\mathcal{Z} \| \mathcal{X})})^\delta \\
(\text{Since } D_{KL}(\mathcal{Z} \| \mathcal{X}) \geq 0)&\geq \delta^\delta.
\end{align*}

Combining the two inequality yields $\E_{S \sim \cW}[(\delta_S^{\delta_S}e^{-\kappa_S})^{\frac{1}{1 - \delta_S}}(1 - \delta_S)] \geq (1 - \delta)(e^{-\kappa}\delta^\delta)^\frac{1}{1 - \delta}$, which completes the proof of the lemma.
\end{proof}

\subsection{New Approximation Guarantee for the Algorithm} \label{subsec:alg}

With Lemma~\ref{lem:corr-decrease} and Lemma~\ref{lem:indrounding} set up, we now prove the algorithmic guarantee for Algorithm~\ref{alg:dense-csp}.

\begin{theorem} \label{thm:alg-dense-csp}
  For any {\sc Max $k$-CSP} instance $\cG$  of value $1 - \delta > 0$ and density $\Delta > 0$, Algorithm~\ref{alg:dense-csp} runs in time $N^{O\left(\frac{k i}{(1 - \delta)\Delta}\right)}$ and outputs an assignment of value at least $(1 - \delta)\delta^{\frac{\delta}{1 - \delta}}/q^{1/i}$.
\end{theorem}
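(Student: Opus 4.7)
My plan is to combine the SAC relaxation's built-in conditioning guarantee with the total-correlation decrease of Lemma~\ref{lem:corr-decrease} and the rounding bound of Lemma~\ref{lem:indrounding}; the glue is Jensen's inequality together with the monotonicity of $h(\delta) := (1-\delta)\delta^{\delta/(1-\delta)}$.

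\emph{Running time.} Any feasible assignment of value $1-\delta$ induces a feasible SAC solution of the same value (take $\mathcal{X}_S$ to be the Dirac measure on that assignment, in which case every nontrivial conditioning has zero weight), so the SAC optimum $\lambda$ is always at least $1-\delta$. Thus the stopping condition $(r-k)\lambda \geq k^2 i/\Delta$ is reached at some $r = O(k + k^2 i/((1-\delta)\Delta))$. Both solving the $r$-level SAC and the double loop over $(T,\phi_T)$ with $|T| \leq r-k$ cost $(nq)^{O(r)}$, and since $N \leq (nq)^k$ this amounts to $N^{O(ki/((1-\delta)\Delta))}$.

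\emph{Correctness.} Let $r$ and $\mu$ be the values at termination. Apply Lemma~\ref{lem:corr-decrease} with $l := r-k$ to get some $t \leq l$ with
\[
\E_{T \sim V^t,\,\phi_T \sim \Sigma^T}[C(\mu|\phi_T)] \leq \frac{k^2 \log q}{(r-k)\Delta} \leq \frac{\lambda \log q}{i}.
\]
For every $(T, \phi_T)$ with $\mathcal{X}_T(\phi_T) > 0$, the SAC constraint gives $val(\mu|\phi_T) \geq \lambda$; writing $\delta_{T,\phi_T} := 1 - val(\mu|\phi_T) \leq 1-\lambda$ and $\kappa_{T,\phi_T} := C(\mu|\phi_T)$, Lemma~\ref{lem:indrounding} (applicable since $r - |T| \geq k$) ensures that the derandomized independent rounding of $\mu|\phi_T$ produces an assignment of value at least $(1-\delta_{T,\phi_T})\delta_{T,\phi_T}^{\delta_{T,\phi_T}/(1-\delta_{T,\phi_T})}\, e^{-\kappa_{T,\phi_T}/(1-\delta_{T,\phi_T})}$. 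A short calculus check gives $\frac{d}{d\delta}\log h(\delta) = \log \delta/(1-\delta)^2 < 0$ on $(0,1)$, so $h$ is decreasing; together with $1-\delta_{T,\phi_T} \geq \lambda$ (which also forces $e^{-\kappa_{T,\phi_T}/(1-\delta_{T,\phi_T})} \geq e^{-\kappa_{T,\phi_T}/\lambda}$) this lower-bounds the rounded value by $\lambda(1-\lambda)^{(1-\lambda)/\lambda}\, e^{-\kappa_{T,\phi_T}/\lambda}$.

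Finally, since the algorithm returns the best rounding over all $(T,\phi_T)$, its output value is at least the average over $T,\phi_T$ drawn as in Lemma~\ref{lem:corr-decrease}. Jensen's inequality applied to the convex function $e^{-x/\lambda}$ and the correlation bound give
\[
\E_{T,\phi_T}\!\left[e^{-\kappa_{T,\phi_T}/\lambda}\right] \geq e^{-\E[\kappa_{T,\phi_T}]/\lambda} \geq e^{-\log q/i} = 1/q^{1/i},
\]
so the output has value at least $\lambda(1-\lambda)^{(1-\lambda)/\lambda}/q^{1/i}$. A second invocation of the monotonicity of $h$, now via $1-\lambda \leq \delta$, yields $\lambda(1-\lambda)^{(1-\lambda)/\lambda} = h(1-\lambda) \geq h(\delta) = (1-\delta)\delta^{\delta/(1-\delta)}$, which is the claimed bound. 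The one delicate point is threading the parameters so that the $\lambda$ arising from the stopping condition precisely cancels the $\lambda$ in the rounding exponent, producing exactly the clean $1/q^{1/i}$ factor; the remainder is a routine blend of the two lemmas.
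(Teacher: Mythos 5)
Your proof is correct and follows essentially the same route as the paper's: the SAC optimum is at least $1-\delta$ (giving the running-time bound), and the value bound comes from combining Lemma~\ref{lem:corr-decrease} with Lemma~\ref{lem:indrounding}. The only (harmless) differences are refinements the paper leaves implicit --- you verify the monotonicity of $h(\delta)=(1-\delta)\delta^{\delta/(1-\delta)}$ explicitly, and you average the rounding guarantee over all conditionings via Jensen's inequality rather than extracting a single low-correlation pair $(T,\phi_T)$ from the expectation bound before rounding.
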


\begin{proof}
  Observe that the running time is $(nq)^{O(r)}$ where $r$ is the maximum level of the SAC relaxation solved by the algorithm. Since the program is a relaxation of {\sc Max $k$-CSP}, $\lambda$ is always at least $1 - \delta$. By the condition of the loop, $r$ is at most $1 + k + \frac{k^2 i}{(1 - \delta)\Delta}$. Hence, the running time of the algorithm is $N^{O\left(\frac{k i}{(1 - \delta)\Delta}\right)}$.

  Next, we will argue about the value of the output assignment. From Lemma~\ref{lem:corr-decrease}, there exists a set $T \subseteq V$ of size at most $\frac{k^2 i}{\lambda\Delta}$ and an assignment $\phi_T \in \Sigma^T$ such that $\mu|\phi_T$ is an $(\lambda\log q / i)$-independent solution. Moreover, from how SAC program is defined, we know that $val_{SA}(\mu|\phi_T) \geq \lambda$. As a result, from Lemma~\ref{lem:indrounding}, independent rounding on $\mu|\phi_T$ gives an assignment of value at least $$((1 - \lambda)^{1 - \lambda} e^{-\lambda\log q / i})^{\frac{1}{\lambda}}\lambda = \lambda(1 - \lambda)^{\frac{1 - \lambda}{\lambda}}/q^{1/i}.$$ Finally, since $|T| \leq \frac{k^2 i}{\lambda\Delta} \leq r - k$, it is considered in the conditioning step of the algorithm. Thus, the output assignment is of value at least $\lambda(1 - \lambda)^{\frac{1 - \lambda}{\lambda}}/q^{1/i} \geq (1 - \delta)\delta^{\frac{\delta}{1 - \delta}}/q^{1/i}$.
\end{proof}

Observe that, when the instance is satisfiable, $\delta = 0$ and the value of the output assignment is at least $1/q^{1/i}$. By taking $i$ to be large enough, one arrives at a quasi-polynomial time approximation scheme (QPTAS) for dense {\sc Max $k$-CSP}, as stated below. We note that our algorithm unfortunately does not give a QPTAS for the nonsatisfiable case since we also lose an additional factor of $\delta^{\frac{\delta}{1 - \delta}}$ in the value of the output solution.

\begin{corollary} \label{thm:qptas-dense-csp}
  There exists an algorithm that, given a satisfiable $\Delta$-dense {\sc Max $k$-CSP} instance $\cG$ and any constant $1/2 > \varepsilon > 0$, runs in $N^{O\left(\frac{k \log q}{\varepsilon\Delta}\right)}$ time and output an assignment to $\cG$ of value at least $1 - \varepsilon$.
\end{corollary}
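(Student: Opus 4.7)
The plan is to reduce the corollary essentially immediately to Theorem~\ref{thm:alg-dense-csp}, which is the main algorithmic guarantee already proved for Algorithm~\ref{alg:dense-csp}. Since the input instance $\cG$ is satisfiable, we have $\delta = 0$, so instantiating Theorem~\ref{thm:alg-dense-csp} with this $\delta$ and the convention $0^0 = 1$ stipulated in Subsection~\ref{subsec:cor-to-alg} tells us that, for any positive integer $i$, the algorithm runs in time $N^{O(ki/\Delta)}$ and outputs an assignment of value at least $1/q^{1/i}$.

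Next I would choose $i$ large enough that $1/q^{1/i} \geq 1 - \varepsilon$. Taking logarithms, this is equivalent to $(\log q)/i \leq \log\!\bigl(1/(1-\varepsilon)\bigr)$. Using the elementary inequality $\log\!\bigl(1/(1-\varepsilon)\bigr) \geq \varepsilon$, which follows from $e^{-\varepsilon} \geq 1-\varepsilon$, it suffices to pick
\[
  i \;=\; \Bigl\lceil \frac{\log q}{\varepsilon} \Bigr\rceil.
\]
With this choice the output value is at least $1-\varepsilon$ and the running time becomes $N^{O(ki/\Delta)} = N^{O(k \log q / (\varepsilon \Delta))}$, matching the bound in the corollary.

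There is essentially no obstacle; the only thing to be slightly careful about is the $0^0 = 1$ convention used inside the bound $(1-\delta)\delta^{\delta/(1-\delta)}/q^{1/i}$ when $\delta = 0$, which is handled by the conventions fixed at the start of Subsection~\ref{subsec:cor-to-alg}. Everything else is a one-line substitution followed by the logarithmic estimate above, so the proof should fit in a short paragraph.
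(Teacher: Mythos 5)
Your proposal is correct and follows essentially the same route as the paper: both invoke Theorem~\ref{thm:alg-dense-csp} with $\delta = 0$ and choose $i = \Theta(\log q/\varepsilon)$, differing only in the cosmetic choice of $i$ (the paper takes $i = \log q/\log(1+\varepsilon)$ and bounds it by $O(\log q/\varepsilon)$ via Bernoulli's inequality, whereas you take $i = \lceil \log q/\varepsilon\rceil$ directly and verify $1/q^{1/i} \geq 1-\varepsilon$ via $\log(1/(1-\varepsilon)) \geq \varepsilon$). Both estimates are valid, so nothing further is needed.
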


\begin{proof}
  Run Algorithm~\ref{alg:dense-csp} with $i = \log q / \log (1 + \varepsilon)$. From Theorem~\ref{thm:alg-dense-csp}, the output assignment has value at least $q^{1/i} = 1/(1 + \varepsilon) \geq 1 - \varepsilon$ while the running time is $N^{O\left(\frac{k i}{\Delta}\right)}$. Finally, we conclude by observing that $i = \log q / \log (1 + \varepsilon) \leq O(\log q / \varepsilon)$ where the inequality follows from the Bernoulli's inequality.
\end{proof}

\subsection{Approximation Algorithm for {\sc Densest $k$-Subhypergraph}}

In this subsection, we provide our algorithm for {\sc Densest $k$-Subhypergraph}, as stated below.

\begin{corollary} \label{cor:dense-hypergraph}
  There exists a randomized algorithm that, given an integer $i > 0$ and a $d$-uniform hypergraph whose densest $k$-subhypergraph has density $\Delta$, runs in time $n^{2^{O(d \log d)}i/\Delta}$ and outputs a $k$-subhypergraph with density at least $\frac{\Delta}{2^{O(d \log d)}n^{1/i}}$ with high probability.
\end{corollary}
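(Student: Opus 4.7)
The plan is to reduce \textsc{Densest $k$-Subhypergraph} on the input $d$-uniform hypergraph $H = (V, E)$ with $|V| = n$ to a fully-dense instance of \textsc{Max $d$-CSP} and then invoke Theorem~\ref{thm:alg-dense-csp}. The reduction (the natural hypergraph analogue of the reduction from \textsc{Densest $k$-Subgraph} to \textsc{Max 2-CSP} already cited in the paper) builds an instance $\cG$ with variable set $[k]$ (one variable per ``slot'' of the sought $k$-subset), alphabet $V$, uniform distribution over $[k]^d$, and predicate $P_{(i_1,\dots,i_d)}(v_1, \dots, v_d) = \mathds{1}[\{v_1, \dots, v_d\} \in E]$. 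By construction $\cG$ is fully dense. For completeness, if $S^* \subseteq V$ is a $k$-subset of density $\Delta$, any bijection $\phi: [k] \to S^*$ satisfies exactly $d! \cdot |E(S^*)|$ of the $k^d$ ordered constraints, giving $val(\phi) = d!\,|E(S^*)|/k^d = \Delta$, so $val(\cG) \geq \Delta$.

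Next, I would apply Theorem~\ref{thm:alg-dense-csp} to $\cG$ with CSP arity $d$, alphabet size $n$, and CSP density $1$. Its running time is $N^{O(di/\Delta)}$ where $N = (nk)^d \leq n^{2d}$, which fits within the claimed $n^{2^{O(d \log d)} i/\Delta}$ bound, and its output is an assignment $\phi : [k] \to V$ of CSP value at least
\[
  \Delta \cdot (1-\Delta)^{(1-\Delta)/\Delta} \cdot n^{-1/i} \;\geq\; \Delta \cdot e^{-1} \cdot n^{-1/i},
\]
using the elementary inequality $(1-x)^{(1-x)/x} \geq 1/e$ for $x \in (0, 1]$.

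The final and main step is to convert the CSP assignment $\phi$ into an actual $k$-subhypergraph. Setting $S = \phi([k])$ and padding arbitrarily to exactly $k$ vertices, a direct calculation with $n_u := |\phi^{-1}(u)|$ yields
\[
  v(\phi) \cdot k^d \;=\; d! \sum_{e \in E(S)} \prod_{u \in e} n_u,
\]
whence density$(S) \geq v(\phi)/n_{\max}^d$. The main obstacle lies in controlling the collision factor $n_{\max}^d$, since $n_{\max}$ could in principle be as large as $\Theta(k)$ and then the naive bound becomes vacuous. To handle this I would exploit the randomized nature of the independent-rounding step inside Algorithm~\ref{alg:dense-csp}: after conditioning on a small subset of variables to make the SA solution $\kappa$-independent via Lemma~\ref{lem:corr-decrease}, one can sample each slot value from the conditioned marginals $\mathcal{X}_i$ with rejection sampling enforcing distinctness, losing only a $d^{O(d)} = 2^{O(d \log d)}$ factor in the expected CSP value of the resulting injective assignment. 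Boosting success probability by $O(\log n)$ independent repetitions and returning the densest extracted $k$-subhypergraph then yields density $\Delta/(2^{O(d \log d)} n^{1/i})$ with high probability, matching the claimed bound. The other ingredients---the reduction itself, the completeness calculation, and the direct invocation of Theorem~\ref{thm:alg-dense-csp}---are routine.
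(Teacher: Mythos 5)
Your reduction and completeness calculation are fine, and the invocation of Theorem~\ref{thm:alg-dense-csp} is the right next move, but the step you yourself identify as ``the main obstacle'' --- controlling collisions --- is where the argument genuinely breaks. The claim that rejection sampling enforcing distinctness loses only a $d^{O(d)}$ factor in expected CSP value is false in general: the conditioned marginals $\mathcal{X}_i$ can all be supported on the same $d$ vertices forming a single edge $e$, in which case plain independent rounding achieves CSP value about $d!/d^d$, yet no injective assignment can place more than $d$ of the $k$ slots inside $e$, so the best injective assignment has CSP value $O(d!\,(d/k)^d)$ --- a loss of order $(k/d)^d$, not $d^{O(d)}$. (Indeed, in that example rejection sampling for global distinctness never even succeeds.) The underlying issue is that without any injectivity structure in the instance, the CSP optimum can vastly exceed the density of any actual $k$-subhypergraph, so the chain ``CSP value of rounded assignment $\Rightarrow$ density of extracted subhypergraph'' cannot be closed by a post-hoc fix to the rounding.

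The paper circumvents this by randomizing the \emph{reduction} rather than the rounding (Lemma~\ref{lem:dense-hyper-reduction}): it first partitions $V$ uniformly at random into $k$ parts $S_1,\dots,S_k$ and defines the predicate to accept only when each $v_j$ lies in the part indexed by its variable. This makes every satisfied constraint correspond to a genuine edge on $k$ distinct representatives, so \emph{any} assignment $\phi$ yields a $k$-subhypergraph of density at least $val(\phi)$ with no collision loss; the price is a soundness argument showing that, with probability $1/2$, the random partition preserves a $2^{-O(d\log d)}$ fraction of the optimal density (via a second-moment/Markov analysis of how the optimal $S^*$ intersects the parts). A second, more minor gap in your write-up: $\Delta$ is not given to the algorithm, so you cannot directly set the relaxation level to $O(d^2 i/\Delta)$; the paper handles this with a doubling search over a threshold $\tau$ together with $n$ independent repetitions of the random reduction to boost the success probability.
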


Charikar \etal~\cite{CHK} discovered a simple randomized polynomial-time reduction from {\sc Densest $k$-Subgraph} to {\sc Max 2-CSP} that preserves approximation by a constant factor. This reduction was used in~\cite{MM15} to give an approximation algorithm for {\sc Densest $k$-Subgraph} when the optimal subgraph is sufficiently dense. By modifying the reduction slightly, we can also turn our algorithm to an approximation algorithm for {\sc Densest $k$-Subhypergraph} on $d$-uniform hypergraph whose optimal subhypergraph is sufficiently dense. The properties of the reduction are described in the following lemma.

\begin{lemma} \label{lem:dense-hyper-reduction}
  There is a randomized polynomial-time algorithm that, given a $d$-uniform hypergraph $(V, E)$ on $n$ vertices and an integer $k \leq n$, produces a fully-dense {\sc Max $d$-CSP} instance $\cG = (V', (V')^d, \{P_S\})$ such that
  \begin{itemize}
    \item the alphabet size of $\cG$ is $n$ and the number of variables $|V'|$ is $k$,
    \item there is a polynomial-time algorithm that, given any assignment $\phi$ of $\cG$, outputs $k$-subhypergraph of $(V, E)$ whose density is at least $val_{\cG}(\phi)$, and,
    \item if $k \geq 8d^2$, then, with probability at least $1/2$, $val(\cG) \geq \Delta/2^{O(d \log d)}$.
  \end{itemize}
\end{lemma}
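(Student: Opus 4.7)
I would use the natural random-partition reduction, generalising the Charikar--Hajiaghayi--Karloff reduction~\cite{CHK} from graphs to $d$-uniform hypergraphs. Sample $\pi : V \to [k]$ uniformly at random and set $V_i := \pi^{-1}(i)$. The output CSP $\cG$ has variables $V' = \{v'_1, \ldots, v'_k\}$, alphabet $\Sigma = V$, the uniform distribution on $(V')^d$, and predicates
\[
  P_{(v'_{i_1}, \ldots, v'_{i_d})}(u_1, \ldots, u_d) := \mathds{1}\bigl[i_1, \ldots, i_d \text{ distinct},\; u_j \in V_{i_j}\ \forall j,\; \{u_1, \ldots, u_d\} \in E\bigr].
\]
The first bullet is immediate. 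For the rounding (second bullet), whenever $P_S(\phi|_S)=1$ the values $\phi(v'_{i_j}) \in V_{i_j}$ are $d$ distinct vertices (since the $V_i$'s are disjoint) forming an edge; setting $W := \{\phi(v'_i) : i \in [k]\}$ padded to cardinality $k$, each satisfied tuple $S$ contributes a distinct ordered $d$-tuple of vertices of $W$ whose support belongs to $E$, and each unordered edge of $E(W)$ is charged at most $d!$ times across orderings, so $val_\cG(\phi) \cdot k^d \leq d! \, |E(W)|$, i.e., $W$ has density at least $val_\cG(\phi)$.

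For the soundness bullet, fix an optimal $k$-subhypergraph $W^*$, so $|E(W^*)| = \Delta k^d / d!$, and analyse the randomised strategy $\phi^*$ that assigns $\phi^*(v'_i)$ uniformly at random in $W^* \cap V_i$ whenever this intersection is non-empty. For an edge $e = \{u_1, \ldots, u_d\} \in E(W^*)$, call $e$ \emph{good} under $\pi$ if $\pi(u_1), \ldots, \pi(u_d)$ are pairwise distinct and $|W^* \cap V_{\pi(u_j)}| \leq C$ for every $j$, where $C = \Theta(d)$ is an absolute constant. Under $k \geq 8d^2$, the distinctness event has probability $\geq 1 - d^2/k \geq 7/8$; conditioning on it, each $|W^* \cap V_{\pi(u_j)}| - 1$ is stochastically dominated by $\mathrm{Bin}(k - d, 1/k)$ (mean $\leq 1$), and a Poisson-type tail bound with a union bound over $j \in [d]$ gives $\Pr_\pi[e\text{ good}] \geq 1/2$ when $C$ is a sufficiently large multiple of $d$. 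Conditionally on $e$ being good, $\phi^*$ captures $e$ with probability $\prod_j 1/|W^* \cap V_{\pi(u_j)}| \geq C^{-d} = 2^{-O(d \log d)}$. Summing over ordered edges of $E(W^*)$,
\[
  \E_{\pi,\phi^*}[val_\cG(\phi^*)] \;\geq\; \frac{d! \cdot |E(W^*)|}{k^d} \cdot \tfrac{1}{2} \cdot 2^{-O(d \log d)} \;=\; \frac{\Delta}{2^{O(d \log d)}},
\]
and hence $\E_\pi[val(\cG)] \geq \Delta / 2^{O(d \log d)}$.

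\textbf{Main obstacle.} The remaining step is to upgrade this expectation bound to the ``with probability at least $1/2$'' statement, since a Markov argument on $val(\cG) \in [0,1]$ alone would only yield success probability of order $\Delta/2^{O(d \log d)}$. I would handle this with McDiarmid's inequality applied to $val(\cG)$ viewed as a function of the independent coordinates $\{\pi(v)\}_{v \in V}$: flipping one $\pi(v)$ changes any fixed $val_\cG(\phi)$ by at most $O(d/k)$ (only $O(dk^{d-1})$ out of $k^d$ predicates involve $v$ as a value), so $val(\cG)$ has bounded differences $O(d/k)$ per coordinate and concentrates within additive $O((d/k)\sqrt{n})$ of its mean with probability $\geq 1/2$. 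In the parameter regime where the corollary is non-trivial (constant $\Delta$ and $k$ not too small), this deviation is dominated by the mean, delivering the stated bound. Outside this regime one can fall back on external amplification by running the randomised reduction $O(1)$ times inside the algorithm of Corollary~\ref{cor:dense-hypergraph} and keeping the best subhypergraph, at the cost of only a constant factor absorbed into the $2^{O(d\log d)}$ exponent.
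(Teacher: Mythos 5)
Your reduction and your per-edge analysis are exactly the ones the paper uses (random partition into $k$ parts, alphabet $V$, the same ``good edge'' notion with a threshold $\Theta(d)$ on the part sizes, Markov plus a union bound per edge, and the mixed strategy that picks a uniform vertex of $W^* \cap V_i$ in each part). The gap is in the last step, and it is a real one. McDiarmid applied to $val(\cG)$ gives a deviation of order $(d/k)\sqrt{n}$, which exceeds the mean $\Delta/2^{O(d\log d)}$ whenever $k \ll \sqrt{n}\,d\,2^{O(d\log d)}/\Delta$; the lemma is claimed for every $k \geq 8d^2$ (e.g.\ $k = 8d^2$ with $n$ large), so this does not close. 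The amplification fallback does not rescue the lemma either: from $\E_\pi[val(\cG)] \geq \mu$ and $val(\cG) \in [0,1]$, a single run only succeeds with probability $\Omega(\mu)$, so boosting to $1/2$ needs $\Theta(1/\mu) = \Theta(2^{O(d\log d)}/\Delta)$ repetitions, which is $O(1)$ only when $\Delta$ is bounded below by a constant; the lemma asserts probability $1/2$ for a single run of the reduction with no restriction on $\Delta$.

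The fix is to move the Markov argument one level down: instead of trying to concentrate $val(\cG)$, apply Markov's inequality to the nonnegative random variable counting the \emph{bad} edges of the fixed optimal subhypergraph $W^*$. Your per-edge bound already shows $\Pr_\pi[e \text{ bad}] \leq 1/4$ for each $e \in E(W^*)$, hence $\E_\pi[\#\text{bad}] \leq |E(W^*)|/4$, and Markov gives $\Pr_\pi[\#\text{good} \geq |E(W^*)|/2] \geq 1/2$. Conditioned on this event, the partition is fixed and the mixed-strategy computation is now a \emph{deterministic} lower bound: $val(\cG) \geq (\text{number of good edges}) \cdot d!/(C^d k^d) \geq \Delta/2^{O(d\log d)}$. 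This yields the stated probability $1/2$ for a single run, for all $k \geq 8d^2$ and all $\Delta$, which is exactly how the paper concludes.
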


Since the proof the lemma consists of only simple probabilistic arguments, we defer it to Appendix~\ref{app:red-dense-hyper}. We will now show that the reduction, together with our algorithm for denses {\sc Max $k$-CSP}, imply Corollary~\ref{cor:dense-hypergraph}.

\begin{proofof}[Corollary~\ref{cor:dense-hypergraph}]
  The algorithm works on input $(V, E)$ as follows:
  \begin{enumerate}
  \item If $k < 8d^2$, use brute-force search to find the optimal subgraph.
  \item Let $\tau = 1$.
  \item As long as the algorithm has not output, do the following steps:
    \begin{enumerate}
      \item Repeat the following processes $n$ times:
      \begin{enumerate}
      \item Use the reduction from Lemma~\ref{lem:dense-hyper-reduction} to reduce $(V, E)$ to a {\sc Max $d$-CSP} instance $\cG$.
      \item Run Algorithm~\ref{alg:dense-csp} (from Theorem~\ref{thm:alg-dense-csp}) on $\cG$ but if the algorithm tries to solve SAC relaxation of more than $d^2i/\tau + d$ level (i.e. $\lambda$ in the algorithm is less than $\tau$), abort the algorithm. \label{step:hyper-alg-run}
      \item If the algorithm in the previous step was not aborted, let the output assignment be $\phi$. Use Lemma~\ref{lem:dense-hyper-reduction} to turn $\phi$ into a $k$-subhypergraph of $(V, E)$. Output the subhypergraph.
      \end{enumerate}
    \item Set $\tau \leftarrow \tau / 2$.
    \end{enumerate}
  \end{enumerate}

  To see that the algorithm have the desired properties, first observe that if $val(\cG) \geq \tau$, then Step~\ref{step:hyper-alg-run} is never aborted. Hence, from the last property of the reduction in Lemma~\ref{lem:dense-hyper-reduction}, we know that the above algorithm ends while $\tau$ is still at least $\Delta/2^{O(d \log d)}$ with probability $1 - 2^{-n}$. In this case, from Theorem~\ref{thm:alg-dense-csp}, we know that the output subgraph has value at least $\Omega(\tau/n^{1/i})$ and it is obvious that the running time of the algorithm is at most $n^{2^{O(d \log d)}i/\Delta}$ as desired.
\end{proofof}

\section{Conclusion and Open Problems} \label{sec:open}

We prove that birthday repetition can amplify gap in hardness of approximation. This has several interesting consequences to the approximability of dense {\sc Max $k$-CSP}. First, we prove almost-polynomial ratio polynomial-time ETH-hardness for the problem. Second, we show, assuming the stronger ETHA, that it is impossible to approximate dense {\sc Max $k$-CSP} to within factor $N^{1/i}$ in time $N^{\tilde O_k(i)}$. Third, we prove a similar integrality gap for Lasserre relaxation of the problem. Moreover, we provide an approximation algorithm that matches our lower bound based on ETHA and the Lasserre integrality gap.

While our results settle down the approximability of dense {\sc Max $k$-CSP} up to the dependency on $k$ and a factor of $\polylog i$ in the exponent, our work also raises many interesting questions, which we list below.

\begin{itemize}
  \item {\em Can the birthday repetition theorem be used to prove almost-polynomial ratio hardness for other problems?} As stated earlier, the birthday repetition with $k = l = \tilde O(\sqrt{n})$ has inspired quasi-polynomial running time lower bounds for many problems. Can we use our technique to prove running time lower bounds for almost-polynomial hardness ratio similar to those for dense {\sc Max $k$-CSP} we achieved?

    A concrete candidate problem is {\sc Densest $k$-Subgraph} with perfect completeness, i.e., the optimal solution is a $k$-clique. Similar to dense CSPs, the best known polynomial time algorithms take $n^{O(i)}$ time and give an $O(n^{1/i})$-approximation solution~\cite{FPK01, ST05,MM15}.  
    

  \item {\em Is there a birthday repetition theorem for low-value game?} It has been shown that, if one starts with a game $\cG$ of subconstant value, the $r$-parallel repetition $\cG^{\otimes r}$ has value roughly $val(\cG)^{\Omega(r)}$~\cite{DS14,BG15}. A natural question is whether it is possible to prove such theorem for birthday repetition as well. Our technique fails to show such a theorem; in particular, our proof has two steps that produce additive errors (i.e. Lemma~\ref{lem:concen-1} and Lemma~\ref{lem:concen-2}), which prevents us to reduce the value beyond $\exp(-kl/n)$. 
  \item {\em What is the right dependency on $\varepsilon$ and $c$ in the birthday repetition theorem?} It is likely that the dependency of $\varepsilon$ and $c$ in our birthday repetition is not tight. In particular, parallel repetition for general games only has $1/c$ factor in the exponent whereas our theorem has $1/c^2$; would it be possible to reduce the dependency to $1/c$ in birthday repetition? Similar question also applies to $\varepsilon$.
  \item {\em Can our approximation algorithm for dense {\sc $k$-CSP} be made to run in $q^{O_k(i)} + N^{O(1)}$ time?} As stated earlier, Yaroslavtev's algorithm~\cite{Yar14} runs in $q^{O_k(\log q/\varepsilon^2)} + N^{O(1)}$ time and provides an $\varepsilon$ additive approximation to the problem. As for our algorithm, we can, in fact, turn the condioning step into a randomized algorithm where we just randomly pick a set and an assignment to condition\footnote{This is because the bound proved in Appendix~\ref{app:corr} on total correlation of conditioned solution is on its expectation among uniformly random tuple $T$ of size at most $l$ and random $\phi_T$ sampled according to the marginal distribution $\mathcal{X}_T$.}, which takes only linear time. The bottleneck, however, is solving the linear program (SAC relaxation), which takes $N^{\Omega(r)}$ time where $r$ is the number of rounds. Related to this, Barak \etal~\cite{BRS11} showed that their Lasserre hierarchy-based algorithm runs in $2^rN^{O(1)}$ instead of $N^{O(r)}$ time\footnote{Note here that the number of rounds $r$ used in Barak \etal's algorithm is polynomial in the alphabet size $q$.}. It is an interesting question to ask whether our algorithm can also be sped up using their technique.
  \item {\em Can Lemma~\ref{lem:funcbound} be used to prove new approximation guarantees for other problems?} Lemma~\ref{lem:funcbound} is a generic bound on the (multiplicative) difference of expectations of a function on two distributions based on their informational divergence. Hence, it may yield new approximation guarantees for other correlation-based algorithms as well. 
  \item {\em Is it possible to prove a result similar to Lemma~\ref{lem:indrounding} without losing a constant factor?} Lemma~\ref{lem:indrounding} at the heart of our approximatin algorithm has one drawback: when $\delta$ is not zero, we always lose a factor of $\delta^{\frac{\delta}{1 - \delta}}$. While the loss here is only constant (since it is minimized when $\delta \rightarrow 1$ which gives $\delta^{\frac{\delta}{1 - \delta}} \geq 0.367$), it prevents us from getting a QPTAS for non-satisfiable dense {\sc Max $k$-CSP}. If this factor can be removed, we can establish the number of levels needed for any approximation ratio from as large as polynomial in $q$ to as small as any constant.
   \item {\em What is the right dependency on $k$ in the running time of approximation algorithms for dense {\sc Max $k$-CSP}?} While $k$ is typically viewed as a constant, it is still interesting to understand what the best dependency on $k$ in the running time. In particular, our algorithm takes $N^{\Omega(ki)}$ time to approximate {\sc Max $k$-CSP} to within factor of $N^{1/i}$ but the running time lower bound we proved for such approximation ratio is only $N^{\tilde \Omega(i) / \log^2 k}$. Can we close the gap of $k \log^2 k$ in the exponent?
\end{itemize}

\section*{Acknowledgements}
PM would like to thank Aviad Rubinstein and Dana Moshkovitz for useful discussions. He also thanks Grigory Yaroslavtsev for providing insights on his algorithm for dense CSPs in~\cite{Yar14} and Madhur Tulsiani for bringing~\cite{BRS11,GS11} to his attention. In addition, we would like to thank Irit Dinur for sharing with us her result based on gap-ETH.

\bibliographystyle{alpha}
\bibliography{citations}

\appendix

\section{Upper Bound on Value of $\cG^l_k$} \label{app:k-csp-soundness}

We devote this section to the proof of Lemma~\ref{lem:k-csp-value}. We will start by introducing a new notation and proving a simple result that is helpful in proving Lemma~\ref{lem:k-csp-value}.

\subsection{Upper Bound on Value of a Convex Combination of {\sc Max $k$-CSP} Instances}

We first define a notion of a convex combination of distributions and prove a simple lemma regarding value of a {\sc Max $k$-CSP} instance whose distribution is a convex combination of other distributions.

\begin{definition} \label{def:convex-dist}
  Let $\mathcal{X}, \mathcal{Y}_1, \dots, \mathcal{Y}_m$ be distributions on $\Theta$. We write $\mathcal{X} = \alpha_1 \mathcal{Y}_1 + \cdots + \alpha_m \mathcal{Y}_m$ for some $\alpha_1, \dots, \alpha_m \in [0, 1]$ if $\mathcal{X}(\theta) = \alpha_1 \mathcal{Y}_1(\theta) + \cdots + \alpha_m \mathcal{Y}_m(\theta)$ for every $\theta \in \Theta$.
\end{definition}

The above definition almost immediately yield the following upper bound on the value of a game whose distribution is a convex combination of other distributions.

\begin{lemma} \label{lem:convex-dist}
  Let $\cG = (V, \cQ, \{P_S\})$ be any {\sc Max $k$-CSP} instance. Let $\cG_1 = (V, \cQ_1, \{P_S\}), \dots, \cG_m = (V, \cQ_m, \{P_S\})$ be {\sc Max $k$-CSP} instances on the same variables, alphabet set and predicates as $\cG$. If $\cQ = \alpha_1 \cQ_1 + \cdots + \alpha_m \cQ_m$ for some $\alpha_1, \dots, \alpha_m \in [0, 1]$, then $val(\cG) \leq \alpha_1 val(\cG_1) + \cdots + \alpha_m val(\cG_m)$.
\end{lemma}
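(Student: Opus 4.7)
The plan is to prove the bound pointwise over assignments and then take the supremum. Concretely, I would fix an arbitrary assignment $\phi: V \to \Sigma$ and expand the expectation defining $val_{\cG}(\phi)$ directly using the convex combination hypothesis $\cQ = \alpha_1 \cQ_1 + \cdots + \alpha_m \cQ_m$. Since all instances share the same predicates $\{P_S\}$, linearity of expectation lets me split the sum cleanly.

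More precisely, I would write
\begin{align*}
val_{\cG}(\phi) = \sum_{S} \cQ(S) P_S(\phi|_S) = \sum_S \left(\sum_{j=1}^m \alpha_j \cQ_j(S)\right) P_S(\phi|_S) = \sum_{j=1}^m \alpha_j \, val_{\cG_j}(\phi),
\end{align*}
where swapping the order of summation is justified because everything is a finite (or absolutely convergent) non-negative sum. Bounding each $val_{\cG_j}(\phi)$ by $val(\cG_j)$ yields $val_{\cG}(\phi) \leq \sum_j \alpha_j \, val(\cG_j)$ for every $\phi$. Taking the maximum of the left-hand side over all assignments $\phi$ gives the desired inequality $val(\cG) \leq \alpha_1 \, val(\cG_1) + \cdots + \alpha_m \, val(\cG_m)$.

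There is no real obstacle here; this lemma is essentially a reformulation of linearity of expectation combined with the fact that the maximum of a convex combination is at most the convex combination of the maxima (which follows since the maximum commutes with addition only as an upper bound). The only minor subtlety is that the optimal assignments for $\cG_1, \dots, \cG_m$ may be different, which is precisely why we only get an inequality rather than equality; this is handled automatically by the pointwise-then-maximize argument above.
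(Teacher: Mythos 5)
Your proof is correct and is essentially the same as the paper's: both expand $val_{\cG}(\phi) = \sum_j \alpha_j\, val_{\cG_j}(\phi)$ via the convex-combination hypothesis and then bound each term by $val(\cG_j)$. The only cosmetic difference is that the paper fixes the optimal $\phi$ for $\cG$ at the outset while you quantify over all $\phi$ and maximize at the end.
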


\begin{proof}
  Let $\phi: V \rightarrow \Sigma$ be the optimal assignment of $\cG$, i.e., $val_{\cG}(\phi) = val(\cG)$. From $\cQ = \alpha_1 \cQ_1 + \cdots + \alpha_m \cQ_m$, it is trivial to see that $val_{\cG}(\phi) = \alpha_1 val_{\cG_1}(\phi) + \cdots + \alpha_m val_{\cG_m}(\phi)$. Hence, we have $val(\cG) = val_{\cG}(\phi) = \alpha_1 val_{\cG_1}(\phi) + \cdots + \alpha_m val_{\cG_m}(\phi) \leq \alpha_1 val(\cG_1) + \cdots + \alpha_m val(\cG_m),$ as desired.
\end{proof}

Lemma~\ref{lem:convex-dist} implies the following corollary, which can be seen as an analogue of Lemma~\ref{lem:inq-cond} for {\sc Max $k$-CSP}.

\begin{corollary} \label{cor:k-cond}
  Let $\cG = (V, \cQ, \{P_S\})$ be any {\sc Max $k$-CSP} instance and let $A$ be any event occurring with non-zero probability $1 - p$ (with respect to $\cQ$). Let $\cQ'$ be the conditional probability $\cQ$ given $A$, i.e., $\cQ'(\tilde x_1, \dots, \tilde x_k) = \Pr_{(x_1, \dots, x_k) \sim \cQ}[x_1 = \tilde x_1 \wedge \cdots \wedge x_k = \tilde x_k \mid A]$. For the game $\cG' = (V, \cQ', \{P_S\})$, we have $val(\cG) \leq val(\cG') + p$.
\end{corollary}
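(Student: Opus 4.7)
The plan is to write $\cQ$ itself as a convex combination of $\cQ'$ and the distribution obtained by conditioning on the complementary event, then invoke Lemma~\ref{lem:convex-dist}. Since $\Pr_{\cQ}[A] = 1 - p > 0$, the conditional distribution $\cQ'$ is well-defined. If $p > 0$, let $\cQ''$ be the conditional distribution of $\cQ$ given the complement of $A$, i.e., $\cQ''(\tilde x_1, \dots, \tilde x_k) = \Pr_{(x_1, \dots, x_k) \sim \cQ}[x_1 = \tilde x_1 \wedge \cdots \wedge x_k = \tilde x_k \mid \neg A]$. (If $p = 0$, the claim is immediate since $\cQ' = \cQ$, so we may assume $p > 0$.)

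By the law of total probability, for every $(\tilde x_1, \dots, \tilde x_k)$ we have
\begin{align*}
\cQ(\tilde x_1, \dots, \tilde x_k) = (1-p)\,\cQ'(\tilde x_1, \dots, \tilde x_k) + p\,\cQ''(\tilde x_1, \dots, \tilde x_k),
\end{align*}
which in the notation of Definition~\ref{def:convex-dist} is exactly $\cQ = (1-p)\cQ' + p\cQ''$. Defining $\cG'' = (V, \cQ'', \{P_S\})$, Lemma~\ref{lem:convex-dist} applied to the instances $\cG'$ and $\cG''$ on the same variables, alphabet, and predicates yields $val(\cG) \leq (1-p) val(\cG') + p \cdot val(\cG'')$.

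Since $P_S$ takes values in $[0,1]$, any {\sc Max $k$-CSP} instance has value at most $1$, so in particular $val(\cG'') \leq 1$. Substituting gives $val(\cG) \leq (1-p) val(\cG') + p \leq val(\cG') + p$, as desired. There is no real obstacle here; the only minor points to handle are the degenerate cases $p = 0$ (trivial) and $p = 1$ (ruled out by hypothesis).
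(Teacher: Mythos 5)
Your proof is correct and is exactly the argument the paper intends: the corollary is stated as an immediate consequence of Lemma~\ref{lem:convex-dist} via the decomposition $\cQ = (1-p)\cQ' + p\cQ''$ together with the trivial bound $val(\cG'') \leq 1$. Your handling of the degenerate cases $p=0$ and $p=1$ is a small but welcome addition.
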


\subsection{Proof of Lemma~\ref{lem:k-csp-value}}

Similar to the proof of our birthday repetition theorems, the proof of Lemma~\ref{lem:k-csp-value} consists of several simple reductions. We will start by defining the intermediate games that we reduce $\cG^l_k$ to. For convenience, let $s = kl/40$. We define the games $\cG^l_{k \mid \geq s}$ and, for every $s < s_1 \leq |X|, s < s_2 \leq |Y|$, $\cG^l_{k \mid s_1, s_2}$ on the same questions, alphabet sets and predicates as $\cG^l_k$; on the other hand, their distributions are defined as follows:
\begin{itemize}
\item The distribution $\cQ^l_{k \mid \geq s}$ of $\cG^l_{k \mid \geq s}$ is the uniform distribution on $\{(S_1, T_1), \dots, (S_k, T_k) \in \binom{X}{l} \times \binom{Y}{l} \mid |S_1 \cup \cdots \cup S_{\lfloor k / 2 \rfloor}|, |T_{\lfloor k / 2 \rfloor + 1} \cup \cdots \cup T_k| \geq s\}$. Notice that $\cQ^l_{k \mid \geq s}$ is simply the distribution $\cQ'$ of $\cG^l_k$ conditioned on the event that $|S_1 \cup \cdots \cup S_{\lfloor k / 2 \rfloor}|, |T_{\lfloor k / 2 \rfloor + 1} \cup \cdots \cup T_k| \geq s$.
\item The distribution $\cQ^l_{k \mid s_1, s_2}$ of $\cG^l_{k \mid s_1, s_2}$ is the uniform distribution on $\{(S_1, T_1), \dots, (S_k, T_k) \in \binom{X}{l} \times \binom{Y}{l} \mid |S_1 \cup \cdots \cup S_{\lfloor k / 2 \rfloor}| = s_1 \wedge |T_{\lfloor k / 2 \rfloor + 1} \cup \cdots \cup T_k| = s_2\}$. Again, $\cQ^l_{k \mid s_1, s_2}$ is simply the distribution $\cQ'$ of $\cG^l_k$ conditioned on the event that $|S_1 \cup \cdots \cup S_{\lfloor k / 2 \rfloor}| = s_1$ and $|T_{\lfloor k / 2 \rfloor + 1} \cup \cdots \cup T_k| = s_2$.
\end{itemize}

We now describe how the values of these games are related and provide proof overviews for such relations.

\begin{lemma} \label{lem:fully-dense-em}
  $val(\cG^l_{k \mid s_1, s_2}) \leq val(\cG^{s_1 \times s_2})$.
\end{lemma}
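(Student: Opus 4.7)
My plan is to mirror the mixed-strategy embedding argument in the proof of Lemma~\ref{lem:embedding}. Starting from an optimal strategy $\phi$ for $\cG^l_{k \mid s_1, s_2}$, I will construct a mixed strategy $\Psi$ for the birthday repetition game $\cG^{s_1 \times s_2}$ whose expected value is at least $val(\phi)$; a standard averaging argument then supplies a deterministic strategy in $\supp(\Psi)$ of value at least $val(\phi) = val(\cG^l_{k \mid s_1, s_2})$, which gives the lemma.

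The first step will be to factorize $\cQ^l_{k \mid s_1, s_2}$ in a form amenable to embedding. The size constraints in the definition of $\cQ^l_{k \mid s_1, s_2}$ touch disjoint sets of coordinates ($S_1, \dots, S_{\lfloor k/2 \rfloor}$ on one side, $T_{\lfloor k/2 \rfloor + 1}, \dots, T_k$ on the other), so by symmetry $\cQ^l_{k \mid s_1, s_2}$ admits the following equivalent sampling procedure: sample $(S', T')$ uniformly from $\binom{X}{s_1} \times \binom{Y}{s_2}$, then independently sample $(S_1, \dots, S_{\lfloor k/2 \rfloor})$ uniformly among tuples of $l$-subsets of $X$ whose union is $S'$, sample $(T_{\lfloor k/2 \rfloor + 1}, \dots, T_k)$ uniformly among tuples of $l$-subsets of $Y$ whose union is $T'$, and sample the remaining $T_i$'s and $S_i$'s uniformly from $\binom{Y}{l}$ and $\binom{X}{l}$ without constraint. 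The $S'$-block and the $T'$-block are independent conditional on $(S', T')$, and $(S', T')$ is distributed exactly as the question pair in $\cG^{s_1 \times s_2}$.

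I will then define $\Psi$ as follows. For each $S' \in \binom{X}{s_1}$, sample the $S'$-block independently, write $\phi(S_i, T_i) = (\phi^X_i, \phi^Y_i)$, and set $\psi(S')(x) := \phi^X_{i^*}(x)$ for each $x \in S'$, where $i^* \leq \lfloor k/2 \rfloor$ is any fixed tie-breaking index with $x \in S_{i^*}$; define $\psi(T')$ symmetrically from a fresh, independent $T'$-block sample. The crucial pointwise inequality to verify is $P^{s_1 \times s_2}_{(S', T')}(\psi(S'), \psi(T')) \geq P'_{((S_1, T_1), \dots, (S_k, T_k))}(\phi(S_1, T_1), \dots, \phi(S_k, T_k))$. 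Whenever the right-hand side equals $1$, all the $\phi(S_i, T_i)$'s are consistent, so tie-breaking becomes vacuous and $\psi(S')$ coincides with the natural glued assignment on $S' \subseteq \bigcup_i S_i$; every pair $(x, y) \in (S' \times T') \cap \supp(\cQ)$ then lies inside $((\bigcup_i S_i) \times (\bigcup_i T_i)) \cap \supp(\cQ)$, so it is accepted by $P_{(x,y)}$, forcing the left-hand side to equal $1$ as well.

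Combining this pointwise domination with the factorization and swapping the two expectations will give $\E_{\psi \sim \Psi}[val(\psi)] \geq val(\phi)$, exactly as in Lemma~\ref{lem:embedding}. The main source of friction I anticipate is a careful verification of the factorization of $\cQ^l_{k \mid s_1, s_2}$, which boils down to a symmetry argument over permutations of $X$ and $Y$; once the factorization is in hand, the consistency clause already built into $P'$ (Definition~\ref{def:k-csp-red}) essentially hands us the pointwise domination, and the rest of the argument is routine.
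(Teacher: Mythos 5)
Your proposal is correct and follows essentially the same route as the paper's proof: both construct a mixed strategy for $\cG^{s_1 \times s_2}$ by sampling the $l$-subset blocks conditioned on their unions being $(S', T')$, exploit the symmetry-based factorization of $\cQ^l_{k \mid s_1, s_2}$, and use the pointwise domination $P^{s_1 \times s_2}_{(S', T')} \geq P'_{((S_1,T_1),\dots,(S_k,T_k))}$ coming from the consistency clause in Definition~\ref{def:k-csp-red}. The only cosmetic difference is your explicit tie-breaking rule for inconsistent answers, where the paper simply assigns arbitrarily; this is immaterial since the domination is only needed when the right-hand side equals one.
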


{\bf Proof Idea.} We can naturally create a mixed strategy $\phi'$ in $\cG^{s_1 \times s_2}$ from any strategy $\phi$ in $\cG^l_{k \mid s_1, s_2}$. It is easy to show that $\phi'$ has value with respect to $\cG^{s_1 \times s_2}$ more than the value of $\phi$ with respect to $\cG^l_{k \mid s_1, s_2}$.

\begin{lemma} \label{lem:fully-dense-comb}
  $val(\cG^l_{k \mid \geq s}) \leq \max_{s \leq s_1 \leq |X|, s \leq s_2 \leq |Y|} val(\cG^l_{k \mid s_1, s_2})$.
\end{lemma}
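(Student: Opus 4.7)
The plan is to invoke Lemma~\ref{lem:convex-dist} after expressing $\cQ^l_{k \mid \geq s}$ as an explicit convex combination of the $\cQ^l_{k \mid s_1, s_2}$'s.

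First, I would observe that the events $A_{s_1, s_2} = \{|S_1 \cup \cdots \cup S_{\lfloor k/2 \rfloor}| = s_1 \wedge |T_{\lfloor k/2 \rfloor + 1} \cup \cdots \cup T_k| = s_2\}$ for $s \leq s_1 \leq |X|$ and $s \leq s_2 \leq |Y|$ partition the support of $\cQ^l_{k \mid \geq s}$. By definition of conditional probability, for each such $(s_1, s_2)$, the conditional distribution $\cQ^l_{k \mid \geq s}$ given $A_{s_1, s_2}$ is exactly $\cQ^l_{k \mid s_1, s_2}$ (both are uniform over the same set of tuples). Setting $\alpha_{s_1, s_2} = \Pr_{\cQ^l_{k \mid \geq s}}[A_{s_1, s_2}]$, this gives us
\begin{align*}
  \cQ^l_{k \mid \geq s} = \sum_{s \leq s_1 \leq |X|, \, s \leq s_2 \leq |Y|} \alpha_{s_1, s_2} \cdot \cQ^l_{k \mid s_1, s_2},
\end{align*}
in the sense of Definition~\ref{def:convex-dist}, with $\sum_{s_1, s_2} \alpha_{s_1, s_2} = 1$.

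Next, since $\cG^l_{k \mid \geq s}$ and all $\cG^l_{k \mid s_1, s_2}$'s share the same variables, alphabet set and predicates and differ only in their underlying distributions, Lemma~\ref{lem:convex-dist} applies and yields
\begin{align*}
  val(\cG^l_{k \mid \geq s}) \leq \sum_{s_1, s_2} \alpha_{s_1, s_2} \cdot val(\cG^l_{k \mid s_1, s_2}).
\end{align*}
Finally, since the $\alpha_{s_1, s_2}$'s form a probability distribution, this weighted average is at most the maximum, giving the desired bound. There is no real obstacle here; the main thing to verify carefully is just the partition claim and that conditioning a uniform distribution on one of the parts of the partition produces the uniform distribution on that part, which is immediate from the definitions of the two distributions.
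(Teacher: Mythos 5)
Your proof is correct and follows essentially the same route as the paper: decompose $\cQ^l_{k \mid \geq s}$ as a convex combination of the $\cQ^l_{k \mid s_1, s_2}$ over the partition of its support, apply Lemma~\ref{lem:convex-dist}, and bound the convex combination by the maximum. Your weights $\alpha_{s_1,s_2} = \Pr_{\cQ^l_{k \mid \geq s}}[A_{s_1,s_2}]$ coincide with the paper's ratio of support sizes since all the distributions involved are uniform.
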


{\bf Proof Idea.} The proof of this lemma is simply to realize the fact that $\cQ^l_{k \mid \geq s}$ and be written as convex combination of $\cQ^l_{k \mid s_1, s_2}$ for $s \leq s_1 \leq |X|, s \leq s_2 \leq |Y|$. This observation, combined with Lemma~\ref{lem:convex-dist}, immediately yields Lemma~\ref{lem:fully-dense-comb}.

\begin{lemma} \label{lem:fully-dense-cond}
  $val(\cG^l_k) \leq val(\cG^l_{k \mid \geq s}) + 2\exp\left(-\frac{kl}{8}\right)$
\end{lemma}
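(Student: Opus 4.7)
The plan is to apply Corollary~\ref{cor:k-cond} directly. By construction, $\cQ^l_{k\mid\geq s}$ is the uniform distribution on $(V')^k$ (the distribution of $\cG^l_k$) conditioned on the event
\[
A \;=\; \bigl\{|S_1 \cup \cdots \cup S_{\lfloor k/2 \rfloor}| \geq s\bigr\}\;\cap\;\bigl\{|T_{\lfloor k/2\rfloor+1} \cup \cdots \cup T_k| \geq s\bigr\}.
\]
Corollary~\ref{cor:k-cond} therefore yields $val(\cG^l_k) \leq val(\cG^l_{k\mid\geq s}) + \Pr[\bar A]$, so it suffices to show $\Pr[\bar A] \leq 2\exp(-kl/8)$. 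By the union bound and the symmetry between the $S$- and $T$-halves, the whole lemma reduces to the single tail estimate $\Pr[|S_1\cup\cdots\cup S_m| < s] \leq \exp(-kl/8)$, where $m = \lfloor k/2\rfloor$ and $S_1,\ldots,S_m$ are i.i.d.\ uniform $l$-element subsets of $X$.

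I would establish that estimate by splitting on $k$. When $k \leq 40$ we have $s = kl/40 \leq l = |S_1|$, so the event is automatic and the probability is zero. When $k > 40$ (so $s > l$), I would use the obvious union bound: if $|S_1\cup\cdots\cup S_m| < s$ then all of $S_1,\ldots,S_m$ lie inside some common $s$-element subset of $X$, whence
\[
\Pr[|S_1\cup\cdots\cup S_m| < s] \;\leq\; \binom{|X|}{s}\left(\frac{\binom{s}{l}}{\binom{|X|}{l}}\right)^m.
\]
Applying $\binom{|X|}{s} \leq (e|X|/s)^s$ and $\binom{s}{l}/\binom{|X|}{l} \leq (s/|X|)^l$ rewrites the right-hand side as $e^s(s/|X|)^{ml-s}$. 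The hypothesis $|X|\geq kl$ gives $s/|X| \leq 1/40$, and $m = \lfloor k/2\rfloor \geq k/3$ for $k\geq 2$ gives $ml - s \geq kl/3 - kl/40 \geq kl/4$, so the bound takes the form $\exp\!\bigl(kl/40 - (kl/4)\ln 40\bigr)$, which is comfortably below $\exp(-kl/8)$.

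The argument is essentially mechanical — the only real choices are the threshold ($k=40$) separating the degenerate regime from the tail-bound regime, and the form of the union bound. I do not expect a genuine obstacle; the only things to be careful about are making sure the degenerate small-$k$ case is handled separately (so that the inequality $s \leq l$ can be used) and tracking the exponents honestly in the final arithmetic so that the slack $0.125$ in $e^{-kl/8}$ is clearly met.
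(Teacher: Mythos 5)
Your proposal is correct and follows essentially the same route as the paper: apply Corollary~\ref{cor:k-cond}, union-bound over the $S$-half and $T$-half, and control each tail by a union bound over covering $s$-element subsets together with $\binom{n}{k}\leq (en/k)^k$. The only cosmetic difference is that the paper packages the covering argument as a standalone estimate (Lemma~\ref{lem:union-size}, with threshold $ml/10 \geq s$, which also absorbs your degenerate $s\leq l$ case automatically since $\binom{\lfloor ab/10\rfloor}{b}=0$ there), whereas you run the count directly at threshold $s=kl/40$ with an explicit case split.
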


{\bf Proof Idea.} Thanks to Corollary~\ref{cor:k-cond}, we only need to show that the probability that $|S_1 \cup \cdots \cup S_{\lfloor k / 2 \rfloor}|, |T_{\lfloor k / 2 \rfloor + 1} \cup \cdots \cup T_k| \geq s$ is at most $2\exp\left(-\frac{kl}{8}\right)$. This can easily be shown using standard probabilistic techniques.

Before we give the full proofs of the above three lemmas, we first give a proof of Lemma~\ref{lem:k-csp-value}.

\begin{proofof}[Lemma~\ref{lem:k-csp-value}]
  By the three lemmas above, $val(\cG^l_k)$ can be bounded as follows:
  \begin{align*}
    val(\cG^l_k) 
    &\leq 2\exp\left(-\frac{kl}{8}\right) + \max_{s \leq s_1 \leq |X|, s \leq s_2 \leq |Y|} val(\cG^{s_1 \times s_2}) \\
    (\text{Theorem}~\ref{thm:birthday-proj}) 
    &\leq 2\exp\left(-\frac{kl}{8}\right) + 2(1 - \varepsilon/2)^{\frac{\alpha \varepsilon^3 s^2 |E|}{d_{max}|X||Y|}}. \\
    (\text{Since } s \leq |X|, |Y|) &\leq 4(1 - \varepsilon/2)^{\frac{\alpha \varepsilon^3 s^2 |E|}{d_{max}|X||Y|}}
    = \left(2(1 - \varepsilon/2)^{\frac{\gamma \varepsilon^3 k^2l^2 |E|}{d_{max}|X||Y|}}\right)^2
  \end{align*}
  where $\gamma = \alpha / 3200$. Note that, if $2(1 - \varepsilon/2)^{\frac{\gamma \varepsilon^3 k^2l^2 |E|}{d_{max}|X||Y|}} \geq 1$, $val(\cG^l_k) \leq 1 \leq 2(1 - \varepsilon/2)^{\frac{\gamma \varepsilon^3 k^2l^2 |E|}{d_{max}|X||Y|}}$. Otherwise, $val(\cG^l_k) \leq (2(1 - \varepsilon/2)^{\frac{\gamma \varepsilon^3 k^2l^2 |E|}{d_{max}|X||Y|}})^2 \leq 2(1 - \varepsilon/2)^{\frac{\gamma \varepsilon^3 k^2l^2 |E|}{d_{max}|X||Y|}}$. This completes our proof for Lemma~\ref{lem:k-csp-value}.
\end{proofof}

We now turn our attention to the proofs of Lemma~\ref{lem:fully-dense-em},~\ref{lem:fully-dense-comb}, and~\ref{lem:fully-dense-cond}.

\subsubsection{$\cG^{s_1 \times s_2}$ vs $\cG^l_{k \mid s_1, s_2}$: Embedding of Birthday Repetition}

\begin{proofof}[Lemma~\ref{lem:fully-dense-em}]
  Let $\phi$ be the optimal assignment of $\cG^l_{k \mid s_1, s_2} = (V', \cQ^l_{k \mid s_1, s_2}, \{P'_S\})$. We define a mixed strategy $\Phi$ of $\cG^{s_1 \times s_2}$ by defining a sampling process for $\tilde \phi \sim \Phi$ as follows. For each $S \in \binom{X}{s_1}$, sample subsets $S_1, \dots, S_k \in \binom{X}{l}$ such that $S_1 \cup \cdots \cup S_{\lfloor k / 2 \rfloor} = S$ uniformly among all such subsets and sample $T_1, \dots, T_{\lfloor k / 2 \rfloor}$ uniformly independently at random from $\binom{Y}{l}$. If $\phi((S_1, T_1)), \dots, \phi((S_{\lfloor k / 2 \rfloor}, T_{\lfloor k / 2 \rfloor}))$ are consistent, we set $\tilde \phi(S)$ to be $(\phi((S_1, T_1)) \circ \cdots \circ \phi((S_{\lfloor k / 2 \rfloor}, T_{\lfloor k / 2 \rfloor})))|_S$. Otherwise, we set $\tilde \phi(S)$ arbitrarily. Similarly, for each $T \in \binom{Y}{s_2}$, sample subsets $T_{\lfloor k / 2 \rfloor + 1}, \dots, T_k \in \binom{Y}{l}$ such that $T_{\lfloor k / 2 \rfloor + 1} \cup \cdots \cup T_k = T$ uniformly among all such subsets and $S_{\lfloor k / 2 \rfloor + 1}, \dots, S_k \sim \binom{X}{l}$ independently. Then, set $\tilde \phi(T) = (\phi((S_{\lfloor k / 2 \rfloor + 1}, T_{\lfloor k / 2 \rfloor + 1})) \circ \cdots \circ \phi((S_k, T_k)))|_T$ if $\phi((S_{\lfloor k / 2 \rfloor + 1}, T_{\lfloor k / 2 \rfloor + 1})), \dots, \phi((S_k, T_k))$ are consistent and arbitrarily otherwise.

  We will now show that the expected value of $\tilde \phi$ is at least $val(\cG^l_{k \mid s_1, s_2})$, which immediately implies Lemma~\ref{lem:fully-dense-em}. For convenient, below we write $\phi((S_1, T_1)) \circ \cdots \circ \phi((S_{\lfloor k / 2 \rfloor}, T_{\lfloor k / 2 \rfloor}))$ even when $\phi((S_1, T_1)), \dots, \phi((S_{\lfloor k / 2 \rfloor}, T_{\lfloor k / 2 \rfloor}))$ are inconsistent; in this case, $\phi((S_1, T_1)) \circ \cdots \circ \phi((S_{\lfloor k / 2 \rfloor}, T_{\lfloor k / 2 \rfloor}))$ just represents an arbitrary assignment to $S_1 \cup \cdots \cup S_{\lfloor k / 2 \rfloor} \cup T_1 \cup \cdots \cup T_{\lfloor k / 2 \rfloor}$. Same notation applies for $\phi((S_{\lfloor k / 2 \rfloor + 1}, T_{\lfloor k / 2 \rfloor + 1})) \circ \cdots \circ \phi((S_k, T_k))$. We can rewrite $\E_{\tilde \phi}[val(\tilde \phi)]$ as follows:
  \begin{align*}
    \E_{\tilde \phi}[val(\tilde \phi)]
    = \E_{\tilde \phi}[\E_{(S, T) \sim \cQ^{s_1 \times s_2}}[P^{s_1 \times s_2}_{(S, T)}(\tilde \phi(S), \tilde \phi(T))]]
    = \E_{(S, T) \sim \cQ^{s_1 \times s_2}}[\E_{\tilde \phi}[P^{s_1 \times s_2}_{(S, T)}(\tilde \phi(S), \tilde \phi(T))]].
  \end{align*}

  $\E_{\tilde \phi}[P^{s_1 \times s_2}_{(S, T)}(\tilde \phi(S), \tilde \phi(T))]$ can be further written as
  \begin{align*}
    \E_{S_1, \dots, S_k, T_1, \dots, T_k}[P^{s_1 \times s_2}_{(S, T)}((\phi((S_1, T_1)) \circ \cdots \circ \phi((S_{\lfloor k / 2 \rfloor}, T_{\lfloor k / 2 \rfloor})))|_S, (\phi((S_{\lfloor k / 2 \rfloor + 1}, T_{\lfloor k / 2 \rfloor + 1})) \circ \cdots \circ \phi((S_k, T_k)))|_T)]
  \end{align*}

  where $S_1, \dots, S_k, T_1, \dots, T_k$ are drawn depending on $S$ and $T$ as described earlier.

  Now, observe that $P^{s_1 \times s_2}_{(S, T)}((\phi((S_1, T_1)) \circ \cdots \circ \phi((S_{\lfloor k / 2 \rfloor}, T_{\lfloor k / 2 \rfloor})))|_S, (\phi((S_{\lfloor k / 2 \rfloor + 1}, T_{\lfloor k / 2 \rfloor + 1})) \circ \cdots \circ \phi((S_k, T_k)))|_T) \geq P'_{((S_1, T_1), \dots, (S_k, T_k))}(\phi((S_1, T_1)), \dots, \phi((S_k, T_k)))$ because $P'$ checks both that $\phi((S_1, T_1)), \dots, \phi((S_k, T_k))$ are consistent and that all the constraints between $S_1 \cup \cdots \cup S_k$ and $T_1 \cup \cdots \cup T_k$ are satisfied. Thus, we have
  \begin{align*}
    \E_{\tilde \phi}[val(\tilde \phi)] &\geq  \E_{(S, T) \sim \cQ^{s_1 \times s_2}}[\E_{S_1, \dots, S_k, T_1, \dots, T_k}[P'_{((S_1, T_1), \dots, (S_k, T_k))}(\phi((S_1, T_1)), \dots, \phi((S_k, T_k)))]].
  \end{align*}

  Finally, notice that $S_1, \dots, S_k, T_1, \dots, T_k$ sampled in the above expression is exactly the same as sampled by $((S_1, T_1), \dots, (S_k, T_k)) \sim \cQ^l_{k \mid s_1, s_2}$. As a result, we can conclude that
  \begin{align*}
    \E_{\tilde \phi}[val(\tilde \phi)] \geq  \E_{((S_1, T_1), \dots, (S_k, T_k)) \sim \cQ^l_{k \mid s_1, s_2}}[P'_{((S_1, T_1), \dots, (S_k, T_k))}(\phi((S_1, T_1)), \dots, \phi((S_k, T_k)))]
    = val_{\cG^l_{k \mid s_1, s_2}}(\phi) = val(\cG^l_{k \mid s_1, s_2}),
  \end{align*}
  completing the proof of this lemma.
\end{proofof}

\subsubsection{$\cG^l_{k \mid s_1, s_2}$ vs $\cG^l_{k \mid \geq s}$: Convex Combination of Distributions}

\begin{proofof}[Lemma~\ref{lem:fully-dense-comb}]
  Observe that $\cQ^l_{k \mid \geq s}$ is uniform on $\supp(\cQ^l_{k \mid \geq s})$ and $\cQ^l_{k \mid s_1, s_2}$ is uniform on $\supp(\cQ^l_{k \mid s_1, s_2})$. Moreover, it is easy to see that $\supp(\cQ^l_{k \mid \geq s}) = \bigcup_{s \leq s_1 \leq |X|, s \leq s_2 \leq |Y|} \supp(\cQ^l_{k \mid s_1, s_2})$ and that $\supp(\cQ^l_{k \mid s_1, s_2})$'s are disjoint for all $s \leq s_1 \leq |X|, s \leq s_2 \leq |Y|$. Hence, we can write $\cQ^l_{k \mid \geq s}$ as
  \begin{align*}
    \cQ^l_{k \mid \geq s} &= \sum_{s \leq s_1 \leq |X|, s \leq s_2 \leq |Y|} \alpha_{s_1, s_2} \cQ^l_{k \mid s_1, s_2}
  \end{align*}
  where $\alpha_{s_1, s_2} = \frac{|\supp(\cQ^l_{k \mid s_1, s_2})|}{|\supp(\cQ^l_{k \mid \geq s})|}$. As a result, from Lemma~\ref{lem:convex-dist}, we have
  \begin{align*}
    val(\cG^l_{k \mid \geq s}) \leq \sum_{s \leq s_1 \leq |X|, s \leq s_2 \leq |Y|} \alpha_{s_1, s_2} val(\cG^l_{k \mid s_1, s_2})
    \leq \max_{s \leq s_1 \leq |X|, s \leq s_2 \leq |Y|} val(\cG^l_{k \mid s_1, s_2})
  \end{align*}
  as desired.
\end{proofof}

\subsubsection{$\cG^l_{k \mid \geq s}$ vs $\cG^l_k$: Lower Bound on Size of Union of Random Subsets}

Before we prove Lemma~\ref{lem:fully-dense-cond}, we will prove the following lemma, which is central to the proof of Lemma~\ref{lem:fully-dense-cond}.

\begin{lemma} \label{lem:union-size}
  Let $U$ be a set and let $a, b$ be any non-negative integers such that $ab \leq |U|$. If $U_1, \dots, U_a$ are independently randomly sampled from $\binom{U}{b}$, then $\Pr[|U_1 \cup \cdots \cup U_a| \leq \frac{ab}{10}] \leq \exp\left(-\frac{ab}{2}\right)$.
\end{lemma}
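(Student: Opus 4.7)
The plan is to use a union bound over all small ``witness'' subsets. Observe that $|U_1 \cup \cdots \cup U_a| \leq ab/10$ holds if and only if there exists $W \subseteq U$ with $|W| = s := \lfloor ab/10 \rfloor$ such that $U_i \subseteq W$ for every $i \in [a]$. The right-hand bound $\exp(-ab/2)$ equals $1$ when $a=0$ (and the event trivially occurs), and when $1 \leq a \leq 9$ we have $|U_1 \cup \cdots \cup U_a| \geq |U_1| = b > ab/10$, so the event has probability zero and the bound is vacuous. Thus it suffices to handle $a \geq 10$.

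For $a \geq 10$, I would apply the union bound over the choice of $W$. For a fixed $W$ of size $s$, since $U_1, \dots, U_a$ are drawn independently and uniformly from $\binom{U}{b}$, the probability that all of them lie inside $W$ equals $\bigl(\binom{s}{b}/\binom{|U|}{b}\bigr)^a$. Using the elementary inequality $\binom{s}{b}/\binom{|U|}{b} \leq (s/|U|)^b$ (which follows from $(s-i)/(|U|-i) \leq s/|U|$ for $0 \leq i < b$, since $s \leq |U|$), together with the standard bound $\binom{|U|}{s} \leq (e|U|/s)^s$, the union bound yields
\begin{align*}
\Pr\bigl[|U_1 \cup \cdots \cup U_a| \leq ab/10\bigr] \;\leq\; \binom{|U|}{s}\left(\frac{s}{|U|}\right)^{ab} \;\leq\; \left(\frac{e|U|}{s}\right)^s \left(\frac{s}{|U|}\right)^{ab} \;=\; e^{s}\left(\frac{s}{|U|}\right)^{ab - s}.
\end{align*}

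The remaining step is a numerical simplification: from $s \leq ab/10$ we get $ab - s \geq 9ab/10$, and from $ab \leq |U|$ we get $s/|U| \leq 1/10$, so the last expression is at most $e^{ab/10}(1/10)^{9ab/10} = \exp\bigl(ab(1 - 9\ln 10)/10\bigr)$. Since $9 \ln 10 > 20$, the exponent is at most $-ab/2$, giving the desired bound.

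I do not anticipate any serious obstacle; this is a clean union-bound argument whose only subtlety is choosing the right formulation of the event (existence of a small covering set) and verifying the constant $1/10$ is small enough for the exponent $1 - 9\ln 10$ to dominate. The hardest conceptual point is noticing that one does not need concentration machinery at all, because the failure event is naturally captured by a witness of polynomial size.
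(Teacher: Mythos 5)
Your proposal is correct and follows essentially the same argument as the paper: a union bound over all witness sets of size $\lfloor ab/10\rfloor$, independence of the $U_i$, the bounds $\binom{s}{b}/\binom{|U|}{b}\leq (s/|U|)^b$ and $\binom{|U|}{s}\leq (e|U|/s)^s$, and a final numerical check. You are in fact slightly more thorough than the paper, which dispenses with the small-$a$ edge cases and the last numerical step without comment.
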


\begin{proof}
  We can rewrite $\Pr[|U_1 \cup \cdots \cup U_a| \leq \frac{ab}{10}]$ as follows:
  \begin{align*}
    \Pr[|U_1 \cup \cdots \cup U_a| \leq \frac{ab}{10}] 
    &= \Pr\left[\bigvee_{U^* \in \binom{U}{\lfloor ab / 10 \rfloor}} U_1 \subseteq U^*, \dots, U_a \subseteq U^*  \right] \\
    (\text{Union Bound}) &\leq \sum_{U^* \in \binom{U}{\lfloor ab / 10 \rfloor}} \Pr[U_1 \subseteq U^*, \dots, U_a \subseteq U^*] \\
    (U_1, \dots, U_a \text{ are chosen independently}) &= \sum_{U^* \in \binom{U}{\lfloor ab / 10 \rfloor}} \Pr[U_1 \subseteq U^*] \cdots \Pr[U_a \subseteq U^*] \\
    &= \binom{|U|}{\lfloor ab / 10 \rfloor}\left(\frac{\binom{\lfloor ab / 10 \rfloor}{b}}{\binom{|U|}{b}}\right)^a \\
    (\text{Fact}~\ref{fact:binom-approx}) &\leq \left(\frac{e|U|}{\lfloor ab / 10 \rfloor}\right)^{\lfloor ab / 10 \rfloor}\left(\frac{\lfloor ab / 10 \rfloor}{|U|}\right)^{ab} \\
    &\leq \exp(-ab/2).
  \end{align*}
\end{proof}

Now, we are ready to prove Lemma~\ref{lem:fully-dense-cond}.

\begin{proofof}[Lemma~\ref{lem:fully-dense-cond}]
  As mentioned earlier, $\cQ^l_{k \mid \geq s}$ is the distribution $\cQ'$ of $\cG^l_k$ conditioned on $|S_1 \cup \cdots \cup S_{\lfloor k / 2 \rfloor}|, |T_{\lfloor k / 2 \rfloor + 1} \cup \cdots \cup T_k| \geq s$. Let us call this event $A$. Thus, from Corollary~\ref{cor:k-cond}, it is enough to show that $\Pr[\neg A] \leq 2\exp\left(-\frac{kl}{8}\right)$. This can easily be proved as follows.
  \begin{align*}
    \Pr[\neg A] 
    &\leq \Pr[|S_1 \cup \cdots \cup S_{\lfloor k / 2 \rfloor}| < s] + \Pr[|T_{\lfloor k / 2 \rfloor + 1} \cup \cdots \cup T_k| < s] \\
    (\text{Our Choice of } s) &\leq \Pr[|S_1 \cup \cdots \cup S_{\lfloor k / 2 \rfloor}| < \frac{\lfloor k / 2 \rfloor l}{10}] + \Pr[|T_{\lfloor k / 2 \rfloor + 1} \cup \cdots \cup T_k| < \frac{\lceil k / 2 \rceil l}{10}] \\
    (\text{Lemma}~\ref{lem:union-size}) &\leq \exp\left(-\frac{\lfloor k / 2 \rfloor l}{2}\right) + \exp\left(-\frac{\lceil k / 2 \rceil l}{2}\right) \leq 2\exp\left(-\frac{kl}{8}\right).
  \end{align*}
\end{proofof}

\section{Improved Bound on Total Correlation of Conditioned Sherali-Adams Solution} \label{app:corr}

To prove Lemma~\ref{lem:corr-decrease}, it is not hard to see that, if we can prove the bound of the fully-dense case, then the $\Delta$-dense case follows easily. More specifically, we will prove the following lemma.

\begin{lemma} \label{lem:corr-decrease-fully-dense}
  Let $\mu$ be any $r$-level SA solution of a fully-dense {\sc Max $k$-CSP} instance $\cG = (V, V^k, \{P_S\})$. Then, for any $0 < l \leq r - k$, there exists $t \leq l$ such that
  $\E_{T \sim V^t, \phi_T \sim \Sigma^T}[C(\mu|\phi_T)] \leq (k^2 \log q)/l.$
\end{lemma}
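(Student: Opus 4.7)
The plan is to combine a chain-rule argument for mutual information---which identifies a conditioning that substantially shrinks the correlation---with a second chain rule that converts a two-argument mutual-information bound into a bound on $k$-way total correlation.

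For the first step, I would draw $t_1,\dots,t_l\in V$ and $S\in V^k$ independently and uniformly, and apply the chain rule for mutual information under $\mu$:
\[
I_\mu\!\left(x_{t_1},\dots,x_{t_l};\,x_S\right) \;=\; \sum_{j=1}^{l} I_\mu\!\left(x_{t_j};\,x_S \,\big|\, x_{t_1},\dots,x_{t_{j-1}}\right).
\]
This is well-defined because $|\{t_1,\dots,t_l\}\cup S|\le l+k\le r$ stays within the consistency radius of $\mu$. The left-hand side is at most $H(x_S)\le k\log q$, so taking expectations over all the $t_j$'s and $S$ and averaging over $j\in\{1,\dots,l\}$ produces some $j^\star$ with
\[
\E_{T,\,t,\,S}\!\left[I_\mu(x_t;\,x_S\,|\,x_T)\right] \;\le\; \frac{k\log q}{l},
\]
where $T:=(t_1,\dots,t_{j^\star-1})\sim V^{j^\star-1}$ and $t:=t_{j^\star}\sim V$ are independent; set $t^\circ:=j^\star-1\le l$. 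Unpacking the conditioning via the standard identity $\E_{\phi_T\sim\mathcal{X}_T}[I_{\mu|\phi_T}(\,\cdot\,;\,\cdot\,)] = I_\mu(\,\cdot\,;\,\cdot\,|\,x_T)$ gives the same bound on $\E_{T,\phi_T,u,S}[I_{\mu|\phi_T}(x_u;x_S)]$.

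For the second step, I would show that for every SA solution $\nu$ of level at least $k+1$,
\[
\E_{S\sim V^k}\!\left[C_\nu(x_{S_1};\dots;x_{S_k})\right] \;\le\; k\cdot\E_{u\sim V,\,S\sim V^k}\!\left[I_\nu(x_u;\,x_S)\right].
\]
The entropy chain rule gives $C_\nu(x_{S_1};\dots;x_{S_k}) = \sum_{i=1}^k I_\nu(x_{S_i};\,x_{S_{<i}}) \le \sum_{i=1}^k I_\nu(x_{S_i};\,x_{S_{\ne i}})$ by data processing (enlarging the second argument of $I$). When $S\sim V^k$ is uniform, $(S_i,\,S_{\ne i})$ is distributed as $(u,S')$ with $u\sim V$ and $S'\sim V^{k-1}$ independent; a further data-processing step comparing $(u,S')$ against the $k$-tuple $(u,S)$---enlarging the second argument and hence only increasing the mutual information---yields the claimed bound after summing the $k$ terms.

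Combining the two inequalities, applied to $\nu=\mu|\phi_T$ and averaged over $(T,\phi_T)$, yields
\[
\E_{T,\phi_T}\!\left[C(\mu|\phi_T)\right] \;\le\; k\cdot\E_{T,\phi_T,u,S}\!\left[I_{\mu|\phi_T}(x_u;\,x_S)\right] \;\le\; \frac{k^2\log q}{l},
\]
with $|T|=t^\circ\le l$, which is the desired bound. The main obstacle is conceptual rather than computational: the conversion in the second step works cleanly only because, in the fully-dense case, the uniformity of $S\sim V^k$ lets the ``one random coordinate out of $S$'' be identified with an independent fresh sample, paying only a factor of $k$ to pass from two-argument MI to $k$-way total correlation. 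For general $\Delta$-dense instances Lemma~\ref{lem:corr-decrease} follows by the standard change-of-measure between $\mathcal{W}$ and the uniform distribution on $V^k$, which costs the extra $1/\Delta$ factor.
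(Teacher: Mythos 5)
Your proof is correct, but it takes a genuinely different route from the paper's. The paper works with the \emph{multivariate} interaction information $I(x_{i_1};\dots;x_{i_m})$ (which can be negative): it uses the identity $C(x_1;\dots;x_n)=\sum_{|S|\geq 2} I(x_S)$ to express $\E_{S\sim V^k}[C_\mu(x_S)]$ as $\sum_{2\le r\le k}\binom{k}{r}\E_{R\sim V^r}[I_\mu(x_R)]$, telescopes the sum $\sum_{0\le t\le l}\E_{T\sim V^t,\phi_T}[C(\mu|\phi_T)]$ via $I(\cdots)=I(\cdots)-I(\cdots|x_T)$, and then tames the resulting binomial sums with Pascal's identity, discarding a nonnegative remainder to conclude $\sum_{0\le t\le l}\E[\cdot]\le k^2\log q$. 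You instead stay entirely within two-argument mutual information: the chain rule $I(x_{t_1},\dots,x_{t_l};x_S)=\sum_j I(x_{t_j};x_S\mid x_{t_{<j}})\le H(x_S)\le k\log q$ locates a good conditioning depth $j^\star$, and the decomposition $C_\nu(x_{S_1};\dots;x_{S_k})=\sum_i I(x_{S_i};x_{S_{<i}})\le\sum_i I(x_{S_i};x_{S_{\ne i}})$ converts the two-argument bound into $k$-way total correlation at the cost of a factor $k$; the uniformity of $S\sim V^k$ is exactly what lets you identify $(S_i,S_{\ne i})$ with an independent pair $(u,S')$. Your route is more elementary (it avoids the possibly-negative interaction information and the combinatorial identities) and is the natural $k$-ary generalization of the Raghavendra--Tan global-correlation argument; the paper's route yields the same constant but additionally bounds the \emph{sum} over all conditioning depths $0\le t\le l$ at once, which your averaging over $j\in[l]$ also essentially recovers. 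The only points that need care in your argument---and you handle both---are that every mutual information invoked lives on a set of size at most $r$ (the set $\{t_1,\dots,t_l\}\cup S$ has size at most $l+k\le r$, and the final enlargement from $S'\sim V^{k-1}$ to $S\sim V^k$ needs level $k+1$, which holds since $|T|=j^\star-1\le l-1$), and that the SA consistency constraints guarantee the marginalization needed for monotonicity of mutual information.
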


Lemma~\ref{lem:corr-decrease-fully-dense}  implies Lemma~\ref{lem:corr-decrease} since, if $\cW$ is a distribution of a $\Delta$-dense {\sc Max $k$-CSP} instance, then $\E_{S \sim \cW}[C_{\mu}(x_S)] \leq \frac{1}{\Delta} \E_{S \sim V^k}[C_{\mu}(x_S)]$ where the inequality comes from $\Delta \cdot \cW(S) \leq 1/n^k$ for all $S \in V^k$.

Before we prove Lemma~\ref{lem:corr-decrease-fully-dense}, we will define entropy and mutual information for a tuple of variables in a similar fashion as we did for total correlation. More specifically, for tuple $S = (x_{i_1}, \dots, x_{i_j}) \in V^j$ of size $j \leq k$, $H_\mu(x_S)$ and $I_\mu(x_S)$ are defined as $H(\sigma_{i_1}, \dots, \sigma_{i_j})$ and $I(\sigma_{i_1}; \dots; \sigma_{i_j})$ respectively where $\sigma_{i_1}, \dots, \sigma_{i_j}$ are jointly sampled from $\mathcal{X}_{\{x_{i_1}, \dots, x_{i_j}\}}$. We also define conditioned entropy, mutual information and total correlation in similar manner; for $S = (x_{i_1}, \dots, x_{i_j})$ and $T = (x_{i_{j + 1}}, \dots, x_{i_{j + l}})$ where $j + l \leq k$, $H_\mu(x_S | x_T), I_\mu(x_S | x_T)$ and $C_\mu(x_S | x_T)$ are defined as $H(\sigma_{i_1}, \dots, \sigma_{i_j} | \sigma_{i_{j + 1}}, \dots, \sigma_{i_{j + l}}), I(\sigma_{i_1}; \dots; \sigma_{i_j} | \sigma_{i_{j + 1}}, \dots, \sigma_{i_{j + l}})$ and $C(\sigma_{i_1}; \dots; \sigma_{i_j} | \sigma_{i_{j + 1}}, \dots, \sigma_{i_{j + l}})$ where $\sigma_{i_1}, \dots, \sigma_{i_{j + l}}$ are jointly sampled from $\mathcal{X}_{\{x_{i_1}, \dots, x_{i_{j + l}}\}}$.

To help us prove Lemma~\ref{lem:corr-decrease-fully-dense}, we will state another lemma, which was proved implicitly in~\cite[Lemma 3.3]{YZ14}. It can be proven easily by rearranging the identity in Lemma~\ref{lem:total-cor-mutual-info}. We do not provide a full proof here.

\begin{lemma} \label{lem:expected-total-cor}
  For any $t, d > 0$, we have $\E_{S \sim V^d}[C_{\mu}(x_S)] = \sum_{2 \leq r \leq d} \binom{d}{r} \E_{R \sim V^r} [I_\mu(x_R)]$.
\end{lemma}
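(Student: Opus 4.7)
The plan is to apply Lemma~\ref{lem:total-cor-mutual-info} pointwise to each tuple $S \in V^d$ and then take expectation, exploiting the symmetry of the uniform distribution on $V^d$. Concretely, for any fixed $S = (x_{i_1}, \dots, x_{i_d})$, Lemma~\ref{lem:total-cor-mutual-info} gives
\begin{align*}
  C_\mu(x_S) = \sum_{J \subseteq [d], |J| \geq 2} I_\mu\bigl(x_{S|_J}\bigr),
\end{align*}
where $S|_J = (x_{i_j})_{j \in J}$ denotes the subtuple of $S$ indexed by $J$. Taking expectation over $S \sim V^d$ and exchanging the sum and expectation then reduces the lemma to showing $\E_{S \sim V^d}[I_\mu(x_{S|_J})] = \E_{R \sim V^r}[I_\mu(x_R)]$ whenever $|J| = r$.

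The key observation for this last identity is that, since $S$ is drawn uniformly from $V^d$, the coordinates $x_{i_1}, \dots, x_{i_d}$ are i.i.d.\ uniform on $V$; hence, for any fixed $J$ of size $r$, the marginal distribution of $S|_J$ is uniform on $V^r$. Consequently $\E_{S \sim V^d}[I_\mu(x_{S|_J})]$ depends only on $|J| = r$ and equals $\E_{R \sim V^r}[I_\mu(x_R)]$. Grouping the subsets $J \subseteq [d]$ by size and noting that there are exactly $\binom{d}{r}$ subsets of size $r$ yields
\begin{align*}
  \E_{S \sim V^d}[C_\mu(x_S)] = \sum_{r=2}^{d} \binom{d}{r} \E_{R \sim V^r}[I_\mu(x_R)],
\end{align*}
which is precisely the claimed identity.

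There is essentially no obstacle here: once one writes out Lemma~\ref{lem:total-cor-mutual-info} and invokes the exchangeability of the coordinates under the uniform product distribution on $V^d$, the identity falls out by linearity of expectation and a counting argument. The only minor subtlety is that $S$ is formally a tuple rather than a set, so one has to be careful to sum over \emph{subsets of indices} $J \subseteq [d]$ (which is what Lemma~\ref{lem:total-cor-mutual-info} provides when applied to the indexed collection of variables in $S$), not over subsets of the underlying variable set, since the latter would collapse repeated coordinates. This is purely a bookkeeping matter and does not affect the argument.
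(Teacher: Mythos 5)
Your proof is correct and follows exactly the route the paper intends: the paper omits a full proof, saying only that the identity follows "by rearranging the identity in Lemma~\ref{lem:total-cor-mutual-info}," which is precisely your pointwise application of that lemma followed by exchangeability of the uniform coordinates and the $\binom{d}{r}$ count. (The only implicit ingredient worth naming is that the SA consistency of marginals guarantees $I_\mu(x_{S|_J})$ computed from $\mathcal{X}_S$ agrees with the definition via $\mathcal{X}_{S|_J}$, but this is immediate.)
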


Now, we are ready to prove Lemma~\ref{lem:corr-decrease-fully-dense}.

\begin{proofof}[Lemma~\ref{lem:corr-decrease-fully-dense}]
  Consider $\sum_{0 \leq t \leq l} \E_{T \sim V^t, \phi_T \sim \Sigma^T}[C(\mu|\phi_T)]$. We can rewrite it as
  \begin{align*}
    \sum_{0 \leq t \leq l} \E_{T \sim V^t, \phi_T \sim \Sigma^T}[C(\mu|\phi_T)] &= \sum_{0 \leq t \leq l} \E_{T \sim V^t} \E_{S \sim V^k} [C_\mu(x_S | x_T)] \\
    (\text{Lemma}~\ref{lem:expected-total-cor}) &= \sum_{2 \leq r \leq k} \binom{k}{r} \sum_{0 \leq t \leq l} \E_{T \sim V^t} \E_{R \sim V^r} [I_\mu(x_R | x_T)].
  \end{align*}

  Moreover, by Lemma~\ref{lem:cond-mutual-info}, we have
  \begin{align*}
    \sum_{0 \leq t \leq l} \E_{T \sim V^t} \E_{R \sim V^r} [I_\mu(x_R | x_T)]
    &= \sum_{0 \leq t \leq l} \left(\E_{T \sim V^t} \E_{R \sim V^{r - 1}} [I_\mu(x_R | x_T)] - \E_{T \sim V^{t + 1}} \E_{R \sim V^{r - 1}} [I_\mu(x_R | x_T)]\right) \\
    &= \E_{R \sim V^{r - 1}} [I_\mu(x_R)] - \E_{T \sim V^{l + 1}} \E_{R \sim V^{r - 1}} [I_\mu(x_R | x_T)].
  \end{align*}

  Hence, we have
  \begin{align*}
    \sum_{0 \leq t \leq l} \E_{T \sim V^t, \phi_T \sim \Sigma^T}[C(\mu|\phi_T)] = \sum_{2 \leq r \leq k} \binom{k}{r} \E_{R \sim V^{r - 1}} [I_\mu(x_R)] - \sum_{2 \leq r \leq k} \binom{k}{r} \E_{T \sim V^{l + 1}} \E_{R \sim V^{r - 1}} [I_\mu(x_R | x_T)].
  \end{align*}

  Using Pascal's identity, i.e., $\binom{k}{r} = \binom{k - 1}{r - 1} + \binom{k - 1}{r} = \cdots = \binom{k - 1}{r - 1} + \cdots + \binom{r - 1}{r - 1}$, the term $\sum_{2 \leq r \leq k} \binom{k}{r} \E_{R \sim V^{r - 1}} [I_\mu(x_R)]$ can be further rearranged as follows.
  \begin{align*}
    \sum_{2 \leq r \leq k} \binom{k}{r} \E_{R \sim V^{r - 1}} [I_\mu(x_R)] &= \binom{k}{2} \E_{i \sim V} [I_\mu(x_i)] + \sum_{3 \leq r \leq k} \binom{k}{r} \E_{R \sim V^{r - 1}} [I_\mu(x_R)] \\
    (\text{Pascal's identity}) &= \binom{k}{2} \E_{i \sim V} [H_\mu(x_i)] + \sum_{2 \leq d \leq k - 1} \sum_{3 \leq r \leq d + 1} \binom{d}{r - 1} \E_{R \sim V^{r - 1}} [I_\mu(x_R)] \\
    (\text{Lemma}~\ref{lem:expected-total-cor}) &= \binom{k}{2} \E_{i \sim V} [H_\mu(x_i)] + \sum_{2 \leq d \leq k - 1} \E_{S \sim V^d} [C_\mu(x_S)].
  \end{align*}

  Similarly, we can write $\sum_{2 \leq r \leq k} \binom{k}{r} \E_{T \sim V^{l + 1}} \E_{R \sim V^{r - 1}} [I_\mu(x_R | x_T)]$ as
  \begin{align*}
    \sum_{2 \leq r \leq k} \binom{k}{r} \E_{T \sim V^{l + 1}} \E_{R \sim V^{r - 1}} [I_\mu(x_R | x_T)]
    = \binom{k}{2} \E_{T \sim V^{l + 1}} \E_{i \sim V} [H_\mu(x_i | x_T)] + \sum_{2 \leq d \leq k - 1} \E_{T \sim V^{l + 1}} \E_{S \sim V^d} [C_\mu(x_S | x_T)]
    \geq 0
  \end{align*}
  where the inequality comes from non-negativity of entropy and total correlation.

  Hence, we can conclude that
  \begin{align*}
    \sum_{0 \leq t \leq l} \E_{T \sim V^t, \phi_T \sim \Sigma^T}[C(\mu|\phi_T)] &\leq \binom{k}{2} \E_{i \sim V} [H_\mu(x_i)] + \sum_{2 \leq d \leq k - 1} \E_{S \sim V^d} [C_\mu(x_S)] \\
    (\text{Lemma}~\ref{lem:total-cor-entropy}) &\leq \binom{k}{2} \E_{i \sim V} [H_\mu(x_i)] + \sum_{2 \leq d \leq k - 1} \E_{S \sim V^d} [\sum_{i \in S} H_\mu(x_i)] \\
    &\leq \binom{k}{2} \log q + \sum_{2 \leq d \leq k - 1} d \log q \\
    &\leq k^2 \log q.
  \end{align*}
  Note that the second-to-last inequality comes from a well-known fact that, for any random variable $y$ drawn according to any distribution $\mathcal{Y}$, $H(y) \leq \log |\supp(\mathcal{Y})|$. In particular, we have $H(x_i) \leq \log q$ for every $i \in V$.

  Hence, there exists $t \leq l$ such that $\E_{T \sim V^t, \phi_T \sim \Sigma^T}[C(\mu|\phi_T)] \leq (k^2 \log q) / l$, as desired.
\end{proofof}

\section{Reduction from {\sc Densest $k$-Subhypergraph} to CSP} \label{app:red-dense-hyper}

In this section, we describe a variant of Charikar \etal's~\cite{CHK} reduction and prove Lemma~\ref{lem:dense-hyper-reduction}.

\begin{proofof}[Lemma~\ref{lem:dense-hyper-reduction}]
  Given a $d$-uniform hypergraph $(V, E)$, the reduction simply proceeds as follows. First, we randomly partition the set of vertices $V$ into $k$ subsets $S_1, \dots, S_k$ by, for each vertex $v \in V$, select $i_v \in [k]$ uniformly independently at random and put $v$ into $S_{i_v}$. Then, the set of variables $V'$ is $[k]$ and the alphabet set $\Sigma$ of the resulting instance is the vertex set $V$. Finally, $P_{(i_1, \dots, i_d)}(v_1, \dots, v_d)$ is one if and only if $\{v_1, \dots, v_d\} \in E$ and $v_1 \in S_1, \dots, v_d \in S_d$, and is zero otherwise.

  The first property in Lemma~\ref{lem:dense-hyper-reduction} is obviously satisfied. The second property can also be achieved easily: given an assignment $\phi$, pick the set $S$ to be $\{\phi(1), \dots, \phi(k)\}$. If $\phi(i)$'s are not all distinct, add arbritary vertices into $S$ to make its size $k$. Clearly, the density of subhypergraph induced on $S$ is at least $val(\phi)$.

  Next, assume that $k \geq 8d^2$ and that the densest $k$-subhypergraph of $(V, E)$ has density $\Delta$. Let $S^*$ and $E^*$ be the set of vertices and edges of the subhypergraph. Let $t = 16d$ and $E^*_t = \{\{v_1, \dots, v_d\} \in E^* \mid |S^* \cap S_{i_{v_1}}|, \dots, |S^* \cap S_{i_{v_d}}| \leq t \text{ and } i_{v_1}, \dots, i_{v_d} \text{are all distinct}\}$. Observe that $val(\cG) \geq \frac{|E^*_t|}{t^dk^d}$ because, if we define a mix strategy $\tilde \phi$ where $\tilde \phi(i)$ is randomly uniformly picked from $S^* \cap S_i$ (and arbitrarily if the intersection is empty), then the expected number of constraints satisfied by $\tilde \phi$ is at least $|E^*_t|/t^d$. Since $\frac{|E^*|}{t^dk^d} = \frac{\Delta}{t^d d!} = \Delta/2^{O(d \log d)}$, to prove the last property, it is enough to show that $\Pr[|E^*_t| \geq |E^*|/2] \geq 1/2$.

  In other words, we want to show that $\Pr[|E^* - E^*_t| > |E^*|/2] \leq 1/2$. By Markov's inequality, it is enough for us to show that $\E[|E^* - E^*_t|] \leq |E^*|/4$. Again, to prove this, it is enough to show that for each $(v_1, \dots, v_d) \in E^*$, the probability that $(v_1, \dots, v_d) \notin E^*_t$ is at most $1/4$. This probability can be bounded as follows.
  \begin{align*}
    \Pr[(v_1, \dots, v_d) \notin E^*_t]
    &= \Pr[(|S^* \cap S_{i_{v_1}}| > t) \vee \cdots \vee (|S^* \cap S_{i_{v_k}}| > t) \vee (i_{v_1}, \dots, i_{v_d} \text{are not all distinct})] \\
    (\text{Union Bound}) &\leq \left(\sum_{j \in [d]} \Pr[|S^* \cap S_{i_{v_j}}| > t]\right) + \Pr[i_{v_1}, \dots, i_{v_d} \text{are not all distinct}].
  \end{align*}

  For each $j \in [d]$, Markov's inequality gives the following upper bound on $\Pr[|S^* \cap S_{i_{v_j}}| > t]$.
  \begin{align*}
    \Pr[|S^* \cap S_{i_{v_j}}| > t] \leq \frac{\E[|S^* \cap S_{i_{v_j}}|]}{t} = \frac{1 + (k - 1)/k}{t} \leq \frac{1}{8d}.
  \end{align*}

  Moreover, the probability that $i_{v_1}, \dots, i_{v_d}$ are not all distinct can be easily computed as follows.
  \begin{align*}
    \Pr[i_{v_1}, \dots, i_{v_d} \text{are not all distinct}]
    &= 1 - \Pr[i_{v_1}, \dots, i_{v_d} \text{are all distinct}] \\
    &= 1 - \frac{k(k - 1) \dots (k - d)}{k^d} \\
    &\leq 1 - \left(\frac{k - d}{k}\right)^d \\
    (\text{Bernoulli's inequality}) &\leq 1 - (1 - d^2/k) \\
    (\text{From } k \geq 8d^2) &\leq 1/8.
  \end{align*}

  Hence, we have $\Pr[(v_1, \dots, v_d) \notin E^*_t] \leq \left(\sum_{j \in [d]} \frac{1}{8d}\right) + 1/8 = 1/4$, completing the proof of the lemma.
\end{proofof}

\section{Reductions in the Lasserre Hierarchies}

In this section, we provide proofs for simple results we use regarding the Lasserre hierarchies.

\subsection{Completeness of Lasserre Solution Through Reductions} \label{app:lasserre-reduction-completeness}

In this section, we prove Lemma~\ref{lem:lasserre-reduction-completeness}. While the lemma is not explicitly stated anywhere, it was proven implicitly before (e.g. in~\cite{Tul09}). Before we prove the lemma, we will prove a few useful facts regarding a vector solution of Lasserre relaxations of {\sc Max $k$-CSP}.

Throughout the following lemmas, suppose that $\{U_{(S, \phi_S)}\}$ is a (not necessary complete) vector solution of $r$-level Lasserre relaxation of a {\sc Max $k$-CSP} instance $\cG = (V, \cW, \{P_S\})$. Note that $\cG$ here is different than $\cG$ in Lemma~\ref{lem:lasserre-reduction-completeness}. For $x \in V, \sigma \in \Sigma$, we use notation $x \to \sigma$ to denote an element $f$ of $\Sigma^{\{x\}}$ such that $f(x) = \sigma$. For brevity, we also abbreviate $U_{(\{x\}, x \to \sigma)}$ as $U_{(x, \sigma)}$.

\begin{lemma} \label{lem:lasserre-exc-var}
  For every $S \in \binom{V}{[r - 1]}$, $\phi_S \in \Sigma^S, x \in V - S$, we have $\sum_{\sigma \in \Sigma} \|U_{(S \cup \{x\}, \phi_S \circ (x \to \sigma))}\|^2 = \|U_{(S, \phi_S)}\|^2$.
\end{lemma}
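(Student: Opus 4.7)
The plan is to unpack both sides using the consistency axiom (constraint 2) to rewrite norms as inner products involving $U_{(\emptyset,\emptyset)}$, and then exploit the orthogonality axiom (constraint 3) together with the normalization $\sum_\sigma \|U_{(x,\sigma)}\|^2 = 1$ to collapse the sum over $\sigma$.

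First, I would reduce each squared norm on the left to an inner product between a ``base'' vector and a single-variable vector. Applying constraint 2 with $S_1 = S_2 = S \cup \{x\}$, $\phi_1 = \phi_2 = \phi_S \circ (x \to \sigma)$ and $S_3 = S$, $S_4 = \{x\}$, $\phi_3 = \phi_S$, $\phi_4 = x \to \sigma$ (which have matching union $S \cup \{x\}$ and matching combined assignment $\phi_S \circ (x \to \sigma)$), I obtain
\begin{align*}
\|U_{(S \cup \{x\}, \phi_S \circ (x \to \sigma))}\|^2 \;=\; \langle U_{(S, \phi_S)}, U_{(x, \sigma)} \rangle.
\end{align*}
Summing over $\sigma \in \Sigma$ and pulling the sum inside the inner product gives
\begin{align*}
\sum_{\sigma \in \Sigma} \|U_{(S \cup \{x\}, \phi_S \circ (x \to \sigma))}\|^2 \;=\; \Bigl\langle U_{(S, \phi_S)}, \; \sum_{\sigma \in \Sigma} U_{(x, \sigma)} \Bigr\rangle.
\end{align*}

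The remaining task is to show $\sum_\sigma U_{(x, \sigma)} = U_{(\emptyset, \emptyset)}$, from which the conclusion follows by one more application of constraint 2 (with $S_1 = S_2 = S$ and $S_3 = S$, $S_4 = \emptyset$), giving $\langle U_{(S,\phi_S)}, U_{(\emptyset,\emptyset)} \rangle = \|U_{(S,\phi_S)}\|^2$. To establish the claimed equality of vectors, I would compute $\|U_{(\emptyset,\emptyset)} - \sum_\sigma U_{(x,\sigma)}\|^2$. Expanding, each cross term $\langle U_{(x,\sigma)}, U_{(x,\sigma')} \rangle$ with $\sigma \neq \sigma'$ vanishes by constraint 3 (inconsistency of the assignments $x \to \sigma$ and $x \to \sigma'$), so $\|\sum_\sigma U_{(x,\sigma)}\|^2 = \sum_\sigma \|U_{(x,\sigma)}\|^2 = 1$ by constraint 4. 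By constraint 2 applied to $(\emptyset,\emptyset)$ and $(\{x\}, x \to \sigma)$, $\langle U_{(\emptyset,\emptyset)}, U_{(x,\sigma)} \rangle = \|U_{(x,\sigma)}\|^2$, so $\langle U_{(\emptyset,\emptyset)}, \sum_\sigma U_{(x,\sigma)}\rangle = 1$ as well. Combined with $\|U_{(\emptyset,\emptyset)}\|^2 = 1$ from constraint 5, the squared distance equals $1 - 2 + 1 = 0$, yielding $\sum_\sigma U_{(x,\sigma)} = U_{(\emptyset,\emptyset)}$.

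There is no real obstacle here; the whole proof is essentially an exercise in bookkeeping with the Lasserre axioms. The only minor subtlety is making sure the particular instances of constraint 2 used are legitimate, i.e., verifying that the set unions and composed assignments genuinely match on both sides of each equation. The level restriction $|S \cup \{x\}| \leq r$ (guaranteed by the assumption $S \in \binom{V}{[r-1]}$ and $x \in V - S$) ensures every vector invoked is well-defined in the $r$-level solution.
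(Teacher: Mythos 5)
Your proposal is correct and follows essentially the same route as the paper's proof: both establish the key identity $\sum_{\sigma} U_{(x,\sigma)} = U_{(\emptyset,\emptyset)}$ via the same expansion of the squared distance, and then collapse the sum $\sum_{\sigma}\langle U_{(S,\phi_S)}, U_{(x,\sigma)}\rangle$ using linearity and the consistency constraint. The only difference is the order of presentation, which is immaterial.
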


\begin{proof}
  First, we will show that, for any $x \in V$, $\sum_{\sigma \in \Sigma} U_{(x,  \sigma)} = U_{(\emptyset, \emptyset)}$. This can be done by rearranging $\|\sum_{\sigma \in \Sigma} U_{(x,  \sigma)} - U_{(\emptyset, \emptyset)}\|^2$ as follows.
  \begin{align*}
    \|\sum_{\sigma \in \Sigma} U_{(x,  \sigma)} - U_{(\emptyset, \emptyset)}\|^2 &= \sum_{\sigma \in \Sigma} \|U_{(x,  \sigma)}\|^2 + 2\sum_{\sigma, \sigma' \in \Sigma \atop \sigma \ne \sigma'} \langle U_{(x,  \sigma)}, U_{(x,  \sigma')}\rangle  - 2\sum_{\sigma \in \Sigma} \langle U_{(x,  \sigma)}, U_{(\emptyset, \emptyset)}\rangle  + \|U_{(\emptyset, \emptyset)}\|^2 \\
    &= \sum_{\sigma \in \Sigma} \|U_{(x,  \sigma)}\|^2 - 2\sum_{\sigma \in \Sigma} \|U_{(x,  \sigma)}\|^2 + \|U_{(\emptyset, \emptyset)}\|^2 \\
    &= 1 - 2 + 1 = 0.
  \end{align*}

  As a result, we can write $\sum_{\sigma \in \Sigma} \|U_{(S \cup \{x\}, \phi_S \circ (x \to \sigma))}\|^2$ as
  \begin{align*}
    \sum_{\sigma \in \Sigma} \|U_{(S \cup \{x\}, \phi_S \circ (x \to \sigma))}\|^2 &= \sum_{\sigma \in \Sigma} \langle U_{(S, \phi_S)}, U_{(x, \sigma)}\rangle
    = \langle U_{(S, \phi_S)}, \sum_{\sigma \in \Sigma} U_{(x, \sigma)}\rangle
    = \langle U_{(S, \phi_S)}, U_{(\emptyset, \emptyset)}\rangle
    = \|U_{(S, \phi_S)}\|^2.
  \end{align*}
\end{proof}

The following lemma is almost immediate from Lemma~\ref{lem:lasserre-exc-var}.

\begin{lemma} \label{lem:lasserre-square-sum}
  For every $S \in \binom{V}{r}$, we have $\sum_{\phi_S \in \Sigma^S} \|U_{(S, \phi_S)}\|^2 = 1$
\end{lemma}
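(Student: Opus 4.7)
The plan is a straightforward induction on $|S|$, using Lemma~\ref{lem:lasserre-exc-var} as the engine. The base case is $|S| = 0$, where the statement reduces to $\|U_{(\emptyset, \emptyset)}\|^2 = 1$, which is an explicit Lasserre constraint.

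For the inductive step, suppose the identity holds for all subsets of size $m - 1$ and consider any $S \in \binom{V}{m}$ with $m \leq r$. Pick any element $x \in S$ and set $S' = S \setminus \{x\}$. Partition the sum over $\phi_S \in \Sigma^S$ by first fixing the restriction $\phi_{S'} := \phi_S|_{S'}$ and then ranging over the value $\sigma = \phi_S(x) \in \Sigma$, so that
\begin{align*}
\sum_{\phi_S \in \Sigma^S} \|U_{(S, \phi_S)}\|^2
= \sum_{\phi_{S'} \in \Sigma^{S'}} \sum_{\sigma \in \Sigma} \|U_{(S' \cup \{x\},\, \phi_{S'} \circ (x \to \sigma))}\|^2.
\end{align*}
Applying Lemma~\ref{lem:lasserre-exc-var} to the inner sum (which is valid since $|S'| = m - 1 \leq r - 1$ and $x \in V - S'$) collapses it to $\|U_{(S', \phi_{S'})}\|^2$, and then the inductive hypothesis on $S'$ gives $\sum_{\phi_{S'} \in \Sigma^{S'}} \|U_{(S', \phi_{S'})}\|^2 = 1$, completing the step.

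There is no real obstacle here: the entire content of the lemma is packaged inside Lemma~\ref{lem:lasserre-exc-var}, and all that remains is the routine bookkeeping of peeling off one variable at a time. The only mild subtlety is ensuring the index-range hypothesis of Lemma~\ref{lem:lasserre-exc-var} is met at each step, but this is automatic since we only invoke it for $|S'| \leq r - 1$.
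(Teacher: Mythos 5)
Your proof is correct and takes essentially the same approach as the paper: the paper unrolls the same peeling argument iteratively (writing $S = \{x_1,\dots,x_l\}$ and telescoping down through $S_{l-1}, \dots, S_1$ via Lemma~\ref{lem:lasserre-exc-var}), whereas you phrase it as a formal induction anchored at $S = \emptyset$. The two are equivalent, and your handling of the hypothesis $|S'| \leq r-1$ is fine.
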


\begin{proof}
  Suppose that $S = \{x_1, \dots, x_l\}$ for some $l \leq r$. Let $S_i = \{x_1, \dots, x_i\}$ for every $i \in [l]$. We can use Lemma~\ref{lem:lasserre-exc-var} to write $\sum_{\phi_S \in \Sigma^S} \|U_{(S, \phi_S)}\|^2$ as
  \begin{align*}
    \sum_{\phi_S \in \Sigma^S} \|U_{(S, \phi_S)}\|^2
    &= \sum_{\phi_{S_{l - 1}} \in \Sigma^{S_{l - 1}}} \sum_{\sigma \in \Sigma} \|U_{(S_{l - 1} \cup \{x_l\}, \phi_{S_{l - 1} \circ (x_l \to \sigma)})}\|^2 \\
    (\text{Lemma~\ref{lem:lasserre-exc-var}}) &= \sum_{\phi_{S_{l - 1}} \in \Sigma^{S_{l - 1}}} \|U_{(S_{l - 1}, \phi_{S_{l - 1}})}\|^2 \\
    &\vdots \\
    &= \sum_{\phi_{S_1} \in \Sigma^{S_1}} \|U_{(S_1, \phi_{S_1})}\|^2 = 1.
  \end{align*}
\end{proof}

The last intermediate lemma we prove is a lemma regarding characterization of a complete solution.

\begin{lemma} \label{lem:lasserre-complete-char}
  $\{U_{(S, \phi_S)}\}$ is complete iff $U_{(S, \phi_S)} = 0$ for every $S \in \supp(\cW), \phi_S \in \Sigma^S$ such that $P_S(\phi_S) \ne 1$.
\end{lemma}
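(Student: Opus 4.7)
The plan is to observe that completeness forces a per-constraint equality, and then exploit $\sum_{\phi_S}\|U_{(S,\phi_S)}\|^2 = 1$ (Lemma~\ref{lem:lasserre-square-sum}) to turn that equality into a vanishing statement for the ``bad'' vectors.

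Concretely, I would first fix any $S \in \supp(\cW)$ and consider the quantity
\[
A(S) \;:=\; \sum_{\phi_S \in \Sigma^S} \|U_{(S, \phi_S)}\|^2 \, P_S(\phi_S).
\]
Since $P_S(\phi_S) \in [0,1]$ and $\|U_{(S,\phi_S)}\|^2 \geq 0$, Lemma~\ref{lem:lasserre-square-sum} gives the crude bound $A(S) \leq \sum_{\phi_S}\|U_{(S,\phi_S)}\|^2 = 1$, with equality iff every $\phi_S$ contributing a positive $\|U_{(S,\phi_S)}\|^2$ satisfies $P_S(\phi_S) = 1$. Equivalently, $A(S) = 1$ iff $U_{(S,\phi_S)} = 0$ whenever $P_S(\phi_S) \neq 1$.

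Next I would take the expectation over $S \sim \cW$. By definition, $val_{Las}(\{U_{(S,\phi_S)}\}) = \E_{S \sim \cW}[A(S)]$, and the pointwise bound $A(S) \leq 1$ yields $val_{Las} \leq 1$. Since $\cW$ is a probability distribution on $\supp(\cW)$, equality $val_{Las} = 1$ is equivalent to $A(S) = 1$ for every $S \in \supp(\cW)$. Combining with the previous step, this is in turn equivalent to the stated condition: $U_{(S,\phi_S)} = 0$ for every $S \in \supp(\cW)$ and every $\phi_S \in \Sigma^S$ with $P_S(\phi_S) \neq 1$.

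There is no real obstacle here; the only thing to be careful about is invoking Lemma~\ref{lem:lasserre-square-sum}, which applies to $S$ of size at most $r$ (and the Lasserre relaxation is only defined for $r \geq k$, so each constraint $S \in \supp(\cW)$ of arity $k$ is covered). This also uses the implicit convention that when $S$ is a tuple with repeated variables, $U_{(S,\phi_S)}$ is zero for any $\phi_S$ that assigns the same variable inconsistently, so $A(S)$ is unaffected; and for consistent $\phi_S$ the previous lemmas still give $\sum \|U_{(S,\phi_S)}\|^2 = 1$ on the underlying set of distinct variables.
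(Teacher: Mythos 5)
Your proof is correct and follows essentially the same route as the paper's: both rewrite $1 - val_{Las}$ via Lemma~\ref{lem:lasserre-square-sum} as an expectation of the non-negative quantity $\sum_{\phi_S}\|U_{(S,\phi_S)}\|^2(1-P_S(\phi_S))$ and read off the equivalence from $P_S(\phi_S)\in[0,1]$. Your per-constraint formulation of $A(S)$ followed by taking the expectation is just a slight repackaging of the paper's single displayed calculation.
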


\begin{proof}
  From Lemma~\ref{lem:lasserre-square-sum}, we have
  \begin{align*}
    1 - \E_{S \sim \cW}[\sum_{\phi_S \in \Sigma^S} ||U_{(S, \phi_S)}||^2 P_{S}(\phi_S)]
    &= \E_{S \sim \cW}[1 - \sum_{\phi_S \in \Sigma^S} ||U_{(S, \phi_S)}||^2 P_{S}(\phi_S)] \\
    (\text{Lemma}~\ref{lem:lasserre-square-sum}) &= \E_{S \sim \cW}[\sum_{\phi_S \in \Sigma^S} ||U_{(S, \phi_S)}||^2 - \sum_{\phi_S \in \Sigma^S} ||U_{(S, \phi_S)}||^2 P_{S}(\phi_S)] \\
    &= \E_{S \sim \cW}[\sum_{\phi_S \in \Sigma^S} ||U_{(S, \phi_S)}||^2 (1 - P_{S}(\phi_S))].
  \end{align*}
  Since $P_{S}(\phi_S) \in [0, 1]$, we can conclude that $\E_{S \sim \cW}[\sum_{\phi_S \in \Sigma^S} ||U_{(S, \phi_S)}||^2 P_{S}(\phi_S)] = 1$ if and only if $U_{(S, \phi_S)} = 0$ for every $S \in \supp(\cW), \phi_S \in \Sigma^S$ such that $P_S(\phi_S) \ne 1$, completing the proof of the lemma.
\end{proof}

We will now prove Lemma~\ref{lem:lasserre-reduction-completeness}.

\begin{proofof}[Lemma~\ref{lem:lasserre-reduction-completeness}]
  Suppose that $opt^r_{Las}(\cG) = 1$ for some $r \geq k, d k'$. Let $\{U_{(S, \phi_S)}\}$ be a complete vector solution of the $r$-level Lasserre relaxation of $\cG$.

  For brevity, let $l = \lfloor r / d \rfloor$. We will define a complete solution for $l$-level relaxation of $\cG'$ as follows. For every $T' = \{S_1, \dots, S_i\} \subseteq V'$ of size at most $l$ and every $\phi'_{T'} \in \Sigma'^{T'}$, let $S_{T'}$ denote $S_1 \cup \cdots \cup S_i$ and let $\phi_{T'}$ denote $\phi'_{T'}(S_1) \circ \cdots \circ \phi'_{T'}(S_i)$. We then define $U'_{(T', \phi'_{T'})}$ as follows.
  \begin{align*}
    U'_{(T', \phi'_{T'})} =
    \begin{cases}
      U_{(S(T'), \phi_{T'})} & \text{ if } \{\phi'_{T'}(S)\}_{S \in T'} \text{ are consistent,} \\
      0 & \text{ otherwise.}
    \end{cases}
  \end{align*}

  Next, we will show that $\{U'_{(T', \phi'_{T'})}\}$ is indeed a valid vector solution of the $l$-level Lasserre relaxation of $\cG'$. Consider any $T'_1, T'_2, T'_3, T'_4 \in \binom{V'}{l}$ and any $\phi'_1 \in \Sigma'^{T'_1}, \phi'_2 \in \Sigma'^{T'_2}, \phi'_3 \in \Sigma'^{T'_3}, \phi'_4 \in \Sigma'^{T'_4}$. For convenience, we use $S_i$ to denote $\bigcup_{S \in T'_i} S$ and  $\phi_i$ to denote $\bigcirc_{S \in T'_i} \phi'_{T'_i}(S)$. We can prove the following properties.
  \begin{itemize}
    \item $\|U'_{(\emptyset, \emptyset)}\|^2 = \|U_{(\emptyset, \emptyset)}\|^2 = 1$.
    \item Since each of $U'_{(T'_1, \phi'_1)}, U'_{(T'_2, \phi'_2)}$ is either 0 or $U_{(S, \phi_S)}$ for some $S, \phi_S$, we have $\langle U'_{(T'_1, \phi'_1)}, U'_{(T'_2, \phi'_2)}\rangle \geq 0$.
    \item Suppose that $\phi'_1, \phi'_2$ are inconsistent. If at least one of $U_{(T'_1, \phi'_1)}, U_{(T'_2, \phi'_2)}$ is 0, we have $\langle U_{(T'_1, \phi'_1)}, U_{(T'_2, \phi'_2)}\rangle = 0$. Otherwise, we have $U'_{(T'_1, \phi'_1)} = U_{(S_1, \phi_1)}$ and $U'_{(T'_2, \phi'_2)} =  U_{(S_2, \phi_2)}$.  Since $\phi'_1$ and $\phi'_2$ are inconsistent, $\phi_1$ and $\phi_2$ are also inconsistent. Hence, we also have $\langle U'_{(T'_1, \phi'_1)}, U'_{(T'_2, \phi'_2)}\rangle = \langle U_{(S_1, \phi_1)}, U_{(S_2, \phi_2)}\rangle = 0$.
    \item Suppose that $T'_1 \cup T'_2 = T'_3 \cup T'_4$ and $\phi'_1 \circ \phi'_2 = \phi'_3 \circ \phi'_4$. It is not hard to see that $\{\phi'_1(S)\}_{S \in T'_1}, \{\phi'_2(S)\}_{S \in T'_2}$ are inconsistent if and only if $\{\phi'_3(S)\}_{S \in T'_3}, \{\phi'_4(S)\}_{S \in T'_4}$ are inconsistent. In the case that they are inconsistent, we have $\langle U'_{(T'_1, \phi'_1)}, U'_{(T'_2, \phi'_2)}\rangle = 0 = \langle U'_{(T'_3, \phi'_3)}, U'_{(T'_4, \phi'_4)}\rangle$.

      On the other hand, if $\{\phi'_1(S)\}_{S \in T'_1}, \{\phi'_2(S)\}_{S \in T'_2}$ are consistent, then $U'_{(T'_1, \phi'_1)} = U_{(S_1, \phi_1)}, U'_{(T'_2, \phi'_2)} = U_{(S_2, \phi_2)}, U'_{(T'_3, \phi'_3)} = U_{(S_3, \phi_3)}$ and $U'_{(T'_4, \phi'_4)} = U_{(S_4, \phi_4)}$. Since $T'_1 \cup T'_2 = T'_3 \cup T'_4$, we have $S_1 \cup S_2 = S_3 \cup S_4$. Moreover, since $\phi'_1 \circ \phi'_2 = \phi'_3 \circ \phi'_4$, we also have $\phi_1 \circ \phi_2 = \phi_3 \circ \phi_4$. Hence, we have $\langle U'_{(T'_1, \phi'_1)}, U'_{(T'_2, \phi'_2)}\rangle = \langle U_{(S_1, \phi_1)}, U_{(S_2, \phi_2)}\rangle = \langle U_{(S_3, \phi_3)}, U_{(S_4, \phi_4)}\rangle = \langle U'_{(T'_3, \phi'_3)}, U'_{(T'_4, \phi'_4)}\rangle$.

      As a result, we have $\langle U'_{(T'_1, \phi'_1)}, U'_{(T'_2, \phi'_2)}\rangle = \langle U'_{(T'_3, \phi'_3)}, U'_{(T'_4, \phi'_4)}\rangle$ in both cases.
    \item For any $S \in V'$, from Lemma~\ref{lem:lasserre-square-sum}, we have $\sum_{\phi_S \in \Sigma^{S}} \|U'_{(S, \phi_S)}\|^2 = 1$ as desired.
  \end{itemize}

  Hence, $\{U'_{(T', \phi'_{T'})}\}$ is indeed a vector solution of the $l$-level Lasserre relaxation of $\cG'$. Finally, we show that it is also complete. From Lemma~\ref{lem:lasserre-complete-char},  it is enough to show that  $U'_{(T', \phi'_{T'})} = 0$ for all $T' \in \binom{V}{k'}$ and $\phi'_{T'} \in \Sigma^{T'}$ such that $P'_{T'}(\phi'_{T'}) \ne 1$. Consider any such $T'$ and $\phi'_{T'}$.

  If $\{\phi'_{T'}(S)\}_{S \in T'}$ are inconsistent, then $U'_{(T', \phi'_{T'})} = 0$. Otherwise, $U'_{(T', \phi'_{T'})} = U_{(S(T'), \phi_{T'})}$ where $S(T')$ and $\phi_{T'}$ are defined similarly as defined earlier in the proof. Observe that, from how $\cG'$ is defined and since $P_{T'}(\phi'_{T'}) \ne 1$, we know that there exists $x_1, \dots, x_k \in S(T')$ such that $P_{\{x_1, \dots, x_k\}}(\phi_S|_{\{x_1, \dots, x_k\}}) \ne 1$. Again, from Lemma~\ref{lem:lasserre-complete-char} and from $\{U_{(S, \phi_S)}\}$ is complete, we have $U_{(\{x_1, \dots, x_k\}, \phi_S|_{\{x_1, \dots, x_k\}})} = 0$.

  As a result, $\|U_{(S(T'), \phi_{T'})}\|^2 = \langle U_{(S(T'), \phi_{T'})}, U_{(\{x_1, \dots, x_k\}, \phi_S|_{\{x_1, \dots, x_k\}})}\rangle = 0$, which implies that $U'_{(T', \phi'_{T'})} = 0$.

  In both cases, we have $U'_{(T', \phi'_{T'})} = 0$. Hence, $\{U'_{(T', \phi'_{T'})}\}$ is a complete vector solution of the $l$-level Lasserre relaxation of $\cG'$ as desired.
\end{proofof}

\subsection{$\Omega(n)$-Level Lasserre Integrality Gap for Projection Games} \label{app:lasserre}

In this subsection, we prove Lemma~\ref{lem:lasserre-starting-instance} by a reduction from Schoenebeck's Lasserre integrality gap for random {\sc Max 3-XOR}~\cite{Sch08}. We note that a similar integrality gap was discovered prior to Schoenebeck by Grigoriev~\cite{Gri01} but Schoenebeck's result, which is stated below, is formulated in such a way that is easiler for us to use.

\begin{theorem}\cite{Sch08} \label{thm:xor-gap}
  There exists constants $d > 1$ and $1 > \varepsilon, \alpha > 0$ such that the following holds. Let $\cG = (V, \cW, \{P_{S}\})$ be a {\sc Max 3-CSP} instance constructed randomly as follows.
  \begin{itemize}
    \item $V$ is a set $\{x_1, \dots, x_n\}$.
    \item The alphabet set $\Sigma$ is $\{0, 1\}$.
    \item Randomly create $d n$ constraints as follows. Pick $i_1, i_2, i_3$ uniformly at random without replacement from $[n]$ and pick $j$ uniformly at random from $\{0, 1\}$. The predicate $P_{(x_1, x_2, x_3)}(\phi)$ is one if $\phi(x_{i_1}) \oplus \phi(x_{i_1}) \oplus \phi(x_{i_1}) = j$ and is zero otherwise.
    \item $\cW$ is simply the uniform distribution over predicates constructed in the previous step.
  \end{itemize}
  With probability $1 - o(1)$, $opt^{\alpha n}_{Las}(\cG) = 1$ and $val(\cG) \leq 1 - \varepsilon$.
\end{theorem}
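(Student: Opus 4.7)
The plan is to establish the two parts separately: the soundness bound $val(\cG) \leq 1 - \varepsilon$ is a routine random argument, while the vector completeness $opt^{\alpha n}_{Las}(\cG) = 1$ requires the full expansion-based machinery developed by Schoenebeck.

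For soundness, I would fix an assignment $\phi: V \to \{0,1\}$ and observe that because the right-hand side $j$ of each constraint is chosen uniformly from $\{0,1\}$ independently, each of the $dn$ constraints is satisfied by $\phi$ with probability exactly $1/2$, independently of the others. A Chernoff bound then yields $\Pr[val_\cG(\phi) \geq 1/2 + \delta] \leq \exp(-2\delta^2 dn)$. A union bound over all $2^n$ assignments gives a failure probability of at most $2^n \exp(-2\delta^2 dn)$, which is $o(1)$ provided $d > (\log 2)/(2\delta^2)$. Picking $\delta$ so that $1/2 + \delta \leq 1 - \varepsilon$ (any $\varepsilon < 1/2$ works) completes this half.

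For the Lasserre completeness, the main tool is the boundary expansion of the constraint hypergraph $H$ whose vertices are the variables and whose hyperedges are the chosen triples. First, I would show that when $d$ is chosen appropriately, with probability $1 - o(1)$ every set $F$ of at most $\alpha n$ hyperedges has unique-neighbor expansion at least some $\eta > 0$, i.e., at least $\eta|F|$ of the variables touched by $F$ appear in exactly one hyperedge of $F$. This is a standard first-moment computation: the expected number of "bad" sets of size $t$ is bounded by $\binom{dn}{t}\binom{n}{(3-\eta)t}\bigl((3-\eta)t/n\bigr)^{(2+\eta)t}$ (choose $t$ hyperedges and $(3-\eta)t$ variables containing all their endpoints), which sums to $o(1)$ for small enough $\alpha, \eta$.

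Next, I would invoke the standard consequence of expansion for 3-XOR: if a set $F$ of at most $\alpha n$ constraints has a variable that is touched by exactly one constraint of $F$, then that constraint can be satisfied regardless of the others by fixing that unique variable. Iterating this ``peeling'' argument (which is exactly where unique-neighbor expansion is used) shows that every subsystem of at most $\alpha n$ constraints is satisfiable over $\mathbb{F}_2$. This lets me define Schoenebeck-style vectors $U_{(S,\phi_S)}$: for every set $S$ of at most $\alpha' n$ variables (for a suitable $\alpha' < \alpha/10$, say), let $F_S$ be the set of constraints entirely supported on $S$, and put $U_{(S,\phi_S)}$ to be the uniform-superposition vector over completions of $\phi_S$ that satisfy $F_S$ (or $0$ if no such completion exists). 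The PSD and consistency axioms of the Lasserre hierarchy reduce to checking that, for $|S_1 \cup S_2| \leq \alpha n$, the inner product $\langle U_{(S_1,\phi_1)}, U_{(S_2,\phi_2)}\rangle$ depends only on $S_1 \cup S_2$ and $\phi_1 \circ \phi_2$, which follows from the satisfiability of the union subsystem $F_{S_1 \cup S_2}$ together with the fact that the number of satisfying completions of an $\mathbb{F}_2$-affine system depends only on its rank.

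The main obstacle will be this last step — verifying that the Schoenebeck vectors simultaneously obey all Lasserre constraints (non-negativity of inner products, the ``shift invariance'' under $S_1 \cup S_2 = S_3 \cup S_4$ and $\phi_1 \circ \phi_2 = \phi_3 \circ \phi_4$, vanishing on inconsistent pairs, and the normalization $\sum_\sigma \|U_{(x,\sigma)}\|^2 = 1$). Each of these is provable from affineness and expansion, but bookkeeping is delicate: one must carefully work in the linear span of characters of the solution space of $F_{S_1 \cup S_2}$ and repeatedly use that expansion forces small subsystems to be of full rank. Once this is set up, completeness on the satisfied predicates is immediate because, by construction, $U_{(S,\phi_S)} = 0$ whenever $\phi_S$ violates any constraint in $F_S$, and Lemma~\ref{lem:lasserre-complete-char} then gives $val_{Las}(\{U_{(S,\phi_S)}\}) = 1$.
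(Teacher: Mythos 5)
This theorem is not proved in the paper at all; it is imported verbatim from Schoenebeck~\cite{Sch08}, so your proposal has to be judged against his argument. Your soundness half is correct and standard: because the right-hand side $j$ of each constraint is a fresh uniform bit, any fixed assignment satisfies each of the $dn$ constraints independently with probability exactly $1/2$, and Chernoff plus a union bound over the $2^n$ assignments gives $val(\cG) \leq 1/2 + \delta$ with probability $1 - o(1)$ once $d$ is a large enough constant. The first-moment expansion computation and the peeling argument showing that every subsystem of at most $\alpha n$ constraints is satisfiable over $\mathbb{F}_2$ are also the right ingredients (your exponents in the first-moment bound are slightly off, but that computation is routine).

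The genuine gap is in the vector construction. Taking $F_S$ to be the constraints \emph{entirely supported} on $S$ and $U_{(S,\phi_S)}$ to be the uniform superposition over completions of $\phi_S$ satisfying $F_S$, the inner product $\langle U_{(S_1,\phi_1)}, U_{(S_2,\phi_2)}\rangle$ is governed by $F_{S_1} \cup F_{S_2}$, not by $F_{S_1 \cup S_2}$, and these genuinely differ: a constraint lying inside $S_1 \cup S_2$ but straddling the two sets (contained in neither $S_1$ nor $S_2$ alone) is invisible to this pair. Concretely, take a constraint $C$ with support $T = S_1 \cup S_2$, $|S_1| = |S_2| = 2$, and $\phi_1 \circ \phi_2$ violating $C$; then $\langle U_{(S_1,\phi_1)}, U_{(S_2,\phi_2)}\rangle > 0$ while $\langle U_{(T,\phi_1\circ\phi_2)}, U_{(\emptyset,\emptyset)}\rangle = 0$, violating the consistency axiom $\langle U_{(S_1,\phi_1)}, U_{(S_2,\phi_2)}\rangle = \langle U_{(S_3,\phi_3)}, U_{(S_4,\phi_4)}\rangle$ for $S_1 \cup S_2 = S_3 \cup S_4$ and $\phi_1\circ\phi_2 = \phi_3\circ\phi_4$. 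So the construction as written is not a Lasserre solution even at level $3$.

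The repair is precisely the part of Schoenebeck's machinery you gesture at but do not supply. One must first close the instance under low-width $\mathbb{F}_2$-linear derivations, so that the system attached to $S$ contains every parity on $S$ implied by \emph{some} small subsystem of the whole instance, not merely the constraints lying inside $S$ (expansion is used a second time here, to show the closure derives no contradiction and stays low-width). One must then define the vectors via the Fourier characters of the resulting affine solution spaces, with coordinates indexed by equivalence classes of derivable parities, rather than as indicator superpositions over completions; only with that definition do the inner products depend solely on $S_1 \cup S_2$ and $\phi_1 \circ \phi_2$. Note also that local satisfiability of all small subsystems by itself only yields a Sherali--Adams solution; positive semidefiniteness is exactly what the character construction buys, and it is the actual content of~\cite{Sch08}.
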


We are now ready to prove Lemma~\ref{lem:lasserre-starting-instance}.

\begin{proofof}[Lemma~\ref{lem:lasserre-starting-instance}]
  First, we use a clause/variable (or constraint/variable) reduction (Definition~\ref{def:clause-variable}) to turn $\cG$ to a projection game $\cG' = (X', Y', \Sigma_X', \Sigma_Y', E', \{P'_{(x, y)}\})$. Lemma~\ref{lem:lasserre-reduction-completeness} immediately implies that $opt^{\Omega(N)}_{Las}(\cG') = 1$ with probability $1 - o(1)$. Moreover, Proposition~\ref{prop:clause-var-val} tells us that, if $val(\cG) \leq 1 - \varepsilon$, then $val(\cG') \leq 1 - \varepsilon / 3$. Thus, with probability $1 - o(1)$, $val(\cG') \leq 1 - \varepsilon/3$.

  As a result, with probability $1 - o(1)$, $\cG'$ satisfies all the desired properties in Lemma~\ref{lem:lasserre-starting-instance} except that the degrees of vertices in $Y'$ may not be bounded. To fix this, we will form a game $\hat{\cG}$ by simply removing all the vertices from $\cG'$ that has degree more than $\Delta = 100d / \varepsilon$. This immediately ensure that all vertices in $\hat{\cG}$ have bounded degrees.

  Moreover, it is obvious that, when $opt^{\Omega(N)}_{Las}(\cG')$ is one, $opt^{\Omega(N)}_{Las}(\hat{\cG})$ is also one. Thus, to prove Lemma~\ref{lem:lasserre-starting-instance}, it is enough for us to show that, with probability at least $1/2 - o(1)$, $val(\hat{\cG}) \leq 1 - \varepsilon/6$ since this would imply that, with probability $1/2 - o(1)$, $\hat{\cG}$ satisfies all the properties specified in Lemma~\ref{lem:lasserre-reduction-completeness}.

  For convenience, let $C_1, \dots, C_{dn}$ denote the clauses of $\cG$ and, for each variable $x$, let $deg(x)$ denote the degree of $x$. Observe that $\hat{\cG}$ can be constructed by conditioning $\cG'$ on the event that the variable $x$ has appears in at most $\Delta$ clauses. From Lemma 4, if at most $\varepsilon / 12$ fraction of $(C, x) \in E'$ has $deg(x) \geq \Delta$, then we have $val(\hat{\cG}) \leq val(\cG') + \varepsilon / 6$. Hence, to show that $val(\hat{\cG}) \leq 1 - \varepsilon / 6$ with probability at least $1/2 - o(1)$, it is enough to show that, with probability at least $1/2$, at most $\varepsilon / 12$ fraction of $(C, x) \in E'$ has $deg(x) \geq \Delta$. Let such event be $A$.

  For each variable $x_i$, let $D_i$ denote the event $deg(x_i) \geq \Delta$. We can rewrite $\Pr_{\cG}[A]$ as $\Pr_{\cG}[A] = \Pr_{\cG}[\sum_{(C_j, x_i) \in E'} \mathds{1}[D_i] \leq |E'|(\varepsilon / 12)]$. By Markov's inequality, to show that $\Pr_{\cG}[A] \geq 1/2$, it is enough to show that $\E_{\cG}[\sum_{(C_j, x_i)u \in E} \mathds{1}[D_i]] \leq |E'|(\varepsilon / 24) = \varepsilon dn/8$.

    For each variable $x_i$ and each clause $C_j$, let $Z_{i, j}$ denote the event that $x_i$ is involved in $C_j$. We have
    \begin{align*}
      \E_{\cG}[\sum_{(C_j, x_i) \in E} \mathds{1}[D_i]]
      &= \E_{\cG}[\sum_{i \in [n], j \in [dn]} \mathds{1}[Z_{i, j} \wedge D_i]] \\
      &= \sum_{i \in [n], j \in [dn]} \E_{\cG}[\mathds{1}[Z_{i, j} \wedge D_i]] \\
      &= \sum_{i \in [n], j \in [dn]} \Pr_{\cG}[Z_{i, j} \wedge D_i] \\
      &= \sum_{i \in [n], j \in [dn]} \Pr_{\cG}[Z_{i, j}]\Pr_{\cG}[D_i \mid Z_{i, j}] \\
      &= \sum_{i \in [n], j \in [dn]}(3/n)\Pr_{\cG}[D_i \mid Z_{i, j}].
    \end{align*}

    Now, consider $\Pr_{\cG}[D_i \mid Z_{i, j}]$. Observe that $deg(x_i)$ is simply $\sum_{l \in [dn]} \mathds{1}[Z_{i, l}]$. and that, from the sampling process of $\cG$, $Z_{i, 1}, \dots, Z_{i, dn}$ are all independent. Hence, $\Pr_{\cG}[D_i \mid Z_{i, j}] = \Pr_{\cG}[\sum_{l \in [dn] - \{j\}} \mathds{1}[Z_{i, l}] \geq \Delta - 1]$. Moreover, we know that $\E_{\cG}[\sum_{l \in [dn] - \{j\}} \mathds{1}[Z_{i, l}]] = \sum_{l \in [dn] - \{j\}} \E_{\cG}[\mathds{1}[Z_{i, l}]] = (dn - 1)(3/n) \leq 3d$. Thus, by Markov's inequality, we have $\Pr_{\cG}[D_i \mid Z_{i, j}] \leq (3d)/(\Delta - 1) \leq (3d)/(99d/\varepsilon) = \varepsilon/33$. As a result, we have $\E_{\cG}[\sum_{(C_j, x_i) \in E} \mathds{1}[D_i]] \leq \sum_{i \in [n], j \in [dn]}(3/n)(\varepsilon/33) = \varepsilon dn / 11 < \varepsilon dn / 8$ as desired.

    Hence, with probability $1/2 - o(1)$, $\hat{\cG}$ satisfies all properties stated in Lemma~\ref{lem:lasserre-starting-instance}, completing our proof.
\end{proofof}

\section{Concentration Bound on Number of Edges in Random Subgraph} \label{app:random-num-edges}

In this section, we prove Lemma~\ref{lem:random-num-edges}. We start by stating the following standard inequality regarding concentration of sum of random variables in sampling without replacement setting, which will be useful for our proof. (See, e.g.~\cite{Bard2015}, for more details about the Bernstein bound.)

\begin{theorem}(Bernstein bound) \label{thm:bernst}
  Let $X = \{x_1, \dots, x_m\}$ be a set of $m$ real numbers all lie in $[0, a]$. For $n \leq m$, sample $X_1, \dots, X_n$ without replacement uniformly at random from $X$. Let $\sigma^2$ and $\mu$ be the variance and mean of $X$ respectively. Then,
  \begin{align*}
    \Pr\left[\left|\sum_{i=1}^n X_i - n \mu\right| \leq \varepsilon\right] \leq 2\exp\left(-\frac{\varepsilon^2}{2 n \sigma^2 + a\varepsilon}\right)
  \end{align*}
\end{theorem}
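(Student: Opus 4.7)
The plan is to combine two classical reductions: first replace sampling without replacement by sampling with replacement, and then apply the standard exponential-moment (Chernoff-style) argument that yields the Bernstein bound in the i.i.d. setting. The inequality as stated is a two-sided tail bound, so by the union bound it suffices to establish each one-sided version and double the final probability.

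For the reduction to independent samples, I would invoke Hoeffding's classical observation (Hoeffding, 1963) that for any convex function $\phi : \mathbb{R} \to \mathbb{R}$, if $X_1,\dots,X_n$ are drawn without replacement from a finite population and $Y_1,\dots,Y_n$ are drawn with replacement from the same population, then $\E[\phi(\sum_i X_i)] \leq \E[\phi(\sum_i Y_i)]$. Applying this with $\phi(z) = e^{tz}$ for $t > 0$, the moment generating function of $\sum_i X_i - n\mu$ is dominated by that of $\sum_i Y_i - n\mu$, and Markov's inequality (Chernoff's method) then gives
\begin{align*}
  \Pr\!\left[\sum_{i=1}^n X_i - n\mu \geq \varepsilon\right] \;\leq\; e^{-t\varepsilon}\, \E\!\left[e^{t(\sum_i Y_i - n\mu)}\right] \;=\; e^{-t\varepsilon}\, \bigl(\E[e^{t(Y_1-\mu)}]\bigr)^n.
\end{align*}

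The next step is the standard single-variable moment bound: since $Y_1 - \mu \in [-\mu, a-\mu] \subseteq [-a,a]$ and $\E[(Y_1-\mu)^2] = \sigma^2$, a short Taylor-expansion argument (bounding $e^x - 1 - x$ by $x^2/(2(1 - a|t|/3))$ for $|t| < 3/a$, then taking expectation) yields
\begin{align*}
  \E[e^{t(Y_1-\mu)}] \;\leq\; \exp\!\left(\frac{\sigma^2 t^2/2}{1 - at/3}\right).
\end{align*}
Plugging this in, the upper tail becomes $\exp\!\bigl(-t\varepsilon + \tfrac{n\sigma^2 t^2/2}{1 - at/3}\bigr)$, and choosing $t = \varepsilon/(n\sigma^2 + a\varepsilon/3)$ (a standard optimization) produces the single-sided Bernstein bound $\exp\!\bigl(-\varepsilon^2/(2n\sigma^2 + \tfrac{2}{3}a\varepsilon)\bigr)$, which is at most $\exp\!\bigl(-\varepsilon^2/(2n\sigma^2 + a\varepsilon)\bigr)$. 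The lower-tail bound follows identically after replacing $t$ by $-t$ and noting the Hoeffding convex-function inequality works for the convex function $\phi(z) = e^{-tz}$ as well. Adding the two one-sided probabilities yields the factor of $2$.

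The main obstacle, and the only non-routine ingredient, is the Hoeffding reduction from sampling without replacement to sampling with replacement; everything else is the textbook Bernstein derivation. I would cite Hoeffding's paper for that step rather than reproving it, so the proof reduces to recalling the i.i.d. Bernstein calculation sketched above. If a self-contained proof of the reduction were desired, one would give it via the representation of the without-replacement sum as a conditional expectation of an exchangeable random permutation applied to a with-replacement sum, combined with Jensen's inequality on the convex function $\phi$.
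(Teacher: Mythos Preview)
Your proof sketch is correct and follows the standard route to Bernstein-type inequalities for sampling without replacement: Hoeffding's convex-function comparison reduces to the i.i.d.\ case, after which the usual MGF bound plus Chernoff optimization gives the result. One minor remark: the statement as printed has the inequality direction reversed (it should be $\Pr[\,|\sum X_i - n\mu| \geq \varepsilon\,]$ on the left); you have, sensibly, proved the intended tail bound.

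As for comparison with the paper: there is nothing to compare. The paper does not prove this theorem at all---it is stated in Appendix~\ref{app:random-num-edges} as a standard concentration inequality with a pointer to the literature (``See, e.g.~\cite{Bard2015}''), and is then used as a black box in the proof of Lemma~\ref{lem:num-edges-helper}. So your write-up supplies strictly more than the paper does on this point. If you want to match the paper's level of detail, a one-line citation to Hoeffding (1963) and Bernstein/Bennett would suffice; if you want a self-contained appendix, your outline is the right one.
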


Before we prove Lemma~\ref{lem:random-num-edges}, we will prove a concentration bound on the number of edges for the case where a set from only one side of the bipartite graph is sampled and the other side of the graph remains the same. This is in contrast to Lemma~\ref{lem:random-num-edges}, in which two sets are randomly sampled from both sides. The concentration bound is stated and proved below.

\begin{lemma} \label{lem:num-edges-helper}
  Let $(X, Y, E)$ be any bipartite graph where each vertex has degree at most $d_{max}$. For any non-negative integer $k \leq |X|$, let $s = \frac{k|E|}{|X|}$. For any non-negative real number $\gamma < 1/2$, we have
  \begin{align*}
    \Pr_{S \sim \binom{X}{k}}[|E(S, Y)| \notin [(1 - \gamma)s, (1 + \gamma)s]] \leq 2\exp\left(-\frac{\gamma^2 s}{3 d_{max}}\right).
  \end{align*}
\end{lemma}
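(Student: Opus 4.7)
The plan is to apply the Bernstein bound for sampling without replacement (Theorem~\ref{thm:bernst}) to the sequence of degrees of vertices in $X$. Specifically, for each $x \in X$ let $d(x)$ denote its degree in the graph $(X,Y,E)$; then observe that
\begin{align*}
  |E(S,Y)| = \sum_{x \in S} d(x),
\end{align*}
so that sampling $S \sim \binom{X}{k}$ uniformly is exactly the setup of Theorem~\ref{thm:bernst} applied to the multiset $\{d(x) : x \in X\}$, drawing $k$ elements without replacement.

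Next, I would compute the relevant parameters. Every $d(x)$ lies in $[0,d_{\max}]$, so $a = d_{\max}$. The mean of the multiset is $\mu = |E|/|X|$, hence $k\mu = s$. For the variance, I would use the bound
\begin{align*}
  \sigma^2 = \E[d(x)^2] - \mu^2 \le \E[d(x)^2] \le d_{\max}\,\E[d(x)] = d_{\max} \mu = \frac{d_{\max} s}{k},
\end{align*}
which exploits the fact that $d(x) \le d_{\max}$. Plugging these into Theorem~\ref{thm:bernst} with deviation $\varepsilon = \gamma s$ gives $2n\sigma^2 \le 2 d_{\max} s$ and $a \varepsilon = d_{\max} \gamma s$, so the denominator in the exponent is at most $2 d_{\max} s + d_{\max} \gamma s \le 3 d_{\max} s$ (using $\gamma < 1/2 < 1$). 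This yields
\begin{align*}
  \Pr_{S \sim \binom{X}{k}}\!\bigl[\,|E(S,Y)| \notin [(1-\gamma)s,(1+\gamma)s]\,\bigr]
  \;\le\; 2\exp\!\left(-\frac{\gamma^2 s^2}{3 d_{\max} s}\right)
  \;=\; 2\exp\!\left(-\frac{\gamma^2 s}{3 d_{\max}}\right),
\end{align*}
as required.

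There is no real obstacle here; the lemma is a clean one-shot application of Bernstein with the variance bound $\sigma^2 \le d_{\max}\mu$. The only mild care needed is to observe that the random variable of interest is literally a sum of the sampled degrees (no pair/edge double counting issues arise because $Y$ is fixed and each edge has a unique $X$-endpoint), and to verify that the constant $3$ in the denominator is valid for all $\gamma < 1/2$. This one-sided lemma will then feed into Lemma~\ref{lem:random-num-edges} via a two-step argument: first fix $S$ and bound $|E(S,T)|$ as $T$ varies, then apply the same kind of concentration as $S$ varies, with a union bound absorbing the two failure probabilities into the factor of $4$.
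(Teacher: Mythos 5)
Your proposal is correct and follows essentially the same route as the paper: identify $|E(S,Y)|$ with the sum of $k$ degrees sampled without replacement, bound the variance by $\sigma^2 \le d_{\max}\mu = d_{\max}s/k$, and apply the Bernstein bound with $\varepsilon = \gamma s$, absorbing $2k\sigma^2 + d_{\max}\gamma s$ into $3 d_{\max} s$ using $\gamma < 1$. No gaps.
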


\begin{proof}
  Let $d_i$ be the degree of vertex $i$ for each $i \in X$. Observe that $|E(S, Y)|$ is simply $\sum_{v \in S} d_v$. Since $S$ is sampled uniformly at random from $\binom{X}{k}$, we can view each $d_v$ in the sum as a random variable from sampling with out replacement uniformly at random from $\{d_i\}_{i \in X}$. The Bernstein bound yields the following inequality for any $\varepsilon > 0$.
  \begin{align*}
    \Pr\left[\left|\sum_{v \in S} d_v - k \mu\right| \leq \varepsilon\right] \leq 2\exp\left(-\frac{\varepsilon^2}{2k \sigma^2 + d_{max}\varepsilon}\right)
  \end{align*}
  where $\sigma^2$ and $\mu$ are the varince and mean of $\{d_i\}_{i \in X}$. Observe that $k \mu = s$ and that
  \begin{align*}
    \sigma^2 \leq \frac{1}{|X|} \sum_{i \in X} d_i^2 \leq \frac{1}{|X|} \sum_{i \in X} d_id_{max} = \frac{s d_{max}}{k}.
  \end{align*}

  Substituting $\varepsilon = \gamma s$ into the inequality from the Berstein bound completes the proof of Lemma~\ref{lem:num-edges-helper}:
  \begin{align*}
    \Pr\left[\left||E(S, Y)| - s\right| \leq \gamma s\right] &\leq 2\exp\left(-\frac{\gamma^2s^2}{2 k (s d_{max}/k) + d_{max}\gamma s}\right) \\
    (\text{Since } \gamma < 1) &\leq 2\exp\left(-\frac{\gamma^2 s^2}{3s d_{max}}\right) = 2\exp\left(-\frac{\gamma^2 s}{3 d_{max}}\right).
  \end{align*}
\end{proof}

We are now ready to prove Lemma~\ref{lem:random-num-edges}.

\begin{proofof}[Lemma~\ref{lem:random-num-edges}]
  We prove this lemma by simply applying Lemma~\ref{lem:num-edges-helper} twice. First, we use the inequality to bound the probability that $|E(S, Y)|$ is far away from its expected value. Then, we use it again on the graph induced by $S$ and $Y$ to bound the probability that $|E(S, T)|$ is far away from its expected value.

  Let $A$ denote an event that $|E(S, T)| \notin [(1 - \gamma)s, (1 + \gamma)s]$ and $B$ denote an event that $|E(S, Y)| \notin [(1 - \gamma/3)\tilde s, (1 - \gamma/3)\tilde s]$ where $\tilde s = \frac{k|E|}{|X|}$. From Lemma~\ref{lem:num-edges-helper} and from $\tilde s \geq s$, we have
  \begin{align*}
    \Pr[B] \leq 2\exp\left(-\frac{\gamma^2 \tilde s}{27 d_{max}}\right) \leq 2\exp\left(-\frac{\gamma^2 s}{27 d_{max}}\right).
  \end{align*}

  For each $S$, let $\hat{s} = \frac{l|E(S, Y)|}{|Y|}$. Applying Lemma~\ref{lem:num-edges-helper} on the graph $(S, Y, E(S, Y))$, we have
  \begin{align*}
    \Pr[|E(S, T)| \notin [(1 - \gamma/3)\hat{s}, (1 + \gamma/3)\hat{s}]] \leq 2\exp\left(-\frac{\gamma^2 \hat{s}}{27 d_{max}}\right)
  \end{align*}

  Moreover, observe that $A$ and $\neg B$ implies that $|E(S, T)| \notin [(1 - \gamma/3)\hat{s}, (1 + \gamma/3)\hat{s}]$. In addition, when $\neg B$, we have $\hat{s} \geq \frac{l(1 - \gamma/3)\tilde s}{|Y|} = (1 - \gamma/3)s \geq \frac{s}{2}.$ As a result, we arrive at the following bound.
  \begin{align*}
    \Pr[A \mid \neg B] \leq \Pr[|E(S, T)| \notin [(1 - \gamma/3)\hat{s}, (1 + \gamma/3)\hat{s}] \mid \neg B] \leq 2\exp\left(-\frac{\gamma^2 s}{54 d_{max}}\right)
  \end{align*}

  Finally, we can conclude our proof as follows.
  \begin{align*}
    \Pr[A] = \Pr[A \mid B]\Pr[B] + \Pr[A \mid \neg B]\Pr[\neg B] \leq \Pr[B] + \Pr[A \mid \neg B]
    \leq 4\exp\left(-\frac{\gamma^2 s}{54 d_{max}}\right)
  \end{align*}
\end{proofof}

\section{Bounds on Values of Two Games on Different Distributions} \label{app:inq-games-dist}

Below we provide simple proofs to Lemma~\ref{lem:inq-mult} and Lemma~\ref{lem:inq-cond}.

\begin{proofof}[Lemma~\ref{lem:inq-mult}]
  Let $\phi$ be the optimal strategy for $\cG$. We can write $val(\cG) = val_{\cG}(\phi)$ as
  \begin{align*}
    \sum_{x \in X, y \in Y} \cQ(x, y) P_{x, y}(\phi(x), \phi(y))
    \leq \alpha \cdot \sum_{x \in X, y \in Y} \cQ'(x, y) P_{x, y}(\phi(x), \phi(y))
    = \alpha \cdot val_{\cG'}(\phi) \leq \alpha \cdot val(\cG').
  \end{align*}
  This concludes the proof of Lemma~\ref{lem:inq-mult}.
\end{proofof}

\begin{proofof}[Lemma~\ref{lem:inq-cond}]
  First, observe that $\cQ'(x, y) \leq \cQ(x, y) / (1 - p)$ for every $x \in X, y \in Y$. From Lemma~\ref{lem:inq-mult}, we have $val(\cG') \leq val(\cG)/ (1 - p)$. If $p \leq 1/2$, we have $val(\cG') \leq val(\cG)/ (1 - p) \leq (1 + 2p)val(\cG) \leq val(\cG) + 2p$. Otherwise, if $p > 1/2$, we have $val(\cG') \leq 1 \leq val(\cG) + 2p$. Hence, in both cases, $val(\cG') \leq val(\cG) + 2p$.

  Next, let $\phi$ be the optimal strategy for $\cG$, i.e., $val_{\cG}(\phi) = val(\cG)$. We can rearrange $val_{\cG}(\phi)$ as
  \begin{align*}
    \E_{x, y \sim \cQ}[P_{x, y}(\phi(x), \phi(y))]
    = (1 - p) \E_{x, y \sim \cQ}[P_{x, y}(\phi(x), \phi(y)) \mid A] + p \E_{x, y \sim \cQ}[P_{x, y}(\phi(x), \phi(y)) \mid \neg A]
    \leq val_{\cG'}(\phi) + p.
  \end{align*}
  Hence, we have $val(\cG) = val_{\cG}(\phi) \leq val_{\cG'}(\phi) + p \leq val(\cG') + p$, completing our proof for the lemma.
\end{proofof}

\end{document}